\newtheorem{theorem}{Theorem}
\newtheorem{lemma}[theorem]{Lemma} 
\newtheorem{corollary}[theorem]{Corollary}
\newtheorem*{lemma*}{Lemma}
\theoremstyle{definition}
\newtheorem{definition}[theorem]{Definition}
\newcommand{\devilgame}{devil's game\xspace}
\newcommand{\Devilgame}{Devil's game\xspace}
\newcommand{\devil}{devil\xspace}
\newcommand{\human}{human\xspace}
\newcommand{\padding}{padding\xspace}
\newcommand{\semialgebraic}{semi-algebraic\xspace}
\newcommand{\Semialgebraic}{Semi-algebraic\xspace}
\newcommand{\R}{\ensuremath{\mathbb{R}}\xspace}
\newcommand{\QR}{\ensuremath{\textsc{Q}\mathbb{R}}\xspace}
\newcommand{\ER}{\ensuremath{\exists\mathbb{R}}\xspace}
\newcommand{\RH}{\ensuremath{\mathbb{R}\textsc{PH}}\xspace}
\newcommand{\PSPACE}{\ensuremath{\text{PSPACE}}\xspace}
\newcommand{\EXPSPACE}{\ensuremath{\text{EXPSPACE}}\xspace}
\newcommand{\etrinv}{\ensuremath{\textsc{ETRINV}}\xspace}
\newcommand{\fotr}{\ensuremath{\textsc{FOTR}}\xspace}
\newcommand{\quantifiedrealfeasibility}{\ensuremath{\textsc{Quantified} \textsc{Real} \textsc{Feasibility}}\xspace}
\newcommand{\quantifiedbooleanfeasibility}{\ensuremath{\textsc{Quantified Boolean Feasibility}}\xspace}
\newcommand{\fotrinv}{\ensuremath{\textsc{FOTRINV}}\xspace}
\newcommand{\rangedfotrinv}{\ensuremath{\textsc{Ranged-FOTRINV}}\xspace}
\newcommand{\planarfotrinv}{\ensuremath{\textsc{Planar-}\textsc{FOTRINV}}\xspace}
\newcommand{\planarfotrinvequal}{\ensuremath{\planarfotrinv(=)}\xspace}
\newcommand{\planarfotrinvinequal}{\ensuremath{\planarfotrinv(\leq)}\xspace}
\newcommand{\planaretrinv}{\ensuremath{\textsc{Planar-ETRINV}}\xspace}
\newcommand{\Arealhierarchy}[1]{\ensuremath{\Pi_{#1}\mathbb{R}}\xspace}
\newcommand{\Erealhierarchy}[1]{\ensuremath{\Sigma_{#1}\mathbb{R}}\xspace}
\newcommand{\Arealhierarchyinv}[1]{\ensuremath{\forall_{#1}\textsc{INV}}\xspace}
\newcommand{\Erealhierarchyinv}[1]{\ensuremath{\exists_{#1}\textsc{INV}}\xspace}
\newcommand{\GraphInPolygon}{\ensuremath{\textsc{Graph }\allowbreak\textsc{in }\allowbreak\textsc{Polygon}}\xspace}
\newcommand{\GraphInPolygonGame}{\ensuremath{\textsc{Graph }\allowbreak\textsc{in }\allowbreak\textsc{Polygon }\allowbreak\textsc{Game}}\xspace}
\newcommand{\PartialDrawingExtensibility}{\ensuremath{\textsc{Partial }\allowbreak\textsc{Drawing }\allowbreak\textsc{Extensibility}}\xspace}
\newcommand{\PlanarExtensionGame}{\ensuremath{\textsc{Planar }\allowbreak\textsc{Extension }\allowbreak\textsc{Game}}\xspace}
\newcommand{\StickGame}{\ensuremath{\textsc{Stick Game}}\xspace}
\newcommand{\PennyGame}{\ensuremath{\textsc{Penny Game}}\xspace}
\newcommand{\DiskGame}{\ensuremath{\textsc{Disk Game}}\xspace}
\newcommand{\PackingGame}{\ensuremath{\textsc{Packing Game}}\xspace}
\newcommand{\EuclideanVoronoiGame}{\ensuremath{\textsc{Euclidean Voronoi Game}}\xspace}
\newcommand{\OrderTypeGame}{\ensuremath{\textsc{Order Type Game}}\xspace}
\newcommand{\TOfotr}{\ensuremath{\textsc{Totally Ordered FOTR}}\xspace}
\newcommand{\BOfotr}{\ensuremath{\textsc{Bounded Open FOTR}}\xspace}
\newcommand{\II}{\ensuremath{\mathbf{\boldsymbol{\infty}}}\xspace}
\newcommand{\0}{\ensuremath{\mathbf{0}}\xspace}
\newcommand{\1}{\ensuremath{\mathbf{1}}\xspace}
\newcommand{\x}{\ensuremath{\mathbf{x}}\xspace}
\newcommand{\y}{\ensuremath{\mathbf{y}}\xspace}
\renewcommand{\a}{\ensuremath{\mathbf{a}}\xspace}
\renewcommand{\b}{\ensuremath{\mathbf{b}}\xspace}
\renewcommand{\c}{\ensuremath{\mathbf{c}}\xspace}
\renewcommand{\d}{\ensuremath{\mathbf{d}}\xspace}
\title{Devil's Games and \QR \\
Continuous Games complete for the \\ First-Order Theory of the Reals.}
\author{Lucas Meijer, Arnaud de Mesmay, Tillmann Miltzow, Marcus Schaefer, Jack Stade}
 \date{}
\begin{document}

\maketitle

\begin{abstract}
    We introduce the complexity class \textit{Quantified Reals} abbreviated as \QR.
    Let \fotr be the set of true sentences in the first-order theory of the reals.
    A language $L$ is in \QR, if there is a polynomial time reduction
    from $L$ to \fotr. 
    To the best of our knowledge, this is the first time this complexity class is studied.
    We show that \QR can also be defined using real Turing machines or the real RAM.
    Interestingly, it is known that deciding \fotr requires at least exponential time unconditionally~[Berman, \emph{Theoretical Computer Science, 1980}].
    
    We focus on so-called \devilgame{}s that have two defining properties: 
    \begin{itemize}
        \item Players alternate in taking turns and
        \item each turn gives an infinite continuum of possible options.
    \end{itemize}
    The two players are referred to as human and devil.
    Our paper presents four \QR-complete problems.
    
    First, we show that \fotrinv{} is \QR-complete. 
    \fotrinv has only inversion and addition constraints and all variables are contained in a bounded range.
    \fotrinv is a stepping stone for further reductions.
    
    Second, we show that the \PackingGame is \QR-complete. 
    In the \PackingGame we are given a container, as a polygonal domain and two sets of polygons, called pieces.
    One set of pieces for the human and one set for the devil.
    The human and the devil alternate by removing one of their pieces and placing it into the container.
    Both rotations and translations are allowed.
    The first player who cannot place a piece into the container loses.
    If all pieces can be placed the human wins.

    Third, we show that the \PlanarExtensionGame is \QR-complete. 
    In this game, we are given a partially drawn plane graph and the human and the devil alternate by placing
    vertices and the corresponding edges in a straight-line manner.
    The vertices and edges to be placed are prescribed beforehand and known to both players.
    The first player who cannot place a vertex while respecting planarity of the drawing loses.
    If all the vertices can be placed the human wins.

    Finally, we show that the \OrderTypeGame is \QR-complete.
    In this game, we are given an order type together with a linear order of the abstract points.
    The human and the devil alternate in placing a point in the Euclidean plane $\R^2$ following the linear order.
    The first player who cannot place a point respecting the order type loses.
    If all the points can be placed the human wins.        
\end{abstract}

\vfill

\paragraph{Acknowledgments.}
We thank Jeremy Kirn for some helpful discussions to the machine model.
T.M. and L.M. want to thank Netherlands Organisation for Scientific Research (NWO) Vidi grant VI.Vidi.213.150 for their generous support.
All authors thank Dagstuhl to give them the wonderful opportunity to meet in a relaxed and enthusiastic environment at the seminar \textit{Precision in Geometry} with seminar number 25372.
This research project originated at the workshop.
J.S. is supported by Independent Research Fund Denmark, grant 1054-00032B, and by the Carlsberg Foundation, grant CF24-1929.
\newpage

\tableofcontents

\newpage
\section{Introduction}
    We introduce the  complexity class \textit{Quantified Reals} denoted by \QR. 
    This complexity class can be defined as all problems that are equivalent to deciding the 
    First-Order Theory of the Reals under polynomial time reductions.
    We describe a framework to show \QR-completeness of \devilgame{}s.
    \Devilgame{}s have two key properties. 
    \begin{itemize}
        \item Players alternate in taking turns and
        \item each turn gives an infinite continuum of possible options.
    \end{itemize}

\subsection{Motivation}
    There is a classical puzzle, which goes as follows:

    \begin{center}
    \textit{You find yourself in hell, where you meet the devil. 
    They gesture to a round table and propose a simple game: 
    you and they will take turns placing identical coins on the tabletop, and coins may not overlap. 
    Whoever cannot place a new coin on the table loses. 
    Should you start the game?}    
    \end{center}

    \begin{center}
        \includegraphics[width=0.5\linewidth]{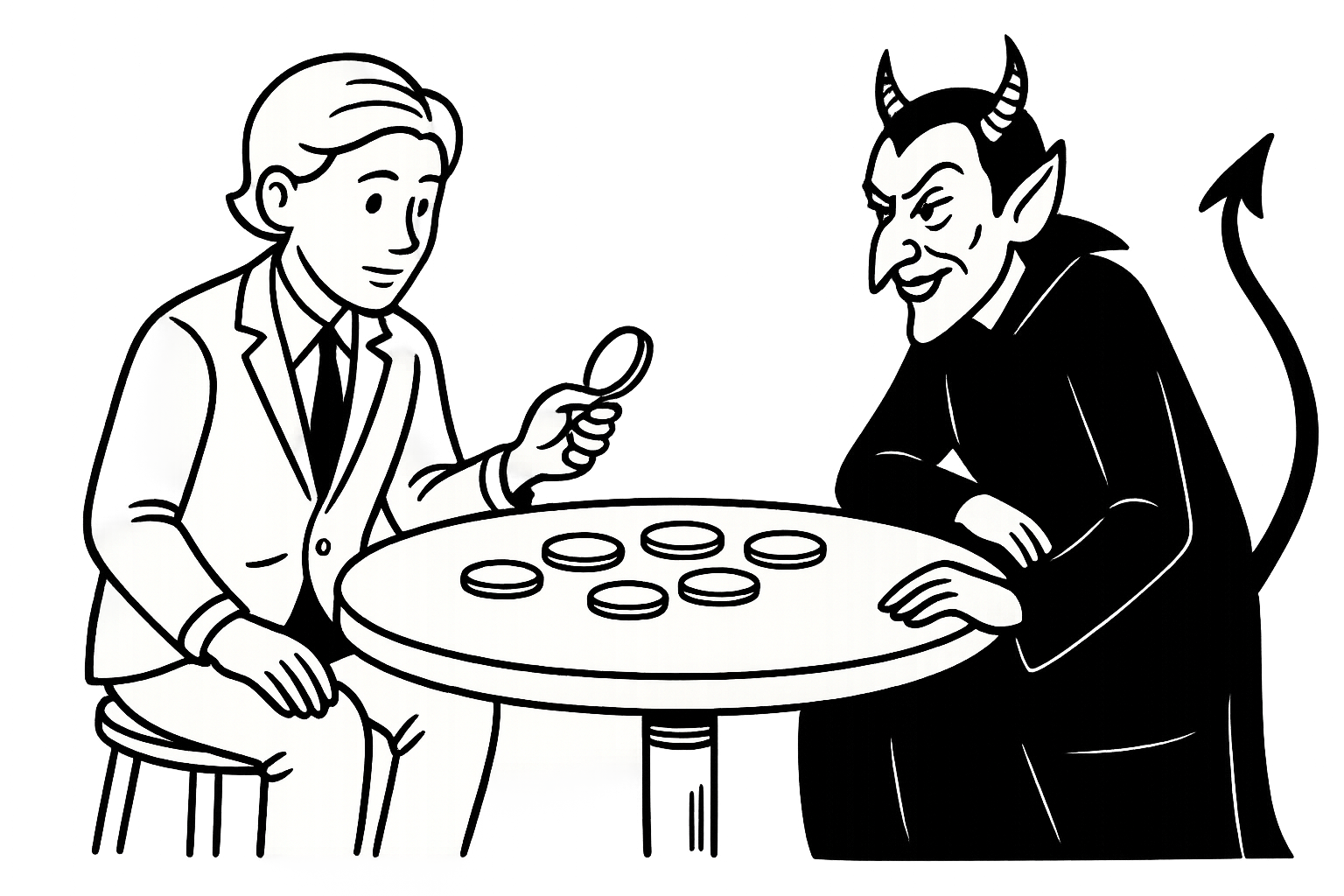}
    \end{center}
    
    (Spoiler Alert! Stop reading now, if you want to find the solution yourself.) Interestingly if the human places their first coin precisely at the center and then mirrors every move of the devil across that center, they can always respond and never be the one to run out of space first. 
    This idea uses symmetry and is a standard technique in combinatorial game theory.
    But note that if you had not placed the first coin in the middle, it seems impossible to analyze how to win this game, the reason being the two defining properties of \devilgame{}s.
    While there are plenty of results that show that combinatorial games are \PSPACE-complete,
    this seems not to capture the continuous nature of the \devilgame{}.
    And intuitively the continuous nature makes \devilgame{}s quite distinct from most other known 
    combinatorial games.
    This intuition motivates us to study \devilgame{}s more broadly.
    As a matter of fact, our results imply \textit{unconditionally} that \devilgame{}s require exponential time.
    We use the term \devilgame{}s for two reasons. 
    One is a reference to the old puzzle from above and the second is that they are devilishly difficult to analyze.

    In this article, both the human and the devil are referred to by the gender-neutral pronoun “they.”
    
\subsection{Results and Definition}

To state our results, we first need to define the first-order theory of the reals and its corresponding complexity class.
The first-order theory of the reals, which is also called the theory of real closed fields in the literature, is the problem of deciding the truth of (first-order) logical sentences with any number of quantifiers in any order. The variables can take any real number, and the formula is built using standard arithmetic operations. 
For example, the following is such a first-order logical sentence

\[ \forall x \exists y (x\cdot y = x+y) \lor (\exists z) (z > x+y) \rightarrow  ((x\cdot y)^2 > z),\]
and deciding its truth is an instance of the first-order theory of the reals. An in-depth introduction to the theory of the reals can be found in standard references such as Basu, Pollack and Roy~\cite{BasuPollackRoy2006}.

Next we define the class of the \textit{quantified reals}, \QR, of all problems equivalent to \fotr under polynomial time reductions.
Let us remark here that all problems in \QR take at least exponential time as was first shown by  Fischer and Rabin \cite{EXPTIMEhardness}.
To the best of our knowledge, no one has named the class before. 

\paragraph{\fotrinv{}.}
In order to show hardness of any problem it is convenient to have an appropriate intermediate problem.
Like, $3$-SAT for the class NP. In our case, this intermediate problem is called \fotrinv and defined as follows.

\begin{definition}[\fotrinv]
    In the problem $\fotrinv$, we are given a quantified formula $\exists x_1\forall x_2 \dots Q_n x_n : \Phi(x_1,\dots,x_n)$, where $\Phi$ consists of a conjunction between a set of equations of the form $x=1,\quad x+y=z,\quad x\cdot y=1,$
    for $x,y,z \in \{x_1, \ldots, x_n\}$.
    Each quantifier bounds exactly one variable and the quantifiers keep alternating between $\exists$ and $\forall$.
    The goal is to decide whether the system of equations has a solution when each existential variable is restricted to the range $[\tfrac12,2]$
    and each universal variable is restricted to the range $[\tfrac34,1]$.
\end{definition}

Note that the existentially-quantified variables are restricted to $[\frac12, 2]$ while the universally-quantified variables are restricted to $[\frac34, 1]$. Indeed, if $x=2$, then an addition constraint $x+y=z$ can never be satisfied. Similarly, if $z=\frac12$, then $x+y=z$ can never be satisfied. We want to be able to involve universally-quantified variables in addition constraints without immediately causing the formula to become false, so we restrict the universal variables to a smaller range.
We are now ready to state our first theorem.

\begin{restatable}{theorem}{FOTRINVTHM}
    \label{thm:FOTRINV}
\fotrinv is $Q\mathbb{R}$-complete.
\end{restatable}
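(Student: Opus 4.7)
Membership in \QR is immediate: writing the range constraints $x_i \in [\tfrac12,2]$ or $x_i \in [\tfrac34,1]$ explicitly (as conjuncts for existential variables, and as antecedents of implications for universal variables) produces a polynomial-size \fotr sentence equivalent to the given \fotrinv instance. The rest of the argument is therefore devoted to \QR-hardness: a polynomial-time reduction from an arbitrary \fotr sentence to \fotrinv.

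I would carry out the reduction in four stages. First, normalize the input to prenex form and pad the quantifier prefix with dummy variables so that quantifiers strictly alternate, beginning with $\exists$. Second, pass from an unbounded sentence to a bounded one: via a suitable substitution together with standard real-algebraic bounds on witness sizes, every variable may be assumed to live in a common compact interval. Third, rewrite the quantifier-free kernel as a conjunction of polynomial equations; disjunctions and strict inequalities are removed at the cost of extra existential auxiliary variables, which are appended at the tail of the quantifier prefix (inserting a dummy $\forall$-block first if necessary to preserve the alternation). Fourth, reduce each polynomial equation to a chain of elementary constraints $x=1$, $x+y=z$ and $x\cdot y=z$ using a Horner-style decomposition with fresh existential variables, and then eliminate every multiplication via the inversion identity
\[\frac{1}{\frac{1}{x}-\frac{1}{x+1}} \;=\; x^2+x,\]
which, combined with polarization $xy = \tfrac14\bigl((x+y)^2-(x-y)^2\bigr)$, simulates any multiplication with $O(1)$ additions and inversions.

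The main obstacle will be the final stage: ensuring that \emph{every} variable — original or auxiliary — fits in the narrow bands $[\tfrac12,2]$ and $[\tfrac34,1]$. Intermediate values produced by the squaring and polarization gadgets leave these bands easily (for instance, $x^2+x$ can exceed $2$), so I would globally rescale the bounded formula so that every quantity of interest sits in a small safe sub-interval close to $1$, exploiting the multiplicative slack between the existential range $[\tfrac12,2]$ and the universal range $[\tfrac34,1]$ to absorb the gadget overhead. For universal variables, the affine bijection between the original bounded domain and $[\tfrac34,1]$ is forced upon the $\forall$-player, so I must check that every adversarial value is faithfully represented; for existential variables, the $\exists$-player enjoys more flexibility in encoding intermediate computations. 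Verifying that the rescaled system is equivalent to the original \fotr sentence under the required alternation of quantifiers — in particular, that the $\forall$-player cannot exploit the restricted universal range to invalidate previously satisfiable encodings — is the most delicate part of the argument.
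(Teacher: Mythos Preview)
Your proposal contains a genuine gap at stage~2, which you treat as routine but is in fact the crux of the entire argument. You write that ``via a suitable substitution together with standard real-algebraic bounds on witness sizes, every variable may be assumed to live in a common compact interval.'' Neither of these two devices suffices in the alternating-quantifier setting.

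A substitution such as $x \mapsto \tfrac{x}{1-x^2}$ maps an \emph{open} interval homeomorphically onto~$\R$, so after substitution the variables range over open intervals, not compact ones. The distinction is essential: \fotrinv demands compact ranges and only closed constraints, and closing $(-1,1)$ to $[-1,1]$ is not innocuous because the substituted formula blows up at the endpoints. (This is why the open-range problem \BOfotr is easy while \fotrinv is not.)

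Witness-size bounds of Basu--Pollack--Roy type apply to a single existential block: $\exists x\in\R^n:f(x)=0$ is equivalent to $\exists x\in[-B,B]^n:f(x)=0$ for a computable~$B$. In an alternating sentence such as $\forall y\,\exists x:y(xy-1)=0$, however, the required witness $x=1/y$ depends on the adversarially chosen~$y$, and there is no uniform bound: the sentence is true over~$\R$, yet $\forall y\in[-a,a]\,\exists x\in[-b,b]:y(xy-1)=0$ is false for every finite pair~$(a,b)$. More generally, quantifier alternations can force triply-exponential witness sizes, and the dependency of inner witnesses on outer variables means a single finite bound on all variables cannot simply be read off.

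The paper resolves this with a \emph{non-uniform smoothing} step: the constraint $f(x,y)=0$ is replaced by $\exists z:f(x,z)=0\wedge |y-z|\le \varepsilon\bigl(1+|x|^2\bigr)^{-d}$ for a carefully chosen~$d$. This thickening permits one to attach bounded limits to every quantifier, commute those limits past the quantifiers via a sequence of swap lemmas (\Cref{lem:closedswap,lem:openswap,lem:bswap,lem:aswap}), and only then realize the limits with explicit integers (\Cref{lem:limitevaluation}). Your stages~3 and~4 are essentially correct and correspond to known material (\Cref{lem:closedconstraints} and \Cref{lem:invconversion}), but they become applicable only after compactification is achieved. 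You have misidentified stage~4 as the ``main obstacle''; in fact stage~2 is where all the new mathematics lies, and your sketch does not supply it.
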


We consider this our most technical result. 
The main difficulty lies in showing that we can bound (compactify to be more precise) the range of all the variables conveniently. 
The underlying idea is that if there is an assignment that makes a sentence true then we never had to pick very large values
for any of the variables.
If we have a
bound on the values of all variables involved, we can replace each variable by a scaled variable giving us small range bounds for \fotrinv.
The issue is that there is no simple bound on the value of the variables.
To see this consider the sentence
\[\forall x\neq 0 \exists y : x\cdot y = 1.\]
The sentence is clearly correct, but we cannot give an upper bound on $y$ without having some lower bound on the distance of $x$ to $0$.

The main benefit of \Cref{thm:FOTRINV} is that it allows us to use \fotrinv as a starting point for future reductions. 
Note that \etrinv played an important role to show \ER-hardness of a host of problems, for example~\cite{AAM22, BHJMW22, HKvnBV20}, so we expect that \fotrinv can play a similar role to show \QR-hardness for other interesting problems.

\paragraph{Real Polynomial Hierarchy.}
\begin{figure}[tbph]
    \centering
    \includegraphics{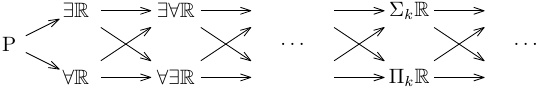}
    \caption{The different levels of the real polynomial hierarchy with inclusions indicated by arrows. }
    \label{fig:RealPolynomialHierachy}
\end{figure}
Of independent interest are fragments of the first-order theory of the reals where the number of quantifier alternations is bounded by a constant. 
This leads to the real polynomial hierarchy, a class that is similar to the polynomial hierarchy, but defined via real Turing machines instead of ordinary Turing machines~\cite{KirnMeijerMiltzowBodlaender2025, OracleSeparation, SS25}.
It is interesting as a natural complexity class.
There is a long compendium listing problems complete for the 
first level~\cite{ERCompendium}, a handful of problems known to be complete for the second level~\cite{JKM22,BC09,ERCompendium}, but as far as we are aware there are no problems known to be complete for the third level.
One of the main challenges to show hardness for the second level was the lack of an equivalent of \etrinv for higher levels of the hierarchy. 
We close this gap.
To state our results, we start with a formal definition.
\begin{definition}[Real Polynomial Hierarchy]
For integers $k\ge 0$, the real hierarchy has levels $\Arealhierarchy{k}$ and $\Erealhierarchy{k}$. The level $\Arealhierarchy{k}$ consists of problems that are polynomial-time reducible to the decision problem for first-order real formulas of the form:

\[\forall x_1\in \mathbb{R}^{n} \ \exists x_2\in \mathbb{R}^{n}\dots \ Q_k x_k\in \mathbb{R}^{n}:\Phi(x_1, \dots, x_k)\]

\noindent where $\Phi$ is quantifier-free and $Q_k$ is $\exists$ if $k$ is even or $\forall$ if $k$ is odd. 
The level $\Erealhierarchy{k}$ consists of problems that are polynomial-time reducible to the decision problem for first-order real formulas of form:

\[\exists x_1\in \mathbb{R}^{n} \ \forall x_2\in \mathbb{R}^{n}\dots \ Q_k x_k\in \mathbb{R}^{n}:\Phi(x_1, \dots, x_k)\]

\noindent where $\Phi$ is quantifier-free and $Q_k$ is $\exists$ if $k$ is odd or $\forall$ if $k$ is even.
\end{definition}

We define \RH to be the union of all the levels of a the real hierarchy, see also~\cite{OracleSeparation}. 
By a result of Renegar~\cite{R88} and Canny~\cite{C88,C88b}, we know that $\RH\subseteq \PSPACE$.

\begin{definition}[\Arealhierarchyinv{k}] 
In the problem $\Arealhierarchyinv{k}$, we are asked to decide the truth of a formula:

\[\forall x_1\in [\tfrac34, 1]^{n}\exists x_2\in [\tfrac12, 2]^{n}\dots Q_kx_k\in I_k^{n}:\Phi(x_1, \dots, x_k)\]

If $k$ is even, then $Q_k=\exists$ and $I_k=[\frac12, 2]$. If $k$ is odd, then $Q_k=\forall$ and $I_k=[\frac34, 1]$. $\Phi$ is a quantifier-free formula that is a conjunction of terms of the form $x+y=z$, $xy=1$, or $x=1$. 
\end{definition}

We also define $\Erealhierarchyinv{k}$ in an analogous way. It is well-known that $\etrinv=\Erealhierarchyinv{1}$ is $\ER$-hard, but until now, attempts to generalize this result to higher levels of the hierarchy have met with limited success~\cite{SS23,SS25}. Our techniques can be used to show that the problems $\Erealhierarchyinv{k}$ and $\Arealhierarchyinv{k}$ are complete for the respective levels of the real hierarchy so long as the innermost block of quantifiers is $\exists$.

\begin{restatable}{theorem}{HIERARCHYINVTHM}
    \label{thm:hierarchyinv}
For each $k$, $\Arealhierarchyinv{2k}$ is $\Arealhierarchy{2k}$-complete and $\Erealhierarchyinv{2k+1}$ is $\Erealhierarchy{2k+1}$-complete. 
\end{restatable}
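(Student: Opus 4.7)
The plan is to lift the proof of Theorem~\ref{thm:FOTRINV} to the bounded-alternation setting. Containment at the respective level is immediate: an $\Arealhierarchyinv{2k}$ instance is literally a $\Arealhierarchy{2k}$ formula, with the interval restrictions $x \in [\tfrac34,1]$ and $x \in [\tfrac12,2]$ written as extra conjuncts (expressible via products and additions) inside the quantifier-free matrix $\Phi$; similarly for $\Erealhierarchyinv{2k+1}$. The work lies in the hardness reductions.

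For $\Arealhierarchy{2k}$-hardness of $\Arealhierarchyinv{2k}$, I would start from a generic formula $\forall y_1 \exists y_2 \cdots \exists y_{2k} : \Psi(y_1,\ldots,y_{2k})$ with $\Psi$ a Boolean combination of polynomial (in)equalities, and process it in two stages that mirror the proof of Theorem~\ref{thm:FOTRINV}. In the first stage (\emph{arithmetization}) I break $\Psi$ into a conjunction of atomic constraints of the forms $x+y=z$, $xy=1$, and $x=1$ by naming subexpressions with fresh variables; these fresh names are existential witnesses to intermediate arithmetic values, and therefore attach to the innermost $\exists$-block without opening a new alternation. This is exactly the point where the hypothesis ``innermost block is $\exists$'' is consumed. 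In the second stage (\emph{compactification}) I re-parameterize each original variable $y_i$ as a scaled, shifted copy $y_i = M_i \cdot (x_i - c_i)$, with $M_i$ built inside the formula using repeated doubling and inversion, so that existential $y_i$ live in $[\tfrac12,2]$ and universal $y_i$ in $[\tfrac34,1]$. Correctness — that only bounded witnesses are ever needed, and that universal $y_i$ can be restricted to a bounded interval without changing truth value — is precisely the compactness content of Theorem~\ref{thm:FOTRINV}. The case of $\Erealhierarchyinv{2k+1}$ is completely symmetric, starting from $\exists y_1 \forall y_2 \cdots \exists y_{2k+1}$.

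The main obstacle is keeping the alternation count fixed while threading all auxiliary variables into existing quantifier blocks. Arithmetization naturally yields existential helpers, which can safely be appended to the innermost block — this is why the theorem requires that block to be $\exists$. The more delicate part is compactification: rescaling a universal variable $y_i$ to a bounded range requires placing the scaling factor $M_i$ and any inversion witnesses into \emph{some} existential block that follows $y_i$'s $\forall$-block, because $M_i$ depends on constants only, but its construction via $xy=1$ needs existential ``this inverse exists'' testimony. Since the innermost block is existential and every universal block is followed by an existential one (by the alternation pattern), every required helper can be placed. A careful block-by-block scheduling — with existential helpers introduced at the first available $\exists$-block after the variables they refer to — should go through without increasing the alternation count beyond $2k$ (resp.\ $2k+1$). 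I expect the argument to be essentially a routine, if bookkeeping-intensive, refinement of the single-alternation construction already carried out for \fotrinv.
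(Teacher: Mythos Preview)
There is a genuine gap in your compactification step. Your proposal to ``re-parameterize each original variable $y_i$ as a scaled, shifted copy $y_i = M_i \cdot (x_i - c_i)$'' presupposes that a single uniform bound $M_i$ exists for each variable, but this is exactly what fails in the presence of quantifier alternation. The paper's running example $\forall y\,\exists x:\, y(xy-1)=0$ already shows that the witness $x$ cannot be bounded uniformly in $y$; no linear rescaling fixes this. The actual compactification (\Cref{thm:compactformula}) is far more elaborate: it introduces smoothing variables $z_\ell$ with non-uniform thickening constraints of the form $a_\ell|z_\ell-y_\ell|^2\le (1+|(x_\ell,b_\ell)|^2)^{-p_\ell}$, where the exponents $p_\ell,q_\ell$ are doubly exponential in $\ell$, and it requires a careful sequence of limit-swapping lemmas (\Cref{lem:closedswap,lem:openswap,lem:bswap,lem:aswap}) to move all range bounds to the front. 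Only after this does \Cref{lem:limitevaluation} supply explicit (triply-exponential) constants realizing the limits. None of this machinery is captured by a linear change of variables.

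A second issue is the direction of dependence. You plan to invoke ``the compactness content of \Cref{thm:FOTRINV}'' as a black box, but the reduction proving \Cref{thm:FOTRINV} uses \emph{quantified} repeated exponentiation, which deliberately introduces additional quantifier alternations; applying it would destroy the level-$2k$ alternation bound you are trying to preserve. In the paper the logic runs the other way: the compactification theorem is established first, and then \Cref{thm:hierarchyinv} is proved by encoding the resulting doubly-exponential-degree polynomials and triply-exponential constants via ordinary (purely existential) repeated squaring, which is polynomial-time for each fixed $k$ and adds variables only to the innermost $\exists$-block. Your concern about scheduling helper variables into existing blocks is therefore aimed at the wrong place: the real alternation-preservation comes from the design of the compactification itself and from avoiding the quantified-exponentiation trick, not from where the scaling constants live.
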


\paragraph{Machine Model.}
\begin{figure}[t]
    \centering
    \includegraphics{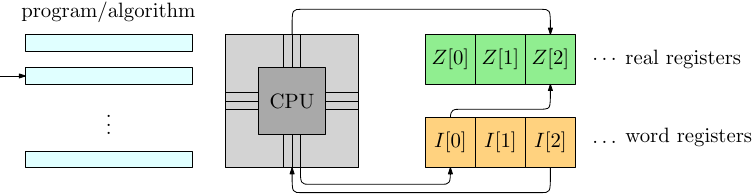}
    \caption{An illustration of a real RAM machine with a program/algorithm.}
    \label{fig:realRAM}
\end{figure}
Our next result is again a structural result.
While we have defined \QR using the first-order theory of the reals,
it is equally plausible to define it using a machine model.
Specifically, we will rely on the definition of a real RAM.
The real RAM is commonly used in computational geometry and defined explicitly for example in~\cite{EvdHM20}.
It is similar to the word RAM, but has additional real memory cells that can be manipulated using simple arithmetic operations.
Importantly, it is not allowed to round the real numbers or use them addresses.
Given an algorithm $A$ that runs on input $w$, we denote by $A(w)$ the result of the computation.
Note that the real RAM is polynomial time equivalent to a BSS-machine over the real numbers~\cite{BSS89} or a real Turing machine~\cite{OracleSeparation}. 
The main difference is that most algorithms and precise runtime analysis are done with respect to the real RAM.
The following theorem makes this machine model characterization precise.

\begin{restatable}{theorem}{MachineModelTHM}
\label{thm:MachineModel}
    A language $L$ is in \QR if and only if there is a polynomial time algorithm $A$ on a  real RAM and for each instance $w$ it holds 
    \[w \in L \Leftrightarrow \exists x_1 \in \R^n \ \forall y_1 \in \R^n \ \ldots \ \exists x_k \in \R^n \ \forall y_k \in \R^n : A(x_1,y_1,\ldots, x_k,y_k,w) = 1,\]
    for some $n,k$ polynomial in the length of $w$.
\end{restatable}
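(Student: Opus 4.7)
The plan is to establish the two directions separately, both following a standard blueprint that requires some bookkeeping.

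\textbf{Forward direction ($L\in\QR$ implies the machine-model characterization).}
Assume $L\in\QR$ via a polynomial-time reduction $f$ mapping each instance $w$ to a $\fotr$ sentence $f(w)$ equivalent to $w\in L$. First I would put $f(w)$ into prenex normal form, group consecutive identical quantifiers into blocks, and pad with dummy variables and trivial quantifier blocks so that the structure becomes $\exists z_1\,\forall z_2\,\cdots\,\exists z_{2k-1}\,\forall z_{2k}$ with each $z_i\in\R^n$, for suitable $n,k$ polynomial in $|w|$. The real RAM algorithm $A$ then takes $(x_1,y_1,\ldots,x_k,y_k,w)$, recomputes $f(w)$ together with the padding, substitutes the given reals for $z_1,\ldots,z_{2k}$, and evaluates the quantifier-free matrix of the padded formula. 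Both steps are polynomial time on a real RAM, since $f$ is polynomial-time computable and the matrix is a polynomial-size Boolean combination of arithmetic atoms and comparisons.

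\textbf{Backward direction (machine-model characterization implies $L\in\QR$).}
Given a polynomial-time real RAM algorithm $A$ satisfying the theorem's equivalence, the plan is to construct in polynomial time from $w$ a $\fotr$ sentence
\[
\phi_w \;:=\; \exists x_1\,\forall y_1\,\cdots\,\exists x_k\,\forall y_k\,\exists\tau\;:\;\Psi_{A,w}(\tau,x_1,y_1,\ldots,x_k,y_k)
\]
where $\tau$ is a tuple of real variables encoding the complete computation trace of $A$ on input $(x_1,y_1,\ldots,x_k,y_k,w)$, and $\Psi_{A,w}$ asserts that $\tau$ is a valid accepting computation. The trace records, at each of the polynomially many time steps, the contents of all real registers touched so far together with the discrete state of the machine (program counter and integer registers). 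The discrete part is encoded in binary using auxiliary real variables constrained to $\{0,1\}$ through atoms of the form $b(b-1)=0$. Each real arithmetic step translates into a polynomial equation, each comparison into a sign condition, and each control-flow transition into a conjunction of implications indexed by the binary-encoded program counter. Since $A$ runs in polynomial time, $\tau$ has polynomial length and $\phi_w$ has polynomial size, so $w\mapsto\phi_w$ is a polynomial-time reduction from $L$ to $\fotr$, placing $L$ in $\QR$.

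\textbf{Main obstacle.}
The main technical obstacle is the formal construction of $\Psi_{A,w}$, namely the faithful encoding of the hybrid integer/real computation of a real RAM as a polynomial-size first-order formula over the reals. The underlying idea is classical---essentially a Cook--Levin theorem over the reals for the BSS model---but executing it cleanly for the real RAM requires carefully specifying how the program counter and integer registers are represented by real $\{0,1\}$-indicator variables, and writing out the constant-size transition function as a conjunction of conditional polynomial constraints. Once this encoding is in place, everything else reduces to routine bookkeeping, and matching up the innermost $\exists\tau$ with the alternation pattern of the outer quantifiers presents no difficulty since $\fotr$ imposes no restriction on quantifier structure.
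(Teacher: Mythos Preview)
Your proposal is correct and follows essentially the same approach as the paper. The only difference is in the backward direction: where you sketch the Cook--Levin-style trace encoding yourself, the paper black-boxes this step by citing an existing lemma (Lemma~13 of Kirn, Meijer, Miltzow, and Bodlaender) that provides exactly the formula $\exists z\,\varphi(u,z)$ equivalent to $M(w,u)=1$; your ``main obstacle'' is precisely what that lemma supplies.
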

The theorem is very useful to show \QR-membership, but is also conceptually useful.

\paragraph{Packing.}
In our next result we apply \Cref{thm:FOTRINV} to show that the \PackingGame is \QR-hard and \Cref{thm:MachineModel} to show that \PackingGame is in \QR.
Before we state the theorem formally, we define the \PackingGame. See also \Cref{fig:Packing-Intro} for an illustration.
\begin{definition}[\PackingGame]
We are given two set of simple polygons $H$ and $D$, called pieces, and a container $C$ which is a polygonal domain. (A polygonal domain is a polygon that is allowed to have polygonal holes.)
The human and the devil alternate in selecting one piece from their set and placing it in the container 
$C$ without overlapping with any of the previously placed pieces.
It is allowed that vertices and edges overlap. Just no two interior points should overlap.
Furthermore, when placing the pieces it is allowed to rotate and translate the pieces.
The player that cannot place one of their pieces anymore (without repetition) loses.
If all pieces are successfully placed the human wins.
\end{definition}

We are now ready to present our third theorem.

\begin{restatable}{theorem}{PackingTHM}
    \label{thm:Packing}
The \PackingGame is $Q\mathbb{R}$-complete.
\end{restatable}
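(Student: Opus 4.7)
Membership in \QR follows from \Cref{thm:MachineModel}. Given a \PackingGame instance, encode ``the human wins'' as an alternating formula where each turn is a quantifier block over three real parameters (two translation coordinates and one rotation angle), plus a discrete piece choice realized as a polynomial-size disjunction inside the quantifier-free matrix. On a real RAM one can check in polynomial time whether a proposed placement lies inside the container and is interior-disjoint from all previously placed pieces, since this amounts to testing a polynomial number of sign conditions on polynomials of bounded degree in the coordinates. The ``cannot place'' losing condition is expressed by the inner universal quantifier over each remaining piece. The total number of alternations equals the number of turns, which is polynomial in the input, so \Cref{thm:MachineModel} applies.

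Hardness is the substantial direction, and I would reduce from \fotrinv using \Cref{thm:FOTRINV}. Given an \fotrinv instance with alternating quantifiers over variables in $[\tfrac12,2]$ or $[\tfrac34,1]$, I would design a container with three kinds of regions: (i) a sequence of narrow \emph{variable channels}, one per variable, so shaped that a dedicated thin rectangular piece fits only in its own channel and, once placed, is free along exactly one direction, with the scalar offset along that direction encoding the variable's value in its allowed range; (ii) \emph{arithmetic gadget rooms} whose interior profile is modified by the exact position of adjacent variable pieces, such that an auxiliary ``witness'' piece fits into the room if and only if the corresponding relation $x+y=z$, $xy=1$, or $x=1$ is satisfied by the current variable positions; and (iii) \emph{padding slots} shaped as keys to filler pieces that force the global turn order. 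The two piece multisets $H$ and $D$ would be partitioned as: the human gets one variable piece per existential variable, one arithmetic witness piece per constraint, and enough padding pieces; the devil gets one variable piece per universal variable plus padding.

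The heart of the reduction is controlling \emph{which} piece each player is forced to play at each turn. The plan is to give each piece a unique ``key'' shape and each slot in the container a matching ``lock'' shape, and to stagger the locks so that slot $i$ only becomes accessible after slot $i-1$ is filled. By ordering the slots to mirror the quantifier sequence of the \fotrinv instance and interleaving devil and human slots, both players are forced into a canonical move sequence in which variables are chosen in the quantifier order. The arithmetic witness pieces fit into the witness rooms when and only when the corresponding relations hold; a cleanly designed witness room will convert each equation into two placement tests (one for $\leq$, one for $\geq$) so that exact equality is enforced. For the inversion $xy=1$, I would use an intercept-theorem style gadget: two thin pieces at heights encoding $x$ and $y$ determine the slope of a line, and a triangular witness piece fits in a prescribed slot precisely when that slope equals a fixed constant corresponding to $xy=1$.

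The main obstacle will be the \emph{adversarial} nature of the game rather than the geometry itself, which is largely inherited from known \ER-hardness constructions for packing. In one-round (non-game) reductions it suffices to exhibit a single consistent placement; here I must additionally ensure that the devil has no ``sneaky'' alternative placements that let them cheat the quantifier structure, either by playing a wrong piece early, placing a variable piece outside its channel at an angle, or refusing to place a piece into the correct slot so as to block a later witness. The padding pieces and lock-and-key shapes must be robust enough that any deviation from the intended canonical play sequence is strictly worse for the deviating player, and the witness rooms must be engineered so that the human, not the devil, inserts them at the moment the correctness of the constraint is testable. Proving that this robustness holds simultaneously for all gadgets under arbitrary interleavings of devil moves, while keeping all shapes polynomially bounded in complexity, is where I expect the bulk of the technical work to lie.
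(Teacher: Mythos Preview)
Your overall architecture matches the paper's: membership via the real-RAM characterization, and hardness by encoding each variable as the one free degree of freedom of a dedicated piece, with the human then assembling the arithmetic gadgets. The membership argument is fine.

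The substantive divergence is in how you force the canonical move order. You propose a sequential lock-and-key scheme in which slot $i$ only becomes accessible after slot $i-1$ is filled. This has a structural problem you have not addressed: if filling slot $i-1$ is what unlocks slot $i$, then the player at turn $i-1$ can win outright by placing their piece in a way that \emph{fails} to unlock slot $i$ (e.g.\ a slight rotation, or choosing a position at the edge of the channel that blocks the passage). The opponent then cannot place on turn $i$ and loses, regardless of the formula. You would have to engineer every variable slot so that \emph{every} legal placement of piece $i-1$ unlocks slot $i$, while still leaving exactly one continuous degree of freedom; you acknowledge this class of issues but do not give a mechanism that rules it out.

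The paper sidesteps this entirely with a size argument rather than sequential unlocking. All variable pockets are open from the start; the variable pieces have exponentially decreasing lengths (piece $i$ has length $\Theta(4^{2n-i})$ times a constant dominating every non-variable piece). A player who skips their largest remaining variable piece loses because the opponent can then dump one piece into the vacated large pocket, which is sized to accommodate \emph{all} remaining pieces even under adversarial interference, so the skipped piece is permanently unplaceable. This is both simpler and robust to the ``refuse to unlock'' attack. A second difference: the paper reduces from \rangedfotrinv (variables confined to intervals of width $n^{-c}$), not plain \fotrinv, precisely so that the variable piece's sliding range is tiny relative to its footprint and the notch linkage to the engine stays consistent. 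Finally, after the variable phase the paper confines the devil to a disjoint ``distraction'' area with long skinny pieces that cannot interfere with the engine, and hands the entire constraint-checking engine (reused verbatim from the known \ER-hardness construction of Westerdijk) to the human; your padding idea is in the same spirit but less concrete.
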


The theorem is established by a reduction from \fotrinv. Using mostly standard techniques from similar geometric reductions.
One technical strength is that both players can choose from a set of pieces and are not predetermined which piece they use.
We think the main value of \Cref{thm:Packing} is that this is a very natural game which very closely resembles the game mentioned in the motivational section.
One can easily imagine this being an actual computer game.

\begin{figure}[bthp]
    \centering
    \includegraphics[page=2]{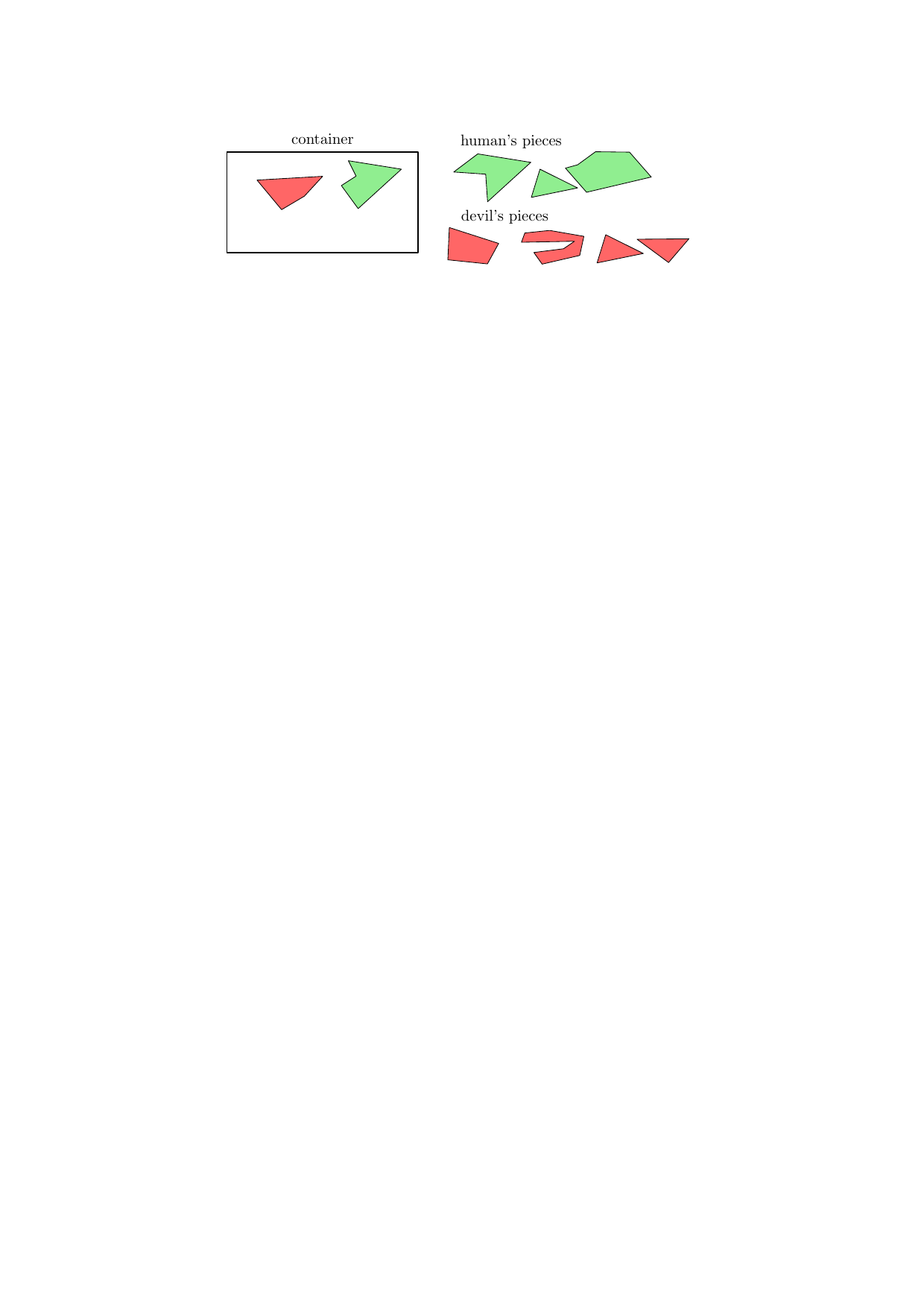}
    \caption{
     Left: The human and the devil place alternatingly vertices in the plane and the incident straight edges. 
    The black vertices and thick edges were part of the input. The vertices $2,3,4$ still need to be placed.
    Right: We see a square container and a set of pieces both for the human and the devil.
    Both players have already placed some pieces and the first player who cannot place a piece will loses. 
   }
    \label{fig:Packing-Intro}
\end{figure}

\paragraph{Graph Drawing.}
Another very natural game is the \PlanarExtensionGame. 
The idea is that two players have to draw a straight line planar graph together.
And part of the graph is already drawn.
Below is a more precise definition,
see \Cref{fig:Packing-Intro} for an illustration.

\begin{definition}[\PlanarExtensionGame]
    We are given a planar graph $G$, a planar straight-line drawing of a subgraph $G'$ in the plane, and
    a linear ordering on the vertices of $G$ not in $G'$.
    The human and the devil take turns in placing vertices according to the given linear order.
    All edges must be drawn as straight line segments without intersections.
    That is, no two edges are allowed to share a point in their interior, no two vertices are at the same location, and no vertex is allowed to lie in the interior of an edge.
    The game ends when the first player cannot place a vertex without creating an intersection.
    If the graph is drawn completely in the plane the human wins.
\end{definition}

We are now ready to state our fourth theorem.

\begin{restatable}{theorem}{Planar}
    \label{thm:Planar}
The \PlanarExtensionGame is $Q\mathbb{R}$-complete.
\end{restatable}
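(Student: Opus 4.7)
The plan is to establish both directions. For membership in $\QR$, I would invoke \Cref{thm:MachineModel}: the state of the game after a sequence of placements is fully described by the coordinates of the placed vertices, and a real RAM can check in polynomial time (using elementary segment--segment intersection tests on pairs of edges in the current drawing) whether a proposed placement is legal and whether the drawing is complete. Wrapping this checker in an alternation of $\exists$ (human moves) and $\forall$ (devil moves) over the unplaced vertex coordinates exactly expresses ``the human has a winning strategy,'' yielding $\QR$ membership.

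The hard direction is $\QR$-hardness, which I would prove by reduction from $\fotrinv$ using \Cref{thm:FOTRINV}. Given an instance $\exists x_1 \forall x_2 \dots Q_n x_n : \Phi$, I would build a partial straight-line drawing together with a linear order on the unplaced vertices so that the $i$-th vertex to be placed encodes $x_i$, with its turn parity (human vs.\ devil) matching $Q_i$. The backbone construction uses three kinds of gadgets: a \emph{variable gadget} that confines each vertex to a thin horizontal slot via many edges back to the partial drawing, so that its horizontal coordinate encodes the value of $x_i$ in the appropriate range ($[\tfrac12,2]$ for existential and $[\tfrac34,1]$ for universal variables); \emph{addition} and \emph{inversion gadgets} that use auxiliary human-placed vertices whose legal placements certify $x+y=z$ and $xy=1$ via planarity; and \emph{padding vertices} inserted into the linear order to align the parity of turns with the quantifier pattern, so that gadget-internal vertices never perturb the outer alternation.

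For the individual algebraic gadgets I would follow the style of the standard \ER-hardness reductions for partial drawing extensibility, where additions and inversions are realized by forcing certain triples or quadruples of vertices into collinear or projective configurations that, by a short geometric computation, enforce the desired algebraic identity on their slot coordinates. The important adaptation to the alternating setting is that the verifier vertices of each gadget must be ordered \emph{after} all the variable vertices involved in that constraint, and must be placed by the human, so that the devil cannot sabotage the check. The bounded ranges of \fotrinv are crucial here: since existential variables live in $[\tfrac12,2]$ while universal ones live only in $[\tfrac34,1]$, the human always retains enough slack to satisfy an addition or inversion constraint when it is their move, but is doomed as soon as the committed values violate $\Phi$.

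The main obstacle will be realizing the inversion gadget robustly under alternation. In the one-shot setting, inversion is typically enforced by a rigid configuration of already-placed vertices, but here some participating vertices are chosen by the devil, and the human's later verifier moves must be able to legally extend the drawing \emph{iff} the devil's values together with the human's committed existential values actually satisfy $xy=1$. Ensuring this requires (i) carefully sequencing the verifier vertices after the relevant devil moves, (ii) arranging the combinatorial structure of the incident edges so that the only legal placement of each verifier is at the unique location prescribed by the algebraic identity, and (iii) checking that the range gap between $[\tfrac12,2]$ and $[\tfrac34,1]$ propagates through chained inversion and addition constraints without accidentally trapping the human on honest play. Once addition and inversion gadgets are in hand, a standard amalgamation combines them into a single planar partial drawing whose completion game is equivalent to the input \fotrinv instance, completing the reduction.
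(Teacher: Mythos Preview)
Your membership argument is fine and matches the paper. The hardness sketch, however, has a genuine gap at its core.

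You propose to ``follow the style of the standard \ER-hardness reductions for partial drawing extensibility.'' No such reductions exist: \PartialDrawingExtensibility is only known to be NP-hard (Patrignani), and \ER-hardness for it is an open problem explicitly flagged in the paper. The gadgets you have in mind---confining a vertex to a segment by ``many edges back to the partial drawing,'' then forcing collinear/projective configurations---work for \GraphInPolygon because there an edge may lie on the polygon boundary. In \PlanarExtensionGame the boundary is part of the graph, so a drawing is simple: no vertex may sit on an edge and no two edges may overlap. Consequently, edges to already-drawn vertices can only constrain a new vertex to an \emph{open region} (an intersection of open half-planes), never to a measure-zero set like a segment. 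Your variable gadget, as described, cannot exist in this model.

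The paper's key idea, which your proposal misses, is to exploit the alternation itself to simulate closed constraints. The human is asked to place $v_x$ on a target segment; if the human deviates, then on a later turn the \emph{devil} places a blocker inside a small pocket that $v_x$ can no longer see, after which the human is forced into a crossing. Four such sub-pockets pin $v_x$ to the segment exactly. The same devil-as-checker mechanism underlies the paper's variable-ray, inversion, turn, and addition gadgets. Your proposal inverts the roles (human verifiers certifying constraints after devil moves), which is the natural picture for encoding $\Phi$, but it does not solve the prior problem of how either player is forced onto a line in the first place. You also reduce from \fotrinv directly; to lay gadgets out in the plane you need the variable--constraint incidence graph to be planar, so the paper first passes through a planar variant of \fotrinv.
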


The main strength of the theorem is that it is a very natural game to consider.
\Cref{thm:Planar} follows the ideas of the reduction to show \ER-hardness of the \textsc{Graph in Polygon} problem~\cite{LMM}, combining it with new insights described below.

One highlight here is that we can show \QR-completeness also for the variant that requires the drawing to be simple.
That is, no edge has a vertex in its interior or overlaps with another edge. 
For the non-game variant only NP-hardness is known~\cite{P06} and \ER-hardness is an intriguing open problem~\cite{LMM}.

Notably, we develop a technique to enforce points to be placed collinearly with a line segment, despite the fact that we have no closed constraints geometric constraints.
Intuitively, we let the \human and the \devil check each other whether points are placed collinearly.
We leverage the interaction between the \human and the \devil to prove that when a player does not place a vertex collinear to a specific intended line segment, the opponent
can construct an obstruction that forces the player to create a crossing later and thus lose the game.

\paragraph{Order Types.}
Finally, we consider the \OrderTypeGame. For completeness, we first define order types.
An \emph{(abstract) order type} $O=(E,\chi)$ of size $n$ is a combinatorial abstraction of the possible locations of $n$ points in $\mathbb{R}^2$. Here, $E$ is a set of $n$ elements called the ground set and $\chi$ is a map

\[ \chi: \binom{E}{3} \rightarrow \{-1,0,1\}\]
called a chirotope satisfying a set of constraints derived from the geometric interpretation. 
We say that a point set $P \subset \mathbb{R}^2$ \emph{represents} a given order type $O = (E,\chi)$ if 
there is a bijection $\varphi$ between $E$ and $P$ such that the following holds.
Let $p = \varphi(e)$,  $q = \varphi(f)$, and $r = \varphi(g)$ then the orientation of $p,q,r$ is given by $\chi(e,f,g)$.
If $\chi(e,f,g) = 1$ then $(p,q,r)$ form a triangle in counter-clockwise order. 
If $\chi(e,f,g) = -1$ then $(p,q,r)$ form a triangle in clockwise order. 
And if $\chi(e,f,g) = 0$ then $(p,q,r)$ are on a line. See \Cref{fig:order-type-definition} for an illustration. For $E'$ a subset of $E$, we denote by $O_{|E'}$ the restricted order type on $E'$, where the chirotope is only defined on $\binom{E'}{3}$.

\begin{figure}[tpb]
    \centering
    \includegraphics{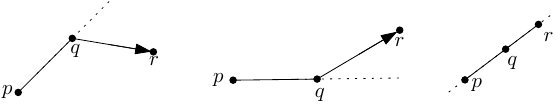}
    \caption{The orientation of a triple of points is either positive, negative, or zero. }
    \label{fig:order-type-definition}
\end{figure}
  The \textsc{Order Type Game} is defined as follows.

  \begin{definition}[\OrderTypeGame] We are given an order type $O$ and a linear ordering on the points of $O$. The human and the devil take turns in placing points in $\mathbb{R}^2$ according to the given order. At each step, the points must be placed so that the order type of the placed set of points $(p_1, \ldots , p_k)$ must represent $O_{|p_1, \ldots, p_k}$. The first player who cannot place a point loses, while the other player wins. If the entire order type is realized, the human wins.
\end{definition}

The following theorem summarizes our findings on the game.

\begin{restatable}{theorem}{OrderTypeGameTHM}
    \label{thm:OrderTypeGame}
    \OrderTypeGame is $Q\mathbb{R}$-complete.
\end{restatable}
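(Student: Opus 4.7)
For $\QR$-membership, I would apply Theorem \ref{thm:MachineModel}. The \OrderTypeGame has polynomially many rounds, one per point of the input order type, and each round is a choice of a point in $\R^2$, that is, of two real numbers. The human's choices correspond to existential quantifiers, the devil's to universal quantifiers. Checking that the placed prefix $(p_1,\ldots,p_k)$ realizes the restricted order type $O_{|\{p_1,\ldots,p_k\}}$ amounts to computing signs of $\binom{k}{3}$ many $3\times 3$ determinants, which a real RAM performs in polynomial time; the terminal condition (all points placed and the full chirotope realized) is verified analogously. Plugging this evaluation procedure into Theorem \ref{thm:MachineModel} places the game in $\QR$.

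For $\QR$-hardness, I would reduce from $\fotrinv$, which is $\QR$-complete by Theorem \ref{thm:FOTRINV}. Given a formula $\exists x_1 \forall x_2 \cdots Q_n x_n : \Phi$ whose constraints are conjunctions of $x+y=z$, $xy=1$, and $x=1$, the plan is to build an order type $O$ together with a linear order whose first points form a rigid projective frame, a constant-size anchor configuration pinning down a coordinate system up to bounded distortion. This frame is followed by variable points $p_1,\ldots,p_n$ placed in quantifier order, and finally by a verification phase of polynomially many gadget points. Each $p_i$ is forced by the chirotope to lie on a designated line through the frame (using collinearity entries $\chi=0$), so that its position along the line encodes the value of $x_i$, and nearby orientation constraints with the frame pin the value to the correct interval $[\tfrac12,2]$ or $[\tfrac34,1]$. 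The arithmetic constraints are encoded with classical von~Staudt-style gadgets: addition and multiplicative inversion each translate into a small bundle of collinearities and one-sided orientations between variable points and auxiliary points placed during the verification phase. Assigning the verification turns to the human ensures that the human can complete the game exactly when the played assignment satisfies $\Phi$.

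The main difficulty I anticipate is that the order-type constraints are rigid rather than open: a universal variable point placed by the devil at a value slightly outside the intended interval could in principle still be combined with degenerate auxiliary placements to satisfy every subsequent collinearity demanded by the chirotope, letting the devil escape the range restriction without any local violation. To prevent this, I would borrow the \emph{collinearity-enforcement} technique developed for the \PlanarExtensionGame (Theorem \ref{thm:Planar}): the order type is designed so that any deviation of a placed point from a required line can be \emph{audited} later by the opponent, who places a carefully positioned witness point whose legal position becomes empty precisely under the deviation, forcing the deviator to lose a future move. The bulk of the technical work is then threefold: (i) verifying that the intended anchor/variable/verification chirotope is realizable as a genuine point configuration, (ii) showing that the auditing strategy can be carried out symmetrically for the human (auditing the devil's universal placements) and for the devil (auditing the human's existential and verification placements), so that on-range, on-line play is the only viable strategy for either player, and (iii) checking that with such play the game ends with the entire order type realized if and only if $\Phi$ is true, so that the human's winning strategies correspond bijectively to the winning strategies of the existential player in the $\fotrinv$ instance.
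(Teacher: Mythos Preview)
Your membership argument is fine and matches the paper's.

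For hardness, however, your plan misses the central obstacle and the paper's way around it. The key difficulty is not enforcing that a variable point lies on the line $\ell$ (the chirotope entry $\chi=0$ does this for free) but that \emph{an order type must specify the orientation of every triple}, and hence in particular the linear order of all the variable points along $\ell$. For any off-line point $p$ and any two variables $x_i,x_j$, the chirotope value $\chi(p,x_i,x_j)\in\{-1,+1\}$ fixes whether $x_i<x_j$ or $x_j<x_i$; but this order depends on the actual values chosen during play and is not known when you build the instance. Your auditing/collinearity-enforcement idea from the \PlanarExtensionGame addresses a different problem (forcing a point onto a line when only open constraints are available) and does nothing to resolve this.

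The paper's solution is to reduce not from \fotrinv but via an intermediate problem \TOfotr in which a total linear order on all variable values is part of the input. Getting there uses Shor's trick: replace each variable $x_i$ by $y_i=d^i+x_i$ for a huge auxiliary $d$, so that the $y_i$ are forced into a known order regardless of the $x_i$, and then carefully decompose the arithmetic so that every intermediate variable also has a predetermined place in the order. Only with this total order in hand can the chirotope be written down, after which the von~Staudt gadgets work as you describe. A second point you overlook is that the paper deliberately avoids \fotrinv as the source problem: its quantifier ranges are \emph{closed}, whereas the scaffolding $x_i\in(a_i,b_i)$ enforced by chirotope orientations is inherently open. The paper therefore starts from \BOfotr (open ranges $(-1,1)$), and explicitly remarks that this is why \fotrinv is not used.
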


We find \Cref{thm:OrderTypeGame} interesting for several reasons.
\begin{itemize}
    \item 
        For \ER-completeness, order type realizability is one of the most important problem 
both conceptually and as a tool to show hardness for a host of other problems~\cite{matousek2014intersection, C24, cardinal2018intersection, McDM10, HMWW24}. 
    This leads us to hope that \OrderTypeGame can also serve as a convenient starting point for reductions in the future.
    \item The techniques used to show \QR-completeness are distinct to the techniques used in the previous two theorems.
    \item We strengthen the claim that \devilgame{}s are usually \QR-complete.
\end{itemize}

In the next section, we reflect on our results
and answer questions the readers might have.

\subsection{Discussion and Context} 

\paragraph{Triple Exponentially Large Solutions.}
Readers familiar with the existential theory of the reals are aware that the solutions might have a double exponential magnitude.
A simple example is 
\[x_0 = 2^{2^{0}} = 2, \quad x_{n} = x_{n-1}^2 = 2^{2^{n-1}} \cdot 2^{2^{n-1}} = 2^{2^{n-1} + 2^{n-1}} = 2^{2^n}.\]
Thus, the formula $\exists x_0,\ldots,x_n  : x_0 = 2 \ \bigwedge_{i=1,\ldots,n} x_i = x_{i-1}^2$
has exactly one solution and this solution needs exponentially many bits if we want to describe it in binary.
On the positive side there is a matching double exponential upper bound.

Perhaps less well known is the fact that quantifier alternations allow us to create formulas that require triply-exponentially-large coordinates (see e.g. Theorem 11.18 in \cite{BasuPollackRoy2006}).
Let $\Psi_0(x,y) \equiv x = y^{2^{2^{0}}}$.
Our goal is $\Psi_n \equiv x = y^{2^{2^{n}}}$.
Assume that we have already constructed $\Psi_n$ and we want to construct $\Psi_{n+1}$.
One inefficient way is as follows:
\[ \Psi_{n+1}(x,y) \equiv \exists t \Psi_{n}(x,t) \land \Psi_{n}(t,y).\]
This is inefficient since the formula size of $\Psi_n$ would be exponential in $n$.
On the positive side, induction and a short calculation implies that 
\[x = t^{2^{(2^{n})}} \land t = y^{2^{(2^{n})}} \Rightarrow 
x = {\left(y^{2^{\left(2^{n}\right)}}\right)}^{{2^{\left(2^{n}\right)}}}
= y^{2^{\left(2^{n}\right)} \cdot 2^{\left(2^{n}\right)}} = y^{2^{\left(2^{n} + 2^n\right)}} = y^{2^{\left(2^{n+1}\right)}}.\]

And this is a triple exponentially large, as promised.
The question remains if we can define $\Psi_n$ in a more efficient way using a universal quantifier. The underlying idea is that we can reuse the formula $\Psi_n$ when constructing $\Psi_{n+1}$ by using a universal quantifier that acts as a logical ``and''.
As the two conditions $\Psi_n(x,t)$ and $\Psi_n(t,y)$ have exactly the same structure, we only need to replace the variables in a convenient way.
\[\Psi_{n+1}(x,y) \equiv  \exists t \ \forall i \in \{0,1\}  \ 
\exists a,b \  (i = 0 \Rightarrow a = x \land b = t) \land  (i = 1 \Rightarrow a = t \land b = y) \land \Psi_{n}(a,b).\]
It is not difficult to see that $\Psi_n$ has description complexity $O(n)$ and that it is equivalent to the old definition above.

This simple formula also illustrates why it is not trivial to place $\fotr$ even in \EXPSPACE{}, even if we were to know that there is an integer solution.
We use a known quantifier-elimination procedure (Theorem 14.16 in~\cite{BasuPollackRoy2006}) to obtain triply-exponential upper bounds for certain formulas (\Cref{lem:limitevaluation}). This will be used to compactify our formulas. 

\paragraph{Game Modeling.}
    In our definition we had to make a choice on how to define the games precisely.
    For example, we decided that each turn consists of placing a single piece/vertex/point.
    An alternative would be that each player has to play a set of pieces/vertices/points.
    We decided for the variant of placing a single object for two reasons.
    It seems more natural as a game and makes the \QR-hardness result stronger in a technical sense.
    However, the difference is purely cosmetic as we can just create moves where one of the players
    has to place an object, that do not affect anything of the part of the instance that we care about.

    Another choice that we made is that the first player to ``break the rules'' loses.
    An alternative definition would be to say that both players have to make a legal move if they can
    and the human only wins if the entire graph/packing/order type is fully realized.
    The reason we made the definition the way we did it is that the authors feel that a game being symmetric is more natural.
    Again, this is mainly a cosmetic difference as our \QR-hardness reductions 
    can be easily adopted.

    A third choice that we made is that we say that the human wins if all objects are placed correctly.
    An alternative would be to say that this is a tie.
    Again this is a subjective choice of the authors and the influence is purely cosmetic.
    It is easy to just give the devil in all our hardness reductions a final object that is impossible to place.
    This would have exactly the same effect.

\paragraph{Real \devilgame{}s.}
    We are aware of one game called the \textit{Magnet Game} or \textit{Kollide} which is a table top game that 
    people actually play and can be bought in a store. In this game every player has an equal number of magnets and there is a string to be placed on the table.
    The players take turns to place the magnets inside the string. The first player to get rid of all their magnets wins the game.
    However, if two magnets touch each other then the player whose turn it is has to pick up those magnets.
    Although this game is in theory extremely hard to analyse it is a fairly simple game in practice as your opponent is also
    not able to analyze it.
    Furthermore, physical dexterity seems to play a bigger role when playing the game in practice than mathematical analysis.
    (This is the author's subjective opinion.)

\paragraph{Upper Bounds for the First-Order Theory of the Reals.}
Upper bounds for the first-order theory of the reals have a long history going back to Tarski \cite{Tarski1951}, who was the first to show that the problem is even decidable. Collins \cite{Collins1975} later gave a doubly-exponential-time algorithm.
In 1986, Ben-Or, Kozen and Reif~\cite{BKR1986} claimed to have a proof that the first-order theory of the reals is in EXPSPACE. However, there was a gap in their proof (see \cite{C93}). 
Renegar~\cite{RENEGAR1992} later established that the theory can be decided in parallel exponential time, implying that it can be decided in exponential space.

\paragraph{Known Lower Bounds.}

It is actually possible to obtain \emph{unconditional} lower bounds for the first-order theory of the reals. Fischer and Rabin \cite{EXPTIMEhardness} showed that the problem is EXPTIME-hard, meaning that it cannot be decided in polynomial time. 
Berman~\cite{BermanEXPH} later extended this to show that the first-order theory of the reals is as hard as the exponential-time hierarchy. 

Specifically Berman defines the family of  complexity classes \text{Space Time Alternations}. 
For example, $\text{STA}(*,2^{n} , n)$ is the class of languages decidable in $2^{n}$ time and at most $n$ alternations on an alternating Turing Machine. 
(The $*$ indicates that there is no explicit limit on the space complexity. Since one can  never use more space than time there still is an implicit bound on the space complexity.)
He showed that there is a linear reduction from any language $L$ in 
$\text{STA}(*,2^{n} , n)$ 
to the language \textsc{RA}, the theory of the reals with addition, but without multiplication. This makes it a subtheory of \fotr.
Note that exponential space allows for doubly exponential time computations. 
And thus although $\text{STA}(*,2^{n} , n)$ allows for exponential space it does not capture all exponential space algorithms.

Note that:

\[\PSPACE=\bigcup_k\text{STA}(n^k, *, 0)=\bigcup_k\text{STA}(*, n^k, n^k)\]

The first equality is the definition of \PSPACE, and the second follows from the fact that the problem \quantifiedbooleanfeasibility is \PSPACE-complete~\cite{arora2009computational}. 
Similar arguments can be used to show the corresponding statement for \EXPSPACE: 

\[\EXPSPACE=\bigcup_k\text{STA}\left(2^{n^k}, *, 0\right)=\bigcup_k\text{STA}\left(*, 2^{n^k}, 2^{n^k}\right).\]

We are not aware of a clear reference in the literature though.
We still hope that those equations give some context to understand the lower bounds on \fotr.

\paragraph{Practical Comparison to \ER and \PSPACE.}
It is reasonable to ask how difficult it is to decide the winner in \devilgame{}s compared to decision problems with real variables or combinatorial games.
We can only give some vague estimates, but these may still help the reader to form their own opinion.
The short answer: \devilgame{}s are unthinkably difficult to solve. 

\begin{figure}
    \centering
    \includegraphics[page =2]{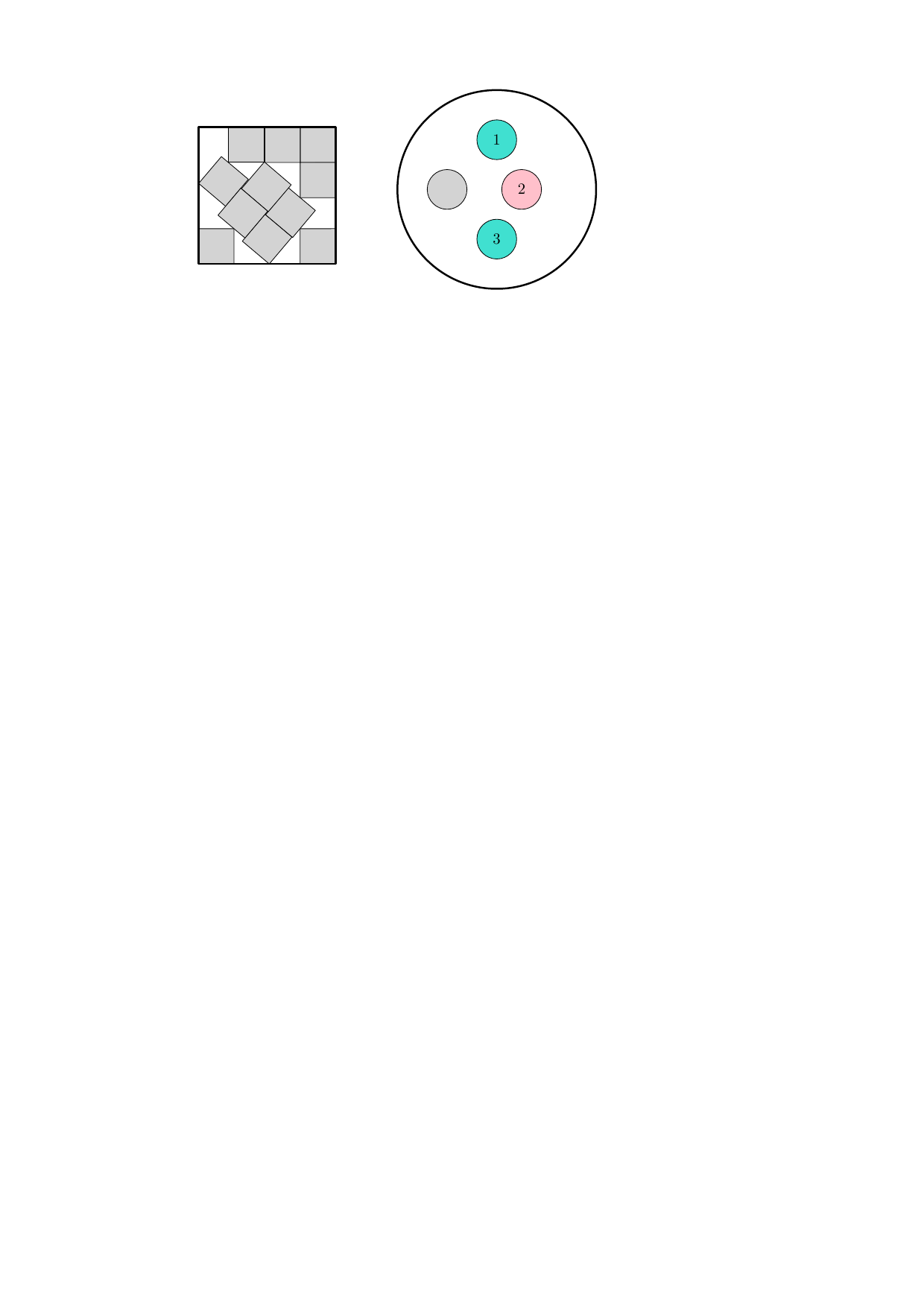}
    \caption{Left: The tightest known way to pack $11$ unit squares into a square container~\cite{Gensane2005}.  Right: How would the devil and the human need to play if the table had the shape of a pentagon?}
    \label{fig:squares}
\end{figure}

The long answer:
Let's start with combinatorial games that are PSPACE-complete.
For those games, it is typically possible to go through all combinations for at least one or two steps ahead by hand.
It is also possible to write an algorithm that can search maybe ten moves ahead, depending on the number of possible options per turn.
If the game ends in 10 turns, a computer can then compute a winning strategy.

Now, let's look at \ER-complete problems. The poster-boy example may be geometric packing.
The practical difficulty can be illustrated well with packing eleven unit squares into a minimum square container.
Surprisingly, despite being such a small number, we do not know the smallest square container and \Cref{fig:squares}
gives the smallest known packing. 
The infinitely many choices in the packing problem seem to make this simple-looking problem even harder than combinatorial games.
In particular, one needs quite sophisticated algorithms to solve polynomial equations to find a solution to the packing problem, whereas, for combinatorial games a simple tree search algorithm can be explained to and implemented by undergraduate students.
We should mention that because $\ER \subseteq \PSPACE$, the infinitely many choices in the packing problem can be turned into finitely many (binary) decisions, but the algorithm to do so is far from trivial or efficient.

So, how does \QR fit into this picture?
Combinatorial search algorithms might be possible if there is a clever heuristic to cluster the infinite number of cases into a small number of groups.
This might be possible for some specific instances.
Algebraic and numeric methods seem to be much worse in the presence of alternating quantifiers.

Consider the original devil's puzzle with a table that is not a disk but a pentagon. It seems close to impossible to analyze an optimal strategy even if the ratio between the table size and the coin size is five. 
We are not aware of any systematic study to find optimal solutions to \devilgame{}s for small instances.
Interestingly, there are many cases where a clever case distinction seems sufficient to analyze the game or where a symmetry argument reveals a winning strategy for one of the players. 

\paragraph{Oblivious Games.}
The way we defined the \devilgame{}, we gave a strict order of the moves and what is allowed in which move.
For example, in the \OrderTypeGame, each turn specifies precisely which point needs to be placed by the player.
One exception is the packing game where, we specify a set of objects for the human and the devil and each turn they can freely choose which of the pieces they want to place.
We call such games oblivious. We are wondering if it is possible to develop techniques to show that 
oblivious versions of the other game studied in this article are \QR-complete as well.
Can we define an oblivious version of \fotrinv?

\paragraph{Candidate Games.}
There are many more \devilgame{}s to study. 
We want to mention a few possible sources of such games.
In general almost every \ER-complete problem can be turned into a game, by letting the players alternate on choosing parts of the real witness.
Clearly this is more natural for some problems than for others. 
Let us highlight some games that we consider natural.

Geometric graph recognition problems can often be turned into 
\textit{geometric graph realizability games}.
Consider the recognition problem for segment intersection graphs: given a graph, is it the intersection graph of line segments in the plane?
It is known that this problem is \ER-complete~\cite{KM94,matousek2014intersection}.
If we also specify a linear order on the vertices of the graph, we can turn this question into a \devilgame{} as follows.
Players alternate in placing segments in the given order.
The first player who cannot place a segment with the correct intersection pattern loses.
If all segments are placed correctly the human wins.
We call this game the \StickGame.
If we consider contact graphs of unit disks, we get the \PennyGame~\cite{PennyGraphs}, if we consider the intersection of disks in the plane, we get the \DiskGame, and so on.

Another natural game is the \EuclideanVoronoiGame.
Players alternate claiming points in a given polygonal domain.
After $t$ moves the geodesic Voronoi diagram is computed. 
Both players compute the area of their Voronoi cells and the player with the larger area wins.
This game was studied in the literature~\cite{AhnCCGO04, CheongEH07, FeketeM05}.
We consider the \EuclideanVoronoiGame as very natural and see a strong practical motivation.
Consider a city and two super market chains want to open up stores. 
They know that one important factor for their customers is physical proximity to their super market.
We can see the \EuclideanVoronoiGame as an abstraction of the dynamics between the two chains.

\paragraph{Simple Order Types.} An order type is \emph{simple} if the chirotope only takes values in $\{-1,1\}$. Equivalently, in any realization of a simple order type, no three points lie on the same line. The problem is equivalent by point-line duality to the \textsc{Simple Stretchability} problem~\cite{shor1991stretchability} for pseudoline arrangements, which is \ER-complete~\cite{SS17} and an important starting point $\exists \mathbb{R}$-completeness reductions. Therefore, it would be interesting to have \QR-completeness for a game based on simple order type. However, the standard \emph{scattering} tool (see, e.g., Matou\v{s}ek~\cite[Section~4.3]{matousek2014intersection}) to make an order type simple does not apply easily to our gamified setting. The main issue is that the scattering requires knowing the entire order type, and thus we cannot implement it on the fly when some points have been placed but subsequent points have not been revealed yet.

\paragraph{Fixed Move Games and the Real Polynomial Hierarchy.}
We formulated our games for the case that the number of moves is part of the input.
It is also reasonable to model games with a fixed number $k$ of turns.
(We simultaneously assume that each turn becomes more complex. For example, a $k$ turn packing game where each player places $n$ pieces per turn etc.)
We believe that those games are natural candidates to be complete for the $k$-th level of the real polynomial hierarchy.
A starting point to show this is \Cref{thm:hierarchyinv} and our geometric reductions.

One curiosity is that \Cref{thm:hierarchyinv} only holds for the case that the innermost quantifier is an existential quantifier.
The technical reason is that many parts of the reduction introduce an existential quantifier at the innermost part of the formula.
For example, when we transform from $\forall y : y>0$ to 
$\forall y  \exists x: yx^2  = 1$.
This type of trick is used at many different places and we do not have alternative reductions. 
We can imagine though that clever usage of negations could help.
On the other hand, we cannot think of an underlying reason why we should see a different behavior for games where the devil makes the last move, compared to games where the human makes the last move.

\subsection{Proof Techniques}

We sketch the key ideas behind our most  technical result, \Cref{thm:FOTRINV}; to improve readability of this overview, we suppress many of the technicalities.

\paragraph{On the Importance of Being Compact.}
Membership is easy and follows immediately from the definition, as every \fotrinv instance is also an \fotr instance.
The difficult part of \Cref{thm:FOTRINV} is showing hardness. 
A key difference between \fotrinv and \fotr is that all the constraints in an instance of \fotrinv are closed (that is, there are no strict inequalities), and every variable is restricted to a compact interval.
In contrast, \fotr is allowed inequalities, and all variables are bounded by the open and unbounded set \R.
To show \QR-hardness of \fotrinv, we need to take a formula in \fotr and transform it into one with closed constraints and 
all the variables ranging over compact intervals.

Either one of these restrictions is relatively easy to achieve. Lemma 3.2 from \cite{SS17}, gives a way to make all the constraints closed, while Proposition 2.12 from~\cite{SS25} gives a way to bound all the variables. However, it is much harder to achieve both of these simultaneously.
To understand this, we notice that the results in \cite{SS17,SS25} work in some sense by constructing homeomorphisms between sets. For example, the constraint $x>0$ can be replaced by 
$\exists y:y\ge 0\wedge xy=1$. 
This strategy relies on the fact that the sets $(0,\infty)$ and $\{(x, y) | y\geq 0 \land  xy=1\}$ are homeomorphic.
Similarly, we can can bound a variable $x$ by replacing all instances of $x$ with $\varphi(x)$, where $x$ is a homeomorphism from $(-1, 1)$ to $\mathbb{R}$, such as $\varphi(x) \equiv \frac{x}{(x-1)(x+1)}$.

In contrast, the relevant sets for an instance of \fotrinv are always closed and bounded, so they are compact. 
Since $(0,\infty)$
is not compact, there is no direct way to encode the constraint $f(x)>0$ with an instance of \fotrinv. This is a key point: in order to show that \fotrinv is \QR-hard, \emph{we need to use transformations that change the topology of the sets}.

The key technical result then
is that compact \fotr is \QR-hard when restricted to compact instances. Once we show this, we can show that \fotrinv is \QR-hard using essentially the same techniques used to show \ER-hardness of \etrinv.

\paragraph{Limits and Smoothing.}
We achieve the compactification of an \fotr formula using two key ideas, which we will illustrate in a simple example.
We first introduce the notion of a limit for first-order real formulas. 
For $Q$ a quantifier (either $\exists$ or $\forall$), the formula $Qx:\Phi(x)$ is equivalent to $\lim_{b\rightarrow \infty}Qx\in [-b, b]:\Phi(x)$. 
Thus given some \fotr formula, we can replace all
unbounded variables using limits and bounded variables.
Using tools from real algebraic geometry, we show that we can choose values of the variables in the limits that ``realize'' the truth value of the formula, if all the limits are in front of the formula.
The main technical challenge is to move all the limits to the front of the formula. 

To illustrate the difficulty, consider the formula:
\[\forall y\exists x:y(xy-1)=0\]
The formula is true, but as $y$ goes to $0$, the value of $x$ required to satisfy the formula can become arbitrarily large. 
We can add limits to the quantifiers in order to obtain:

\[\lim_{a\rightarrow \infty}\forall y\in [-a, a]\lim_{b\rightarrow \infty}\exists x\in [-b, b]:y(xy-1)=0\]
However, there is no constant value of $b$ that realizes the limit $\lim_{b\rightarrow \infty}$, since we need $b\ge |\frac1y|$. 
Specifically, the formula is not equivalent to 
\[\lim_{a\rightarrow \infty} \lim_{b\rightarrow \infty} \forall y\in [-a, a]\exists x\in [-b, b]:(xy-1)=0,\]
as the second limit depends on $y$.
To illustrate the issue, we refer to the red set $S$ from \Cref{fig:CompactExample}~(a).
We notice that the projection of $S$ onto the $y$-axis is the $y$-axis. 
A fact that is equivalent to $\forall y \exists x: ((x,y)\in S)$.
However, for any bound on $x$ and $y$, the projection onto the $y$-axis misses a small part around the origin.
This second fact is a geometric way to see that the sentence $\forall y \in [-a,a] \exists x\in [-b,b]: ((x,y)\in S)$ is incorrect for any fixed value of $a$ and $b$.

\begin{figure}[tbph]
    \centering
    \includegraphics[page = 2]{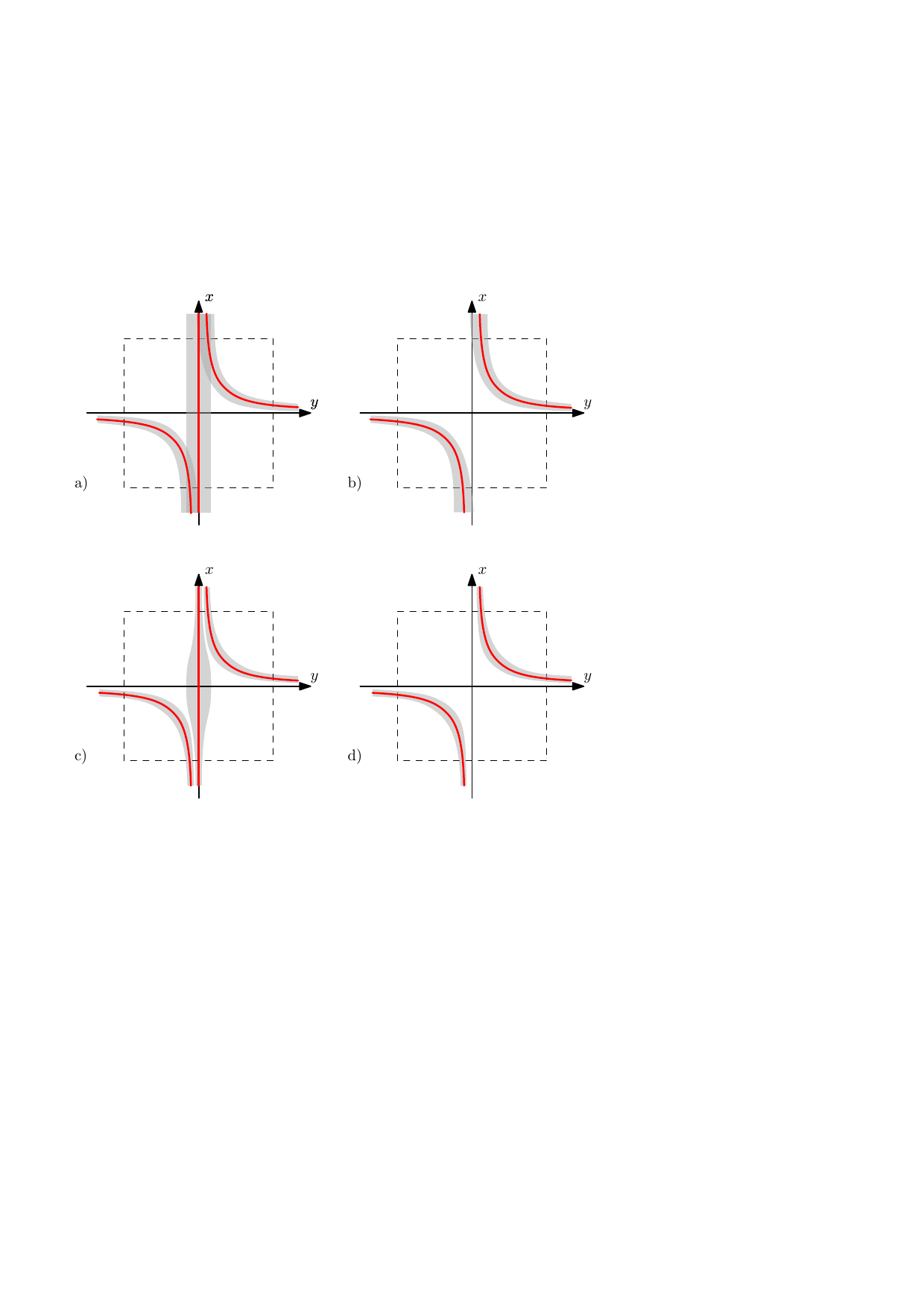}
    \caption{Attempting to smooth the formulas $\forall y\exists x:y(xy-1)=0$ and $\forall y\exists x:xy-1=0$. The goal is to thicken the formula so that the truth value depends only on the behavior inside the dashed box. a) The formula $y(xy-1) = 0$ in red and in gray its uniform smoothing.
    b) The formula $xy= 1$ in red and in gray its uniform smoothing.
    c) and d) The same formulas as in a) and b), but with non-uniform smoothing.
    All four drawings are conceptual and not perfectly accurate in order to increase readability.
    }
    \label{fig:CompactExample}
\end{figure}
The problem occurs because the set $y(xy-1)=0$ has two different behaviors, one when $y=0$ and one when $y\ne 0$. Our idea is to smooth the boundary between the two modes by thickening the instance in the $y$ direction. 
We replace the condition $\Phi(x,y) \equiv  y(xy-1)=0$ with $\Psi \equiv \exists z:z(xz-1)=0\wedge |y-z|\le \varepsilon$, as illustrated by the gray area in \Cref{fig:CompactExample}~(a).
We call this process smoothing the formula.
Before we explain the purpose of this transformation it is instructive to consider the two underlying \semialgebraic sets $S = \{(x,y)\in \R^2 : \Phi(x,y)\}$ and $T = \{(x,y)\in \R^2 : \Psi(x,y)\}$.
If a point $(a,b) \in S$ then the line segment from $(a, b-\varepsilon)$ to $(a, b+\varepsilon)$ is in $T$.
Furthermore for any set $S \subseteq \R^2$ the sentence 
$\forall y \exists x : (x,y) \in S$ is true whenever the projection of $S$ onto the $y$-axis is the entire line. 
With this geometric understanding it is easy to see that smoothing formulas makes it easier to satisfy them in our example.

The goal of smoothing the formula is to achieve two properties.
\begin{itemize}
    \item We can move all the limits to the front without changing the truth value of the formula.
    In our example that is equivalent to the second limit not depending on $y$.
    \item We also want that the smoothed formula and the original formula have the same truth value for sufficiently small $\varepsilon$.
\end{itemize}

The first property will turn out to be true, but the second property is violated.
Specifically, the formula $\forall y\exists x:xy-1=0$ is false, but the smoothed formula is true for any $\varepsilon>0$. This is illustrated in \Cref{fig:CompactExample}~(b). No matter how small we make $\varepsilon$, the solution set will intersect the $x$-axis. 

To avoid this, we thicken the solution set non-uniformly. That is, for values closer to the origin we thicken the set more, but the further we move away from the origin the smaller the thickening will be. 
So instead of constraining $|y-z|\le \varepsilon$ we ``squeeze'' the constraint by $x$, replacing it with $|y-z|\le \varepsilon\frac{1}{1+x^2}$. This is illustrated in \Cref{fig:CompactExample} (c) and (d). 
This technique can be generalized to arbitrary formulas, but we might need $|y-z|\le \varepsilon \frac{1}{(1+x^2)^d}$, where $d$ depends on the degree of the original formula.

\paragraph{Summary}
Given a formula $\Phi$, we will first transform it to have a single closed constraint.
Then we introduce bounded limits to all quantified variables.
Thereafter we ``gently'' smooth the formula, which thickens the solution set just the right amount.
This enables us to move all the limits to the front of the formula.
Using tools from real algebraic geometry, we can then find suitable upper bounds to all of those limits.

\newpage
\section{Formula Compactification}
\label{sec:RangeBounds}
This section is dedicated to the proof of the following theorem.

\FOTRINVTHM*

\subsection{Definitions and basic tools}

\paragraph{\Semialgebraic Sets.}
We say that a set $S\subseteq \mathbb{R}^n$ is a \semialgebraic set if it is the set of values of real variables $x_1, \dots, x_n$ that satisfies $P(g_1, \dots, g_k)$, where the $g_i$ are polynomials in $x_1, \dots, x_n$ and $P$ is a (quantifier-free) Boolean formula involving literals of the form $g_i>0$, $g_i=0$, and $g_i<0$. We say that a \semialgebraic set has degree $d$ if all the polynomials $g_i$ have degree at most $d$. We allow the coefficients of the $g_i$ to be arbitrary real numbers. This gives a somewhat larger class of \semialgebraic sets compared to some other definitions that restrict to rational or integer coefficients. 
In contrast, the polynomials in an instance of \fotr have integer coefficients, but our analysis will sometimes need to consider \semialgebraic sets that have more complicated coefficients. Let $S\subseteq \mathbb{R}^n\times \mathbb{R}^m$. For $x\in \mathbb{R}^n$, we can define the restriction $S|_x=S\cap \left(\{x\}\times \mathbb{R}^m\right)$. With our definition, $S|_x$ is a \semialgebraic set for any value of $x\in \mathbb{R}^n$. We will sometimes need to argue that some property of $S|_x$ holds for all values of $x$, motivating us to allow \semialgebraic sets to have arbitrary real coefficients. 

Let $S\subseteq \mathbb{R}^{n+k}$ be a \semialgebraic set that depends on variables $x_1, \dots, x_n$ and $y_1, \dots, y_k$. There are three operations that we use to obtain a new set in $\mathbb{R}^n$:

\begin{itemize}
    \item We can \emph{restrict} to the hyperplane where the variables $y_1, \dots, y_k$ take given constant values
    \item We can \emph{project} onto the $x_i$ variables, giving a new set $\{x\in \mathbb{R}^n : \exists y\in \mathbb{R}^k: (x, y)\in S\}$
    \item We can \emph{coproject} onto the $x_i$ variables, giving a new set $\{x\in \mathbb{R}^n : \forall y\in \mathbb{R}^k: (x, y)\in S\}$
\end{itemize}
See \Cref{fig:set-creation} for an illustration of the three different operations to create new sets.
\begin{figure}
    \centering
    \includegraphics{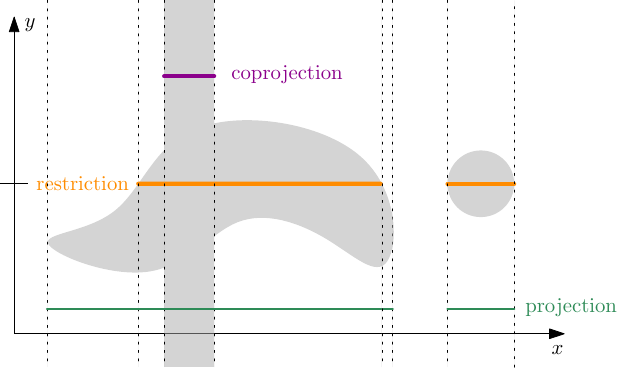}
    \caption{The coprojection of $S$ always lies in any of its restrictions, which always lies in the projection of~$S$. 
    }
    \label{fig:set-creation}
\end{figure}
\paragraph{Projections and Co-Projections.}
It is clear that restriction produces a new \semialgebraic set with the same degree.
Projections and coprojections of \semialgebraic sets are also \semialgebraic, by Tarski~\cite{T51}, but the degree might increase. The following lemma gives an upper bound on the degree of the new sets.

\begin{lemma}\label{lem:projections}(Basu, Pollack, Roy \cite{BasuPollackRoy2006}, Theorem 14.16)
There is some integer $\alpha>0$ such that the following holds: Let $S\subseteq \mathbb{R}^n\times \mathbb{R}^m$ be a \semialgebraic set of degree at most $d$. If $Q\subseteq \mathbb{R}^m$ is a set obtained from $S$ by a sequence of $n$ projections, then $Q$ is a \semialgebraic set of degree at most $d^{\alpha n}$.
\end{lemma}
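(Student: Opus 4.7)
The plan is to reduce to a one-variable projection step and then carefully control the degree growth across the $n$ projections. The base case I would establish is the following: for a \semialgebraic set $S \subseteq \mathbb{R}^{1+k}$ defined by sign conditions on polynomials $g_1, \ldots, g_s$ of degree at most $d$ in the variables $(t, y_1, \ldots, y_k)$, the projection that eliminates $t$ admits a description by sign conditions on polynomials in $(y_1, \ldots, y_k)$ of degree at most $d^{c}$ for an absolute constant $c$. The main tool here is subresultant theory (Sturm--Habicht sequences and signed pseudo-remainders): whether the fiber over $y$ contains a value of $t$ realizing a given sign pattern on $g_1(\cdot, y), \ldots, g_s(\cdot, y)$ is decided by the signs of the subresultant coefficients, which are themselves polynomials in $y$ whose degrees are bounded by roughly $d^2$.

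Iterating this base case naively one variable at a time yields degree $d^{c^n}$ after $n$ projections, which is doubly exponential in $n$ and too weak to match the claimed $d^{\alpha n}$ bound. The right approach, as developed by Heintz--Roy--Solern\'o and by Basu, Pollack, and Roy, is the \emph{critical-point method}, which eliminates all $n$ variables in one shot. The idea is to choose a generic linear form $L$ in the variables to be eliminated, and observe that the image of $S$ under projection is determined by the critical values of $L$ restricted to the varieties $\{g_{i_1} = \cdots = g_{i_r} = 0\}$ together with the signs of the remaining $g_i$ at those critical points. Each such system is a polynomial system of degree at most $d$ in $n + m$ variables, and by B\'ezout-type bounds the number of solutions is at most $d^{O(n)}$. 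Encoding these solutions and the induced sign conditions produces a description of the projected set of degree $d^{\alpha n}$ for a uniform constant $\alpha$.

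The main obstacle is the degree bookkeeping when forming Jacobian determinants of the $g_i$ and combining them with the defining polynomials. The Jacobian of an $(n+1) \times (n+1)$ matrix with entries of degree $d$ itself has degree $(n+1)d$, and subsequent substitutions and resultant computations must be carefully tracked so that the final exponent stays linear in $n$ rather than blowing up. A second technical hurdle is the general-position assumption: the $g_i$ may fail to be in the required generic configuration, which is handled by introducing infinitesimal deformations, working over a real closed extension of $\mathbb{R}$, and specializing back at the end. Since Theorem 14.16 of Basu, Pollack, and Roy~\cite{BasuPollackRoy2006} already carries out all of this bookkeeping in full generality, in practice I would simply invoke their theorem as a black box rather than reconstructing the argument here.
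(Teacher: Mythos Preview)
The paper provides no proof of this lemma at all: it is stated purely as a citation to Theorem 14.16 of Basu, Pollack, and Roy~\cite{BasuPollackRoy2006}. Your proposal ultimately lands in the same place, invoking that theorem as a black box, so in that sense it matches the paper exactly.

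Your additional sketch of the underlying ideas (subresultants for one-variable elimination, the observation that naive iteration gives a doubly-exponential bound, and the critical-point method to get the singly-exponential $d^{\alpha n}$ bound) is accurate and useful context, but it goes well beyond what the paper does. Since the lemma is used only as an imported tool throughout Section~\ref{sec:RangeBounds}, the paper is content to treat it as a known result, and your final sentence is already the full ``proof'' the paper gives.
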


By taking negations, the result of \Cref{lem:projections} also holds for a sequence of $n$ coprojections. However, if both projections and coprojections are used, then the degree can be larger.

\begin{corollary}\label{lem:(co)projections}
Let $S$ be a \semialgebraic set of degree at most $d$ and let $Q$ be a \semialgebraic set obtained from $S$ by sequences of $\le n$ projections of $\le n$ coprojections, where there are $j$ total blocks. 
Then $Q$ is a \semialgebraic set of degree at most $d^{(\alpha n)^j}$, where $j$ is the number of blocks.
\end{corollary}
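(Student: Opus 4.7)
The plan is to reduce the corollary to \Cref{lem:projections} by iterating over blocks, using the standard complement trick to turn coprojections into projections. For any \semialgebraic set $T \subseteq \mathbb{R}^{n+m}$, one has
\[\{x \in \mathbb{R}^n : \forall y \in \mathbb{R}^m,\ (x,y) \in T\} \;=\; \mathbb{R}^n \setminus \pi_x\bigl(\mathbb{R}^{n+m} \setminus T\bigr),\]
where $\pi_x$ denotes projection onto the $x$-coordinates. Complementation does not change the degree of a \semialgebraic set under the paper's definition: the defining polynomials $g_i$ remain the same, and only the Boolean combination and the signs of the literals (among $g_i>0$, $g_i=0$, $g_i<0$) get flipped. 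Consequently, for degree purposes a block of at most $n$ coprojections is no more expensive than a block of at most $n$ projections, and applying \Cref{lem:projections} (sandwiched between two degree-preserving complementations in the coprojection case) raises the degree from $d'$ to at most $(d')^{\alpha n}$.

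With this observation in hand I would induct on the block index $i = 0, 1, \dots, j$, writing $d_i$ for the degree after the first $i$ blocks have been applied. The base case is $d_0 \le d$, directly from the hypothesis on $S$. For the inductive step, block $i$ is either a sequence of at most $n$ projections or a sequence of at most $n$ coprojections; in either case the observation above combined with \Cref{lem:projections} gives $d_i \le d_{i-1}^{\alpha n}$. Unwinding the recursion yields
\[d_j \;\le\; d_{j-1}^{\alpha n} \;\le\; d_{j-2}^{(\alpha n)^2} \;\le\; \dots \;\le\; d^{(\alpha n)^j},\]
which is exactly the bound claimed.

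The main (and really only) obstacle is to verify cleanly that the hypothesis of \Cref{lem:projections} can be iterated: specifically, that its bound $d^{\alpha n}$ depends only on the initial degree and on the number of variables eliminated in the block, not on the ambient dimension or on the specific set, so that after the first block the new set still satisfies the hypothesis of the lemma with $d$ replaced by $d^{\alpha n}$. A minor side remark is that if a block contains strictly fewer than $n$ quantifiers, we may pad it with dummy projections of fresh variables constrained to a fixed compact interval; this leaves the resulting set unchanged but justifies using the uniform exponent $\alpha n$ at each step.
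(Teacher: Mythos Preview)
Your proposal is correct and follows exactly the approach the paper intends: the paper does not spell out a proof of the corollary but only remarks, just before stating it, that ``by taking negations, the result of \Cref{lem:projections} also holds for a sequence of $n$ coprojections,'' leaving the block-by-block iteration implicit. Your complement-and-iterate argument makes this explicit, and your side remark about padding (or equivalently, monotonicity of $d^{\alpha k}$ in $k$ for $d\ge 1$) handles blocks with fewer than $n$ quantifiers.
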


By \Cref{lem:(co)projections}, a formula $\Phi$ with a free variable-vector $x\in \mathbb{R}^n$ can be viewed as a \semialgebraic set $S = \{x\in \R^n : \Phi(x)\} \subseteq \mathbb{R}^n$. 
Where convenient, we often write $x\in S$ instead of $\Phi(x)$.

\paragraph{Basic Properties of Limits.}
Another way to transform algebraic sets is to take a limit. Given a formula $\Phi$ depending on $x\in \mathbb{R}^n$ and $y\in \mathbb{R}$, we write $\lim_{y\rightarrow \infty} \Phi(x, y)$ as shorthand for the formula:

\begin{equation}\label{eq:lim1}
\exists b\forall y>b: \Phi(x, y)
\end{equation}

We observe that \Cref{eq:lim1} is equivalent to:

\begin{equation}\label{eq:lim2}
\forall a\exists y>a:\Phi(x, y)
\end{equation}

This is because, for fixed $x$, the truth value of $\Phi(x, y)$ only changes finitely many times as $y\rightarrow \infty$. In particular, limits commute with negations, so $\neg \lim_{y\rightarrow \infty}\Phi(x, y)\iff \lim_{y\rightarrow \infty}\neg \Phi(x, y)$.
We say a value of $y^*$ \textit{realizes} the limit if,
$\lim_{y\rightarrow \infty}\Phi(x, y) = \{x \in \R^n : \Phi(x,y^*) \}$.

We also sometimes take a limit going to $0$ instead of $\infty$. 
The limit:

\[\lim_{\varepsilon\rightarrow 0^+}\Phi(x, \varepsilon)\]

\noindent is equivalent to:

\[\exists b: \forall \varepsilon\in (0, b):\Phi(x, \varepsilon)\]

\noindent or equivalently:

\[\forall a\exists \varepsilon \in (0, b):\Phi(x, \varepsilon)\]

A limit $\varepsilon\rightarrow 0^+$ of $\Phi(\varepsilon)$ can be viewed as a limit $y\rightarrow \infty$ of $\Phi(x, \frac{1}y)$, so most of our results on limits work for limits going either to $0^+$ or to $\infty$.

All the limits that we will need to work with will be monotone. Let $S$ be a \semialgebraic set depending on variables $x\in \mathbb{R}$ and $y\in \mathbb{R}^k$. We say that $S$ is $x+$-monotone (resp. $x-$-monotone) if, for all $x_1<x_2$ (resp. $x_1>x_2$) and $y\in \mathbb{R}^k$, $(x_1, y)\in S$ implies $(x_2, y)\in S$. This is illustrated in \Cref{fig:monotone}. 

\begin{figure}
    \centering
    \includegraphics{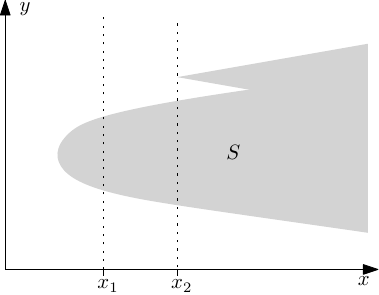}
    \caption{The set $S$ is $x+$-monotone.}
    \label{fig:monotone}
\end{figure}

We say that a limit is positive if the condition becomes easier to satisfy as the variable approaches the limit. For example, if $S$ is an $a+$-monotone \semialgebraic set, then the limit:

\begin{equation}\label{eq:monotoneexample}
    \lim_{a\rightarrow \infty}\left(a\in S\right)
\end{equation}
is positive. 
A limit is negative if the condition gets harder to satisfy as the variable approaches the limit, for example, if $S$ is $a-$-monotone instead of $a+$-monotone, then \eqref{eq:monotoneexample} is negative.

In the monotone case, the definition of a limit can be simplified. In particular, a positive limit can be replaced by an existential quantifier and a negative limit can be replaced by a universal quantifier:

\begin{lemma}\label{lem:monotonelimits}
If the limit:

\begin{equation}\label{eq:monotonelimits}
\lim_{a\rightarrow \infty}(a\in S)
\end{equation}

\noindent is positive, then \eqref{eq:monotonelimits} is equivalent to:

\[\exists a:a\in S\]

Similarly, if the limit is negative, then \eqref{eq:monotonelimits} is equivalent to:

\[\forall a:a\in S\]
\end{lemma}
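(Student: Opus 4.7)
The plan is to unfold the definition of the limit as given in \eqref{eq:lim1} and to argue by monotonicity in each direction. Throughout, fix the auxiliary variables $y \in \mathbb{R}^k$ on which $S$ may depend, so that the statement $a \in S$ is shorthand for $(a,y) \in S$.

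I would first handle the positive case, where $S$ is $a+$-monotone. By definition, $\lim_{a\rightarrow \infty}(a \in S)$ is equivalent to $\exists b\,\forall a > b : (a,y) \in S$. The forward direction is immediate: given such a witness $b$, any particular value $a_0 > b$ satisfies $(a_0, y) \in S$, so $\exists a: a \in S$ holds. For the reverse direction, suppose there exists some $a_0$ with $(a_0, y) \in S$. By $a+$-monotonicity, for every $a > a_0$ we have $(a, y) \in S$ as well, and hence $a_0$ serves as the required witness $b$ in $\exists b\,\forall a > b : (a,y) \in S$.

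For the negative case, where $S$ is $a-$-monotone, I would take negations and reduce to the positive case (leveraging the remark, already made just before this lemma, that limits commute with negations). Concretely, $\neg\lim_{a \to \infty}(a \in S)$ becomes $\lim_{a \to \infty}(a \notin S)$; the complement $\mathbb{R}^{1+k}\setminus S$ is $a+$-monotone, so the previous paragraph gives equivalence with $\exists a : a \notin S$. Negating once more yields $\forall a : a \in S$, as required. Alternatively, one can mirror the direct argument of the first paragraph: $\lim_{a\to\infty}(a\in S)$ unfolds to $\exists b\,\forall a > b : (a,y)\in S$; by $a-$-monotonicity, any membership $(a, y) \in S$ for $a > b$ forces $(a', y) \in S$ for all $a' < a$, and in particular for all real $a'$, giving $\forall a : a \in S$.

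\textbf{Main obstacle.} There is essentially no technical obstacle here: the lemma is a direct unpacking of the definition of $\lim_{a\to\infty}$ combined with the definition of $a\pm$-monotonicity. The only point to be mindful of is that the statement is parameterized by the free variables $y$, and that monotonicity must hold uniformly in $y$; this is built into the definition of $a\pm$-monotone given just above the lemma, so nothing additional is needed. The real work lies further downstream, when these monotone-limit reformulations are combined with the smoothing construction to push all limits to the front of the formula.
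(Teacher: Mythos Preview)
Your proposal is correct and is exactly the unpacking the paper has in mind: the paper's own proof is the single sentence ``This follows immediately from the definition of monotonicity,'' and your argument spells out precisely that implication in both directions.
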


\begin{proof}
This follows immediately from the definition of monotonicity.
\end{proof}

\paragraph{Limit Realization.}
Given a formula $\lim_{y\rightarrow \infty}\Phi(y)$ with no free variables, there is some constant value of $y$ that realizes the limit. Assuming that the coefficients of the polynomials defining $y$ are integers with a known bit-length, we can actually calculate an upper bound for the value needed:

\begin{lemma}[Limit Realization]
\label{lem:limitevaluation}
There is some integer $\beta>0$
such that the following holds: Let $\Phi$ be a formula in the first-order theory of the reals of with
\begin{itemize}
    \item $n$ quantifiers,
    \item $j$ quantifier alternations,
    \item degree $d$,
    \item integer coefficient with bit sizes at most $\tau$, and  
    \item $k$ free variables $y_1, \dots, y_k$.
\end{itemize}
    Then, for any $r\ge 2k$, the limit:

\[\lim_{y_1\rightarrow \infty}\dots\lim_{y_k\rightarrow \infty}\Phi(y_1, \dots, y_k)\]

\noindent is realized by:

\[y_i=\exp_2\left(\tau d^{i\left(\beta (n+r)\right)^{j+r}+i-1}\right)\]
\end{lemma}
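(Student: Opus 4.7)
The plan is to realize the limits one at a time, from outermost to innermost. Writing each $\lim_{y \to \infty} \Psi(y)$ as the two-quantifier formula $\exists b\,\forall y > b : \Psi(y)$, the expression $\lim_{y_1}\cdots\lim_{y_k}\Phi$ becomes a sentence with $n + 2k$ quantifiers in at most $j + 2k$ alternating blocks over the same quantifier-free core as $\Phi$. Let
\[\Psi_i(y_1, \ldots, y_i) := \lim_{y_{i+1} \to \infty}\cdots\lim_{y_k \to \infty}\Phi(y_1, \ldots, y_k),\]
so that $\Psi_0$ is the target sentence and $\Psi_k = \Phi$. I prove by induction on $i$ from $0$ to $k$ that $\Psi_i(y_1^*, \ldots, y_i^*) = \Psi_0$, where $y_i^* := \exp_2(\tau d^{i(\beta(n+r))^{j+r}+i-1})$; taking $i=k$ yields the lemma.

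\textbf{Degree, bit-size, and Cauchy root bound.} Viewed as a semi-algebraic set $S_i \subseteq \mathbb{R}^i$, $\Psi_i$ arises from the quantifier-free core of $\Phi$ by a sequence of projections and coprojections comprising $n + 2(k-i)$ total operations arranged in at most $j + 2(k-i)$ alternating blocks. By Corollary~\ref{lem:(co)projections}, combined with the hypothesis $r \geq 2k \geq 2(k-i)$, the degree of $S_i$ is bounded uniformly in $i$ by $D := d^{(\alpha(n+r))^{j+r}}$, and the effective quantifier-elimination procedure underlying Theorem~14.16 of~\cite{BasuPollackRoy2006} gives a parallel bit-size bound $T_0 \le \tau D$ (after absorbing constants into $\beta$) on the integer coefficients of the polynomials defining $S_i$. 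Assuming inductively that $\Psi_{i-1}(y_1^*, \ldots, y_{i-1}^*) = \Psi_0$, restrict $\Psi_i$ to the already-chosen values of $y_1, \ldots, y_{i-1}$ to obtain a semi-algebraic subset of $\mathbb{R}$ in the single variable $y_i$: a finite union of at most $D$ intervals whose endpoints are real roots of univariate polynomials of degree $\leq D$. After substituting integer values $y_j^*$ of bit-size at most $\tau d^{(i-1)(\beta(n+r))^{j+r}+i-2}$ (the $j = i-1$ term dominates), these univariate polynomials have coefficients of bit-size at most $T_0 + D\cdot\log_2 y_{i-1}^* + O(\log D)$, which for $\beta$ a sufficiently large constant multiple of $\alpha$ is at most $\tau d^{i(\beta(n+r))^{j+r}+i-1} - 1$. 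Cauchy's classical root bound then places every real root strictly below $y_i^* = \exp_2(\tau d^{i(\beta(n+r))^{j+r}+i-1})$, so $\Psi_i(y_1^*, \ldots, y_i^*)$ equals the eventual value of $\Psi_i(y_1^*, \ldots, y_{i-1}^*, y_i)$ as $y_i \to \infty$, namely $\Psi_{i-1}(y_1^*, \ldots, y_{i-1}^*) = \Psi_0$. This closes the induction.

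\textbf{Main obstacle.} The central technical difficulty is the careful bookkeeping of the interplay between degree, coefficient bit-size, and iterated substitution. The role of the parameter $r \geq 2k$ is precisely to absorb the $2(k-i)$ extra quantifier blocks contributed by the limits into the single uniform-in-$i$ degree bound $D$; this is what keeps the coefficient-bit-size recurrence $B_i \lesssim T_0 + D\,B_{i-1}$ tractable and yields a closed-form solution, with the $+\,i-1$ slack in the target exponent designed exactly to cover the lower-order terms. The most delicate sub-step is extracting the companion bit-size bound from the effective quantifier-elimination pipeline in a form sharp enough to match the clean exponent $i(\beta(n+r))^{j+r}+i-1$ rather than something slightly larger that would compound through the recurrence; this step pins down the required value of $\beta$ in terms of the constant $\alpha$ from Lemma~\ref{lem:projections}.
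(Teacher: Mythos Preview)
Your proposal is correct and follows essentially the same approach as the paper: induct over the limits from outermost to innermost, invoke effective quantifier elimination (Theorem~14.16 of \cite{BasuPollackRoy2006}) to obtain degree and bit-size bounds on the resulting univariate description, apply a Cauchy-type root bound to realize each limit, and track how the bit-size grows after substitution. The paper organizes the bookkeeping slightly differently---it substitutes each realized $y_i^*$ back into the original $\Phi$ (raising the bit-size from $\tau$ to $\tau d^{(\beta(n+r))^{j+r}+1}$) and then recurses on the remaining $k-1$ limits---whereas you eliminate quantifiers from $\Psi_i$ with all $i$ free variables still present and substitute afterwards; both routes yield the same closed-form exponent, and both rely on $r\ge 2k$ to absorb the extra $2(k-i)$ quantifier blocks uniformly.
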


Note that we use the $\exp_2(x)$ to mean $2^{x}$ in order to make the formula more readable. 

\begin{proof}
This is by induction on $k$. When $k=0$, there are no limits, so the result is clear trivially. 

For the purpose, of illustrating the proof, we also explicitly handle the case $k=1$.
Thus, we want to find $y_1$ realizing the formula 
\[\lim_{y\rightarrow \infty}\Phi(y) \equiv  \exists b \forall y>b : \Phi(y). \]
We apply Theorem 14.16 from \cite{BasuPollackRoy2006} to $\Phi$ and get a new quantifier free formula $\Psi$ that depends on the signs of some polynomials $p_1,\ldots,p_m$ with:

\begin{itemize}
    \item degree at most $q\leq d^{\gamma\left(\alpha (n+r)\right)^{j+r}}$, and 
    \item bit length at most $\omega  = \tau d^{\left(\gamma (n+r) \right)^{j+r}}$.    
\end{itemize}

In order to find a value of $y$ that realizes the limit, we just need to choose it so that it is larger than any root of one of the $p_i$s (except for $p_i$s that are identically zero). 
A polynomial of degree $q$ with integers coefficients of magnitude at most $2^\omega$ has all roots in $[-q2^{\omega}, q2^{\omega}]$, since the leading term dominates outside of this range. 
So it is sufficient to choose 
\[y = q2^{\omega} \leq d^{\gamma\left(\alpha (n+r)\right)^{j+r}} 2^{\tau d^{\left(\gamma (n+r) \right)^{j+r}}}. \]

We now use this idea inductively to handle the case for general $k$. By the definition of a limit, the formula:

\begin{equation}\label{eq:limitevaluationinduction}
\lim_{y_2\rightarrow \infty}\dots\lim_{y_k\rightarrow \infty}\Phi(y_1, \dots, y_k)
\end{equation}

\noindent is equivalent to a formula in the first-order theory of the reals with $j+2(k-1)$ quantifier alternations, $n+2(k-1)$ quantifier variables ($n$ variables from $\Phi$ and $2$ for each of the limits), and $1$ free variable ($y_1$). 
By the quantifier elimination from Basu, Pollack, and Roy (\cite{BasuPollackRoy2006}, Theorem 14.16), \eqref{eq:limitevaluationinduction} is equivalent to a quantifier-free formula depending on the signs of polynomials $p_1, \dots, p_m$ of degree at most $d^{\gamma\left(\alpha (n+r)\right)^{j+r}}$, and the coefficients of the polynomials $p_i$ are integers each with bit length at most 
$\omega = \tau d^{\left(\gamma (n+r) \right)^{j+r}}$. 
Here, $\gamma$ is the universal constant from the theorem in \cite{BasuPollackRoy2006}.

In order to find a value of $y_1$ that realizes the limit, we just need to choose it so that it is larger than any root of one of the $p_i$s (except for $p_i$s that are identically zero). A polynomial of degree $q$ with integer coefficients of magnitude at most $2^\omega$ has all roots in $[-q2^{\omega}, q2^{\omega}]$, since the leading term dominates outside of this range. So it is sufficient to choose:

\[y_1=\exp_2\left({\tau d^{\left(\beta (n+r)\right)^{j+r}}}\right)\]

\noindent for the constant $\beta = \alpha+\gamma$. Note that $\beta$ is 
independent of $i, j, k, r, n$ and $\tau$.

After substituting this value of $y_1$, \eqref{eq:limitevaluationinduction} becomes a formula with $k-1$ limiting variables and bit-sizes of length at most $\tau d^{\left(\beta (n+2k)\right)^{j+2k}+1}$. Since $r\ge 2k$, it must also be at least $2(k-1)$. So by inductive assumption, the limit is realized by:

\[y_i=\exp\left(\tau d^{\left(\beta (n+r)\right)^{j+r}+1}d^{(i-1)\left(\beta (n+r)\right)^{j+r}+i-2}\right)=\exp\left(\tau d^{i\left(\beta (n+r)\right)^{j+r}+i-1}\right)\]

\noindent as required.
\end{proof}

\paragraph{Closed Constraints.}
In order to compactify an instance of \fotr, we need compactly bounded quantifiers and closed constraints. 
The following lemma below from the literature shows how to achieve that all constraints are closed.

\begin{lemma}(\cite{SS17}, Lemma 3.2)\label{lem:closedconstraints}
Let $\Phi$ be a quantifier-free formula depending on $n$ variables. Then there is a polynomial $f$ of degree at most $4$ and integer $k$ such that $\Phi(x)$ is equivalent to $\exists y\in \mathbb{R}^k:f(x, y)=0$. Furthermore, if $\Phi$ has integer coefficients, then we can compute $f$ in time polynomial in the length of $\Phi$. 
\end{lemma}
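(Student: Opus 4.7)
My plan is to prove the lemma by structural induction on $\Phi$, maintaining as an invariant that $\Phi(x)$ is equivalent to a statement of the form $\exists \bar y : \bigwedge_k q_k(x, \bar y) = 0$ where each $q_k$ is a polynomial of degree at most $2$. Once this form is established for the whole formula, the lemma follows by taking $f = \sum_k q_k^2$, which is a polynomial of degree at most $4$ that vanishes iff every $q_k$ does. Integer coefficients and polynomial-time computability will be preserved throughout.

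The first step is a preprocessing that brings every polynomial appearing in atoms of $\Phi$ down to degree at most $2$. For each monomial $x^\alpha$ of degree $\geq 2$ appearing somewhere, I introduce a fresh variable $z_\alpha$ together with the degree-$2$ constraint $z_\alpha = z_\beta z_\gamma$, where $\alpha = \beta + \gamma$ with $|\beta|,|\gamma|\geq 1$. After recursively substituting, every polynomial $g$ in an atom becomes linear in the original variables and the $z_\alpha$'s, and in particular of degree at most $2$, with degree-$2$ side equations bundled in. Each literal is then translated into the invariant form: $g = 0$ stays as $g = 0$; $g > 0$ becomes $\exists z_1, z_2 : g - z_1^2 = 0 \wedge z_1 z_2 - 1 = 0$; $g < 0$ is symmetric; and any $\neg$ is pushed down to atoms via De~Morgan.

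For the Boolean connectives, a conjunction $\Phi_1 \wedge \Phi_2$ is handled by simply taking the union of the two equation lists (with disjoint auxiliary variables) under a shared existential. The main obstacle is the disjunction $\Phi_1 \vee \Phi_2$, say $\Phi_1 \Leftrightarrow \exists \bar y_1 : \bigwedge_k q_{1,k} = 0$ and $\Phi_2 \Leftrightarrow \exists \bar y_2 : \bigwedge_\ell q_{2,\ell} = 0$. A naive selector-variable encoding with $\lambda \in \{0,1\}$ and equations $(1-\lambda)q_{1,k} = 0$, $\lambda q_{2,\ell} = 0$ produces degree-$3$ equations, which would blow up the final degree past $4$.

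I would fix this by inserting fresh auxiliary variables $w_{1,k}$ and $w_{2,\ell}$ that buffer the multiplication by $\lambda$: write the degree-$2$ equations $\lambda(\lambda-1) = 0$, $w_{1,k} - q_{1,k} = 0$, $w_{2,\ell} - q_{2,\ell} = 0$, $(1-\lambda)w_{1,k} = 0$, and $\lambda w_{2,\ell} = 0$. Correctness is immediate: $\lambda \in \{0,1\}$ selects which disjunct is enforced, and setting the $w$'s to the corresponding $q$'s (or arbitrarily) in the other case witnesses the converse. Every equation remains of degree at most $2$, so the invariant is preserved. Each step introduces only polynomially many new variables and equations and keeps the coefficients integral, so the final $f = \sum_k q_k^2$ has degree at most $4$, polynomial size, and is computable in polynomial time from $\Phi$.
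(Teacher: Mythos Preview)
Your encoding is logically correct, but the polynomial-time claim fails at the disjunction step. When you handle $\Phi_1\vee\Phi_2$ by introducing a buffer variable $w_{1,k}$ for \emph{each} equation $q_{1,k}$ from the recursive call (and likewise $w_{2,\ell}$), the number of equations for the $\vee$-node becomes $1+2m_1+2m_2$ where $m_1,m_2$ are the counts for the children. On a left-deep chain $a_1\vee(a_2\vee(\cdots\vee a_m))$ this recurrence $E_{i+1}=2E_i+2c+1$ solves to $E_m=\Theta(2^m)$, so a formula of size $O(m)$ produces exponentially many equations and your final $f=\sum_k q_k^2$ has exponential length. The sentence ``each step introduces only polynomially many new variables and equations'' is true locally but does not control the compounding.

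The approach in \cite{SS17}, which the paper sketches inside the proof of Theorem~\ref{thm:hierarchyinv}, avoids this by a Tseitin-style encoding: introduce \emph{one} real variable $V_\Psi$ per subformula $\Psi$ with the intended semantics $V_\Psi=0\iff\Psi$, and add a single degree-$2$ defining equation per node, namely $V=V_1^2+V_2^2$ for $\wedge$ and $V=V_1V_2$ for $\vee$ (with $V_i\ge 0$ at leaves rewritten as $V_i-W^2=0$). This keeps the total number of equations linear in $|\Phi|$, and squaring and summing them yields a degree-$4$ polynomial of polynomial size. Your preprocessing of high-degree monomials and your literal encodings are fine; the fix is to collapse each disjunct to a single summary variable before selecting, rather than buffering every equation individually.
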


Note that we can also apply this lemma to show \QR-hardness for \quantifiedrealfeasibility, which is defined as follows.
We are given a sentence in the form 
$\forall y_1 \exists x_1 \ldots \forall y_n \exists x_n :f(y_1, x_1,\ldots, y_n,x_n)=0$,
with $f$ a polynomial of degree at most $4$.
We need to decide if the sentence is true or not.

\subsection{Commuting Limits and Formula Compactification}
In order to illustrate our main ideas, we start by working out how to compactify the formula:

\[\Phi=\forall y\in \mathbb{R}^n:\exists x\in \mathbb{R}^n:f(x, y)=0\]

\noindent where $f$ is a polynomial of degree at most $4$. 
Along the way, we will show how to transform a formula to a smoothed formula and how to move the limits all to the front of the formula.
Those are the key steps before we can apply \Cref{lem:limitevaluation}.
Working with a formula with only one quantifier alternation makes the notation easier and still captures the main ideas.

\paragraph{Bounding Ranges.}
The first step is to replace each unbounded quantifier in $\Phi$ with a bounded quantifier and a limit:

\[\lim_{a\rightarrow\infty}\forall y\in [-a, a]^n:\lim_{b\rightarrow \infty}\exists x\in [-b, b]^n:f(x, y)=0\]

We cannot use \Cref{lem:limitevaluation} here to set the value of $b$, because the condition $\exists x\in [-b, b]^n: f(x, y)=0$ depends on the variable $y$, and there is no bit-complexity bound for $y$. In order to use \Cref{lem:limitevaluation}, we need to pull all the limits out to the front.

\paragraph{Condition Smoothing.}
In order to swap the $b\rightarrow \infty$ limit and the $\forall y$ quantifier, we need to smooth the condition. 
We replace the condition $f(x, y)=0$ with:

\[\lim_{\varepsilon\rightarrow 0^+}\exists z:f(x, z)=0\wedge |y-z|\le \varepsilon\]

\paragraph{Swapping a closed limit with an existential quantifier.}
Next, we need to exchange the order of the $\varepsilon\rightarrow 0^+$ limit and the $\exists x$ quantifier.

\begin{lemma}\label{lem:closedswap}
Let $S \subset \R^2$ be a \semialgebraic set depending on $a$ and $x$ that is $a-$-monotone such that, for each fixed $a$, the restriction $S|_{a}$ of $S$ to that value of $a$ is closed.
Then for $I$ a compact interval:

\begin{equation}\label{eq:closedswap1}
\exists x\in I:\lim_{a\rightarrow \infty}:(a, x)\in S
\end{equation}

is equivalent to:

\begin{equation}\label{eq:closedswap2}
\lim_{a\rightarrow \infty}\exists x\in I:(a, x)\in S
\end{equation}
\end{lemma}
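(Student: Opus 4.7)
My plan is to prove the equivalence by first replacing the limits with universal quantifiers using monotonicity, and then establishing the resulting quantifier swap by a compactness argument.

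First, I would observe that both limits in the statement are negative in the sense of the preceding discussion. On the left side, for each fixed $x$ the predicate $a \mapsto ((a,x)\in S)$ is $a-$-monotone by hypothesis, so by \Cref{lem:monotonelimits} the limit $\lim_{a\to\infty}(a,x)\in S$ is equivalent to $\forall a:(a,x)\in S$. On the right side, I claim the predicate $P(a) := \exists x\in I:(a,x)\in S$ is also $a-$-monotone: indeed if $P(a_0)$ holds with witness $x_0$ and $a < a_0$, then $a-$-monotonicity of $S$ yields $(a,x_0)\in S$, so $P(a)$ holds. Thus by \Cref{lem:monotonelimits} the right side is equivalent to $\forall a\,\exists x\in I:(a,x)\in S$.

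After this reduction the lemma reduces to the quantifier-swap equivalence
\[
\exists x\in I\,\forall a:(a,x)\in S \;\iff\; \forall a\,\exists x\in I:(a,x)\in S.
\]
The $\Rightarrow$ direction is the standard trivial swap: a single $x$ that works for all $a$ certainly witnesses each individual $a$. The interesting direction is $\Leftarrow$, for which I would use a compactness argument. Define $F_a := \{x\in I : (a,x)\in S\} = I\cap (S|_a)$. By the hypothesis that each slice $S|_a$ is closed, and since $I$ is compact, each $F_a$ is compact; by the right-hand assumption, each $F_a$ is non-empty; and by $a-$-monotonicity of $S$, the family $\{F_a\}$ is nested in the sense $a_1\le a_2\Rightarrow F_{a_2}\subseteq F_{a_1}$.

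To conclude, I would pick any sequence $a_n \to \infty$ and apply Cantor's intersection theorem to the nested decreasing sequence of non-empty compact sets $\{F_{a_n}\}$, obtaining some $x^* \in \bigcap_n F_{a_n}$. For any $a\in\R$, choosing $n$ with $a_n\ge a$ gives $(a_n,x^*)\in S$, and then $a-$-monotonicity yields $(a,x^*)\in S$. Hence $\forall a:(a,x^*)\in S$, as needed. The main conceptual obstacle is really just recognizing that all three hypotheses are pulling their weight: closedness of the slices is what prevents the nested family from being something like $(0,1/n]$ with empty intersection, compactness of $I$ is what prevents escape to infinity, and $a-$-monotonicity is what gives the nesting in the first place. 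Everything else is a routine application of Cantor's intersection theorem.
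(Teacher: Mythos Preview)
Your proof is correct and follows essentially the same compactness argument as the paper's: the paper extracts a convergent subsequence of witnesses $x_i$ and uses closedness of $S|_a$ to pass to the limit, while you phrase the same idea via Cantor's intersection theorem on the nested compact slices $F_a$. The preliminary reduction of the limits to universal quantifiers via \Cref{lem:monotonelimits} is a clean way to set things up, but the underlying mechanism is identical.
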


\begin{figure}[tbph]
    \centering
    \includegraphics{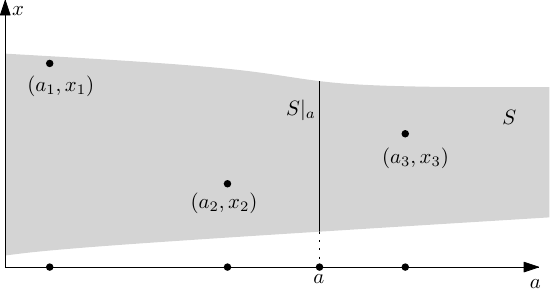}
    \caption{The setup of \Cref{lem:closedswap}.}
    \label{fig:closedswap}
\end{figure}

\begin{proof}
If \eqref{eq:closedswap1} holds, then \eqref{eq:closedswap2} holds straightforwardly. So suppose that \eqref{eq:closedswap2} holds, and choose a sequence of values of $a_i$ approaching $\infty$. 
We refer to \Cref{fig:closedswap} for an illustration.
For each $a_i$ large enough, there is some $x_i$ such that $(a_i, x_i)\in S$. 
By compactness of $I$, we can suppose that the $x_i$ go to some limit $x^*\in I$. 
(We use the fact that every sequence has a converging subsequence on a compact set.)
By $a-$-monotonicity of $S$, 
for each $a$, $(a, x_i)\in S$ for all $i$ such that $a_i\ge a$. 
Since $S|_a$ is closed, it holds that $(a, x^*)\in S$. 
Here, we used that the limit of every converging sequence in a closed set is contained in that set.
This holds for all $a$, so \eqref{eq:closedswap1} holds. So \eqref{eq:closedswap1} and \eqref{eq:closedswap2} are equivalent.
\end{proof}

We view the limit $\varepsilon\rightarrow 0^+$ as a limit $\frac1\varepsilon\rightarrow \infty$. 
The condition $f(x, z)=0\wedge |y-z|\le \varepsilon$ is $\varepsilon+$-monotone, and therefore it is (essentially) $\frac1\varepsilon-$-monotone. 
The closedness property comes from the fact that closed constraints and no negations are used.
So we can apply \Cref{lem:closedswap} to see that the formula:

\[\lim_{b\rightarrow \infty}\exists x\in [-b, b]^n:\lim_{\varepsilon\rightarrow 0^+}\exists z:f(x, z)=0\wedge |y-z|\le \varepsilon\]

\noindent is equivalent to:

\begin{equation}\label{eq:E1}
\lim_{b\rightarrow \infty}\lim_{\varepsilon\rightarrow 0^+}\exists x\in [-b, b]^n:\exists z:f(x, z)=0\wedge |y-z|\le \varepsilon
\end{equation}

\paragraph{Exchanging Limits with Degree Bounds I.}
The next step is to commute the $b$ and $\varepsilon$ limits. This is possible by the following lemma:

\begin{lemma}\label{lem:bswap}
Let $S$ be a \semialgebraic set depending on $a$ and $x\in \mathbb{R}^n$ that is $a-$-monotone. If $Q=\{(a, b)\in \mathbb{R}^2 : \exists x\in [-b, b]^n:(a, x)\in S\}$ is a \semialgebraic set of degree $d$, then:

\begin{equation}\label{eq:bswap1}
\lim_{a\rightarrow \infty}\lim_{b\rightarrow\infty}\exists x\in [-b, b]^n:\left(a\left(1+|x|^2\right)^{d}, x\right)\in S
\end{equation}

\noindent implies:

\begin{equation}\label{eq:bswap2}
\lim_{b\rightarrow\infty}\lim_{a\rightarrow \infty}\exists x\in [-b, b]^n:(a, x)\in S
\end{equation}
\end{lemma}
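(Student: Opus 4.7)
I prove the contrapositive: assume \eqref{eq:bswap2} fails and derive the failure of \eqref{eq:bswap1}.

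Step~1 is to collapse the limits using \Cref{lem:monotonelimits}. The $a-$-monotonicity of $S$ makes the outer $\lim_{a\to\infty}$ a negative limit (larger $a$ only makes it harder to satisfy $(a(1+|x|^2)^d,x)\in S$), equivalent to $\forall a$; the enlarging box $[-b,b]^n$ makes $\lim_{b\to\infty}$ a positive limit, equivalent to $\exists b$. So \eqref{eq:bswap1} simplifies to $\forall a\,\exists x\in\mathbb{R}^n:\ (a(1+|x|^2)^d, x)\in S$ and \eqref{eq:bswap2} simplifies to $\exists b\,\forall a:\ (a,b)\in Q$.

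Step~2 introduces the boundary function $\beta(a):=\inf\{b\ge 0:(a,b)\in Q\}$ (with $\beta(a)=\infty$ when this infimum is over the empty set). The $a-$-monotonicity and $b+$-monotonicity of $Q$ make $\beta$ non-decreasing, and the simplified failure of \eqref{eq:bswap2} is exactly $\sup_a\beta(a)=\infty$; monotonicity then gives $\beta(a)\to\infty$ as $a\to\infty$. Because the graph of $\beta$ lies on the boundary of the degree-$\le d$ semi-algebraic set $Q$, a Newton-polygon / Puiseux analysis of a defining polynomial for this boundary produces an asymptotic lower bound
\[\beta(t)\ \ge\ c\,t^{1/d}\qquad\text{for some constant } c>0 \text{ and all sufficiently large } t,\]
the exponent $1/d$ coming from the fact that an algebraic equation of total degree $\le d$ admits unbounded branches with growth exponent at most $d$.

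Step~3 closes the contradiction. Suppose \eqref{eq:bswap1} holds; pick a large $a$ and witness $x$, and set $b=|x|_\infty$. Then $|x|^2\ge b^2$ and $(a(1+|x|^2)^d,b)\in Q$, giving $\beta(a(1+|x|^2)^d)\le b$, and monotonicity of $\beta$ turns this into $\beta(a(1+b^2)^d)\le b$. Substituting the lower bound from Step~2 at $t=a(1+b^2)^d$ yields
\[b\ \ge\ c\,a^{1/d}(1+b^2)\ \ge\ c\,a^{1/d}\,b^2,\]
so $b\le 1/(c\,a^{1/d})$. But also $b\ge\beta(a)\ge c\,a^{1/d}$ for $a$ large, so $c^2 a^{2/d}\le 1$, a contradiction as $a\to\infty$.

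The hard part is the lower bound $\beta(t)\ge c\,t^{1/d}$ in Step~2: the exponent $1/d$ matches the $d$ in the factor $(1+|x|^2)^d$ exactly, and this matching is what makes the chain of inequalities in Step~3 close up. Its verification requires a careful Newton-polygon argument on a polynomial equation of total degree $\le d$ describing the boundary of $Q$; everything else in the proof is bookkeeping around the monotonicities of the limits.
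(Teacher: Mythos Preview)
Your proof is correct and follows essentially the same strategy as the paper: prove the contrapositive by extracting from the failure of \eqref{eq:bswap2} a quantitative constraint on $Q$, then use that constraint to bound the witness $x$ in \eqref{eq:bswap1}. Your Step~1 simplification via \Cref{lem:monotonelimits} and your Step~3 chain of inequalities match the paper's logic closely.

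The one substantive difference is in how the key constraint on $Q$ is derived. You introduce the boundary function $\beta(a)=\inf\{b:(a,b)\in Q\}$ and appeal to a Puiseux/Newton-polygon argument to get $\beta(t)\ge c\,t^{1/d}$ for large $t$. The paper instead works directly with the defining polynomials $p_1,\dots,p_\ell$ of $Q$: for each $p_i$, it writes $p_i(a,b)=\gamma a^{\alpha}(b^{\beta}+g(b))+h(a,b)$ with $\deg_a h<\alpha$, and observes that the sign of $p_i$ agrees with $\operatorname{sign}(\gamma)$ once $b\ge B$ and $a\ge A\,b^{d}$ (since $|h(a,b)|<Aa^{\alpha-1}b^d$ and $b^\beta+g(b)\ge 1$). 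Taking the maximum of these constants over all $i$ gives a region $\{b\ge B,\ a\ge Ab^d\}$ disjoint from $Q$, which is exactly your bound $\beta(t)\ge c\,t^{1/d}$ with $c=A^{-1/d}$. The paper's argument is completely elementary---no Puiseux machinery---whereas your sketch, while correct, leaves the Newton-polygon step as a black box. Both routes are fine; the paper's is more self-contained, yours is arguably more conceptual once one accepts the asymptotic-branch fact for degree-$d$ curves.
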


\begin{proof}
Suppose that \eqref{eq:bswap2} is false, so for every $b$ there is some $a$ so that $\exists x\in [-b, b]^n:(a, x)\in S$ is false. Let $Q$ be the set of values of $a$ and $b$ such that $\exists x\in [-b, b]^n:(a, x)\in S$. By assumption, $Q$ has degree $d$. So membership in $Q$ is determined by the signs of polynomials $p_1, \dots, p_\ell$ each of degree at most $d$ in $a$ and $b$. 

Consider some $p_i$. The value of $\lim_{b\rightarrow \infty}\lim_{a\rightarrow \infty} \text{sign}(p_i)$ is the sign of the highest-order term lexicographically in $a$ then $b$. If $p_i$ is non-zero, then write $p_i(a, b)=\gamma a^\alpha(b^\beta+g(b))+h(a, b)$ where $\gamma$ is non-zero, $g$ has degree less than $\beta$ and $h$ has degree less than $\alpha$ in $a$. There is some value $B$ such that $b^\beta+g(b)\ge 1$ for $b\ge B$. There is some $A$ such that $|h(a, b)|< Aa^{\alpha-1}b^{d}$. So the sign of $p_i$ is the same as the sign of $\gamma$ when $b\ge B$ and $a\ge |\gamma|^{-1}Ab^{d}$. Taking the maximum of $|\gamma|^{-1} A$ and $B$ over all the $p_i$, we conclude that there are values $A$ and $B$ such that $(a, b)\notin Q$ for $b\ge B$ and $a\ge Ab^{d}$. 

Suppose that $\left(a\left(1+|x|^2\right)^{d}, x\right)\in S$ for some $x\in \mathbb{R}^n$ and $a\ge A$. Let $b=\max_i\{|x_i|\}$, so $\left(a\left(1+|x|^2\right)^{d}, b\right)\in Q$. Since $S$ is $a-$-monotone, $\left(a\left(1+b^2\right)^{d}, b\right)\in Q$. We conclude that $b<B$, so $x\in [-b, b]^n$.

If \eqref{eq:bswap1} is true, then for every $a$ there is some $x(a)$ such that $\left(a\left(1+|x(a)|^2\right)^{d}, x(a)\right)\in S$. By $a-$-monotonicity, $(a, x(a))\in S$. By the above, $x(a)\in [-B, B]^n$ for all $a\ge A$. In particular, \eqref{eq:bswap2} is true, contradicting the earlier assumption that \eqref{eq:bswap1} is false. So it cannot be the case that \eqref{eq:bswap2} is false and \eqref{eq:bswap1} is true, proving the claim.
\end{proof}

Let $\alpha$ be the constant from \Cref{lem:projections}. Viewing the limit $\varepsilon\rightarrow 0^+$ as a limit $\frac{1}\varepsilon\rightarrow \infty$, \Cref{lem:projections,lem:bswap} show that:

\begin{equation}\label{eq:E2}
\lim_{\varepsilon\rightarrow 0^+}\lim_{b\rightarrow \infty}\exists x\in [-b, b]^n,z:f(x, z)=0\wedge |y-z|\le \varepsilon\left(1+|x|^2\right)^{-4^{\alpha n}}
\end{equation}

\noindent implies \eqref{eq:E1}. To get the reverse implication, we observe that we can substitute $\varepsilon\rightarrow (1+|x_1|^2)^{-4^{\alpha n}}\varepsilon$ before swapping the $\exists x_1$ and $\lim_\varepsilon$ terms. That is to say:

\[\lim_{b\rightarrow \infty}\exists x\in [-b, b]^n:\lim_{\varepsilon\rightarrow 0^+}\exists z:f(x, z)=0\wedge |y-z|\le \varepsilon\]

\noindent is equivalent to:

\begin{equation}\label{eq:Einternal}
\lim_{b\rightarrow \infty}\exists x\in [-b, b]^n:\lim_{\varepsilon\rightarrow 0^+}\exists z:f(x, z)=0\wedge |y-z|\le \varepsilon(1+|x|^2)^{-4^{\alpha n}}
\end{equation}

This is because $(1+|x|^2)^{-4^{\alpha n}}$ is just a positive constant at this stage, so the substitution $\varepsilon\rightarrow (1+|x_1|^2)^{-4^{\alpha n}}\varepsilon$ doesn't change the limit. It is clear that \eqref{eq:Einternal} implies \eqref{eq:E2}, since allowing the values of $b$ and $x$ to depend on $\varepsilon$ only makes the condition easier to satisfy.

\paragraph{Exchanging Monotone Limits with Quantifiers.}

\begin{lemma}\label{lem:easyswap}
Positive limits commute with existential quantifiers and negative limits commute with universal quantifiers.
\end{lemma}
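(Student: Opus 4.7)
The plan is to invoke Lemma~\ref{lem:monotonelimits} to replace each monotone limit by an ordinary quantifier of the appropriate type, after which the commutation is immediate from the fact that quantifiers of the same kind always commute with each other.

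More concretely, suppose we are given a positive limit $\lim_{a\to\infty}\Phi(x,a)$ together with an existential quantifier $\exists x$. Since the limit is positive, the underlying \semialgebraic set $\{(x,a):\Phi(x,a)\}$ is $a+$-monotone in $a$ (after absorbing any free variables $x$ as parameters, which does not affect monotonicity in the limit variable). By Lemma~\ref{lem:monotonelimits}, applied parametrically in $x$, the formula $\lim_{a\to\infty}\Phi(x,a)$ is equivalent to $\exists a:\Phi(x,a)$. Hence
\[\exists x:\lim_{a\to\infty}\Phi(x,a)\ \iff\ \exists x\,\exists a:\Phi(x,a)\ \iff\ \exists a\,\exists x:\Phi(x,a)\ \iff\ \lim_{a\to\infty}\exists x:\Phi(x,a),\]
where in the last step we use Lemma~\ref{lem:monotonelimits} again on the positive limit (note that $\exists x:\Phi(x,a)$ is still $a+$-monotone in $a$, since projecting out $x$ preserves the monotonicity).

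The case of a negative limit and a universal quantifier is completely symmetric: Lemma~\ref{lem:monotonelimits} replaces the negative limit by $\forall a$, and universal quantifiers commute with each other, and coprojection preserves $a-$-monotonicity so we can re-apply the lemma to reintroduce the limit on the other side. The only subtle point, and the one that should be mentioned in the write-up, is that after swapping the limit past the quantifier we need the resulting formula to still be monotone in the limit variable so that Lemma~\ref{lem:monotonelimits} can be applied in both directions; this is immediate because projection (for $\exists$) and coprojection (for $\forall$) both preserve monotonicity in a parameter that they do not quantify over.
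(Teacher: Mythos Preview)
Your proposal is correct and follows exactly the approach the paper intends: the paper's own proof is simply ``Straightforward by \Cref{lem:monotonelimits},'' and you have spelled out precisely that straightforward argument, including the observation that projection and coprojection preserve monotonicity in the limit variable so the lemma can be applied on both sides.
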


\begin{proof}
Straightforward by \Cref{lem:monotonelimits}. 
\end{proof}

The proof of \Cref{lem:easyswap} also shows that any two positive limits commute with each other and that any two negative limits commute with each other.

As we have shown, $\Phi$ is equivalent to:

\begin{equation}\label{eq:AE0}
\lim_{a\rightarrow \infty}\forall y\in [-a, a]^n:\lim_{\varepsilon\rightarrow 0^+}\lim_{b\rightarrow \infty}\exists x\in [-b, b]^n,z:f(x, z)=0\wedge |y-z|\le \varepsilon\left(1+|x|^2\right)^{-4^{\alpha n}}
\end{equation}

The limit $\lim_{\varepsilon\rightarrow 0^+}$ is negative, so by \eqref{lem:easyswap}, \eqref{eq:AE0} is equivalent to:

\begin{equation}\label{eq:AE1}
\lim_{a\rightarrow \infty}\lim_{\varepsilon\rightarrow 0^+}\forall y\in [-a, a]^n:\lim_{b\rightarrow \infty}\exists x\in [-b, b]^n,z:f(x, z)=0\wedge |y-z|\le \varepsilon\left(1+|x|^2\right)^{-4^{\alpha n}}
\end{equation}

\paragraph{Swapping an $\varepsilon$-open Limit with a Universal Quantifier.}
We say that a \semialgebraic set $S$ is $\varepsilon$-open if it is $\varepsilon+$-monotone and, for all sufficiently small $\delta>0$, the union:

\[\bigcup_{0\le \varepsilon<\delta}S|_\varepsilon\]

\noindent is an open subset of $\mathbb{R}^k$.

Finally, we are ready to exchange the order of the $\forall y$ quantifier and the $b\rightarrow \infty$ limit.

\begin{lemma}\label{lem:openswap}
Let $S$ be a \semialgebraic set depending on $b,\varepsilon$ and $y\in \mathbb{R}^n$ that is $b+$-monotone, where $S|_b$ is $\varepsilon$-open for each fixed $b$. Then for $K\subseteq \mathbb{R}^n$:

\begin{equation}\label{eq:openswap1}
\lim_{\varepsilon\rightarrow 0^+}\forall y\in K:\lim_{b\rightarrow \infty}(y, b, \varepsilon)\in S
\end{equation}

\noindent is equivalent to:

\begin{equation}\label{eq:openswap2}
\lim_{\varepsilon\rightarrow 0^+}\lim_{b\rightarrow \infty}\forall y\in K:(y, b, \varepsilon)\in S
\end{equation}
\end{lemma}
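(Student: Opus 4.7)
The easy direction $(2)\Rightarrow(1)$ is immediate: a single $b$ witnessing the inner $\forall y\in K$ of (2) also witnesses, for each fixed $y$, the pointwise limit in (1).

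For the reverse direction, I would fix a small $\varepsilon_0>0$ and aim to produce a uniform $b_0$ with $(y,b_0,\varepsilon_0)\in S$ for every $y\in K$; by $b+$-monotonicity together with \Cref{lem:monotonelimits}, this is precisely the content of the inner part of (2). By (1), for each $y\in K$ there is some $\varepsilon_1(y)\in(0,\varepsilon_0)$ and some $b(y)<\infty$ with $(y,b(y),\varepsilon_1(y))\in S$. The $\varepsilon$-openness of $S|_{b(y)}$ now enters: for a suitable $\delta_y\in(\varepsilon_1(y),\varepsilon_0]$ small enough to lie within the openness range for $b(y)$, the set
\[
U_y\;=\;\bigcup_{0\le\varepsilon<\delta_y}\bigl(S|_{b(y)}\bigr)|_\varepsilon
\]
is open in $\mathbb{R}^n$ and contains $y$. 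By $\varepsilon+$-monotonicity, every $y'\in U_y$ satisfies $(y',b(y),\varepsilon_0)\in S$.

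The family $\{U_y\}_{y\in K}$ is then an open cover of $K$. Using compactness of $K$ (the lemma is applied in~\eqref{eq:AE1} with $K=[-a,a]^n$), extract a finite subcover $U_{y_1},\ldots,U_{y_m}$ and set $b_0=\max_i b(y_i)$. For any $y\in K$, some $U_{y_i}$ contains $y$, giving $(y,b(y_i),\varepsilon_0)\in S$; $b+$-monotonicity upgrades this to $(y,b_0,\varepsilon_0)\in S$, as required.

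The main obstacle is converting the pointwise witnesses $b(y)$ into a single uniform $b_0$, and this is precisely where the $\varepsilon$-openness hypothesis earns its keep: without it, the sets on which a fixed $b$ suffices need not be open, so no compactness argument would be available to promote finitely many pointwise bounds to a uniform one. A minor technicality to handle carefully is that the ``sufficiently small $\delta$'' threshold in the definition of $\varepsilon$-openness of $S|_{b(y)}$ may itself depend on $b(y)$; this can be managed by shrinking $\varepsilon_1(y)$ (and, if needed, enlarging $b(y)$ via $b+$-monotonicity) until the threshold exceeds $\varepsilon_1(y)$, so a valid $\delta_y$ exists.
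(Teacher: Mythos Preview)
Your proposal is correct and follows essentially the same approach as the paper's proof: fix a small $\varepsilon_0$, use $\varepsilon$-openness to get an open neighborhood of each $y\in K$ on which a fixed $b(y)$ suffices, extract a finite subcover by compactness of $K$, and take the maximum of the finitely many $b(y)$'s. You are in fact slightly more careful than the paper in two places---you make explicit the tacit compactness assumption on $K$ (which the lemma statement omits but the paper's proof uses), and you flag the $b$-dependence of the ``sufficiently small $\delta$'' threshold in the definition of $\varepsilon$-openness, a point the paper glosses over.
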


\begin{proof}
If \eqref{eq:openswap2} holds, then \eqref{eq:openswap1} holds straightforwardly. So suppose that \eqref{eq:openswap1} holds. Let $\delta$ be such that $\forall y\lim_{b\rightarrow \infty}:(y, b, \varepsilon)\in S$ for all $\varepsilon<\delta$. For each $y\in K$, there is some $B_y$ such $(y, b, \frac12\delta)\in S$ for every $b\ge B_y$. By $\varepsilon$-openness, the set:

\[\bigcup_{\varepsilon<\delta}S|_{\varepsilon=\varepsilon, b=B_y}\]

\noindent is open. Since this set contains $y$, there is an open neighborhood $U$ of $y$ such that, for $w\in U$, there is some $\varepsilon<\delta$ such that $(w, b_y, \varepsilon)\in S$. By compactness of $K$, $K$ is covered by a finite number of these neighborhoods. For every $b$ at least the max of $b_y$ over this finite cover, we see by $\varepsilon+$- and $b+$- monotonicity that $(y, b, \delta)\in S$. This works for all sufficiently small $\delta$, proving that \eqref{eq:openswap2} holds. So \eqref{eq:openswap1} and \eqref{eq:openswap2} are equivalent.
\end{proof}

In order to use this lemma, we need to show that, for each fixed $b$, the set of values of $(y, \varepsilon)$ such that $\exists x\in [-b, b]^n,z:f(x, z)=0\wedge |y-z|\le \varepsilon\left(1+|x|^2\right)^{-4^{\alpha n}}$ is $\varepsilon$-open. The set:

\[Q(b)=\bigcup_{0\le \varepsilon<\delta}\left\{y:\exists x\in [-b, b]^n,z:f(x, z)=0\wedge |y-z|\le \varepsilon\left(1+|x|^2\right)^{-4^{\alpha n}}\right\}\]

\noindent is equivalent to:

\[\left\{y:\exists x\in [-b, b]^n,z:f(x, z)=0\wedge |y-z|<\delta\left(1+|x|^2\right)^{-4^{\alpha n}}\right\}\]

For each fixed $x$, the set $Q(b, x)=\{y:\exists z:f(x, z)=0\wedge |y-z|<\delta\left(1+|x|^2\right)^{-4^{\alpha n}}\}$ is open. The set $Q(b)$ is the union of $Q(b, x)$ for $x\in [-b, b]^n$. Since the union of open sets is open, the set $Q(b)$ is open. So we can use \Cref{lem:openswap} to see that \eqref{eq:AE1} (and so $\Phi$) is equivalent to:

\[\lim_{a\rightarrow \infty}\lim_{\varepsilon\rightarrow 0^+}\lim_{b\rightarrow \infty}\forall y\in [-a, a]^n:\exists x\in [-b, b]^n,z:f(x, z)=0\wedge |y-z|\le \varepsilon\left(1+|x|^2\right)^{-4^{\alpha n}}\]

\paragraph{Application.}
This formula is now in the appropriate form to use \Cref{lem:limitevaluation}. 
To be specific let \[\Psi(a,b,\varepsilon) = \forall y\in [-a, a]^n:\exists x\in [-b, b]^n,z:f(x, z)=0\wedge |y-z|\le \varepsilon\left(1+|x|^2\right)^{-4^{\alpha n}}\]
We note that $\Psi$ has 
\begin{itemize}[noitemsep, topsep=0pt, parsep=0pt, partopsep=0pt]
    \item $2n+1$ quantifiers,
    \item $1$ quantifier alternations,
    \item degree $d \approx 4^{\alpha n}$,
    \item bit-size $\tau$, and
     \item $3$ free variables $a,b$, and $\varepsilon$.
\end{itemize}
Thus \Cref{lem:limitevaluation} gives us as concrete (and very large) integer values 
$a^*,b^*, 1/\varepsilon^*$ such that 
\[\lim_{a\rightarrow \infty}\lim_{\varepsilon\rightarrow 0^+}\lim_{b\rightarrow \infty} \Psi(a,b,\varepsilon)\]
is equivalent to 
\[ \Gamma = \forall y\in [-a^*, a^*]^n:\exists x\in [-b^*, b^*]^n,z:f(x, z)=0\wedge |y-z|\le \varepsilon^*\left(1+|x|^2\right)^{-4^{\alpha n}}.\]
And this formula is in turn equivalent to our original formula 
\[\Phi=\forall y\in \mathbb{R}^n:\exists x\in \mathbb{R}^n:f(x, y)=0\] from the beginning of this section.
The values are
\[a^* = \exp_2\left(\tau d^{\left(\beta (2n+7)\right)^{7}}\right),\]
\[1/\varepsilon^* = \exp_2\left(\tau d^{2\left(\beta (2n+7)\right)^{7}+1}\right),\]
and 
\[b^* = \exp_2\left(\tau d^{3\left(\beta (2n+7)\right)^{7}+2}\right),\]
with $\beta$ being the constant from \Cref{lem:limitevaluation}.

Note that $\Gamma$ has compactly bounded domains and only closed constraints as promised.
In the next, section we will show in full generality how to do a similar transformation for an unbounded number of quantifier alternations.

\subsection{The Compactification theorem}

We are almost ready to generalize the approach from the previous section to formulas with arbitrarily many quantifier alternations, but there are a few more lemmas that we will need. In order to use \Cref{lem:closedswap,lem:openswap}, we will need a way to verify that certain sets are open or closed.

\paragraph{Bounded Projections and Coprojections.}

\begin{lemma}\label{lem:boundedopenprojections}
Let $S\subseteq \mathbb{R}^m\times \mathbb{R}^n$ and let $\varphi:\mathbb{R}^n\rightarrow \mathbb{R}$ be a continuous function. Let $Q$ be a quantifier, either $\exists$ or $\forall$. Let $R=\{y\in \mathbb{R}^n|Q x\in [-\varphi(y), \varphi(y)]^m:(x, y)\in S\}$. If $S$ is open, then $R$ is open. If $S$ is closed, then $R$ is closed. 
\end{lemma}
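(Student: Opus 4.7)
The plan is to handle the four combinations of quantifier $Q \in \{\exists, \forall\}$ and topological property (open/closed) separately. In each case the core idea is that the box $[-\varphi(y), \varphi(y)]^m$ varies continuously with $y$ (in the Hausdorff sense) because $\varphi$ is continuous, so witnesses $x$ can be tracked as $y$ moves. I would formalize ``tracking'' by the coordinatewise projection $\pi_y(x) = \bigl(\max(-\varphi(y), \min(x_i, \varphi(y)))\bigr)_i$ onto the box at $y$, which is jointly continuous in $(x, y)$ and an identity on $x \in [-\varphi(y), \varphi(y)]^m$.

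For $Q = \exists$ with $S$ open, I would take $y_0 \in R$ and a witness $(x_0, y_0) \in S$ with $x_0 \in [-\varphi(y_0), \varphi(y_0)]^m$, use openness of $S$ to obtain a ball around $(x_0, y_0)$ inside $S$, and then take $x_y := \pi_y(x_0)$; continuity of $\pi$ and $\varphi$ gives $(x_y, y) \in S$ for $y$ near $y_0$, so a neighborhood of $y_0$ lies in $R$. For $Q = \exists$ with $S$ closed, I would take a sequence $y_k \to y_0$ in $R$ with witnesses $x_k \in [-\varphi(y_k), \varphi(y_k)]^m$; since the $\varphi(y_k)$ are bounded, a subsequence $x_k \to x_0$, and $x_0 \in [-\varphi(y_0), \varphi(y_0)]^m$ by continuity of $\varphi$, while $(x_0, y_0) \in S$ by closedness of $S$, giving $y_0 \in R$.

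For $Q = \forall$ with $S$ open, I would use compactness of the slice $\{y_0\} \times [-\varphi(y_0), \varphi(y_0)]^m \subseteq S$ to extract a uniform $\varepsilon > 0$ such that the $\varepsilon$-neighborhood of that slice lies in $S$; then for $y$ close enough to $y_0$, continuity of $\varphi$ makes the box $[-\varphi(y), \varphi(y)]^m$ lie within Hausdorff distance $\varepsilon$ of $[-\varphi(y_0), \varphi(y_0)]^m$, so every point of $\{y\} \times [-\varphi(y), \varphi(y)]^m$ lies in the $\varepsilon$-tube and hence in $S$. For $Q = \forall$ with $S$ closed, I would take a sequence $y_k \to y_0$ in $R$ and any target $x_0 \in [-\varphi(y_0), \varphi(y_0)]^m$, form $x_k = \pi_{y_k}(x_0) \in [-\varphi(y_k), \varphi(y_k)]^m$; by $y_k \in R$ we get $(x_k, y_k) \in S$, and $x_k \to x_0$ by continuity, so $(x_0, y_0) \in S$ by closedness, yielding $y_0 \in R$.

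The main obstacle I anticipate is the $\forall$/open case, because one needs a \emph{uniform} thickness $\varepsilon$ working for every $x$ in the slice simultaneously; this is where compactness of $[-\varphi(y_0), \varphi(y_0)]^m$ and the Lebesgue-number/tube-lemma style argument become essential. The other three cases reduce cleanly to continuity of the projection $\pi$ plus sequential compactness, so I expect them to be short.
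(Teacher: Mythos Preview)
Your proposal is correct and covers all four cases cleanly. The paper takes a slightly more economical route: it observes that the closed case follows from the open case by taking complements (since $\neg\exists = \forall\neg$ and the complement of an open set is closed), so it only argues the two open cases directly. For $Q=\exists$ with $S$ open, the paper perturbs the witness $x$ into the \emph{interior} $(-\varphi(z),\varphi(z))^m$ and then uses continuity of $\varphi$ to keep it in the box for nearby $y$; your projection map $\pi_y$ achieves the same effect and is arguably cleaner (it also sidesteps the degenerate situation $\varphi(y_0)=0$, where the open box is empty). For $Q=\forall$ with $S$ open, both arguments are the same compactness/tube-lemma idea: the paper phrases it via a finite subcover of $\{U_x\times V_x\}$, while you phrase it via a uniform $\varepsilon$-thickening of the compact slice. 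So the substance is the same; the main thing the paper's complement trick buys is not having to write out the two closed cases, while your projection $\pi_y$ buys a uniform mechanism for tracking witnesses across all four cases.
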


\begin{proof}
We take $S$ to be open, since the case for $S$ closed follows by taking complements. First suppose that $Q$ is $\exists$, and let $z$ be such that $\exists x\in [-\varphi(z), \varphi(z)]^m:(x, z)\in S$. Since $S$ is open, we can find such an $x$ where $x\in (-\varphi(z), \varphi(z))^m$. Also, there is an open neighborhood $V$ of $z$ such that $(x, y)\in S$ for $y\in V$. By continuity of $\varphi$, there is an open neighborhood $W$ of $z$ such that $x\in [-\varphi(y), \varphi(y)]$ for $y\in V$. The intersection of $V$ and $W$ gives an open neighborhood of $z$ in $\{y\in \mathbb{R}^n|Q x\in [-\varphi(y), \varphi(y)]^m:(x, y)\in S\}$.

Now suppose that $Q$ is $\forall$ and let $z$ be such that $\forall x\in [-\varphi(z), \varphi(z)]^m:(x, z)\in S$. Since $S$ is open, for each $x\in K$, there are open neighborhoods $U_x$ of $x$ and $V_x$ of $z$ such that $U_x\times V_x\subseteq S$. 

By compactness of $[-\varphi(z), \varphi(z)]^m$, there is a finite set $\mathcal{T}\subseteq [-\varphi(z), \varphi(z)]^m$ such that the union $R=\bigcup_{x\in \mathcal{T}}U_x$ covers $[-\varphi(z), \varphi(z)]^m$. Since the boundary of $[-\varphi(z), \varphi(z)]^m$ is compact, $R$ contains the larger set $(-\varphi(z)-\gamma, \varphi(z)+\gamma)^m$ for some $\gamma>0$. Since $\varphi$ is continuous, there is some open neighborhood $W\subseteq \mathbb{R}^n$ of $z$ such that $[-\varphi(y), \varphi(y)]\subseteq (-\varphi(z)-\gamma, \varphi(z)+\gamma)$ for $y\in W$. So the intersection of $W$ and the $V_x$ for $x\in \mathcal{T}$ gives an open neighborhood of $z$ in $R$, proving the claim.
\end{proof}

Next, we need a reversed version of \Cref{lem:bswap}.

\paragraph{Exchanging Limits with Degree Bounds II.}

\begin{lemma}\label{lem:aswap}
Let $S$ be a \semialgebraic set depending on $b$ and $y\in \mathbb{R}^n$ that is $b+$-monotone. 
If $Q=\{(a, b)\in \mathbb{R}^2 : \forall y\in [-a, a]^n:(b, y)\in S\}$ 
is a \semialgebraic set of degree $d$, then:

\begin{equation}\label{eq:aswap1}
\lim_{a\rightarrow \infty}\lim_{b\rightarrow\infty}\forall x\in [-a, a]^n:\left(b, y\right)\in S
\end{equation}

\noindent implies:

\begin{equation}\label{eq:aswap2}
\lim_{b\rightarrow\infty}\lim_{a\rightarrow \infty}\forall x\in [-a, a]^n:\left(b\left(1+|y|^2\right)^d, y\right)\in S
\end{equation}
\end{lemma}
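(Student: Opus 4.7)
The plan is to mirror the proof of \Cref{lem:bswap} with the roles of the monotone variable and the bounding variable interchanged, since the scaling factor $(1+|y|^2)^d$ now appears in the conclusion rather than in the assumption. As a first step, I would rewrite the limits as quantifiers via \Cref{lem:monotonelimits}: $Q$ is $a-$-monotone (larger $a$ enlarges the universal range over $y$) and $b+$-monotone (larger $b$ only helps $S$), so \eqref{eq:aswap1} becomes $\forall a \, \exists b : (a,b) \in Q$; and the same monotonicities for $Q'$ mean that \eqref{eq:aswap2} becomes $\exists b \, \forall a : (a,b) \in Q'$.

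The key step is to redo the sign-stability analysis from \Cref{lem:bswap} on the degree-$d$ polynomials $p_1, \dots, p_\ell$ defining $Q$, but in the lexicographic order $(b, a)$ rather than $(a, b)$. Writing each $p_i$ as a polynomial in $b$ with coefficients in $a$, its leading $b$-coefficient is a polynomial in $a$ whose own leading monomial has the form $\mu a^\beta$; bounding the lower-order tails as in \Cref{lem:bswap} produces constants $A, B$ (which we may take to satisfy $A \geq 1$) such that every $p_i$ has a constant sign on the region $R = \{(a, b) : a \geq A,\; b \geq B a^d\}$. Therefore $Q \cap R$ is either empty or equal to $R$. Combining \eqref{eq:aswap1} with $b+$-monotonicity, for every $a \geq A$ some $b \geq B a^d$ satisfies $(a, b) \in Q$, so $Q \cap R$ is nonempty and hence $R \subseteq Q$.

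It then remains to pick the single value $b^{\ast} = B A^d$ and verify that $(a, b^{\ast}) \in Q'$ for every $a$. Given $y \in [-a, a]^n$, set $a_0 = \max(A, |y|_\infty)$ and $b_0 = b^{\ast}(1+|y|^2)^d$. A short case split on whether $|y|_\infty \leq A$ or $|y|_\infty \geq A$ (using $(1+|y|^2)^d \geq |y|_\infty^{2d}$ in the second case) shows $b_0 \geq B a_0^d$, hence $(a_0, b_0) \in R \subseteq Q$, which unpacks to $(b_0, y) \in S$. This is exactly the condition required by $(a, b^{\ast}) \in Q'$, so \eqref{eq:aswap2} follows.

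The main obstacle will be redoing the polynomial-tail estimate in the swapped $(b, a)$ lexicographic order; this parallels \Cref{lem:bswap} but needs careful bookkeeping. A cleaner alternative is to deduce the lemma from \Cref{lem:bswap} by duality: apply \Cref{lem:bswap} to the complement $T = S^c$ (which is $b-$-monotone) after swapping the roles of $a$ and $b$, and then take the contrapositive, using that limits commute with negation. Either approach should complete the proof.
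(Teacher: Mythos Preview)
Your proposal is correct, and in fact the ``cleaner alternative'' you mention at the end is exactly the paper's proof: the paper simply says that \Cref{lem:aswap} follows from \Cref{lem:bswap} by taking negations. Concretely, one passes to the complement $T=S^c$ (which is $b-$-monotone), observes that the $Q$ of \Cref{lem:aswap} and the $Q$ of \Cref{lem:bswap} (after the variable swap $a\leftrightarrow b$, $x\leftrightarrow y$) are complementary and hence have the same degree, and then the contrapositive of \Cref{lem:bswap} yields \eqref{eq:aswap1}$\Rightarrow$\eqref{eq:aswap2}, using that limits commute with negation.

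Your main line of argument, by contrast, reproves the polynomial sign-stability estimate directly in the swapped lexicographic order and then exhibits an explicit witness $b^\ast=BA^d$. This is more work but is self-contained and makes the mechanism transparent; the case split verifying $b_0\ge Ba_0^d$ goes through as you describe. The paper's one-line duality avoids repeating the estimate at the cost of relying entirely on \Cref{lem:bswap}. Either route is fine; the paper simply takes the shorter one.
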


\begin{proof}
Follows from \Cref{lem:bswap} by taking negations.
\end{proof}

\paragraph{Merging Monotone Limits.}
Finally, we show how to merge limits. This is useful for reducing the number of limits that we need to keep track of.

\begin{lemma}\label{lem:monotonemerge}
Let $S$ be a \semialgebraic set depending on $a_1$ and $a_2$ that is $a_1-$- and $a_2-$-monotone. Then:

\begin{equation}\label{eq:backwardsmerge1}
\lim_{a_1\rightarrow \infty}\lim_{a_2\rightarrow \infty} (a_1, a_2)\in S
\end{equation}

\noindent is equivalent to:

\begin{equation}\label{eq:backwardsmerge2}
\lim_{a\rightarrow \infty}(a, a)\in S
\end{equation}

The same is true if $S$ is $a_1+$- and $a_2+$- monotone instead.
\end{lemma}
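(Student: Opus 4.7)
The plan is to apply \Cref{lem:monotonelimits} to both sides of the claimed equivalence and then verify that the resulting quantified formulas are equivalent by a direct monotonicity argument. I work out the $a_1-$- and $a_2-$-monotone case in detail; the $+$-case will be entirely symmetric.

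Since $S$ is $a_2-$-monotone, for each fixed $a_1$ the inner limit $\lim_{a_2\to\infty}(a_1,a_2)\in S$ is negative, and \Cref{lem:monotonelimits} turns it into $\forall a_2:(a_1,a_2)\in S$. The key small observation is that this remaining formula is still $a_1-$-monotone: if $a_1'<a_1$ and $(a_1,a_2)\in S$ for every $a_2$, then $a_1-$-monotonicity of $S$ gives $(a_1',a_2)\in S$ for every $a_2$. A second application of \Cref{lem:monotonelimits} therefore yields
\[\lim_{a_1\to\infty}\lim_{a_2\to\infty}(a_1,a_2)\in S \iff \forall a_1\,\forall a_2:(a_1,a_2)\in S.\]
The same reasoning applied to the diagonal condition $\{a:(a,a)\in S\}$, which is downward-closed by combining the two monotonicity hypotheses (apply $a_1-$-monotonicity and then $a_2-$-monotonicity), gives $\lim_{a\to\infty}(a,a)\in S \iff \forall a:(a,a)\in S$.

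It then remains to show $\forall a_1\,\forall a_2:(a_1,a_2)\in S \iff \forall a:(a,a)\in S$. The forward direction is immediate by specializing $a_1=a_2=a$. For the converse, fix arbitrary $a_1,a_2$ and set $a=\max(a_1,a_2)$; the hypothesis yields $(a,a)\in S$, then $a_1-$-monotonicity (using $a\ge a_1$) gives $(a_1,a)\in S$, and finally $a_2-$-monotonicity (using $a\ge a_2$) gives $(a_1,a_2)\in S$. For the $+$-monotone variant, both limits are positive, so \Cref{lem:monotonelimits} converts them into existential quantifiers, and the converse direction is handled by the same $\max$ trick: if $(a_1,a_2)\in S$, then increasing each coordinate up to $a=\max(a_1,a_2)$ preserves membership, giving $(a,a)\in S$. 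I do not foresee any substantive obstacle; the only point that requires any care is confirming that $a_1-$-monotonicity survives the elimination of the inner $a_2$-limit, and this is the one-line check highlighted above.
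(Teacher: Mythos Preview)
Your proof is correct. The approach differs mildly from the paper's: the paper argues directly with the ``for all sufficiently large'' definition of the limit, showing each implication by hand, whereas you first invoke \Cref{lem:monotonelimits} to strip away all limits and reduce the claim to the quantifier equivalence $\forall a_1\,\forall a_2:(a_1,a_2)\in S \iff \forall a:(a,a)\in S$, which you then settle with the $\max$ trick. Both arguments ultimately rest on the same monotonicity observation (move from $(a,a)$ to $(a_1,a_2)$ by decreasing one coordinate at a time), but your route is a bit more modular in that it reuses \Cref{lem:monotonelimits} rather than unfolding the limit definition again; the paper's route is slightly more self-contained and avoids the small side check that $a_1-$-monotonicity survives the inner quantifier elimination.
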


\begin{proof}
Suppose $S$ is $a_1-$- and $a_2-$- monotone. Suppose that \eqref{eq:backwardsmerge1} holds. So for all sufficiently large $a_1$, there is some $A(a_1)$ such that $(a_1, a_2)\in S$ for $a_2>A(a_1)$. But $S$ is $a_2-$-monotone, so $(a_1, a_1)\in S$, since $a_1<a_2$ for some $a_2>A(a_1)$. So \eqref{eq:backwardsmerge2} holds.

Now suppose that \eqref{eq:backwardsmerge2} holds. Let $A$ be such that $(a, a)\in S$ for $a>A$. Fix $a_1>A$ and $a_2>a_1$. So $(a_2, a_2)\in S$, and by $a_1-$-monotonicity, $(a_1, a_2)\in S$, proving that \eqref{eq:backwardsmerge2} holds.

The case where $S$ is $a_1+$- and $a_2+$-monotone follows by taking negations. 
\end{proof}

\paragraph{The General Transformation.}
We now define the transformation that we use. We will write $[a,b] t$ for the closed interval $[at,bt]$.
We also use this for boxes, like $[a,b]^n t = [at,bt]^n$.
We hope that this helps readability for the reader, by removing redundant clutter.

\begin{definition}[Compactification]
Let $\Phi$ be  a formula of the form $\forall y_k\exists x_k\dots\forall{y_1}\exists x_1: f(x_1, y_1, \dots, x_k, y_k)=0$, where $f$ is a polynomial of degree $4$, possibly depending on some free variables,
and each $x_i$ or $y_i$ is in $\mathbb{R}^n$, $n\ge 1$.  

For $\ell=1,\dots, k$, define:

\[p_\ell=4^{(6\alpha n)^{\ell(2\ell-1)}}\text{ and }q_\ell=4^{(6\alpha n)^{\ell(2\ell+1)}}\] 

\noindent where $\alpha$ is the constant from \Cref{lem:projections}.

We define the \emph{compactification} of $\Phi$ to be a new formula $\Psi$, which is obtained from $\Phi$ by replacing each existential quantifier $\exists x_\ell\in \mathbb{R}^n$ with:

\[\exists (x_\ell, b_\ell)\in [-b_{\ell+1}, b_{\ell+1}]^{n+1}\left(1+|(y_\ell, a_\ell)|^2\right)^{q_\ell},z_\ell\in \mathbb{R}^n:a_\ell|z_\ell-y_\ell|^2\le \left(1+|(x_{\ell}, b_\ell)|^2\right)^{-p_{\ell}}\wedge |z_\ell-y_\ell|^2\le 1\]

\noindent replacing each universal quantifier $\forall y_\ell\in \mathbb{R}^n$ for $\ell<k$ with:

\[\forall (y_\ell, a_\ell)\in [-a_{\ell+1}, a_{\ell+1}]^{n+1}\left(1+|(x_{\ell+1}, b_{\ell+1})|^2\right)^{p_{\ell+1}}\]

\noindent replacing the universal quantifier $\forall y_k\in \mathbb{R}^n$ with:

\[\forall (y_k, a_k)\in [-a_{k+1}, a_{k+1}]^{n+1}\]

\noindent replacing the condition $f(x_1, y_1, \dots, x_k, y_k)=0$ with:

\[f(x_1, z_1, \dots, x_k, z_k)=0 \]

\noindent and adding limits:

\[\lim_{b_{k+1}\rightarrow \infty}\lim_{a_{k+1}\rightarrow\infty}\]

\noindent to the front.

If instead $\Phi$ is a formula of form $\exists x_{k+1}\forall y_{k}\dots \forall y_1\exists x_1: f(x_1, y_1, \dots, y_{k-1}, x_k)=0$, then we define the compactification of $\Phi$ by making the same replacements as before, replacing the existential quantifier $\exists x_{k+1}\in \mathbb{R}^n$ with:

\[\exists (x_{k+1}, b_{k+1})\in [-b_{k+1}, b_{k+1}]^{n+1}\]

\noindent and adding limits:

\[\lim_{b_{k+2}\rightarrow \infty}\lim_{a_{k+1}\rightarrow\infty}\]

\noindent to the front. 
\end{definition}

\paragraph{The Compactification Theorem.}
We are now ready to state the main theorem of this section.

\begin{theorem}\label{thm:compactformula}
Let $\Phi$ be a formula of the form 
$\forall y_k\exists x_k\dots\forall{y_1}\exists x_1: f(x_1, y_1, \dots, x_k, y_k)=0$, where $f$ is a polynomial of degree $4$, possibly depending on some additional free variables, and each $x_i$ or $y_i$ is in $\mathbb{R}^n$, $n\ge 1$. Then $\Phi$ is true if and only if the compactification $\Psi$ is true. 
\end{theorem}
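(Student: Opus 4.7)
The plan is to iterate the single-alternation argument from the preceding subsection across all $k$ quantifier pairs $\forall y_\ell \exists x_\ell$ of $\Phi$. First I would replace the bare condition $f(x_1, y_1, \dots, x_k, y_k) = 0$ by the smoothed condition from the definition of $\Psi$: every occurrence of $y_\ell$ inside $f$ is replaced by an auxiliary variable $z_\ell$ that is existentially bound together with $x_\ell$ and constrained by $a_\ell |z_\ell - y_\ell|^2 \le (1 + |(x_\ell, b_\ell)|^2)^{-p_\ell}$ and $|z_\ell - y_\ell|^2 \le 1$. Simultaneously I would bound each $\exists x_\ell$ by a box of radius $b_\ell (1 + |(y_\ell, a_\ell)|^2)^{q_\ell}$ and each $\forall y_\ell$ by a box of radius $a_\ell (1 + |(x_{\ell+1}, b_{\ell+1})|^2)^{p_{\ell+1}}$, with the corresponding limits $\lim_{b_\ell \to \infty}$ and $\lim_{a_\ell \to \infty}$ inserted just in front of the quantifier they bound. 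By \Cref{lem:monotonelimits}, this interleaved form is literally equivalent to $\Phi$ one quantifier at a time.

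Next I would push the $2k$ new limits to the front of the formula, working on one $\forall y_\ell \exists x_\ell$ block at a time, innermost first. For the inner block the moves are exactly those of the preceding subsection: commute $\lim_{b_\ell \to \infty}$ past $\exists x_\ell$ via \Cref{lem:closedswap} (the smoothed body is closed in $(x_\ell, z_\ell)$ because only equalities and non-strict inequalities appear); commute $\lim_{a_\ell \to \infty}$ past $\exists x_\ell$ by combining \Cref{lem:bswap} with its substitution-trick reverse; swap same-sign limit pairs using \Cref{lem:easyswap}; and finally push $\lim_{a_\ell \to \infty}$ past $\forall y_\ell$ via \Cref{lem:openswap}, whose $\varepsilon$-openness hypothesis follows from \Cref{lem:boundedopenprojections} since the projected smoothing is a union of open balls. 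For outer blocks I additionally use \Cref{lem:aswap}, the $\forall$-counterpart of \Cref{lem:bswap}, to carry limits past outer universal quantifiers. At each stage I would verify by \Cref{lem:boundedopenprojections} that the monotonicity, closedness, and $\varepsilon$-openness hypotheses survive the previous transformation. Once all limits are at the front, repeated application of \Cref{lem:monotonemerge} collapses consecutive same-sign limits, leaving only the outer pair $\lim_{b_{k+1} \to \infty} \lim_{a_{k+1} \to \infty}$ appearing in $\Psi$.

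The main obstacle I expect is the bookkeeping of degrees and of variable scopes. Each application of \Cref{lem:bswap} or \Cref{lem:aswap} at level $\ell$ demands the squeezing factor $(1 + |\cdot|^2)^{d}$ where $d$ is the degree of the current semi-algebraic set, and by \Cref{lem:(co)projections} this degree grows via a nested tower whose height is roughly $2\ell$ over base $\alpha n$. The specific exponents $p_\ell = 4^{(6 \alpha n)^{\ell(2\ell-1)}}$ and $q_\ell = 4^{(6 \alpha n)^{\ell(2\ell+1)}}$ in the definition of $\Psi$ must be verified to dominate this degree at every level; this is an inductive calculation I would do carefully but mechanically. A second subtlety is scope: the smoothing factor for $y_\ell$ must depend on exactly the variables $(x_\ell, b_\ell)$ that will sit to its right in the final outer-limit form, and not on any outer variable, otherwise the monotonicity hypotheses of the swap lemmas would fail. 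Confirming that the scoping of $p_\ell$ and $q_\ell$ in the definition of $\Psi$ matches what the swap sequence actually needs at every level is the real work. Once these invariants are propagated inductively, the chain of equivalences provided by \Cref{lem:closedswap,lem:bswap,lem:aswap,lem:easyswap,lem:openswap,lem:monotonemerge} yields $\Phi \iff \Psi$.
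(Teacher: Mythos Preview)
Your proposal is correct and follows essentially the same inductive strategy as the paper: compactify the inner $k-1$ alternation blocks by induction, then process the outermost $\forall y_k\exists x_k$ pair via the sequence of swap lemmas you list, with the degree bookkeeping supplying the exponents $p_\ell,q_\ell$. One point you will need to sharpen when executing the plan is the ``leave-behind'' step that explains your off-by-one indices: before invoking \Cref{lem:bswap} (resp.\ \Cref{lem:aswap}) the inner formula still depends directly on the limit variable $b_{k+1}$ (resp.\ $a_{k+1}$), so the paper replaces that dependence by a fresh bounded quantifier $\exists b_k\in[-b_{k+1},b_{k+1}]$ (resp.\ $\forall a_k\in[-a_{k+1},a_{k+1}]$), which is why the compactification bundles $b_\ell$ with $x_\ell$ and $a_\ell$ with $y_\ell$ and bounds them by $b_{\ell+1}$ and $a_{\ell+1}$ rather than by $b_\ell$ and $a_\ell$ as you wrote.
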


\begin{proof}
This is by induction on $k$. The case where $k=0$ is clear, since:

\[\lim_{b_{1}\rightarrow \infty}\lim_{a_{1}\rightarrow\infty}f=0\]

\noindent is equivalent to:

\[f=0\]

We now consider the case for arbitrary $k$. Since the compactification of $\Phi$ doesn't depend the free variables of $\Phi$, we will assume that these have been fixed, since it is sufficient to prove the claim for any fixed values of the free variables. 

We apply compactification to the innermost $2(k-1)$ quantifier blocks in $\Phi$, obtaining (by inductively assumption) and equivalent formula:

\begin{equation}\label{eq:inductionbase}
\forall y_k: \exists x_k:\lim_{b_{k}\rightarrow \infty}\lim_{a_k\rightarrow \infty} \Psi_{k-1}(a_k, b_k, x_k, y_k)
\end{equation}

Where $\Psi_{k-1}$ is the compactification of the innermost $2(k-1)$ quantifiers. We now add limits to obtain:

\begin{equation}\label{eq:inductionlimits}
\lim_{a\rightarrow \infty}\forall y_k\in [-a, a]^n:\lim_{b\rightarrow \infty}\exists x_k\in [-b, b]^n:\lim_{b_{k}\rightarrow \infty}\lim_{a_k\rightarrow \infty} \Psi_{k-1}(a_k, b_k, x_k, y_k)
\end{equation}

By \Cref{lem:easyswap}, we can swap the $\exists x_k$ quantifier and the $b_k\rightarrow \infty$ limit. Every quantifier in $\Psi_{k-1}$ is bounded by a continuous function of previously-defined variables, so for each fixed $a_k, b_k$, the set $\{(x_k, y_k)| \Psi_{k-1}(a_k, b_k, x_k, y_k)\}$ is closed by repeated applications of \Cref{lem:boundedopenprojections}. So the condition $\Psi_{k-1}(a_k, b_k, x_k, y_k)$ is equivalent to:

\begin{equation}\label{eq:insidepart}
\lim_{\varepsilon\rightarrow0^+}\exists z_k\in \mathbb{R}^n:|z_k-y_k|^2\le \varepsilon\wedge\Psi_{k-1}(a_k, b_k, x_k, z_k)
\end{equation}

We want to use \Cref{lem:monotonemerge} to merge the negative limits $a_k\rightarrow \infty$ and $\varepsilon\rightarrow 0^+$. We can view the limit $\epsilon\rightarrow 0^+$ as a limit $\frac{1}{\varepsilon}\rightarrow \infty$, but we have to be careful because the condition $|z_k-y_k|^2\le \varepsilon$ is not quite $\frac{1}{\varepsilon}-$-monotone. However, the condition $\frac{1}{\varepsilon}|z_k-y_k|^2\le 1$ is $\frac{1}{\varepsilon}-$-monotone, and is equivalent when $\varepsilon>0$. So we use \Cref{lem:monotonemerge} to obtain:

\begin{equation}\label{eq:induction1}
\lim_{a\rightarrow \infty}\forall y_k\in [-a, a]^n:\lim_{b\rightarrow \infty}\lim_{b_{k}\rightarrow \infty}\exists x_k\in [-b, b]^n:\lim_{a_k\rightarrow \infty} \exists z_k\in \mathbb{R}^n:a_k|z_k-y_k|^2\le 1\wedge \Psi_{k-1}(a_k, b_k, x_k, z_k)
\end{equation}

Note that the value of $z_k$ is bounded by a function of $y_k$ and $a_k$, so \Cref{lem:boundedopenprojections} shows that set of values of $x_k$ satisfying \eqref{eq:insidepart} is closed for each $a_k, b_k, y_k$. Using \Cref{lem:easyswap}, we can exchange the order of the $\exists x_k$ quantifier and the $a_k\rightarrow \infty$ limit. So by \Cref{lem:closedswap}, \eqref{eq:induction1} is equivalent to:

\begin{equation}\label{eq:induction2}
\lim_{a\rightarrow \infty}\forall y_k\in [-a, a]^n:\lim_{b\rightarrow \infty}\lim_{b_{k}\rightarrow \infty}\lim_{a_k\rightarrow \infty}\exists x_k\in [-b, b]^n: \exists z_k\in \mathbb{R}^n:a_k|z_k-y_k|^2\le 1\wedge \Psi_{k-1}(a_k, b_k, x_k, z_k)
\end{equation}

Using \Cref{lem:monotonemerge}, we can now merge the positive limits $b\rightarrow \infty$ and $b_k\rightarrow \infty$ into one limit $b_{k+1}\rightarrow \infty$. The next step is to use \Cref{lem:bswap} to exchange the order of the $b_{k+1}\rightarrow \infty$ and $a_k\rightarrow\infty$ limits, but we aren't quite in the right setting to do this yet, since $\Psi_{k-1}$ depends on $b_k$. Since $\Psi_{k-1}(a_k, b_{k+1}, x_k, z_k)$ is $b_{k+1}+$-monotone, it is equivalent to $\exists b_k\in [-b_{k+1}, b_{k+1}]:\Psi_{k-1}(a_k, b_k, x_k, z_k)$. So \eqref{eq:induction2} becomes:

\begin{multline}\label{eq:induction3}
\lim_{a\rightarrow \infty}\forall y_k\in [-a, a]^n:\lim_{b_{k+1}\rightarrow \infty}\lim_{a_k\rightarrow \infty}\exists (x_k, b_k)\in [-b_{k+1}, b_{k+1}]^{n+1},z_k\in \mathbb{R}^n:\\a_k|z_k-y_k|^2\le 1\wedge \Psi_{k-1}(a_k, b_k, x_k, z_k)
\end{multline}

\noindent which is now in the correct form to apply \Cref{lem:bswap} to \eqref{eq:induction3}. In order to use \Cref{lem:bswap}, we need a bound for the degree of:

\begin{equation}\label{eq:bswapset}
\left\{(a_k, b_{k+1})|\exists (x_k, b_k)\in [-b_{k+1}, b_{k+1}]^{n+1},z_k\in \mathbb{R}^n:a_k|z_k-y_k|^2\le 1\wedge \Psi_{k-1}(a_k, b_k, x_k, z_k)\right\}
\end{equation}

When $k>1$, the polynomials defining \eqref{eq:bswapset} have degree at most $2q_{k-1}+1$, the highest degree ones coming from the bounds on the $\exists (x_{k-1}, b_{k-1})$ quantifier, e.g. $b_{k-1}\le b_k\left(1+|(y_{k-1}, a_{k-1})|^2\right)^{q_{k-1}}$. When $k=1$, these polynomials have degree at most $4$, the highest-degree on being $f(x_1, y_1)=0$.

\eqref{eq:bswapset} has $k-1$ blocks of universal quantifiers with $n+1$ variables ($y_\ell\in \mathbb{R}^n$ and $a_\ell\in \mathbb{R}$) each, and $k$ blocks of existential quantifiers with $2n+1$ variables ($x_\ell\in \mathbb{R}^n$, $b_\ell\in \mathbb{R}$, and $z_\ell\in \mathbb{R}^{n}$) each. So by \Cref{lem:projections}, the degree of \eqref{eq:bswapset} is at most $q_{k-1}^{(6\alpha n)^{2k-1}}$ (when $k>1$) or $4^{6\alpha}$ (when $k=1$). In either case, this is at most $p_k$. So by \Cref{lem:bswap}, \eqref{eq:induction3} is implied by:

\begin{multline}\label{eq:induction4}
\lim_{a\rightarrow \infty}\forall y_k\in [-a, a]^n:\lim_{a_k\rightarrow \infty}\lim_{b_{k+1}\rightarrow \infty}\exists (x_k, b_k)\in [-b_{k+1}, b_{k+1}]^{n+1},z_k\in \mathbb{R}^n:\\ a_k\left(1+|(x_k, b_k)|^2\right)^{p_k}|z_k-y_k|^2\le 1\wedge \Psi_{k-1}\left(a_k\left(1+|(x_k, b_k)|^2\right)^{p_k}, b_k, x_k, z_k\right)
\end{multline}

Note that \eqref{eq:induction1} is equivalent to:

\begin{multline}\label{eq:induction5}
\lim_{a\rightarrow \infty}\forall y_k\in [-a, a]^n:\lim_{b_{k+1}\rightarrow \infty}\exists (x_k, b_k)\in [-b_{k+1}, b_{k+1}]^{n+1}:\lim_{a_k\rightarrow \infty} \exists z_k\in \mathbb{R}^n:\\ a_k\left(1+|(x_k, b_k)|^2\right)^{p_k}|z_k-y_k|^2\le 1\wedge \Psi_{k-1}\left(a_k\left(1+|(x_k, b_k)|^2\right)^{p_k}, b_k, x_k, z_k\right)
\end{multline}

\noindent since $\left(1+|(x_k, b_k)|^2\right)^{p_k}$ is just a positive constant at this stage. By \Cref{lem:monotonelimits}, the positive limit $b_{k+1}\rightarrow \infty$ can be viewed as an existential quantifier and the negative limit $a_{k}\rightarrow \infty$ can be viewed as a universal quantifier. This makes it clear that \eqref{eq:induction5} implies \eqref{eq:induction4}, since allowing $b_{k+1}$ and $x_k$ to depend on $a_k$ can only make the formula easier to satisfy. So \eqref{eq:induction4} is equivalent to \eqref{eq:inductionbase}.

Starting from \eqref{eq:induction4}, we use \Cref{lem:easyswap} to exchange the $a_k\rightarrow \infty$ limit and the $\forall y_k$ quantifier. We then use \Cref{lem:monotonemerge} to regroup the negative limits at the front, obtaining:

\begin{multline}\label{eq:induction6}
\lim_{a_{k}\rightarrow \infty}\lim_{\varepsilon\rightarrow 0^+}\forall y_k\in [-a_k, a_k]^n:\lim_{b_{k+1}\rightarrow \infty}\exists (x_k, b_k)\in [-b_{k+1}, b_{k+1}]^{n+1},z_k\in \mathbb{R}^n:\\|z_k-y_k|^2\le \varepsilon\left(1+|(x_k, b_k)|^2\right)^{-p_k}\wedge \Psi_{k-1}\left(a_{k}\left(1+|(x_k, b_k)|^2\right)^{p_k}, b_k, x_k, z_k\right)
\end{multline}

We now want show $\varepsilon$-openness so that we can use \Cref{lem:openswap} to exchange the order of the $b_{k+1}\rightarrow \infty$ limit and the $\forall y_k$ quantifier. For any fixed $a_{k}, x_k, b_k$ and $\delta$, the set:

\begin{equation}\label{eq:epsilonopenset}
\bigcup_{\varepsilon<\delta}\left\{y_k|\exists z_k\in \mathbb{R}^n:|z_k-y_k|^2\le \varepsilon\left(1+|(x_k, b_k)|^2\right)^{-p_k}\wedge \Psi_{k-1}\left(a_{k}\left(1+|(x_k, b_k)|^2\right)^{p_k}, b_k, x_k, z_k\right)\right\}
\end{equation}

\noindent is open. This is clear since, if $y_k$ is in the set, then there is some $\varepsilon<\delta$ and $z_k$ such that $|y_k-z_k|^2\le \varepsilon$ and $\Psi_{k-1}(\dots, z_k)$. So for $y_k'$ sufficiently close to $y$, the same value of $z_k$ satisfies $|y_k'-z_k|^2\le \frac12(\varepsilon +\delta)<\delta$. Now the set:

\[\bigcup_{\varepsilon<\delta}\left\{y_k|\exists (x_k, b_k)\in [-b_{k+1}, b_{k+1}]^{n+1},z_k\in \mathbb{R}^n:|z_k-y_k|^2\le \varepsilon\left(1+|(x_k, b_k)|^2\right)^{-p_k}\wedge \Psi_{k-1}\left(\dots\right)\right\}\]

\noindent is the union of values of \eqref{eq:epsilonopenset} for $(x_k, b_k)\in [-b_{k+1}, b_{k+1}]^n$, so is open for any fixed $a_{k}$ and $b_{k+1}$. This is the necessary condition to use \Cref{lem:openswap}, proving that \eqref{eq:induction6} (and so \eqref{eq:inductionbase}) is equivalent to:

\begin{multline}\label{eq:induction7}
\lim_{a_{k}\rightarrow \infty}\lim_{\varepsilon\rightarrow 0^+}\lim_{b_{k+1}\rightarrow \infty}\forall y_k\in [-a_k, a_k]^{n}:\exists (x_k, b_k)\in [-b_{k+1}, b_{k+1}]^{n+1},z_k\in \mathbb{R}^n:\\|z_k-y_k|^2\le \varepsilon\left(1+|(x_k, b_k)|^2\right)^{-p_k}\wedge \Psi_{k-1}\left(a_{k}\left(1+|(x_k, b_k)|^2\right)^{p_k}, b_k, x_k, z_k\right)
\end{multline}

When $\epsilon\ge 1$, the constraint $|z_k-y_k|^2\le \varepsilon\left(1+|(x_k, b_k)|^2\right)^{-p_k}$ is stronger than $|z_k-y_k|^2\le 1$. Since we are taking the limit as $\epsilon\rightarrow \infty$, we can add a constraint $|z_k-y_k|^2\le 1$ without changing the truth value of the formula. Then, using \Cref{lem:monotonemerge}, we re-merge the negative $\varepsilon\rightarrow 0^+$ and $a_{k}\rightarrow \infty$ limits into a single $a_{k}\rightarrow \infty$ limit. The final step is now to swap the $b_{k+1}\rightarrow \infty$ and $a_{k+1}\rightarrow \infty$ limits using \Cref{lem:aswap}. Similarly to how we obtained \eqref{eq:induction3}, we need to ``leave behind'' a quantifier over $a_k$, so we replace \eqref{eq:induction7} with the equivalent formula:

\begin{multline}\label{eq:induction8}
\lim_{a_{k+1}\rightarrow \infty}\lim_{b_{k+1}\rightarrow \infty}\forall (y_k, a_k)\in [-a_{k+1}, a_{k+1}]^{n+1}:\exists (x_k, b_k)\in [-b_{k+1}, b_{k+1}]^{n+1},z_k\in \mathbb{R}^n:\\a_k|z_k-y_k|^2\le \left(1+|(x_k, b_k)|^2\right)^{-p_k}\wedge |z_k-y_k|^2\le 1 \wedge \Psi_{k-1}\left(a_{k}\left(1+|(x_k, b_k)|^2\right)^{p_k}, b_k, x_k, z_k\right)
\end{multline}

In order to use \Cref{lem:aswap}, we need a bound on the degree of:

\begin{multline}\label{eq:aswapset}
\Big\{(a_{k+1}, b_{k+1})|\forall (y_k, a_k)\in [-a_{k+1}, a_{k+1}]^{n+1}:\exists (x_k, b_k)\in [-b_{k+1}, b_{k+1}]^{n+1},z_k\in \mathbb{R}^n:\\a_k|z_k-y_k|^2\le \left(1+|(x_k, b_k)|^2\right)^{-p_k}\wedge |z_k-y_k|^2\le 1 \wedge \Psi_{k-1}\left(a_{k}\left(1+|(x_k, b_k)|^2\right)^{p_k}, b_k, x_k, z_k\right)\Big\}
\end{multline}

The polynomials defining \eqref{eq:aswapset} have degree at most $2p_k+3$ (the highest degree one being $a_k\left(1+|(x_k, b_k)|^2\right)^{p_k}|z_k-y_k|^2\le 1$), there are $k$ universal quantifier blocks with $n+1$ variables each and $k$ existential quantifier blocks with $2n+1$ variables each, so by \Cref{lem:projections}, \eqref{eq:aswapset} has degree at most $p_k^{(6\alpha n)^{2k}}\le q_k$. Using \Cref{lem:aswap} we finally see that \eqref{eq:induction8} (and so \eqref{eq:inductionbase}) is equivalent to:

\begin{multline}\label{eq:induction9}
\lim_{b_{k+1}\rightarrow \infty}\lim_{a_{k+1}\rightarrow \infty}\forall (y_k, a_k)\in [-a_{k+1}, a_{k+1}]^n:\exists (x_k, b_k)\in [-b_{k+1}, b_{k+1}]^n\left(1+|(y_k, a_k)|^2\right)^{q_k},z_k\in \mathbb{R}^n:\\a_k|z_k-y_k|^2\le \left(1+|(x_k, b_k)|^2\right)^{-p_k}\wedge |z_k-y_k|^2\le 1\wedge \Psi_{k-1}\left(a_{k}\left(1+|(x_k, b_k)|^2\right)^{p_k}, b_k, x_k, z_k\right)
\end{multline}

Expanding out $\Psi_{k-1}$, \eqref{eq:induction9} is the formula we wanted, proving the claim by induction.
\end{proof}

\paragraph{Adding one Existential Quantifier.}
We have now established \Cref{thm:compactformula}. 
However, we might also need the theorem for formulas that start with an existential instead of a universal quantifier. 
We establish this in the next corollary.
\begin{corollary}\label{cor:compactformulaexists}
Let $\Phi$ be a formula of the form $\exists x_{k}\forall y_{k-1}\dots\forall{y_1}\exists x_1: f(x_1, y_1, \dots, y_k, x_{k+1})=0$, where $f$ is a polynomial of degree $4$, possibly depending on some free variables, and each $x_i$ or $y_i$ is in $\mathbb{R}^n$, $n\ge 1$. Then $\Phi$ is true if and only if the compactification $\Psi$ is true. 
\end{corollary}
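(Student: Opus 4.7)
The strategy is to reduce \Cref{cor:compactformulaexists} to \Cref{thm:compactformula} by padding $\Phi$ with a vacuous outer universal block. Introduce a fresh variable vector $y_{k+1}\in\mathbb{R}^n$ that does not appear in $f$ and define
\[\Phi'\equiv\forall y_{k+1}\in\mathbb{R}^n\,\exists x_{k+1}\,\forall y_k\cdots\forall y_1\,\exists x_1:f=0.\]
Since $f$ does not depend on $y_{k+1}$, we have $\Phi\Leftrightarrow\Phi'$. The formula $\Phi'$ now has exactly the shape required by \Cref{thm:compactformula}, so it is equivalent to its compactification, which I will call $\Psi'$.

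It then suffices to verify that $\Psi'$ is equivalent to the formula $\Psi$ prescribed by the Compactification definition in the $\exists$-led case. The two formulas agree on every inner layer; they differ only at the outermost level, where $\Psi'$ carries an extra $\forall(y_{k+1},a_{k+1})\in[-a_{k+2},a_{k+2}]^{n+1}$ block, an extra smoothing variable $z_{k+1}\in\mathbb{R}^n$, and the two additional constraints
\[a_{k+1}|z_{k+1}-y_{k+1}|^2\le\bigl(1+|(x_{k+1},b_{k+1})|^2\bigr)^{-p_{k+1}}\wedge|z_{k+1}-y_{k+1}|^2\le 1.\]
The key observation is that $y_{k+1}$ and $z_{k+1}$ occur nowhere else in $\Psi'$: in particular, they do not appear in the inner subformula $\Psi_k$, because the original $f$ did not depend on $y_{k+1}$. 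Therefore setting $z_{k+1}:=y_{k+1}$ trivially satisfies both smoothing constraints, after which the $\forall y_{k+1}$ quantifier becomes vacuous and can be deleted.

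The only remaining discrepancy is the bounded universal $\forall a_{k+1}\in[-a_{k+2},a_{k+2}]$ together with the enlargement factor $\bigl(1+|(y_{k+1},a_{k+1})|^2\bigr)^{q_{k+1}}$ in the bound for the $\exists(x_{k+1},b_{k+1})$ block. After eliminating $y_{k+1}$, this factor reduces to $\bigl(1+a_{k+1}^2\bigr)^{q_{k+1}}$, which is bounded by a constant depending only on $a_{k+2}$. The outer positive limit $\lim_{b_{k+2}\to\infty}$ absorbs this bounded enlargement by the $b+$-monotonicity of the formula, using \Cref{lem:monotonemerge} and \Cref{lem:easyswap} exactly as in the proof of \Cref{thm:compactformula}. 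After absorption and a harmless re-indexing of the outermost limit variables, what remains is precisely the $\exists$-led compactification $\Psi$.

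I expect no new conceptual difficulty here — the argument is a direct reuse of \Cref{thm:compactformula} together with the monotonicity machinery already developed. The main nuisance will be bookkeeping: one must check that inserting and then collapsing a dummy $\forall y_{k+1}\exists z_{k+1}$ layer does not disturb the degree bounds $p_\ell,q_\ell$ needed by the inner applications of \Cref{lem:bswap} and \Cref{lem:aswap}, which follows because the exponents are monotone in the number of quantifier alternations and the dummy layer contributes no new occurrences of its variables in any inner polynomial.
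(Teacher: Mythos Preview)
Your padding idea is natural and the first two steps are fine: $\Phi\Leftrightarrow\Phi'$ is immediate, and \Cref{thm:compactformula} gives $\Phi'\Leftrightarrow\Psi'$. The gap is in your step 3, specifically the direction $\Psi'\Rightarrow\Psi$.

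After you set $z_{k+1}:=y_{k+1}$ and replace the $\forall y_{k+1}$ by its hardest case $y_{k+1}=0$ (both correct), you are left with
\[
\lim_{b_{k+2}\to\infty}\lim_{a_{k+2}\to\infty}\forall |a_{k+1}|\le a_{k+2}:\;\Gamma\!\left(a_{k+1},\,b_{k+2}\bigl(1+a_{k+1}^{2}\bigr)^{q_{k+1}}\right),
\]
where $\Gamma(a,B)$ abbreviates the inner $\exists(x_{k+1},b_{k+1})\in[-B,B]^{n+1}:\cdots$ and is $a-$-monotone and $B+$-monotone. You claim that the enlargement factor $(1+a_{k+1}^{2})^{q_{k+1}}$, being bounded in terms of $a_{k+2}$, is absorbed by the outer limit $b_{k+2}\to\infty$. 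But the limit order is $\lim_{b_{k+2}}\lim_{a_{k+2}}$: the value of $b_{k+2}$ is fixed \emph{before} $a_{k+2}$ (and hence $a_{k+1}$) ranges to infinity, so the factor is unbounded from $b_{k+2}$'s point of view and cannot be absorbed. Concretely, the toy $\Gamma(a,B)\equiv(|a|\le B)$ shows that $\exists b\,\forall a:\Gamma(a,b(1+a^{2}))$ can be true while $\exists b\,\forall a:\Gamma(a,b)$ is false; nothing in your appeal to \Cref{lem:monotonemerge} or \Cref{lem:easyswap} rules this out. (The implication you \emph{do} get easily is the harmless one, $\Psi\Rightarrow\Psi'$.) Removing such an enlargement factor is exactly the nontrivial content of \Cref{lem:bswap}/\Cref{lem:aswap}, and invoking those here amounts to redoing the swap argument rather than bypassing it.

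The paper avoids this comparison entirely. It applies \Cref{thm:compactformula} only to the inner $\forall y_{k-1}\cdots\exists x_{1}$ block, obtaining $\exists x_{k}:\lim_{b_{k}}\lim_{a_{k}}\Psi_{k-1}$, and then commutes the single remaining $\exists x_{k}$ past the two limits directly: past the positive limit by \Cref{lem:easyswap}, and past the negative limit by \Cref{lem:closedswap} (closedness coming from \Cref{lem:boundedopenprojections}). This produces the $\exists$-led compactification without ever introducing the extra $(1+|(y_{k+1},a_{k+1})|^{2})^{q_{k+1}}$ factor, so no absorption argument is needed.
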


\begin{proof}
By \Cref{thm:compactformula}, the formula:

\[\forall y_{k-1}\dots\forall{y_1}\exists x_1: f(x_1, y_1, \dots, y_k, x_{k+1})=0\]

\noindent is equivalent to its compactification $\Psi_{k-1}$.

So $\Phi$ is equivalent to:

\begin{equation}\label{eq:existsbase}
\exists x_k:\lim_{b_{k}\rightarrow \infty}\lim_{a_k\rightarrow \infty} \Psi_{k-1}(a_k, b_k, x_k, y_k)
\end{equation}

The limit $b_k\rightarrow \infty$ is positive and the limit $a_k\rightarrow \infty$ is negative. We add limits to \eqref{eq:existsbase} obtain:

\begin{equation}\label{eq:existslimits}
\lim_{b\rightarrow \infty}\exists x_k\in [-b, b]^n:\lim_{b_{k}\rightarrow \infty}\lim_{a_k\rightarrow \infty} \Psi_{k-1}(a_k, b_k, x_k, y_k)
\end{equation}

Using \Cref{lem:easyswap}, we can exchange the order of the $\exists x_k$ quantifier and the positive $b_k\rightarrow \infty$ limit, obtaining:

\begin{equation}\label{eq:exists1}
\lim_{b\rightarrow \infty}\lim_{b_{k}\rightarrow \infty}\exists x_k\in [-b, b]^n:\lim_{a_k\rightarrow \infty} \Psi_{k-1}(a_k, b_k, x_k, y_k)
\end{equation}

We use \Cref{lem:monotonelimits} to replace the positive limit $b_k\rightarrow \infty$ with $\exists b_k$, or equivalently $\lim_{b_{k+1}\rightarrow \infty}\exists b_k\in [-b_{k+1}, b_{k+1}]$. Next, we can use \Cref{lem:boundedopenprojections} to show that $\{x_k : \Psi_k(a_k, b_k, x_k)\}$ is closed for fixed $a_k$ and $b_k$, so we can use \Cref{lem:closedswap} to exchange the order of the $\exists x_k$ quantifier and the negative $a_k\rightarrow \infty$ limit. We obtain a formula:

\begin{equation}\label{eq:exists2}
\lim_{b\rightarrow \infty}\lim_{b_{k+1}\rightarrow \infty}\lim_{a_k\rightarrow \infty} \exists x_k\in [-b, b]^{n}, b_k\in [-b_{k+1}, b_{k+1}]:\Psi_{k-1}(a_k, b_k, x_k)
\end{equation}

Finally, we use \Cref{lem:monotonemerge} to merge the $\lim_{b\rightarrow \infty}$ and $\lim_{b_{k+1}\rightarrow \infty}$ limits, obtaining a formula:

\begin{equation}\label{eq:exists3}
\lim_{b_{k+1}\rightarrow \infty}\lim_{a_k\rightarrow \infty} \exists (x_k, b_k)\in [-b_{k+1}, b_{k+1}]^{n+1}:\Psi_{k-1}(a_k, b_k, x_k)
\end{equation}

\noindent equivalent to $\Phi$. \eqref{eq:exists3} is the compactifiction of $\Phi$, as required. 
\end{proof}

\paragraph{Fully-Closed Compactification.}
Note that $\forall y\in S:\Phi(y)$ is equivalent to $\forall y:(y\in S)\implies \Phi(y)$, which is equivalent to $\forall y:(y\notin S)\vee \Phi(y)$. If we want this to be a closed condition, then we actually want $S$ to be open, not closed. 

\begin{definition}[Fully-closed compactification]
Let $\Phi$ be a formula of the form 
$\forall y_k\exists x_k\dots\forall{y_1}\exists x_1: f(x_1, y_1, \dots, x_k, y_k)=0$
or 
$\exists x_k\forall y_{k-1}\dots\forall{y_1}\exists x_1: f(x_1, y_1, \dots, y_{k-1}, x_k)=0$, 
where $f$ is a polynomial of degree $4$ and each $x_i$ or $y_i$ is in $\mathbb{R}^n$, $n\ge 1$. 
The we define the fully-closed compactification of $\Phi$ to be the formula $\Psi$ obtained from $\Psi$ in the same way as the compactification of $\Psi$, except that the outermost universal quantifier $\forall y_k\in \mathbb{R}^n$ is replaced with:

\[\forall (y_k, a_k)\in (-a_{k+1}, a_{k+1})^{n+1}\]

\noindent and each remaining universal quantifier $\forall y_\ell\in \mathbb{R}^n$ is replaced with:

\[\forall (y_\ell, a_\ell)\in (-a_{\ell+1}, a_{\ell+1})^{n+1}\left(1+|(x_{\ell+1}, b_{\ell+1})|^2\right)^{p_{\ell+1}}\]
\end{definition}

\begin{lemma}\label{lem:universalbounds}
Let $S$ be a closed semialgebraic set. Let $I$ be a set and let $\bar{I}$ be its closure. Then the formulas $\forall y\in I:y\in S$ and $\forall y\in \bar{I}:y\in S$ are equivalent. 
\end{lemma}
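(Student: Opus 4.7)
The statement is essentially a purely topological fact: the semialgebraic hypothesis on $S$ is not actually used in the proof; it appears because the lemma will be applied to semialgebraic sets inside the paper's framework. I would prove the two implications of the equivalence separately.

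The ``$\bar{I}$ implies $I$'' direction is immediate since $I\subseteq \bar{I}$: if every element of $\bar{I}$ lies in $S$, then a fortiori every element of $I$ lies in $S$. For the nontrivial direction, assume $\forall y\in I:y\in S$, which is just the set inclusion $I\subseteq S$. Applying the closure operator to both sides and using the fact that closure is monotone under inclusion, we obtain $\bar{I}\subseteq \bar{S}$. Because $S$ is closed by hypothesis, $\bar{S}=S$, so $\bar{I}\subseteq S$, which is precisely $\forall y\in \bar{I}:y\in S$.

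There is no real obstacle to overcome here: the whole argument is a one-liner once one notices that monotonicity of closure plus $\bar{S}=S$ does the job. The only thing worth flagging is \emph{why} the lemma is being proved at this point in the paper: it is exactly what is needed to justify replacing closed intervals $[-a_\ell,a_\ell]$ by open intervals $(-a_\ell,a_\ell)$ inside universal quantifiers when passing to the fully-closed compactification, which in turn ensures that the inner condition $y\notin (-a_\ell,a_\ell)^{n+1}$ that arises from pushing $\forall y\in I:\Phi(y)$ into the form $\forall y:(y\notin I)\vee\Phi(y)$ is a closed constraint, as required for an \fotrinv-style instance.
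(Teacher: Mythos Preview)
Your proof is correct and matches the paper's approach. The only cosmetic difference is that the paper unpacks the nontrivial direction via sequences (pick $y\in\bar{I}$, take $y_i\in I$ with $y_i\to y$, each $y_i\in S$, hence $y\in S$ by closedness of $S$), whereas you invoke monotonicity of closure directly; both are the same one-line topological argument.
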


\begin{proof}
$I\subseteq \bar{I}$, so if $\forall y\in \bar{I}:y\in S$ is true, then $\forall y\in I:y\in S$ is also true.

Suppose that $\forall y\in I:y\in S$ is true. Let $y\in \bar{I}$, so there is a sequence of points $y_i$ in $I$ converging to $y$. By assumption, each $y_i$ is in $S$. Since $S$ is closed, $y\in S$. In particular, $\forall y\in \bar{I}:y\in S$. 
\end{proof}

\begin{corollary}[Corollary of \Cref{thm:compactformula}]\label{cor:compactformula2}
Given a formula $\Phi$, its fully-closed compactification $\Psi$ is true if and only if $\Phi$ is true.
\end{corollary}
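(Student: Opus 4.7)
The plan is to reduce the claim to \Cref{thm:compactformula} (or \Cref{cor:compactformulaexists}) together with \Cref{lem:universalbounds}. We have already established that $\Phi$ is equivalent to its closed-interval compactification $\Psi$, so it suffices to show that replacing each closed interval $[-a_\ell, a_\ell]\cdot\left(1+|(x_{\ell+1}, b_{\ell+1})|^2\right)^{p_{\ell+1}}$ bounding a universal quantifier with the corresponding open interval yields an equivalent formula.

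To do this, we work from the innermost universal block outwards. Fix a universal block $\forall(y_\ell,a_\ell)\in [-a_{\ell+1},a_{\ell+1}]^{n+1}\left(1+|(x_{\ell+1},b_{\ell+1})|^2\right)^{p_{\ell+1}}$ in $\Psi$. Let $\Theta(y_\ell,a_\ell)$ denote the formula obtained by substituting these values into the rest of $\Psi$ after this quantifier. The body of $\Theta$ is an alternation of existential and universal quantifiers over compact boxes, built on top of a conjunction of \emph{closed} constraints (equalities and non-strict inequalities). By repeatedly applying \Cref{lem:boundedopenprojections}, every existential projection and universal coprojection over these compact boxes preserves closedness, so the set $\{(y_\ell,a_\ell) : \Theta(y_\ell,a_\ell)\}$ is a closed subset of $\mathbb{R}^{n+1}$ (for any fixed values of the still-free variables).

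Since the closure of the open box $(-a_{\ell+1},a_{\ell+1})^{n+1}\left(1+|(x_{\ell+1},b_{\ell+1})|^2\right)^{p_{\ell+1}}$ is exactly the corresponding closed box, \Cref{lem:universalbounds} gives
\[\forall (y_\ell,a_\ell)\in (-a_{\ell+1},a_{\ell+1})^{n+1}\left(1+|(x_{\ell+1},b_{\ell+1})|^2\right)^{p_{\ell+1}} : \Theta \iff \forall (y_\ell,a_\ell)\in [-a_{\ell+1},a_{\ell+1}]^{n+1}\left(1+|(x_{\ell+1},b_{\ell+1})|^2\right)^{p_{\ell+1}} : \Theta.\]
The same argument applies to the outermost universal block $\forall (y_k,a_k)\in (-a_{k+1},a_{k+1})^{n+1}$.

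Performing these substitutions for every universal quantifier (innermost first, so that the closedness hypothesis on $\Theta$ remains valid at every step), we conclude that the fully-closed compactification and the original compactification are equivalent. Combined with \Cref{thm:compactformula} and \Cref{cor:compactformulaexists}, this proves the corollary. The only potential subtlety is that the closedness of $\Theta$ is verified using the closed-interval form, so we should carry out the replacement from the inside out; once a universal bound has been opened, the resulting set is no longer necessarily closed, but we only ever invoke \Cref{lem:universalbounds} for quantifiers that have not yet been opened, so this causes no problem.
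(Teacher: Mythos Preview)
Your proof is correct and uses essentially the same ingredients as the paper: apply \Cref{thm:compactformula} (or \Cref{cor:compactformulaexists}), then use \Cref{lem:boundedopenprojections} to verify closedness and \Cref{lem:universalbounds} to swap each closed universal box for the corresponding open one.

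The one difference worth noting is the direction of the induction. The paper replaces the universal bounds \emph{starting from the outermost and working inwards}. With that order, at the moment you open the $\ell$-th universal quantifier, its body still contains only closed-box quantifiers (the already-opened ones sit \emph{outside} as fixed free variables), so \Cref{lem:boundedopenprojections} applies directly with no caveat. You instead go innermost-first; then the body of an outer universal already contains opened inner universals, and \Cref{lem:boundedopenprojections} does not literally apply to that body. You correctly patch this in your last paragraph by observing that closedness can be verified ``using the closed-interval form''---i.e., the pointwise equivalences already established for the inner quantifiers mean the set is the same as for the all-closed body. That is fine, but note that your stated reason for preferring inside-out (``so that the closedness hypothesis on $\Theta$ remains valid at every step'') is actually backwards: it is the outside-in order that makes the closedness hypothesis hold immediately without appealing to the previously established equivalences.
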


\begin{proof}
We start by using \Cref{thm:compactformula} to obtain an equivalent compactified formula $\Psi$. We then use \Cref{lem:universalbounds} to replace each restricted quantifier $\forall y\in [-a, a]$ with $\forall y\in (-a, a)$, starting from the outermost ones and working inwards. We can use \Cref{lem:boundedopenprojections} to show that the required sets are closed.
\end{proof}

\subsection{Applications}

Now, we use \Cref{thm:compactformula} to prove \Cref{thm:FOTRINV} and \Cref{thm:hierarchyinv}. 

\paragraph{Replacing Multiplication Constraints by Inversion.} 
In order to replace multiplication constraints by inversion, we use the idea from \cite{AAM22}. In their proof that \etrinv is \ER-complete, the authors of \cite{AAM22} implicitly prove the following:

\begin{lemma}(Abrahamsen, Adamaszek, and Miltzow \cite{AAM22})\label{lem:invconversion}
Let $\Phi(x_1, \dots, x_n)$ be a quantifier-free formula in the first-order theory of the reals with constraints of form $x=\frac18$, $x+y=z$, and $xy=z$. Then we can construct in a polynomial time a quantifier-free formula $\Psi(x_1', \dots, x_n', y_1, \dots, y_k)$ such that:

\begin{itemize}
    \item If $x_1, \dots, x_n\in [-\frac18, \frac18]$, $\Phi(x_1, \dots, x_n)$ is true and $x_i'=x_i+\frac78$, then there exist $y_1, \dots, y_k\in [\frac12, 2]$ making $\Psi(x_1', \dots, x_n', y_1, \dots, y_k)$ true
    \item If $\Psi(x_1', \dots, x_n', y_1, \dots, y_k)$ is true for some values of $x_1', \dots, x_n', y_1, \dots, y_k\in [\frac12, 2]$ and $x_i'=x_i+\frac78$, then $\Phi(x_1, \dots, x_n)$ is true.
\end{itemize}

The formula $\Psi$ is a conjunction of constraints of form $x=1$, $x+y=z$, and $xy=1$. 
\end{lemma}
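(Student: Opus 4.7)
The plan is to build $\Psi$ by applying a local gadget to each constraint of $\Phi$. First, every original variable $x_i$ is shifted to $x_i' = x_i + \frac{7}{8}$, which maps its range $[-\frac{1}{8}, \frac{1}{8}]$ to $[\frac{3}{4}, 1] \subset [\frac{1}{2}, 2]$. The constant constraint $x_i = \frac{1}{8}$ then becomes the target constraint $x_i' = 1$, which is already in the new language. Auxiliary variables $y_j$ introduced by the gadgets will all be certified to lie in $[\frac{1}{2}, 2]$.

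For an addition constraint $x + y = z$ of $\Phi$, the shift turns it into $x' + y' = z' + \frac{7}{8}$. I would first build the constant $c = \frac{7}{8}$ as a fixed combination of constraints $x = 1$, $x+y=z$, and $xy = 1$ (starting from the gadget $a + a = 1$ to get $a = \frac{1}{2}$, and then chaining allowed operations whose outputs remain in $[\frac{1}{2}, 2]$), and then encode the shifted equation as two target additions $z' + c = w$ and $x' + y' = w$ with the intermediate $w$ in $[\frac{13}{8}, \frac{15}{8}] \subset [\frac{1}{2}, 2]$.

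The main technical step is simulating the multiplication constraint $xy = z$, since the target language only offers inversion. Following \cite{AAM22}, I would reduce multiplication to squaring via a polarization identity and then realize squaring using a chain of additions and inversions applied to quantities close to $1$. Concretely, an identity of the form $\frac{1}{1-t} + \frac{1}{1+t} = \frac{2}{1-t^2}$, composed with additional scaling gadgets so that every intermediate stays in $[\frac{1}{2}, 2]$, lets us extract $t^2$ using only the three target constraint types. Composing squaring with polarization, and expanding the shifted equation $(x' - \frac{7}{8})(y' - \frac{7}{8}) = z' - \frac{7}{8}$ using the previously built constants $\frac{7}{8}$ and $\frac{49}{64}$, yields the desired target encoding, and correctness in both directions is direct from the algebraic identities.

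The main obstacle is not the algebra but the tight range bookkeeping: each auxiliary variable introduced by a gadget must be verified to live in $[\frac{1}{2}, 2]$, otherwise the encoded formula becomes strictly stronger than $\Phi$. The choice of the narrow input range $[-\frac{1}{8}, \frac{1}{8}]$ together with the shift $\frac{7}{8}$ is calibrated precisely for this: shifted variables cluster near $1$, so the intermediate quantities $1 \pm t$ (for small $t$), their inverses, and their sums all remain inside $[\frac{1}{2}, 2]$, as do the fixed constants that appear in the shift. Once this is verified gadget by gadget, both implications in the lemma follow, and the polynomial-time computability of the reduction is immediate since each constraint of $\Phi$ is replaced by a constant-size block of target constraints.
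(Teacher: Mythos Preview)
The paper does not provide its own proof of this lemma: it is stated as a result extracted from \cite{AAM22} (``the authors of \cite{AAM22} implicitly prove the following''), with no argument given in the present paper. So there is no paper-proof to compare against.

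That said, your sketch is the correct approach and matches the construction in \cite{AAM22}: shift variables by $\tfrac{7}{8}$ so they cluster near $1$, rewrite $x=\tfrac{1}{8}$ as $x'=1$, rewrite additions using a built constant and an intermediate sum, and simulate multiplication via polarization and the squaring identity $\tfrac{1}{1-t}+\tfrac{1}{1+t}=\tfrac{2}{1-t^2}$, with the narrow input range guaranteeing all intermediates stay in $[\tfrac{1}{2},2]$. The one place where your sketch is thin is the construction of the needed constants (e.g.\ $\tfrac{7}{8}$) entirely inside $[\tfrac{1}{2},2]$ using only $x=1$, $x+y=z$, $xy=1$; this requires a short but non-obvious chain (since values like $\tfrac{1}{4}$ or $\tfrac{1}{8}$ are out of range and cannot be used as intermediates), and \cite{AAM22} handles it explicitly. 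Filling in that chain is the only missing detail.
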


\paragraph{Hardness of \Arealhierarchy{2k} and \Erealhierarchy{2k+1} for the Real Polynomial Hierarchy}
\HIERARCHYINVTHM* 

\begin{proof}
Let $\Phi$ be a first-order formula of form:

\[\forall y_k\exists x_k\dots\forall y_1\exists x_1:f(x_1, y_1, \dots, x_k, y_k)\]

\noindent where $f$ is a polynomial of degree at most $4$, and each $x_i$ or $y_i$ is in $\mathbb{R}^n$. By \Cref{lem:closedconstraints}, the decision problem for such formula is complete for \Arealhierarchy{2k}. Using \Cref{cor:compactformula2}, we construct a formula:

\begin{equation}\label{eq:convertedformula}
\lim_{b_{k+1}\rightarrow\infty}\lim_{a_{k+1}\rightarrow \infty}\forall (y_k, a_k)\in \mathbb{R}^{n+1}\dots\exists(x_1, b_1, z_1)\in \mathbb{R}^{n+1+n}: \Psi(y_k, a_k, x_k, b_k, z_k, \dots)
\end{equation}

\noindent equivalent to $\Psi$. Here we have pushed all the restrictions on the quantifiers into the formula $\Psi$. So $\Psi$ is the quantifier-free formula:

\begin{equation}\label{eq:psiform}
P_k\implies \left(Q_k\wedge \left(P_{k-1}\implies \left(Q_{k-1}\wedge \dots P_1\implies \left(Q_1\wedge R\right)\right)\right)\right)
\end{equation}

\noindent where:

\begin{itemize}
    \item $P_k\equiv (y_k, a_k)\in (-a_{k+1}, a_{k+1})^{n+1}$
    \item for $1\le \ell\le k-1$, $P_\ell\equiv \bigwedge_{\ell=1}^{k-1}(y_\ell, a_\ell)\in (-a_{\ell+1}, a_{\ell+1})^{n+1}\left(1+|(x_{\ell+1}, b_{\ell+1})|^2\right)^{p_{\ell+1}}$
    \item for $1\le \ell\le k$, $Q_\ell\equiv (x_\ell, b_\ell)\in [-b_{\ell+1}, b_{\ell+1}]^{n+1}(1+|(y_\ell, a_\ell)|^2)^{q_\ell}\wedge\left(1+|(x_\ell, b_\ell)|^2\right)^{p_\ell}a_\ell|z_\ell-y_\ell|^2\le 1\wedge|z_\ell-y_\ell|^2\le 1$
    \item $R\equiv f(x_1, z_1, \dots, x_k, z_k)=0$
\end{itemize}

\eqref{eq:convertedformula} has $2k$ blocks of quantifiers with $n+1$ variables each. The polynomials defining $\Psi$ have degree at most $2p_k+1$, where $q_k=4^{\left(6\alpha n\right)^{k(2k+1)}}$, where $\alpha$ is the constant from \Cref{lem:projections}. So by \Cref{lem:limitevaluation}, we can realize the limits with:

\[b_{k+1}=\exp_2\left(\tau 4^{\left(6\alpha n\right)^{k(2k+1)}(\beta(n+4))^{k+4}}\right), a_{k+1}=\exp_2\left(\tau 4^{\left(6\alpha n\right)^{k(2k+1)}\left(2(\beta(n+4))^{k+4}+1\right)}\right)\]

\noindent where $\tau$ is an upper bound for the bit-length of coefficients in $f$. Note that $\tau\le |\Phi|$. 

Using these values for $b_{k+1}$ and $a_{k+1}$, we now want to find a single constant upper bound $\Lambda$ such that the truth value of \eqref{eq:convertedformula} does not change if we restrict each quantified variable to $[-\Lambda, \Lambda]$. 

It is useful to think about \eqref{eq:convertedformula} as a game with a devil and a human. The human wants to make $\Psi$ true and the devil wants to make it false. On the first turn, the devil chooses values for $y_k$ and $a_k$, then the human chooses values for $x_k$, $b_k$ and $z_k$. The players take turns until all the variables are set. 

We will inductively find variables $A_i$ and $B_i$ such that, if the devil (resp. human) is setting the variables $y_\ell$ and $a_\ell$ (resp. $x_\ell$, $b_\ell$, and $z_\ell$) and they have a winning move, then they have a winning move where all the variables set have absolute values at most $A_i$ (resp. $B_i$).

On the first turn, the devil should never choose $(y_j, a_k)\notin (-a_{k+1}, a_{k+1})^{n+1}$, since this would immediately make $\Psi$ true. So we can choose $A_k\ge a_{k+1}$. 

On the human's first turn, there are two cases. If the devil has chosen $(y_k, a_k)\notin (-a_{k+1}, a_{k+1})^{n+1}$, then the formula $\Psi$ will always be true, regardless of what the players do on their remaining turns. So the human can choose $x_k, b_k$ and $z_k$ to all be $0$. If the devil choose $(y_k, a_k)\in (-a_{k+1}, a_{k+1})^{n+1}$ in the previous turn, then $P_k$ is satisfied, so the human should choose $(x_k, b_k)\in [-b_{k+1}, b_{k+1}]^{n+1}(1+|(y_k, a_k)|^2)^{q_k}$, otherwise $Q_k$ will be false and the human will lose. Since $(y_k, a_k)\in (-A_k, A_k)$, we can set $B_k\ge b_{k+1}\left(1+(n+1)A_k^2\right)^{q_k}$. 

As soon as a player chooses a variable outside the range for a turn, the outcome of the game becomes fixed, so on remaining moves the players can always choose $0$ for all the variables. On the other hand, if all previously set variables are in the range, then when the human is setting variables $x_\ell, b_\ell$ and $y_\ell$ for $\ell>1$, they should never choose values of variables larger than $B_{\ell+1}\left(1+(n+1)A_\ell^2\right)^{q_\ell}$. Similarly, when the devil is choosing $y_\ell$ and $a_\ell$ for $\ell<k$, they should never choose values larger than $A_{\ell+1}\left(1+(n+1)B_{\ell+1}^2\right)$. So we can choose:

\[A_{k}\ge a_{k+1}, B_k\ge b_{k+1}\left(1+(n+1)A_k^2\right)^{q_k}\]

\[A_{\ell}\ge A_{\ell+1}\left(1+(n+1)B_{\ell+1}^2\right)^{p_{\ell+1}}, B_{\ell}\ge B_{\ell+1}\left(1+(n+1)A_{\ell}^2\right)^{q_\ell}\]

The constant $\beta$ from \Cref{lem:limitevaluation} is at least as large as the constant $\alpha$ from \Cref{lem:projections}, so we can upper bound:

\[a_{k+1},b_{k+1}\le \exp_2\left(4^{\tau \left(6\beta n\right)^{6k^2}}\right)\]

We can also upper bound $p_\ell, q_\ell\le 4^{(6\beta n)^{6k^2}}$, so it is sufficient to set:

\[A_{k-i}=\exp_2\left(4^{(\tau +6i)\left(6\beta n\right)^{6k^2}}\right), B_{k-i}=\exp_2\left(4^{(\tau+3+6i)\left(6\beta n\right)^{6k^2}}\right)\]

\noindent and so we can set:

\[\Lambda = \exp_2\left(4^{(\tau +6k+3)\left(6\beta n\right)^{6k^2}}\right)\]

Let $R=2(\tau +6k+3)\left(6\beta n\right)^{6k^2}$, so $\Lambda=\exp_2\left(2^R\right)$. Note that $R$ is at most $|\Phi|^{\text{poly}(k)}$. 

The next step is to create a formula $\Upsilon$ equivalent to $\Psi$ where $\Upsilon$ has only a polynomial number of constraints of form $xy=z,x+y=z$ and $x=1$. Note that the constraint $x=0$ is equivalent to $x+x=x$. The formula $\Psi$ is quantifier-free, but $\Upsilon$ will be allowed to use existential quantifiers. For example, we could replace a constraint $x^2=1$ with:

\[\exists V: x\cdot x=V\wedge V=1\]

$\Psi$ involves some polynomials of exponential degree, but we can use repeated exponentiation to obtain the high-degree terms. We observe that $x=y^{\mu^{j}}$ is equivalent to:

\[\exists z_0=y, \dots, z_j=x:z_1=z_0^\mu, \dots, z_j=z_{j-1}^{\mu}\]

Since $\left(y^{\mu^i}\right)^\mu=y^{\mu^{i+1}}$, so by induction we have $z_i=y^{\mu^i}$. So we can construct constraints of form $x=y^{\mu^j}$ so long as $\mu$ and $j$ are polynomially bounded. By setting $y$ to a constant, we can also obtain exponentially large constants in this way. 

With addition, multiplication, and constant constraints, using repeated exponentiation when necessary, we can construct new variables equal to the constants $a_{k+1}$, $b_{k+1}$ and each of the non-constant terms appearing in \eqref{eq:psiform}. Write $V_1, \dots, V_m$ for the all the new variables introduced. So far we have constructed a formula:

\begin{equation}\label{eq:vformula}
\exists V_1, \dots, V_m:\Theta(V_1, \dots, V_m, y_k, a_k x_k, b_k, z_k, \dots)
\end{equation}

\noindent where $\Theta$ is a conjunction of constraints of form $xy=1$, $x+y=z$, and $x=1$. The formula \eqref{eq:vformula} is always true; for any assignment of the $y_\ell, a_\ell, x_\ell, b_\ell$ and $z_\ell$ variables there is a unique assignment of the $V_i$ variables making $\Theta$ true. 

We want to add some new new constraints to create a formula equivalent to $\Psi$. The strategy is similar to the proof of \Cref{lem:closedconstraints} (Lemma 3.2 in \cite{SS17}), but we repeat it here because we need to make sure that we can bound the new variables that are introduced. 

Replacing each term of form $P\implies Q$ in \eqref{eq:psiform} with $\neg A\vee B$, we obtain a monotone Boolean formula with literals of form $V_i=0$ or $V_i\ge 0$. We can replace each term $V_i\ge 0$ with an expression of form $\exists W:V_i-W^2=0$, so that we have a monotone Boolean formula involving literals of form only $V_i=0$.

Given real variables $V_i$ and $V_j$, $V_i=0\wedge V_j=0$ is equivalent to $V_i^2+V_j^2=0$ and $V_i=0\vee V_j=0$ is equivalent to $V_iV_j=0$. So we can combine literals in the formula inductively, adding new $V$ variables equal to $V_i^2+V_j^2$ or $V_iV_j$. We eventually obtain a single condition $V_i=0$ equivalent to \eqref{eq:psiform}. 

We have now constructed a formula $\Upsilon$ with free variables $y_\ell, a_\ell, x_\ell, b_\ell$ and $z_\ell$, quantifiers:

\[\exists V_1, \dots, V_{m'}, W_1, \dots, W_r\]

\noindent and constraints of form $x+y=z$, $xy=z$ and $x=1$, such that $\Upsilon$ is equivalent to $\Psi$. The formula $\Upsilon$ has length at most $|\Phi|^{\text{poly}(k)}$. In particular, the formula:

\begin{equation}\label{eq:upsilonformula}
\forall (y_k, a_k)\in \mathbb{R}^{n+1}\dots\exists(x_1, b_1, z_1)\in \mathbb{R}^{n+1+n}: \Upsilon(y_k, a_k, x_k, b_k, z_k, \dots)
\end{equation}
is equivalent to \eqref{eq:convertedformula}. 

The next step is to create a new formula $\Upsilon'$ such that:

\begin{equation}\label{eq:upsilonprimeformula}
\forall (y_k', a_k')\in [-\tfrac18, \tfrac18]^{n+1}\dots\exists(x_1', b_1', z_1')\in [\tfrac18, \tfrac18]^{n+1+n}: \Upsilon'(y_k', a_k', x_k', b_k', z_k', \dots)
\end{equation}
is equivalent to \eqref{eq:convertedformula}. The formula $\Upsilon'$ should use constraints of form $x+y=z, xy=z$ and $x=\frac18$, and can have additional variables with quantifiers of form $\exists x\in [\frac12, 2]$.

By an earlier calculation, we know that the truth value of $\Psi$ doesn't change if all the $y_\ell, a_\ell, x_\ell, b_\ell$ and $z_\ell$ variables are restricted to $[-1, 1]\exp_2\left(2^R\right)$. The $V$ variables are products or sums of previously-defined variables, and each $W$ variable can be chosen to be either $0$ or the square root of a previously-defined $V$ variable. So if each of the $y_\ell, a_\ell, x_\ell, b_\ell$ and $z_\ell$ variables are in $[-1, 1]\exp_2\left(2^R\right)$, then if there is a satisfying assignment of the $V$ and $W$ variables, then there is such an assignment with the $V$ and $W$ variables in $[-1, 1]\exp_2\left(2^{R+|\Upsilon|}\right)$.

Starting with $\frac18$ and repeatedly squaring, we construct constants $\lambda_1=\exp_2\left(-3\cdot 2^{R}\right)$ and $\lambda_2=\exp(-3\cdot 2^{R+|\Upsilon|})$. 

For each $y_\ell, a_\ell, x_\ell, b_\ell$ and $z_\ell$ in \eqref{eq:convertedformula}, we replace it with a new variable $y_\ell', a_\ell', x_\ell', b_\ell'$ and $z_\ell'$ that should be scaled down by a factor of $\lambda_1$, e.g. $y_k'=\lambda y_k$. For each of the additional $V$ and $W$ variables in $\Upsilon$, we replace it with a corresponding $V'$ or $W'$ variable that should be scaled down by a factor of $\lambda_2$, e.g. $V_1'=\lambda_2V_1$. 

For a variable $v$ in \eqref{eq:convertedformula}, let $\lambda_v$ be the scale factor for that variable. That is, $\lambda_v=\lambda_2$ if $v$ is a $V$ or $W$ variable and $\lambda_v=\lambda_1$ otherwise. Each constraint $u+v=w$ in $\Upsilon$ can be replaced by $\lambda_v\lambda_wu'+\lambda_u\lambda_wv'=\lambda_u\lambda_vw'$ each constraint $uv=w$ can be replaced by $\lambda_wu'v'=\lambda_u\lambda_vw'$, and each constraint $v=1$ can be replaced by $v'=\lambda_v$. 

Since $\lambda_1$ and $\lambda_2$ are smaller than $1$, multiplying by one of them won't cause a variable to leave the range $[-\frac18, \frac18]$. So we have obtained a formula $\Upsilon'$ such that \eqref{eq:upsilonprimeformula} is equivalent to \eqref{eq:convertedformula}. 

Using \Cref{lem:invconversion}, we can construct a formula $\Upsilon''$ such that:

\begin{multline}\label{eq:invformula}
\forall (y_k'', a_k'')\in [\tfrac34, 1]^{n+1}:\exists (x_k'', b_k'', z_k'')\in [\tfrac34, 1]^{n+1+n}\dots \\\forall (y_1'', a_1'')\in [\tfrac34, 1]^{n+1}:\exists (x_1'', b_1'', z_1'')\in [\tfrac34, 1]^{n+1+n}:\Upsilon''(\dots)
\end{multline}

\noindent is equivalent to $\Phi$, where $\Upsilon$ has constraints of form $xy=1$, $x+y=z$, and $x=1$. The formula $\Upsilon$ has length $|\Phi|^{\text{poly}(k)}$. Note that $\Upsilon''$ is not quantifier free, but it has only existential quantifiers of form $\exists v\in [\frac34, 1]$ (if $v$ is one of the variables from $\Upsilon'$) or $\exists v\in [\frac12, 2]$ (if $v$ is one of the variables added by \Cref{lem:invconversion}).

In order to obtain a \Arealhierarchy{2k} formula, we replace each existential quantifier $\exists v\in [\frac34, 4]$ with one $\exists v\in [\frac12, 2]$. By the construction of $\Upsilon'$, this doesn't change the truth value of \eqref{eq:invformula} since all those variables are restricted to be in a smaller range some constraints in $\Upsilon''$.

This shows that there is a polynomial-time reduction from \Arealhierarchy{2k} to \Arealhierarchyinv{2k}. The result for \Erealhierarchy{2k+1} follows by an essentially identical argument. 
\end{proof}

\paragraph{\QR-Hardness of \fotrinv}

The reduction used in \Cref{thm:hierarchyinv} is polynomial time for each fixed level of the hierarchy, but exponential in the number of quantifier alternations. The exponential overhead occurs because we need to construct polynomials with doubly-exponential degree and variables with triply-exponential size. In order to get a polynomial-time reduction for \QR, we use extra quantifier alternations in order to create expressions of doubly-exponential degree. The idea comes from the proof of Theorem 11.18 in \cite{BasuPollackRoy2006}.

\FOTRINVTHM* 

\begin{proof}
The proof follows in the same way as the proof of \Cref{thm:hierarchyinv}. The only difference is in how we produce the large constants and high-degree polynomials.

In the proof of \Cref{thm:hierarchyinv}, we used repeated exponentiation to produce constraints of exponential degree. Here, we use a trick from \cite{BasuPollackRoy2006} (see Theorem 11.18) to produce constraints of doubly-exponential degree. We call the idea \emph{quantified repeated exponentiation}.

Given a variable $x$ and integers $d, \alpha$ and $j$, we will show how to construct a formula $\Psi_j(x, y)$ equivalent to $x=y^{d^{\alpha^j}}$. 
This is by induction on $j$. 
When $j=0$, we can directly construct the formula $x-y^d=0$, which is equivalent to $x=y^d=y^{d^{\alpha^0}}$. 
Now suppose that we have constructed a formula $\Psi_j(x, y)$ equivalent to $x=y^{d^{\alpha^j}}$. We can construct $\Psi_{j+1}$ by:

\[\Psi_{j+1}(z_\alpha, z_{0})=\exists z_1,\dots, z_{\alpha-1}:\forall i\in \{1, \dots, \alpha\}:\Psi_j(z_{i-1}, z_i)\]

We can't directly create the constraint $\Psi_j(z_{i-1}, z_i)$, since the indices depend on the variable $i$. Instead, we use the equivalent formula: 

\[\exists a, b: \Psi_j(a, b) \wedge \bigwedge_{k=1}^{\alpha}\left(i=k\implies (a=z_{k-1}\wedge b=z_{k})\right)\]

To see that $\Psi_{j+1}(z_\alpha, z_0)$ is equivalent to $z_\alpha=z_0^{d^{\alpha^j}}$, we use induction on $i$ to show that $z_i$ must be equal to $z_0^{d^{i\alpha^j}}$. The case $i=0$ is clear since $z_0=z_0^{d^{0}}$. If $z_i=z_0^{d^{i\alpha^j}}$, then the constraint $\Psi_j(z_{i}, z_{i+1})$ requires that $z_{i+1}=\left(z_0^{d^{i\alpha^j}}\right)^{d^{\alpha^j}}=z_0^{d^{i\alpha^j}\cdot d^{i\alpha^j}}=z_0^{d^{(i+1)\alpha^j}}$, as required. By induction, we conclude that $z_\alpha=z_0^{d^{\alpha^j}}$.

Clearly, the length of $\Psi_j$ is bounded by $\mathcal{O}(j\alpha+d)$, so is polynomial in $j$, $\alpha$, and $d$.

We can then proceed exactly as in the proof of \Cref{thm:hierarchyinv}, using quantified repeated exponentiation instead of simple repeated exponentiation where necessary. We obtain a formula: 

\begin{multline*}
\forall (y_k'', a_k'')\in [\tfrac34, 1]^{n+1}:\exists (x_k'', b_k'', z_k'')\in [\tfrac12, 2]^{n+1+n}\dots \\\forall (y_1'', a_1'')\in [\tfrac34, 1]^{n+1}:\exists (x_1'', b_1'', z_1'')\in [\tfrac12, 2]^{n+1+n}:\Upsilon''(\dots)
\end{multline*}
 equivalent to $\Phi$, with length polynomial in $|\Phi|$ independently of $k$. The difference here from the proof of \Cref{thm:hierarchyinv} is that $\Upsilon''$ has both existential and universal quantifiers, so the total number of quantifier alternations increases.
\end{proof}

\paragraph{The problem \rangedfotrinv{}.}

For the packing game, we want to restrict the variables to a small range, where the range needs to get smaller when there are more variables. So we will use a different version of \fotrinv called \rangedfotrinv. 

\begin{definition}[\rangedfotrinv]
    For each integer $c>0$, there is a problem $\rangedfotrinv$, where we are given a quantified formula $\exists x_1\forall x_2 \dots Q_n x_n : \Phi(x_1,\dots,x_n)$, and intervals $[a_i, b_i]$ for $i=1, \dots, n$, where $\Phi$ is a conjunction of constraints of form $x=1,\quad x+y=z,\quad x\cdot y=1,$
    for $x,y,z \in \{x_1, \ldots, x_n\}$ and each $|b_i-a_i|\le n^{-c}$. 
    Each quantifier bounds exactly one variable and the quantifiers keep alternating between $\exists$ and $\forall$.
    The goal is to decide whether $\exists x_1\in [a_1, b_1]\forall x_2\in [a_2, b_2]\dots Q_n x_n\in [a_n, b_n]:\Phi(x_1, \dots, x_n)$ is true.
\end{definition}

The construction from \cite{AAM22} can be used to obtain a ranged version of \Cref{lem:invconversion}:

\begin{lemma}\label{lem:rangedinvconversion}
There is some universal constant $C$ such that the following is true: Let $\Phi(x_1, \dots, x_n)$ be a quantifier-free formula in the first-order theory of the reals with constraints of form $x=\frac18$, $x+y=z$, and $xy=z$. For each variable $x_i$, let $[a_i, b_i]\subseteq [\frac18, \frac18]$ be an associated interval, and let $\varepsilon$ such that $|b_i-a_i|\le \varepsilon$ for all $i$. Then we can construct in a polynomial time a quantifier-free formula $\Psi(x_1', \dots, x_n', y_1, \dots, y_k)$ and intervals $[c_i, d_i]$ for $i=1, \dots, k$ such that:

\begin{itemize}
    \item $|d_i-c_i|\le C\varepsilon$ for all $i=1, \dots, k$
    \item If each $x_i\subseteq [a_i, b_i]$, $\Phi(x_1, \dots, x_n)$ is true and $x_i'=x_i+\frac78$, then there exist $y_i\in [c_i, d_i]$ making $\Psi(x_1', \dots, x_n', y_1, \dots, y_k)$ true
    \item If $\Psi(x_1', \dots, x_n', y_1, \dots, y_k)$ is true for some values of $x_1', \dots, x_n', y_1, \dots, y_k\in [\frac12, 2]$ and $x_i'=x_i+\frac78$, then $\Phi(x_1, \dots, x_n)$ is true.
\end{itemize}

The formula $\Psi$ is a conjunction of constraints of form $x=1$, $x+y=z$, and $xy=1$, and the length of $\Psi$ is $\mathcal{O}(|\Phi|)$. 
\end{lemma}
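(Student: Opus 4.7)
\textbf{Proof plan for \Cref{lem:rangedinvconversion}.}
The plan is to inspect the construction from \cite{AAM22} that underlies \Cref{lem:invconversion} and verify that its auxiliary variables depend on the original variables in a uniformly Lipschitz manner, so that small intervals for the $x_i$ translate into only slightly larger intervals for the $y_j$. Recall that the construction proceeds constraint by constraint: after shifting every original variable by $x_i' = x_i + \tfrac{7}{8}$ so that $x_i' \in [\tfrac{3}{4},1]$, each constraint $x=\tfrac18$, $x+y=z$, or $xy=z$ of $\Phi$ is replaced by a small gadget that uses only the permitted constraints $u=1$, $u+v=w$, and $uv=1$, and that introduces a bounded number of fresh variables $y_j$ lying in $[\tfrac12,2]$.

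The key observation is that, in each such gadget, every introduced variable $y_j$ is uniquely determined as an explicit rational function $y_j = f_j(x_{i_1}',\ldots,x_{i_m}')$ of the $m=O(1)$ gadget inputs. I would first enumerate the gadgets of \cite{AAM22} and read off the functions $f_j$ one by one. Because the inputs range over $[\tfrac34,1]$ and every intermediate quantity occurring in the gadgets lies in $[\tfrac12,2]$, every denominator encountered is bounded away from zero, so each $f_j$ is smooth on the relevant compact domain with a Lipschitz constant bounded by some universal constant $L_j$. Setting $C = \max_j m_j L_j$ over the finite menu of gadget functions gives a universal constant independent of $\Phi$.

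Given this, for each auxiliary variable $y_j$ I would define the associated interval $[c_j,d_j]$ to be the convex hull of the image of the box $\prod_\ell [a_{i_\ell}+\tfrac78,\, b_{i_\ell}+\tfrac78]$ under $f_j$. By the Lipschitz bound, $|d_j-c_j|\le C\varepsilon$. The three bullet points are then a direct adaptation of the proof of \Cref{lem:invconversion}: a satisfying assignment of $\Phi$ with each $x_i\in [a_i,b_i]$ yields a satisfying assignment of $\Psi$ by setting $y_j := f_j(x_{i_1}',\ldots,x_{i_m}')$, which sits in $[c_j,d_j]$ by construction; conversely, the gadgets are designed so that the only value of $y_j$ consistent with the inputs is $f_j(x_{i_1}',\ldots,x_{i_m}')$, so any satisfying assignment of $\Psi$ yields a satisfying assignment of the original $\Phi$.

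The main obstacle is the bookkeeping: one must go through every gadget in \cite{AAM22}, identify which auxiliary variables are introduced, write down their defining rational functions of the gadget inputs, and verify that these functions avoid any singularity on the compact domain $[\tfrac34,1]^m$ so that the Lipschitz constant is indeed uniformly bounded. A secondary subtlety is that some gadgets may chain inversions, producing auxiliary variables that depend on previously introduced auxiliary variables within the same gadget; one then needs to propagate the Lipschitz argument through this chain, but since the chain has constant length and each link operates in the range $[\tfrac12,2]$ bounded away from zero, the resulting composed Lipschitz constant remains absorbed into the universal constant $C$.
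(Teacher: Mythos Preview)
Your proposal is correct and matches the paper's own approach: the paper does not give a detailed proof but simply states that ``the construction from \cite{AAM22} can be used to obtain a ranged version of \Cref{lem:invconversion}.'' Your plan spells out precisely what this means in practice---namely, that each auxiliary variable in the \cite{AAM22} gadgets is a smooth, uniquely determined function of the gadget inputs on a compact domain bounded away from singularities, so a uniform Lipschitz constant exists and controls the interval widths---which is exactly the verification one would perform.
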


\begin{theorem}\label{thm:rangedfotrinv}
For each $c>0$, \rangedfotrinv is \fotr-hard
\end{theorem}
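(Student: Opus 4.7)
The plan is to adapt the reduction from the proof of \Cref{thm:FOTRINV}, replacing \Cref{lem:invconversion} by \Cref{lem:rangedinvconversion} and using a more aggressive scaling factor. Given a \fotr instance $\Phi$, I would first apply \Cref{cor:compactformula2} to obtain an equivalent compactified formula, then follow the same chain of transformations as in the proof of \Cref{thm:FOTRINV}: use \Cref{lem:closedconstraints} to reduce to a single polynomial equation of degree $4$, convert this to a conjunction of addition, multiplication, and constant constraints via monotone Boolean simplification, and apply quantified repeated exponentiation to handle high-degree terms and doubly-exponentially large constants. This produces, in time polynomial in $|\Phi|$, a formula $\Upsilon$ with constraints of form $x+y=z$, $xy=z$, and $x=1$, equivalent to $\Phi$, whose variables can be assumed bounded in absolute value by a doubly-exponential constant $\Lambda$.

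The key modification lies in the scaling step. Let $N$ be a polynomial-in-$|\Phi|$ upper bound on the number of variables in the final \rangedfotrinv instance, and let $C$ be the constant from \Cref{lem:rangedinvconversion}. Set $\varepsilon = N^{-c}/C$ and, using quantified repeated exponentiation, construct a constant $\lambda$ with $\lambda \Lambda \le \varepsilon/2$; this $\lambda$ is doubly-exponentially small but representable in polynomial space exactly as in the proof of \Cref{thm:FOTRINV}. Replacing each variable $v$ in $\Upsilon$ by a scaled counterpart $v' = \lambda v$ and rewriting the constraints as in the proof of \Cref{thm:FOTRINV} yields a formula $\Upsilon'$ in which every variable lies in some interval of width at most $\varepsilon/2$ contained in $[-\tfrac18, \tfrac18]$.

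Applying \Cref{lem:rangedinvconversion} to $\Upsilon'$ then produces a formula $\Upsilon''$ with constraints of form $x=1$, $x+y=z$, and $xy=1$. The lemma guarantees intervals of width at most $C \cdot \varepsilon/2 < N^{-c}$ for the auxiliary variables, while the shifted variables $x_i'' = x_i' + \tfrac78$ inherit intervals of width at most $\varepsilon/2 < N^{-c}$ contained in $[\tfrac34, 1]$. Specifying all of these intervals yields the desired \rangedfotrinv instance, equivalent to $\Phi$, and the reduction is clearly polynomial-time.

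The main obstacle I expect is bookkeeping the interval widths through the reduction: the initial choice of $\varepsilon$ must anticipate both the constant blow-up $C$ from \Cref{lem:rangedinvconversion} and the polynomial blow-up in variable count $N$, while still being achievable using a polynomial-size formula. A secondary subtlety is that the construction of the doubly-exponentially small scaling factor $\lambda$ via quantified repeated exponentiation introduces additional quantifier alternations; one must verify that these can be interleaved consistently with the strict $\exists/\forall$ alternation pattern required by \rangedfotrinv, exactly as is done in the proof of \Cref{thm:FOTRINV}.
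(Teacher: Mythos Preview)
Your proposal is correct and follows essentially the same approach as the paper: both modify the \fotrinv reduction by scaling more aggressively so that variable ranges become $N^{-c}$-small, and both replace \Cref{lem:invconversion} with \Cref{lem:rangedinvconversion}. The paper resolves the circularity you flag (that $\varepsilon$ depends on $|\Upsilon''|$, which in turn depends on the extra constraints used to build the smaller scaling constant) by observing that shrinking $\lambda$ by a further factor $2^{-j}$ costs only $O(j)$ additional constraints, so the exponential decay of $2^{-j}$ outruns the polynomial growth of $|\Upsilon''|^{c}$; your fix of precomputing a polynomial bound $N$ achieves the same end.
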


\begin{proof}
This follows by essentially the same argument as in the proofs of \Cref{thm:hierarchyinv,thm:FOTRINV}, with the differences being that:

\begin{itemize}
    \item When we create the scaled formula $\Upsilon'$, we scale down far enough that the variables are in $[-\varepsilon, \varepsilon]$ instead of $[-\frac18, \frac18]$. This can be done by simply using smaller values of $\lambda_1$ and $\lambda_2$
    \item We use \Cref{lem:rangedinvconversion} rather than \Cref{lem:invconversion} to create $\Upsilon''$ from $\Upsilon'$
\end{itemize}

We want $\varepsilon$ to be smaller than $|\Upsilon''|^{-c}$, but in order to scale down $\lambda_1$ and $\lambda_2$, we need to add additional constraints to $\Upsilon'$, making $\Upsilon''$ larger. However, we can scale down $\lambda_1'=2^{-j}\lambda_1, \lambda_2'=2^{-j}\lambda_2$ using only $\mathcal{O}(j)$ extra constraints. Since the exponential $2^{-j}$ becomes small much faster than the polynomial $|\Upsilon'|^{-c}$, and since $|\Upsilon''|=\mathcal{O}(|\Upsilon'|)$ by \Cref{lem:rangedinvconversion}, we can choose $j$ large enough to satisfy $\varepsilon\le |\Upsilon''|^{-c}$.
\end{proof}

\newpage
\section{Machine Model}
\label{sec:MachineModel}
This section is dedicated to prove the following theorem.

\MachineModelTHM*

The proof uses the existing and similar statement from the class \ER which was provided in~\cite{EvdHM20, BSS89}.
The class \ER can be either described with the help of a real Turing machine or a logical formula of a certain form and we use the proof of this fact to prove our theorem.
The crucial long known idea behind the theorem is that machines can compute the value of a logical sentence and that logical sentences are powerful enough to simulate computations if they are allowed to use existential quantifiers. 
As many of those theorems are formulated using real Turing machines instead of algorithms, we will use the language of real Turing machines here.
It was shown (Lemma 28~\cite{OracleSeparation}) that algorithms using the real RAM and real Turing machines can simulate one another in polynomial time.

We use a recent lemma from Kirn, Meijer, Miltzow and Bodlander that extracts the technical part that we need in a convenient form~\cite{KirnMeijerMiltzowBodlaender2025}.
(Many papers have this technical part hidden as part of the their proof or omit it as ``straightforward''~\cite{BC06}. 
We are happy to hear if there is a better source.)
For the sake of completeness for the readers not familiar with the technique, we restate their lemma (Lemma~13~\cite{KirnMeijerMiltzowBodlaender2025}) with some small adaptions to our situation.
Below, we explain the changes in detail.

\begin{lemma}[Encoding Computations in First-Order Formulas~\cite{KirnMeijerMiltzowBodlaender2025}]
\label{lem:EncodingComputationFOFormula}
    Let $M$ be a polynomial-time real Turing machine, $q$ be a polynomial, and $w \in \{0,1\}^*$ be a string of length $n$. 
    
    Then one can compute in polynomial time a first-order formula $\Phi = \exists z\varphi(u,z)$ , 
    with $\varphi$ quantifier free, such that for all $u \in \R^{q(n)}$, 
    \[M(w,u)=1 \text{ if and only if }  \exists z : \varphi(u,z) .\]
\end{lemma}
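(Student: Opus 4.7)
The plan is to adapt the classical Cook--Levin style simulation of Turing-machine computations by existentially quantified polynomial constraints, adjusted to handle the fact that $M$ is a \emph{real} Turing machine whose tape cells may carry arbitrary real numbers and which supports arithmetic and comparison operations on them in a single step. First I would fix the polynomial time bound: let $p$ be a polynomial such that on any input of combined discrete/real length at most $n + q(n)$, the machine $M$ halts within $p(n)$ steps. In particular, at most $p(n)$ discrete tape cells and at most $p(n)$ real tape cells can ever be touched, and the control can only be in one of finitely many states, so the full configuration at any step admits an encoding by polynomially many variables.

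Next I would introduce existentially quantified variables $z$ representing the entire computation trace. For each time step $t \in \{0, 1, \ldots, p(n)\}$, the variables in $z$ describe: (i) the current control state via indicator variables in $\{0,1\}$, (ii) the head positions via indicator variables on the at most $p(n)$ relevant cells, (iii) the contents of the discrete tape (real variables forced to lie in $\{0,1\}$ by constraints of the form $x(1-x)=0$), and (iv) the contents of the real register tape (unrestricted real variables). The quantifier-free formula $\varphi(u,z)$ is then a conjunction of three blocks of polynomial (in)equalities. The first block hard-codes the initial configuration: the discrete tape matches $w$, the real tape matches $u$ (equalities of the form $z_{i,0} = u_i$), the head is at the start, and the control state is the initial one. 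The second block enforces, for every $t$, local consistency between the configurations at times $t$ and $t+1$. For each of the finitely many transition rules of $M$, I would write a Boolean clause saying that either the rule's preconditions fail at time $t$, or its preconditions hold at $t$ \emph{and} its postconditions hold at $t+1$; arithmetic operations of $M$ become polynomial equalities such as $z_{i,t+1} = z_{j,t} + z_{k,t}$ or $z_{i,t+1} \cdot z_{j,t} = 1$, comparisons become strict polynomial inequalities, and untouched cells are fixed by equalities $z_{i,t+1} = z_{i,t}$. The third block asserts that the output register at time $p(n)$ equals $1$.

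The main obstacle is the faithful, \emph{locally} expressible encoding of the transition function: one must interleave the discrete branching over finitely many rules (handled by Boolean conjunctions and disjunctions) with the real arithmetic updates (handled by polynomial identities), while ensuring that indicator variables really take values in $\{0,1\}$ and that, in each step, cells not addressed by the active rule are propagated unchanged. Once this encoding is carried out carefully, the resulting $\varphi$ is quantifier-free, the total number of variables and atomic constraints is polynomial in $n$, and the construction itself can be carried out by a standard polynomial-time algorithm that iterates over $t$ and over the finite rule set of $M$. The equivalence $M(w,u) = 1 \iff \exists z : \varphi(u,z)$ then holds because any accepting computation trace provides a satisfying assignment to $z$, and conversely any $z$ satisfying $\varphi$ encodes a valid accepting run of $M$ on $(w,u)$.
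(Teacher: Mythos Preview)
The paper does not actually prove this lemma: it is quoted from \cite{KirnMeijerMiltzowBodlaender2025} (with a few cosmetic restatements listed afterwards) and then used as a black box in the proof of \Cref{thm:MachineModel}. So there is no ``paper's own proof'' to compare against.

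Your sketch is the standard Cook--Levin style argument adapted to real machines, and it is correct in outline; this is indeed how such encoding lemmas are proved (see also the BSS model literature~\cite{BSS89} and the discussion around~\cite{EvdHM20}). The only point I would flag is a matter of presentation rather than substance: when you write ``arithmetic operations of $M$ become polynomial equalities such as \ldots\ $z_{i,t+1}\cdot z_{j,t}=1$'', note that division is typically not a primitive of the real Turing machine / real RAM model used here, so you would not need an inversion constraint; the primitives are $+,-,\cdot$ and sign tests, all of which yield polynomial atoms directly. Other than that, your decomposition into initial, transition, and acceptance blocks, the use of $x(1-x)=0$ to pin indicator variables, and the polynomial size accounting are all standard and sound.
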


We now state the small changes in the way that we state the lemma.

\begin{itemize}
    \item The lemma is originally stated for a general structure $\mathcal{R}$, we replace this by the specific structure $\R = (\R,0,1,-,+,\cdot,\leq)$ of the reals that we are working with.
    So here, we just apply it to a special case, the reals.
    \item Next, the lemma in the article omits to mention that $\varphi$ can be constructed in polynomial time on a normal Turing machine.
    However, later in the article the authors discuss this in Theorem 14.
    \item Furthermore, the original lemma was working with $w\in \R^*$ instead of $w\in \{0,1\}^*$, but again the authors mention that the lemma is also valid with binary input, as discussed in Corollary 15.
    \item At last, we slightly changed the names of the variables to be consistent with the scheme of naming variables in this article.
\end{itemize}

We are now ready for the proof of the main theorem of this section.

\begin{proof}
    We begin with the forward direction ``$\Rightarrow$''.
    Let $L$ be a language in \QR. 
    Then there is a polynomial time algorithm that translates any instance $w$ to a sentence $\Phi$ in the first-order theory of the reals. 
    We can assume without loss of generality that $\Phi$ is in prenex normal form.
    That is all the quantifiers are in the front of $\Phi$, i.e.,
    \[\Phi = \exists x_1 \in \R^n \forall y_1 \in \R^n \ldots \exists x_k \in \R^n \forall y_k \in \R^n \varphi(x_1,y_1,\ldots ),\]
    for $k,n$ polynomial in the length of $w$.
    Furthermore, given the real vectors $x_1,y_1,\ldots $ there is a real Turing machine that evaluates if $\varphi(x_1,y_1,\ldots )$ is true.
    Thus our real Turing machine $M$ that we need first translates $w$ into $\varphi$ and then evaluates it on the real vectors that it is given.

    For the reverse direction ``$\Leftarrow$'', we use the lemma above.
    Let $L$ be a language and $M$ the machine mentioned in the theorem.
    It holds that 
    \begin{equation} \label{eqn:MachineProperty}
        w \in L \Leftrightarrow \exists x_1 \in \R^n \forall y_1 \in \R^n \ldots \exists x_k \in \R^n \forall y_k \in \R^n : M(x_1,y_1,\ldots, x_k,y_k,w) = 1
    \end{equation}
    We have to design a polynomial time algorithm $A$ on a normal Turing machine
    that reduces $L$ to \fotr. 
    In other words, for each instance $w$ to $L$, we have to construct in polynomial time an \fotr sentence $\Phi$ such that $w\in L$ if and only if $\Phi$ is true.
    Let $u = (x_1,y_1,\ldots)$ from the assumption about $M$.
    But let's think about them as free variables. 
    Then \Cref{lem:EncodingComputationFOFormula} implies that there is a first-order formula $\Phi(u) = \exists z \varphi(z,u)$ that has the same truth value as the evaluation of $M$.
    Thus we can replace $M(u,w)$ by $\exists z \varphi(z,u)$ and this gives.

    \[
    w \in L \Leftrightarrow \Phi \]
     With 
    \[ \Phi = \exists x_1 \in \R^n \forall y_1 \R^n \ldots \exists x_k \in \R^n \forall y_k \R^n \exists z \in \R^n : \varphi(z,x_1,y_1,\ldots, x_k,y_k),\]
    being an \fotr sentence constructed in polynomial time.
\end{proof}

\newpage
\section{The \PackingGame}
\label{sec:Packing}
This section is dedicated to show the following theorem.

\PackingTHM* 

\paragraph{Proof overview.}
The membership is straightforward and uses the machine model of \QR.
To show hardness, we reduce from \fotrinv. 
Our reduction is reminiscent of the  \ER-hardness reductions in~\cite{PackingER, Westerdijk2024}, 
with some additional tricks to deal with the alternating nature of the game.
The core idea is to encode every variable with a single \textit{variable piece}.
The value of the variable then corresponds one-to-one with the vertical placement of that specific variable piece, as in \Cref{fig:notches}.

\begin{figure}[h]
    \centering
    \includegraphics{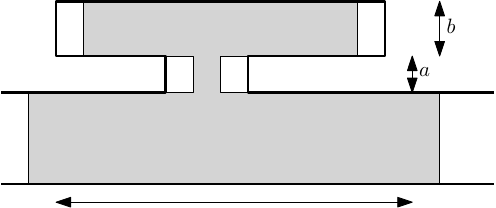}    
    \caption{This variable piece can be moved left and right corresponding ot the value of the variable that it represents.}
    \label{fig:notches}
\end{figure}

Note here that the piece has a jigsaw puzzle like extension on the top that fits exactly at one location in the boundary of the container.
The idea appeared to the best of our knowledge first in the paper by Abrahamsen, Miltzow and Seiferth in a much more sophisticated way~\cite{PackingER}.
In this simplified form it was also seen in the master's thesis of Westerdijk~\cite{Westerdijk2024}.

In order to enforce all the constraints, we connect the variables to an engine of pieces that enforce all the constraints that we need.
This engine was described before~\cite{Westerdijk2024}, and we will simply reuse this engine and also sketch it again for the convenience of the reader.
The trick to enforce that the first variable piece is placed first goes as follows.
The variable pieces are very big compared to all other pieces.
They are in fact so big that all other pieces can fit into just one of the spots of the variable pieces, even if the opponent tries to block of that space.
Thus, the players really need to prioritize the variable pieces.
Furthermore,  first variable piece is the largest, the second piece is a bit smaller and so forth. 
Intuitively this means that it makes sense to for both players to always place their biggest piece first, which will then correspond to the correct order.

\begin{figure}[btph]
    \centering
    \includegraphics{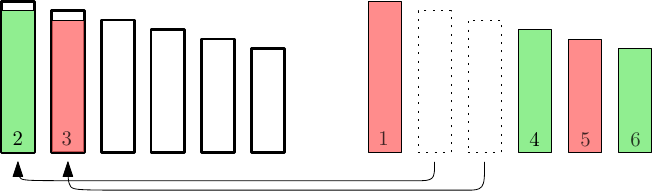}    
    \caption{The decreasing sizes of the variable pieces forces both players to place them in the correct order.}
    \label{fig:increasing-szies}
\end{figure}

At last, let us discuss how to build the engine that enforces all the constraints.
Actually, we just let the human construct it.
Meanwhile all the remaining pieces of the devil only fit in a distraction area of the container. Specifically the remaining devils pieces will be long and skinny.
In this way, the pieces of the devil fit only in the distraction area and the pieces of the human only fit inside the engine part of the construction. 
Assuming that all the constraints can be satisfied then the human has a natural motivation to fit all the pieces correctly inside the engine as then the human will win as eventually all the pieces are inside the  container. 
If not all the constraints of the \fotrinv instance can be satisfied then the engine cannot be completed and there will be at least one piece that cannot be fit by the human.

\begin{figure}
    \centering
    \includegraphics{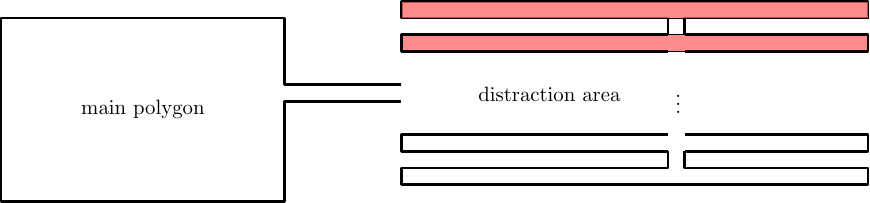}    
    \caption{The distraction pieces belong to the devil.
    They are super thin and super long compared to the remaining pieces of the human, but still dwarfed by the variable pieces.}
    \label{fig:packing-distraction}
\end{figure}

\subsection{\QR-membership}

In this paragraph, we show the following lemma.

\begin{lemma}
    The \PackingGame is in \QR.
\end{lemma}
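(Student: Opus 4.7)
The plan is to invoke \Cref{thm:MachineModel}: it suffices to exhibit a polynomial-time real RAM algorithm $A$ together with alternating quantifier blocks such that the sentence is true precisely when the human wins on instance $w$. The \PackingGame has at most $T = |H|+|D|$ turns, which is polynomial in the input, and each turn will be encoded by a constant-size real vector, so the resulting sentence has polynomially many variables arranged into polynomially many alternating blocks.

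I would encode the $i$-th move as $m_i = (p_i, t_i, c_i, s_i) \in \R^5$, where $p_i$ selects a piece (interpreted as an index when $p_i$ equals one of the integers $1, \ldots, |H|$ or $1, \ldots, |D|$ for the player whose turn it is, and otherwise treated as selecting no piece), $t_i \in \R^2$ is a translation, and $(c_i, s_i) \in \R^2$ encodes a rotation via the matrix $\begin{pmatrix} c_i & -s_i \\ s_i & c_i \end{pmatrix}$. Since the human plays on odd turns and the devil on even turns, padding by one dummy variable if $T$ is odd so the alternation ends with $\forall$, I write
\[w \in L \iff \exists m_1 \ \forall m_2 \ \exists m_3 \ \cdots \ \forall m_T : A(m_1, \ldots, m_T, w) = 1.\]
The algorithm $A$ processes the moves in order, maintaining the list of already-placed pieces. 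For each $m_i$ it checks whether the move is \emph{legal}: (i) $c_i^2 + s_i^2 = 1$ and $p_i$ is a valid index of an unused piece of the current player; (ii) the rotated-and-translated piece lies inside the polygonal container; and (iii) its interior is disjoint from the interior of every previously placed piece. All three tests reduce to polynomially many arithmetic operations and comparisons on polygon vertices, hence run in polynomial time on a real RAM. The algorithm outputs $1$ if all $T$ moves are legal (every piece placed, so the human wins), outputs $1$ if the first illegal move falls on an even turn (the devil failed and loses), and outputs $0$ if the first illegal move falls on an odd turn (the human failed and loses).

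This correctly captures ``the human has a winning strategy'' by the usual semantics of alternating quantifiers. On a human turn, $\exists m_i$ demands a legal move leading to acceptance, so if the human has no legal move the sentence is false. On a devil turn, $\forall m_i$ accepts automatically for illegal $m_i$ under our conventions, so the devil's only way to refute the sentence is to exhibit a legal move that drives $A$ to output $0$. The main conceptual point to verify is exactly this correspondence between the game's ``first to be unable to move loses'' rule and the logical interpretation of the alternating quantifiers under the chosen accept/reject conventions; the remaining work---implementing rotated polygon containment and pairwise interior-disjointness as a polynomial-time real RAM program---is routine computational geometry.
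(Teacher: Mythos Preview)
Your proposal is correct and takes essentially the same approach as the paper: both invoke \Cref{thm:MachineModel}, encode each move as a real vector specifying piece index and placement, and let a polynomial-time real RAM algorithm check legality after each move with the convention that the first player to make an illegal move loses. Your write-up is in fact considerably more detailed than the paper's (which simply asserts that such an algorithm is straightforward to implement), and your explicit discussion of why the ``first illegal move loses'' convention matches the $\exists/\forall$ semantics is a useful addition.
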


We will use \Cref{thm:MachineModel} and describe an algorithm to recognize the \PackingGame.
Let $w$ be an instance of the packing game.
The $x_i$ and $y_i$ indicate both the position and which piece  each player is selecting. 
The algorithm $A$ has the task of checking that after each step we have a packing without overlapping pieces and that the piece that was selected is still available.
Although it is tedious to describe the algorithm $A$ in detail, the general experience of the theoretical computer science community assures that such an easy algorithm can be implemented using real RAM.

\subsection{\QR-hardness}

\begin{figure}[btph]
        \centering
        \includegraphics{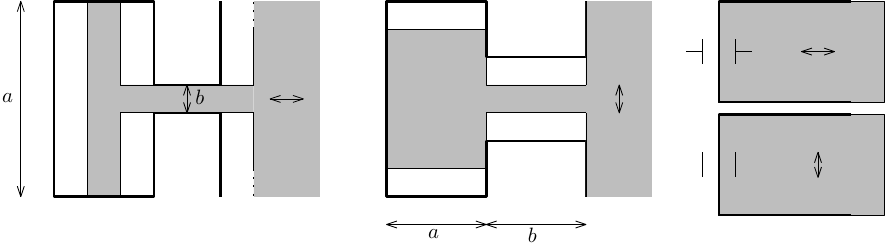}
        \caption{Left: We have two ways to link polygons. Right: We use the schematic symbols on the polygons to indicate if there is a linking.}
        \label{fig:packing-linking}
    \end{figure}

We reduce from \rangedfotrinv. Thus let us consider such an instance :

\[\exists x_1 \in \ \forall y_1 \in  \ \dots \exists x_n \ \forall y_n \in  \ \Phi(x_1,y_1,\ldots,x_n,y_n)\]
Here, $\Phi$ is a conjunction of constraints of the form $x+y = z$ and $x\cdot y = 1$.
We are furthermore given for each variable $x_i$ or $y_i$ a range $[a_i,b_i] \subseteq[1/2,2]$ of size $\varepsilon$. We want to know if the sentence is true with every variable restricted to its respective range.
As mentioned before each variable will be encoded using one variable piece.
Before we will describe the variable piece in detail, we will describe how to link polygons.
This is one important tool to ensure that all the pieces are at the intended position.

\paragraph{Linking Polygons.}
    
We can create little notches either on the boundary of the container or on the boundary of a piece.
The notches are shown in \Cref{fig:packing-linking}.
We have two types of notches.
Those where we can push towards or away from the notch with the piece that fits inside, see to the left of \Cref{fig:packing-linking}.
The other type of notch allows to slide along the boundary.
We indicate this symbolically by either two $\vdash$'s or two parallel lines as indicated on the right of \Cref{fig:packing-linking}.
Furthermore, each notch comes with two rational numbers $a,b$ that describe the shape of the notch.
It holds that for any two notches the counter part can only fit into exactly one notch.

\paragraph{Encoding Variables.}
\begin{figure}[tbp]
    \centering
    \includegraphics{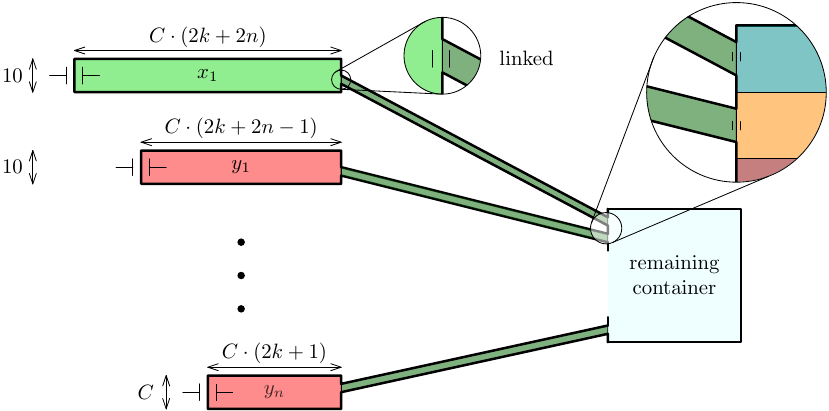}
    \caption{We have $2n$ variable pieces corresponding to the $2n$ variables $x_1,y_1,x_2,\ldots$~. Each piece has a quarter of the size of the previous piece.}
    \label{fig:variables}
\end{figure}
We have $2n$ variable pieces $p_1,p_2,\ldots$ corresponding to the $2n$ variables $x_1,y_1,x_2,\ldots$~. 
Each piece has length $C\cdot 2k 4^{2n-i}$ and height $10$.
We choose height $10$, because every forthcoming piece has height at most $10$.
Furthermore, $k$ indicates the total number of remaining pieces.
The number  $C$ is defined as an upper bound on the length of the longest piece in the remaining construction.
(We will argue later that $C$ is polynomial in terms of the input.)

We have pockets for them at the left of the container that fit exactly those pieces.
Furthermore, each piece is linked with a notch as explained in the previous paragraph.
The amount that each piece can move to the left or to the right is exactly the range of the corresponding variable.
We show the following lemma.

\begin{figure}[tbph]
    \centering
    \includegraphics{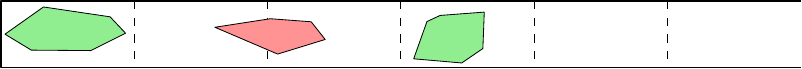}
    \caption{We can subdivide the spot for piece $p_{2n}$ into $2k$ boxes of size $10\times C$.}
    \label{fig:packing-filling}
\end{figure}

\begin{lemma}
    Each player must place their largest remaining variable piece on every turn; otherwise, the other player can force a win.
\end{lemma}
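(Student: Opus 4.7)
The plan is to argue by induction on the turn number $t$, with inductive hypothesis that on every earlier turn the acting player placed their largest remaining variable piece, so that $p_1,\dots,p_{t-1}$ occupy their designated pockets at the start of turn $t$. Without loss of generality the active player on turn $t$ is the human, whose largest remaining variable piece is $p_t$. Suppose, toward showing the contrapositive, that they play some other piece $q\ne p_t$. The goal is to describe a devil response that renders $p_t$ permanently unplaceable and then show that this forces the human to lose.

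The quantitative heart of the argument is a size budget. Using $|p_j|=C\cdot 2k\cdot 4^{2n-j}$ and the fact that there are at most $k$ non-variable pieces, each of length at most $C$, one computes
\[ \sum_{j>t}|p_j| \;+\; kC \;<\; C\cdot 2k\cdot \tfrac{4^{2n-t}}{3} + kC \;<\; |p_t|. \]
In other words, the total length of every piece besides $p_t$ that still has to be placed is strictly less than $|p_t|$. Consequently no other pocket and no contiguous empty region of the engine or distraction area can accommodate a rectangle of length $|p_t|$, so $p_t$ can only ever be placed in its own pocket. That pocket has length exactly $|p_t|$, so any obstruction inside it whatsoever prevents $p_t$ from being placed. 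The devil's response to the deviation is therefore to place a single small piece (of length at most $C$ and height at most $10$) somewhere inside the pocket of $p_t$ on the very next turn; this placement is legal because the pocket is still almost entirely empty and easily accommodates a piece of this size.

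To conclude, observe that after this blocking move $p_t$ is permanently unplaceable, while the human still has a finite nonempty set of other pieces to play. Each subsequent human turn strictly decreases the human's inventory of non-$p_t$ pieces, and after finitely many turns the human has only $p_t$ left, is required to place it, cannot, and so loses the game. A symmetric argument covers deviations by the devil on even-numbered turns. The main obstacle I expect in formalizing this is verifying that the devil can always afford the blocking move: one must check that the devil has a spare small piece available at the required moment and that using it up inside the pocket of $p_t$ does not prevent the devil from legally continuing the game for long enough to outlast the human. This will ultimately be handled by tuning the construction, in particular by giving the devil enough slack in their inventory of distraction pieces to absorb a single blocking move without compromising the rest of their obligations.
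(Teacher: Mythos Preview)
Your proposal has the right opening observation --- the size budget correctly shows that once $p_1,\dots,p_{t-1}$ are in their pockets, $p_t$ fits only in pocket $t$ --- but you then leave a genuine gap that you yourself flag at the end. After the blocking move you give no strategy for the devil to keep playing: the devil still carries the large variable pieces $p_{t+1},p_{t+3},\dots$, each of which itself only fits into a few variable pockets, and a now-adversarial human can try to occupy those pockets first. Appealing to ``tuning the construction'' is not a proof, since the construction is fixed by the reduction and must be argued about as given. (There is also a smaller mismatch: you have the devil block with ``a small piece of length at most $C$'', but in this construction the devil's only non-variable pieces are the distraction pieces, each chosen longer than all the human engine pieces combined, so the devil owns no piece of that size.)

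The paper closes exactly this gap with a concrete strategy that does both jobs at once. The opponent plays, on every turn, their largest remaining variable piece into the largest remaining variable slot. A pigeonhole count --- on turn $i$ at most $i-1$ pieces have been placed, so at least one of the $i$ slots $1,\dots,i$ that can accommodate $p_i$ is still free --- guarantees this is always legal regardless of what the deviating player does; in particular, after the opponent's very first such move slot $t$ is occupied, so $p_t$ is permanently unplaceable. Once all of the opponent's variable pieces are down, at least one variable slot remains empty; since even the smallest slot decomposes into $2k$ boxes of size $10\times C$ and any single piece intersects at most two such boxes, the opponent can deposit every one of their remaining small pieces there no matter how the deviating player responds. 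This simultaneously certifies that the opponent never gets stuck and that the deviator eventually cannot move --- precisely the two things your argument leaves open.
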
    
\begin{proof}
    We show that if one of the players deviates from this order they will certainly lose the game.
    We start for simplicity with the case that the last piece $p_{2n}$ was not placed at the correct position, by the devil.
    When it is the humans turn they can place one piece at the spot for piece $p_{2n}$.
    Consequently, the devil can never place piece $p_{2n}$ for the rest of the game.
    To show that the devil will lose, it remains to argue that the human can place all its remaining pieces in the spot for $p_{2n}$.
    To see this we note that if we can subdivide the spot for piece $p_{2n}$ into $2k$ boxes of size $10 \times C$.  
    Clearly, every such box is large enough to fit any of the $k$ remaining pieces.
    Furthermore, any piece can intersect at most two such boxes. 
    Thus regardless of the strategy of the devil and the human there is always a way to place for the human to place any of its remaining pieces.
    
    Now we consider the case that one of the earlier pieces was not placed.
    To make notation easier, we assume that this is the first piece $p_1$ that was not placed at its intended spot by the human.
    (The case for the piece $p_i$ is the same up to renaming the pieces and potentially switching the roles of the human and the devil.)
    The strategy for the devil for the remainder is to place its biggest remaining piece in the biggest remaining spot until there are no variable pieces left and then one variable spot will be left and the devil can employ the strategy that we described above for the human.
    It remains to show that this is a winning strategy for the devil.
    
    First, we can already note here that the human can never place $p_1$.
    There are several ways to prove that the devil has a winning strategy. 
    Then it is sufficient to show that the devil can place all its pieces.
    For that purpose note that for $p_i$ there are $i$ possible spots where the piece can be placed.
    Thus if the devil places piece  $p_i$ in the $i$-th turn than at least one of the spots must stay available simply due to the number of remaining spots is one larger than the number of the pieces placed so far.
    Once all possible variable pieces are placed and one spot remains empty, we use the strategy described above to fill all the remaining pieces into the empty variable spot.
    The human will lose as it can never place $p_1$.
\end{proof}

\paragraph{Distraction Area}
All the remaining pieces have that we will describe will be human pieces and the devil receives an equal number of distraction pieces.
Those distraction pieces will be chosen longer than all the human pieces combined and thinner than any of the human pieces.
At the same time we build part of the polygon that fits exactly those pieces and no other piece of the human.
See \Cref{fig:packing-distraction} for an illustration.
We get the following lemma.
\begin{lemma}
    The human has a winning strategy if and only if they can place all the remaining pieces in the main part of the container. 
\end{lemma}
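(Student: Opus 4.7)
The plan is to argue that once the $2n$ variable pieces have been locked into their designated slots by the preceding lemma, the game cleanly decouples into two independent packing problems on disjoint portions of the container. The key geometric observation to establish first is a separation property: every distraction piece has length exceeding the total combined length of every remaining human piece (hence certainly exceeding any dimension of the main area), so no distraction piece can be placed in the main part regardless of its rotation; conversely, every human engine piece is thicker than the strips forming the distraction area, so no engine piece fits inside the distraction area. I would justify each of these claims directly from the quantitative choices made when defining the distraction pieces and when carving out the distraction area. Consequently, from the moment the last variable piece is placed, every subsequent devil move must occupy the distraction area and every subsequent human move must occupy the main area, so the two players' remaining placements cannot interact at all.

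Given this decoupling, the forward direction is immediate. If the \human has a winning strategy, then in any play consistent with it they successfully place all of their pieces. By the separation claim, each engine piece they place must land in the main part, so the \human is in fact able to place all their remaining pieces inside the main part.

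For the backward direction, assume the \human can place all remaining engine pieces in the main part, no matter what the \devil does. I would describe the \human's strategy explicitly: whenever it is the \human's turn to play a variable piece, they play the next variable piece into its prescribed slot at the location dictated by the assumed main-part packing plan; whenever it is their turn to play an engine piece, they follow that plan. By the previous lemma the \devil's variable moves are likewise forced into their slots, and any distraction piece the \devil plays must go to the distraction area, where it fits by construction of that area. Since the two inventories of non-variable pieces have the same cardinality and neither player can block the other after variable placement, both players exhaust their pieces, every piece is placed, and the \human wins.

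The main obstacle I anticipate is the geometric separation step. One must argue both that no distraction piece (even if rotated) can be squeezed into the main area and that no engine piece can be squeezed into the distraction area. The length/thickness parameters of the distraction pieces were already set up to make this plausible, but a careful bookkeeping argument is needed to verify that these bounds really do rule out every possible orientation, and to confirm that the distraction area itself can be tiled by the prescribed set of distraction pieces independently of the devil's order of play.
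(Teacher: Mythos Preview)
Your proposal is correct and follows essentially the same logic as the paper's proof: both arguments hinge on the facts that the devil can always place every distraction piece in the distraction area, that the human's engine pieces cannot go there, and that the two inventories have equal cardinality, so the game reduces to whether the human can fit all engine pieces into the main part. The paper's proof is terser and simply takes the geometric separation for granted from the construction, whereas you spell it out explicitly and flag it as the main technical verification; this is a reasonable elaboration rather than a different approach.
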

\begin{proof}
    We first consider the first case:
    The human has a piece that they cannot place.
    As the human and the devil have an equal number of pieces and the devil can place all of its distraction pieces.
    Eventually the human will lose.

    We know consider the second case that
    the human can place all of the remaining pieces in the main area of the polygon.
    As the devil can also place all its pieces all pieces will be placed and the human wins by definition of the \PackingGame.
\end{proof}

From now on, we assume that all the remaining pieces we describe are meant for the human.

\paragraph{Connecting to the Engine.}
As described in the previous paragraph, we have variable pieces and later we will describe gadgets that will simulate constraints given by the \fotrinv $\Phi$ that we started from.

The first step is to wire the information from the variable pieces to all the remaining pieces. 
The difficulty here is that the variable pieces are huge compared to the remaining pieces.
We do this using thin pieces, see the green pieces in \Cref{fig:variables}.
(Less thin then the Devil's distraction pieces of course.)
They are linked to the variable pieces and they are much shorter than in the visualization.
All of those green pieces are human pieces.

\paragraph{The Engine.}

Note that all the remaining pieces and the remaining part of the boundary corresponds exactly to the \ER-hardness description of the packing problem.
A long and intricate proof for convex pieces was done by Abrahamsen, Miltzow and Seiferth~\cite{PackingER} in combination of an article of Miltzow and Schmiermann~\cite{miltzow2024classifying}.
A much easier proof was described by Westerdijk~\cite{Westerdijk2024}, but only for polygonal pieces that are not necessarily convex.
We follow the approach of Westerdijk, but only sketch this engine as this part is literally the same.
We will sketch the following lemma, which follows from the correctness of the \ER-hardness hardness  of geometric packing~\cite{Westerdijk2024}.

\begin{lemma}
    Let $u_1,v_1,\ldots u_n,v_n$ be the values corresponding to the placement of the variable pieces
    placed by the devil and the human.
    Then the human can place all his remaining pieces in the engine part of the polygon.
\end{lemma}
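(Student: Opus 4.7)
The plan is to invoke the existing \ER-hardness reduction for polygon packing as a black box. The engine part of the container together with the remaining \human pieces is, by construction, identical to the gadget used by Westerdijk and by Abrahamsen--Miltzow--Seiferth in their \ER-hardness proofs of polygon packing. The core property established in those works is that the engine pieces admit a valid packing into the engine region if and only if the variable values transmitted into the engine satisfy the full conjunction $\Phi$ of addition and inversion constraints. Since the engine pieces belong exclusively to the \human in our setting, no packing space needs to be shared with the \devil, and the turn order within the engine phase is irrelevant to geometric feasibility.

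First I would spell out how the placements of the variable pieces encode an assignment. Each of the $2n$ variable pieces is pinned to the container by a dedicated notch that removes all but one degree of freedom, leaving a single horizontal coordinate equal (via an affine correspondence) to the value in $[a_i,b_i]$ assigned to the variable. The thin green connector pieces transmit this horizontal displacement into the main engine region through further notch-linkings, so that from the engine's perspective the tuple $u_1,v_1,\ldots,u_n,v_n$ appears as a set of fixed parameters once all variable pieces have been placed.

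Next I would invoke the \human's winning strategy for the \rangedfotrinv instance to justify that, under optimal play, the tuple $(u_1,v_1,\ldots,u_n,v_n)$ forms a satisfying assignment for $\Phi$: at turn $2i-1$ the \human chooses $u_i\in[a_i,b_i]$ as prescribed by the strategy, given the already revealed $v_1,\ldots,v_{i-1}$. By the \ER-hardness analysis of the engine, a complete packing of the engine pieces into the engine region then exists. The \human can precompute this packing as soon as the last variable piece is placed, and execute it one piece at a time. Between each of these engine placements the \devil will place one of their distraction pieces, which by an earlier lemma can only land in the distraction area and therefore does not interfere with the engine region.

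The main obstacle I expect is simply to verify that black-boxing the \ER-hardness engine is legitimate in the game setting, since the original construction was designed for a static packing problem rather than an alternating game. Two facts make this painless: the engine never has to accommodate a \devil piece (distraction pieces are too long to fit there), and the variable values are frozen before any engine piece is placed. Because of this, no adversarial analysis inside the engine is required, and the static correctness of Westerdijk's gadget for \etrinv-style constraints transfers directly to our setting where the same constraints are applied to the assignment produced by the outer quantifier game.
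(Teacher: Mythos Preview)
Your proposal is correct and follows essentially the same approach as the paper: both treat the engine as a black box imported verbatim from Westerdijk's \ER-hardness construction, relying on the fact that the engine pieces pack if and only if the transmitted variable values satisfy $\Phi$. The paper's own ``proof'' is literally just the sentence that the lemma ``follows from the correctness of the \ER-hardness of geometric packing'' together with a sketch of the lanes, switches, and addition/inversion gadgets; you spell out slightly more of the surrounding logic (that the devil's distraction pieces cannot enter the engine, that the turn order inside the engine is irrelevant once the variable values are frozen), which is a useful clarification the paper leaves implicit.
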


We start by providing an overview on how the \textit{engine} works.
For an illustration, we refer the reader to \Cref{fig:packing-engine+switches}.
\begin{figure}[tbph]
    \centering
    \includegraphics{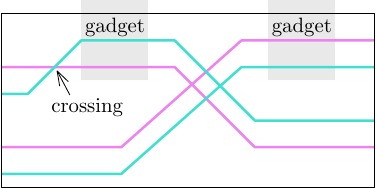}
    \hfill
    \includegraphics{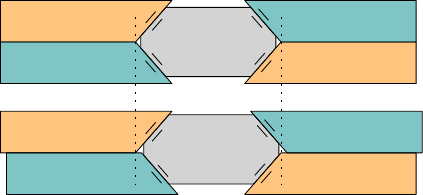}
    \caption{Left: A high level overview on the inner working of the engine.
    Right: }
    \label{fig:packing-engine+switches}
\end{figure}
The information from the variable pieces will be wired through lanes to the gadgets.
We will describe both how the lanes work as well as the way that two lanes can switch without losing any information.
The lanes are responsible to bring the right variable information to the gadgets on the top of the engine. 
The gadgets are then responsible to enforce all the constraints of $\Phi$.

\paragraph{Lanes and Switches.}

A sketch of the lane pieces and the switch pieces is depicted in \Cref{fig:packing-engine+switches}.
Although the variable pieces have already some fixed value, it is helpful to see what happens if we move a variable left or right.
The lane piece moves along it naturally as it is tightly connected via our linking gadget.
In \Cref{fig:packing-engine+switches} we see how the movement of the cyan left piece only influences that position of the cyan right piece and none of the orange pieces.
The same holds for symmetry reasons for the orange pieces and we see how information about the variable pieces switches lane.

\paragraph{Gadgets.}
The addition and inversion gadgets are illustrated in \Cref{fig:packing-addition} and \Cref{fig:packing-inversion}.

\begin{figure}[btph]
    \centering
    \includegraphics{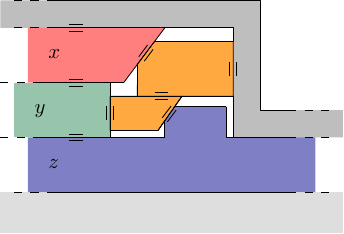}
    \hfill
    \includegraphics{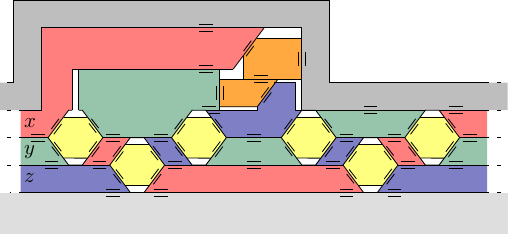}
    \caption{The idea of the addition gadget. Figure taken from~\cite{Westerdijk2024}.}
    \label{fig:packing-addition}
\end{figure}

We describe first the idea of the addition gadget.
Imagine that the red piece representing $x$ is pushed to the right.
That would correspond to an increase of the $x$ variable.
Then this pushes also the orange pieces down and to the right and eventually pushes the blue piece representing $z$ to the right by the same amount.
Similarly with the green piece representing $y$.

\begin{figure}[tbph]
    \centering
    \includegraphics{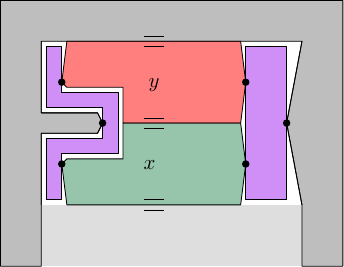}
    \hfill
    \includegraphics{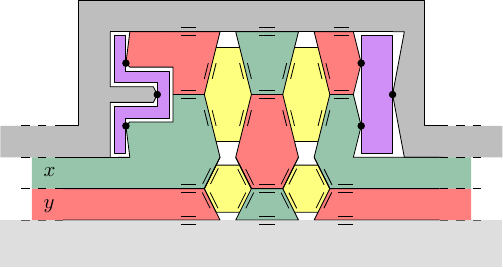}
    \caption{The idea of the inversion gadget. Figure taken from~\cite{Westerdijk2024}.}
    \label{fig:packing-inversion}
\end{figure}
Now, the inversion gadget is more intricate.
When the piece green piece representing $x$ is moved to the right then the two purple pieces rotate slightly,
which in turn pushes the red piece representing $y$ to the left.
A small geometric calculation reveals that the relation between the value $x$ and $y$ is precisely described by $x\cdot y = 1$.

\newpage
\section{Plane Graph Drawing Game}
\label{sec:GraphGame}
In this section, we consider the closely related problems \GraphInPolygon and \PartialDrawingExtensibility and define a \devilgame{} version of those.
We start with a formal definition of \GraphInPolygon as introduced by~\cite{LMM}.

\begin{definition}[\GraphInPolygon]
    In the \GraphInPolygon problem, we are given a planar graph $G$ and a polygonal region $R$ with some vertices of $G$ assigned to fixed positions on the boundary of $R$. We must decide whether $G$ admits a planar straight-line drawing in $R$ respecting the fixed vertices.
\end{definition}

Note that the edge of the graph 
$G$ are allowed to be drawn on the boundary of $R$ and also overlap with vertices of $G$.
The \PartialDrawingExtensibility problem was introduced by Patrignani's~\cite{P06} in 2006 and is defined as follows.

\begin{definition}[\PartialDrawingExtensibility]
    In the \PartialDrawingExtensibility problem, we are given a planar graph $G$ and a planar straight-line drawing $\mathcal{D}(G')$ of a subgraph $G'$ of $G$. We must decide whether $\mathcal{D}(G')$ admits an extension of it that is a planar straight-line drawing of $G$.
\end{definition}

We would like to highlight the similarities between \GraphInPolygon and \PartialDrawingExtensibility.
Any polygon can be modeled as a straight-line drawing of a graph.
However, in \GraphInPolygon edges may go through vertices of the polygon, whereas for \PartialDrawingExtensibility, the `polygon' is part of the graph.
In that case, this would clearly violate planarity.
We translate those algorithmic problems into their \devilgame{} variants.
We start by giving a formal definition of \GraphInPolygonGame{}, the devil's version of \GraphInPolygon.

\begin{definition}[\GraphInPolygonGame]
    We are given a planar graph $G$ and a polygonal region $R$ with some vertices of $G$ assigned to fixed positions on the boundary of $R$ and an ordered list $x_1, x_2, \dots, x_n$ of the $n$ non-fixed vertices. The two players (the \devil and the \human) alternate in drawing vertices in the graph (including straight-line edges to already-drawn neighbors) according to the given order, where the graph must stay planar. The first player who cannot draw a vertex without violating planarity loses, while the other player wins. If the entire graph is drawn, the human wins.
\end{definition}

We now give a formal definition of \PlanarExtensionGame{}, the game version of \PartialDrawingExtensibility.

\begin{definition}[\PlanarExtensionGame]
We are given a planar graph $G$, a planar straight-line drawing of a subgraph $G'$ in the plane, and
    a linear ordering on the vertices, not in $G'$.
    The human and the devil take turns in placing vertices of $G$ according to the given linear order.
    All edges must be drawn in a straight line fashion without intersections.
    That is, no two edges are allowed to share a point in their interior, any two vertices must be at distinct locations and no vertex is allowed to be in the interior of an edge.
    The game ends when the first player cannot place a vertex without creating an intersection.
    If the graph is drawn completely in the plane the human wins.
\end{definition}

Again, note that \GraphInPolygonGame and \PlanarExtensionGame differ.
In \GraphInPolygonGame, the graph may have edges and vertices on the polygon boundary.
On the other hand, in \PlanarExtensionGame, the boundaries of the sections drawn are part of the graph, so drawing more edges and vertices there would violate the planarity of the graph.

We first show that some planar versions of \fotrinv are \QR-complete in \Cref{sub:planarfotrinv}.
Then we show completeness of \GraphInPolygonGame in \Cref{sub:ClosedGraphGame}. 
At last, we show \QR-completeness of \PlanarExtensionGame in \Cref{sub:OpenGraphGame}
We emphasize that both \GraphInPolygonGame and \PlanarExtensionGame are \QR-complete.
The \PartialDrawingExtensibility is not known to be \ER-complete.

\subsection{\planarfotrinv}
\label{sub:planarfotrinv}
In this section, we will show that \planarfotrinv is \QR-complete.
Both of our proofs, the proof showing \GraphInPolygonGame is \QR-complete and the proof showing \PlanarExtensionGame is \QR-complete, reduce from \planarfotrinv.
However, the two proofs require slightly different versions of \planarfotrinv: one with only equalities ($=$) and one with only inequalities $(\leq)$.
Normally, a formula with access to $\leq$ can clearly model any formula using only $=$, this may not hold in the planar case.
Splitting any constraint of the form $x + y = z$ into two constraints $x + y \leq z$ and $z \leq x + y$ may violate planarity, as this introduces more edges in the variable-constraint graph.
In this section, we prove that both variants are \QR-hard.

\begin{definition}[\planarfotrinv]
    \label{def:planarfotrinv} In the problem \planarfotrinv, we are given a quantified formula $\exists x_1\forall x_2 \dots Q_n x_n : \Phi(x_1,\dots,x_n)$, where $\Phi$ consists of a conjunction between a set of equations and inequalities of the form
    $x + y \preceq z\text{, } x \cdot y \preceq 1\text{, for } x, y, z \in \{x_1,\ldots, x_n\}.$
    We define two variations of \planarfotrinv.
    In \planarfotrinvequal, we let $\preceq$ equal $=$, so the formulas only have equalities.
    However, in \planarfotrinvinequal, $\preceq$ equals either $\leq$ or $\geq$, so the formula only has inequalities.
    When we refer to \planarfotrinv without specifying the variant, the claim holds for both variants.
    
    Furthermore, we require planarity of the \emph{variable-constraint incidence graph}, which is the bipartite graph that has a vertex for every variable and every constraint and an edge when a variable appears in a constraint. 
    The goal is to decide whether the system of equations has a solution where each existentially-quantified variable is restricted to lie in $[1/2,4]$ and each universally-quantified variable to lie in $[3/4, 1]$. 
\end{definition}

\begin{theorem}
    \planarfotrinv is \QR-complete.
\end{theorem}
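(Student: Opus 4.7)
I would reduce from \fotrinv, which is \QR-hard by \Cref{thm:FOTRINV}; membership in \QR is immediate since every \planarfotrinv instance is in particular an \fotrinv instance, which lies in \QR by the same theorem. The task is therefore to transform an arbitrary \fotrinv instance into an equivalent one whose variable-constraint incidence graph is planar, using only the constraint forms allowed in \planarfotrinv.

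My plan is to draw the incidence graph of the given \fotrinv instance in the plane tolerating crossings, and then replace each crossing by a constant-size planar crossover gadget. The gadget I propose introduces a fresh auxiliary variable $a$ together with copies $x', y'$ of the two crossing variables $x, y$, and imposes the three equalities
\[ x + y = a, \qquad x + y' = a, \qquad x' + y' = a. \]
Pairwise subtraction forces $y' = y$ and $x' = x$, so the gadget faithfully transmits both variables across the crossing. A direct drawing with the three constraint vertices $C_1, C_2, C_3$ in a vertical column and $a$ on one side yields a planar embedding in which the four external variables appear on the outer face in the interleaved cyclic order $y, x, y', x'$, which is exactly the order needed to glue the gadget in place of a crossing. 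The auxiliary variable and the copy variables are appended to the quantifier prefix as existentials at the innermost level (inserting a dummy universal first if needed to keep strict alternation); since each is uniquely determined by variables earlier in the prefix, this preserves the truth value. The extended existential range $[\tfrac12, 4]$ from \Cref{def:planarfotrinv} (compared to $[\tfrac12, 2]$ in \fotrinv) comfortably contains the values $a = x+y \in [1, 4]$ that the auxiliary variables can take.

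For \planarfotrinvequal the reduction is then complete. For \planarfotrinvinequal, I would additionally replace every equality $\alpha = \beta$ in the final formula (both those inherited from the original instance and those introduced by each crossover gadget) by the pair $\alpha \le \beta$ and $\alpha \ge \beta$. The two resulting constraint vertices share identical variable incidences, so they can be drawn immediately alongside each other in the planar embedding without creating any new crossings.

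The main obstacle I anticipate is verifying the crossover gadget carefully: not merely that its incidence graph is abstractly planar, but that the external variables really do appear in the interleaved cyclic order on the outer face, and that gluing many such gadgets one-by-one into the rest of the planar drawing does not inadvertently reintroduce crossings. A secondary, largely bookkeeping concern is managing the quantifier prefix so that strict alternation is preserved while each auxiliary variable appears after all the original variables it depends on.
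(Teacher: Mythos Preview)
Your approach to \planarfotrinvequal is essentially the paper's: the same three-addition crossover gadget (the paper writes $X+Y=Z$, $X+Y'=Z$, $X'+Y=Z$, a symmetric variant of your $x+y=a$, $x+y'=a$, $x'+y'=a$), the same placement of the fresh variables under an innermost existential block, and the same use of the enlarged existential range $[\tfrac12,4]$ to accommodate the sum variable.

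The gap is in your treatment of \planarfotrinvinequal. Your claim that replacing each equality constraint by a $\le$/$\ge$ pair preserves planarity because the two new constraint vertices ``share identical variable incidences, so they can be drawn immediately alongside each other'' is not correct in general. Adding a twin of a vertex need not preserve planarity: doubling one of the degree-$3$ vertices of $K_{2,3}$ yields $K_{3,3}$. Concretely, if two addition constraints in the already-planarized equality instance involve the same triple of variables---or, more generally, if the three neighbours $x,y,z$ of a constraint vertex $C$ are tied together by the rest of the graph so that no face of any planar embedding has all three on its boundary---then you cannot insert the twin $C'$ without creating a crossing. Nothing in your reduction rules out such configurations. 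The paper explicitly flags exactly this issue and handles the inequality variant by a different route: it splits every equality into the two inequalities \emph{before} planarizing, and then eliminates the resulting crossings with a dedicated inequality-only crossover gadget (built from four ``half-crossing'' sub-gadgets, taken from \cite{LMM}). To complete your argument you would need either that gadget or an independent proof that the specific planar graphs produced by your equality reduction always admit the vertex doubling.
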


To prove \planarfotrinv is \QR-complete, we must show that it is in \QR and that there exists a formula $\Psi$ with a planar variable-constraint incidence graph that is true iff $\Phi$ is true.
As such, the proof that \planaretrinv is \ER-hard (for both variations \cite{LMM, DKM}) carries over to our setting, as the only difference is that the number of quantifiers lifts it from \planaretrinv to \planarfotrinv and from \ER-hardness to \QR-hardness.
However, for clarity, we will provide an explicit proof \planarfotrinv is \QR-hard instead of simply referring to the existing proofs.

\begin{lemma}
    \planarfotrinv is in \QR.
\end{lemma}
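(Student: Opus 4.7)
The plan is to give a trivial polynomial-time reduction from \planarfotrinv to \fotr, which by definition places \planarfotrinv in \QR. The point is that an instance of \planarfotrinv is \emph{already} a sentence in the first-order theory of the reals, apart from the bounded-range restrictions on the quantifiers; we just need to internalize those bounds into the formula itself.

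Concretely, given an instance
\[ \exists x_1\, \forall x_2 \cdots Q_n x_n : \Phi(x_1,\dots,x_n) \]
of \planarfotrinv, where each existentially-quantified $x_i$ is meant to range over $[\tfrac12,4]$ and each universally-quantified $x_i$ over $[\tfrac34,1]$, I would construct the \fotr sentence obtained by replacing every existential block $\exists x_i$ with $\exists x_i \in \R$ conjoined with the bound $(x_i \ge \tfrac12)\wedge(x_i \le 4)$, and every universal block $\forall x_i$ with $\forall x_i \in \R$ guarded by the implication $\bigl((x_i \ge \tfrac34)\wedge(x_i\le 1)\bigr) \Rightarrow (\cdots)$. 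The matrix $\Phi$ remains a conjunction of polynomial equalities (resp.\ inequalities) of degree at most two, so it is directly a quantifier-free formula in the theory of real closed fields. The planarity requirement on the variable-constraint incidence graph plays no role in the reduction: it is a syntactic restriction on the input, not a semantic one, and \fotr{} simply ignores it.

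The reduction is clearly computable in polynomial (in fact, linear) time on a Turing machine, since it amounts to a local syntactic rewrite of each quantifier and the verbatim copying of $\Phi$. The resulting sentence is true iff the original \planarfotrinv{} instance is true, which establishes membership in \QR.

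There is essentially no obstacle here; the only thing to be careful about is that the bounds are encoded inside the formula rather than as external side conditions on the quantifiers, so that the resulting object is a genuine \fotr{} sentence under the definition given in the introduction. Once this is done, membership follows directly from the definition of \QR{} as the class of problems polynomial-time reducible to \fotr.
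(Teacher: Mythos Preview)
Your proposal is correct and takes essentially the same approach as the paper, which simply notes that an instance of \planarfotrinv{} is already (after internalizing the range bounds) a sentence in the first-order theory of the reals, so membership in \QR{} is immediate. Your version is just more explicit about the syntactic rewriting of the bounded quantifiers, which is fine.
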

\begin{proof}
    By definition, both \planarfotrinvequal and \planarfotrinvinequal have a FOTR-formula as input and can thus be decided in \QR.
\end{proof}

We now prove both variants of \planarfotrinv are \QR-hard.

\begin{lemma}
    \planarfotrinvequal is \QR-hard.
\end{lemma}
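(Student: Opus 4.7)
The plan is to reduce from \fotrinv, which is \QR-hard by \Cref{thm:FOTRINV}, following the classical planarization strategy used for \planaretrinv~\cite{LMM} and lifting it to the quantified setting. Given an instance $\Phi$ with strictly alternating prefix $\exists x_1 \forall x_2 \cdots Q_n x_n$ and a conjunction of constraints of the forms $x=1$, $x+y=z$, and $x\cdot y=1$, I will construct a \planarfotrinvequal instance $\Psi$ of polynomial size equivalent to $\Phi$.

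First I would embed the variable-constraint incidence graph $H$ of $\Phi$ in the plane with straight edges, producing a polynomial number of crossings. To remove crossings, each variable $x_i$ is replaced by a family of copies $x_i^{(1)},\ldots,x_i^{(k_i)}$, one per constraint incident to $x_i$, laid out along a planar routing of the edges of $H$. Consecutive copies are tied together by an equality $x_i^{(j)} = x_i^{(j+1)}$, which I encode within the \fotrinv fragment as the two constraints $x_i^{(j)}\cdot t_{i,j}=1$ and $x_i^{(j+1)}\cdot t_{i,j}=1$ using a fresh intermediate variable $t_{i,j}$. This contributes only the short bipartite path $x_i^{(j)} - C_1 - t_{i,j} - C_2 - x_i^{(j+1)}$, which can be drawn planarly inside a narrow tube along the routed segment. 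Since every gadget uses only inversion and addition equalities, $\Psi$ stays in the $=$-fragment.

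For the quantifier prefix, every copy and every intermediate $t_{i,j}$ will be existentially quantified: copies of an existential $x_i$ are simply additional existential witnesses; copies of a universal $x_i$ are existentially quantified immediately after $x_i$, since they are forced to equal $x_i$. To preserve the strict $\exists\forall$ alternation required by the definition of \fotrinv, I interleave each block of added existentials with dummy universal variables that do not appear in any constraint and are therefore inert. The ranges are compatible: copies of a universal $x_i\in[3/4,1]$ lie in $[3/4,1]\subseteq[1/2,4]$, copies of an existential $x_i\in[1/2,2]$ lie in $[1/2,2]\subseteq[1/2,4]$, and the reciprocals $t_{i,j}$ stay in $[1/2,4/3]\subseteq[1/2,4]$ in every case, as required by the \planarfotrinv range conventions.

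The main obstacle is to keep the planarity invariant globally while all chains are attached simultaneously, so that the equality chains of distinct variables do not reintroduce crossings. I would handle this exactly as in the existing \planaretrinv reduction: fix a straight-line drawing of $H$, subdivide each edge once per crossing it participates in, and reroute along a non-crossing curve arbitrarily close to the original edges, attaching each bipartite equality gadget inside an $\varepsilon$-tube of the corresponding routed segment; this produces a planar bipartite graph on polynomially many vertices. It then follows by construction that $\Phi$ is satisfiable if and only if $\Psi$ is, establishing \QR-hardness of \planarfotrinvequal.
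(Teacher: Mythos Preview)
Your reduction has a genuine gap in the planarization step. Replacing each variable by a chain of copies tied together with inversion-based equality gadgets $x_i^{(j)}\cdot t_{i,j}=1,\ x_i^{(j+1)}\cdot t_{i,j}=1$ only \emph{subdivides} the edges of the incidence graph $H$: in the new incidence graph each variable-to-constraint edge becomes a path $x_i^{(j)} - C_1 - t_{i,j} - C_2 - x_i^{(j+1)} - \text{(constraint)}$. But subdivision preserves non-planarity. If $H$ contains, say, a $K_{3,3}$ minor, then so does every subdivision of $H$, and no ``rerouting along a non-crossing curve arbitrarily close to the original edges'' can exist --- that is precisely what it means for the graph to be non-planar. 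Your final paragraph seems to rely on the subdivision vertices at a crossing being identifiable or separable by a small perturbation; neither is true, and the reference to~\cite{LMM} does not help, since those reductions do not work by subdivision alone.

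What is actually needed, and what the paper supplies, is a \emph{crossing gadget}: a small planar piece inserted at each crossing point that lets two independent values pass through each other. The paper (following~\cite{DKM}) uses three fresh existential variables $X',Y',Z$ and the three addition constraints $X+Y=Z$, $X+Y'=Z$, $X'+Y=Z$, which force $X=X'$ and $Y=Y'$ while admitting a planar embedding with $X,X'$ on one pair of opposite sides and $Y,Y'$ on the other. The value $Z=X+Y$ lies in $[1,4]\subseteq[1/2,4]$, which is exactly why the existential range in \planarfotrinv is widened to $[1/2,4]$. Your handling of the quantifier prefix (appending the fresh existentials at the end, padding with inert universals to restore strict alternation) is fine and matches the paper; only the crossing mechanism needs to be replaced.
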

\begin{proof}
We present the construction from \cite{DKM}.
Consider an instance 
$\exists x_1\forall x_2 \dots Q_n x_n: \Phi$ of \fotrinv. 
Let $G$ be some embedding of the variable-constraint incidence graph $G(\Phi)$ of $\Phi$ in $\R^2$. 
We show that there must exist an embedding $H$ of $G(\Phi)$.
Suppose that $G$ is not crossing-free and consider a pair of crossing edges.
Let $X$ and $Y$ denote the variables corresponding to (one endpoint of) these edges as in~\cref{fig:crossing}. 

\begin{figure}[htb]
	\centering   \includegraphics{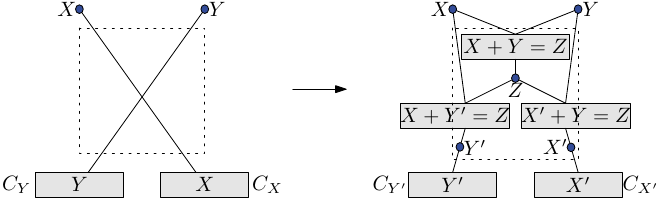}
	\caption{Eliminating crossings. Picture from \cite{DKM}.}   \label{fig:crossing}
\end{figure}

We introduce three new existential variables $X',Y',Z$ and three constraints: 
$X + Y =  Z$, $ X + Y' = Z$, and $X' + Y = Z$.
Observe that these constraints ensure that $X = X'$ and $Y=Y'$.
The new existentially-quantified variables will be added to a (new) $\exists$-quantifier at the end of the quantifier part of the \planarfotrinv-formula.
  
The gadget can be embedded in an arbitrarily small area, so it need not create new crossings.
As the embedding of $G$ can be modified so that the new 
 incidence graph $G'$ has strictly fewer crossings:
$G'$ loses the considered crossing and no new crossing is introduced.
We repeat this procedure until the incidence 
graph of the obtained formula is planar. 
Finally, note that 
$ 1 \leq Z = X + Y \leq 2 + 2=4 $ whenever $1/2 \leq X,Y \leq 2$, and the number of new variables and constraints is polynomial in $|\Phi|$, since the number of variables in each constraint in \fotrinv is at most three.
We emphasize that all three new variables are existentially quantified and will thus be in the range $[1/2, 4]$.
As the range of existentially-quantified variables is larger than the range of universally-quantified variables, this will never cause the new variables to be out of range, even if the constraint $X+Y=Z$ involves universally-quantified variables.
However, for clarity, we add constraints that for each existentially quantified variable $x_i$ in $\Phi$, we add a restriction $\frac{1}{2} \leq x \leq 2$.
That way, it will never be possible to pick a value for $x_i$ that changes the result of $\Psi$, while the auxiliary variables may still be in the range $[1/2, 4]$.
This yields a planar embedding of $G(\Phi)$ and thus proves \planarfotrinvequal is \QR-hard.
\end{proof}

\begin{lemma}
    \planarfotrinvinequal is \QR-hard
\end{lemma}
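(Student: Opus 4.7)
The plan is to follow the structure of the \planarfotrinvequal proof above, replacing the 3-equality crossing-elimination gadget by a 4-inequality one. Starting from a general \fotrinv instance $\Phi$, I first rewrite every equality constraint $x+y=z$ as the conjunction $x+y\leq z \,\wedge\, x+y\geq z$, and similarly $xy=1$ as $xy\leq 1 \,\wedge\, xy\geq 1$. The resulting formula $\Phi'$ is logically equivalent to $\Phi$ and uses only inequality constraints of the required form, but its variable-constraint incidence graph is generally non-planar.

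The core step is then to eliminate crossings one at a time, exactly as in \Cref{fig:crossing}. Given a crossing between an edge incident to variable $X$ and an edge incident to variable $Y$, I introduce three new existentially-quantified auxiliary variables $X',Y',Z$ together with the following four inequality constraints:
\[
X+Y\leq Z,\qquad X'+Y\geq Z,\qquad X+Y'\geq Z,\qquad X'+Y'\leq Z.
\]
Combining $X+Y\leq Z\leq X'+Y$ yields $X\leq X'$; combining $X+Y\leq Z\leq X+Y'$ yields $Y\leq Y'$; combining $X'+Y'\leq Z\leq X+Y'$ yields $X'\leq X$; and combining $X'+Y'\leq Z\leq X'+Y$ yields $Y'\leq Y$. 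So the gadget forces $X=X'$ and $Y=Y'$; conversely, any assignment with $X=X'$, $Y=Y'$ extends to a satisfying assignment by choosing $Z=X+Y$. I then reroute one of the two crossing edges through $X'$ and the other through $Y'$, exactly as in the equality case.

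I would next verify that the bipartite incidence graph of the gadget admits a planar embedding with $X, X', Y, Y'$ on its outer face. Concretely, place $Z$ at the origin, the four constraint vertices at the corners of a small square around $Z$, and $X, X', Y, Y'$ on the four outer sides between consecutive pairs of constraint vertices; a direct check shows that all twelve edges can be drawn as straight segments without intersection. Consequently the gadget can be inserted in an arbitrarily small neighborhood of the crossing without creating any new crossings, and iterating strictly decreases the total crossing number until the incidence graph is planar.

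The bookkeeping for ranges mirrors the equality case: since $X,Y\in[\tfrac12,2]$ forces $X+Y\in[1,4]$, the new auxiliaries $X',Y',Z$ lie in the existential range $[\tfrac12,4]$, and the same remark about clamping the original existentials to $[\tfrac12,2]$ applies verbatim. The main obstacle is the gadget design itself: the naive replacement of a single equality by two inequalities on the same three variables is not locally planar (splitting a degree-three constraint vertex in place generally forces a crossing), and three one-sided inequalities are not enough to force both $X=X'$ and $Y=Y'$, so a symmetric four-constraint design with an internal variable $Z$ is required. Once that gadget is in hand the rest of the argument is essentially mechanical.
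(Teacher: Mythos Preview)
Your argument is correct and follows the same overall architecture as the paper: reduce from \fotrinv, split each equality into two inequalities, then remove crossings one at a time with a local gadget that manufactures copies $X'=X$ and $Y'=Y$ using only addition inequalities, with the new variables appended under a trailing existential block.

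Where you diverge is in the crossing gadget itself. The paper invokes the construction of~\cite{LMM}, which stitches together four \emph{half-crossing} subgadgets (see \Cref{fig:crossing2}); each half-crossing has its own auxiliary variables and constraints, so the full gadget is noticeably heavier than yours. Your four-constraint design with a single hub variable $Z$ is a clean alternative: the incidence graph is the wheel $Z$--$c_1$--$c_2$--$c_3$--$c_4$ with $X,Y,X',Y'$ sitting on the outer cycle, which is manifestly planar with the four variable vertices on the outer face in the alternating order $X,Y,X',Y'$ required for rerouting. The logical check that the four inequalities force $X=X'$ and $Y=Y'$ is exactly as you wrote it. The trade-off is that the paper's gadget is already in the literature with a picture proof, whereas yours requires the small embedding verification you sketched; in exchange you get a lighter gadget (three new variables, four constraints) and an argument that is self-contained rather than by citation.
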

\begin{proof}
    We present the construction from \cite{LMM}.
    Consider an instance 
    $\left (\exists \; x_1 \forall x_2\ldots Q x_n \right )\colon \Phi(x_1, \dots, x_n)$ of \fotrinv.
    First, we create a new quantifier-free formula $\Psi$, which is equivalent to $\Phi$ but replaces all cases of $x + y = z$ in $\Phi$ with two new constraints: $x + y \leq z$ and $x + y \geq z$.
    Let $\mathcal{D}$ be some straight-line embedding of the constraint-variable graph $G(\Psi)$ in $\mathbb{R}^2$.
    This embedding may have many crossings, but we assume that no three edges cross in the same point, the only points that lie on an edge are its endpoints, no edge self-intersects.
    Now, we only need to eliminate these crossings.
    For this, we use the same gadget from \cite{LMM}, as depicted in \cref{fig:crossing2}.
    For any crossing between two edges incident to variables $x$ and $y$, this gadget builds two pairs of existentially-quantified variables $x, x'$ and $y, y'$ such that $x = x'$ and $y=y'$.
    To add these variables to $\Psi$, we add a $\exists$-clause to the end of $\Psi$ which will bind the new variables.
    (Note that every edge is incident to exactly a single variable, as all edges are between one variable and one constraint in the variable-constraint graph.)
    The gadget never has to cause new intersections, as it can be made arbitrarily small, while it always eliminates one crossing.
    As there are at most $O(n^2)$ intersections in $\mathcal{D}$ and each gadget adds a constant number of variables and constraints, we get an instance of \planarfotrinv with $O(n^3)$ constraints and variables.
    We have now found a polynomial-time reduction from \fotrinv to \planarfotrinv, proving the lemma.
\begin{figure}[htb]
	\centering   \includegraphics{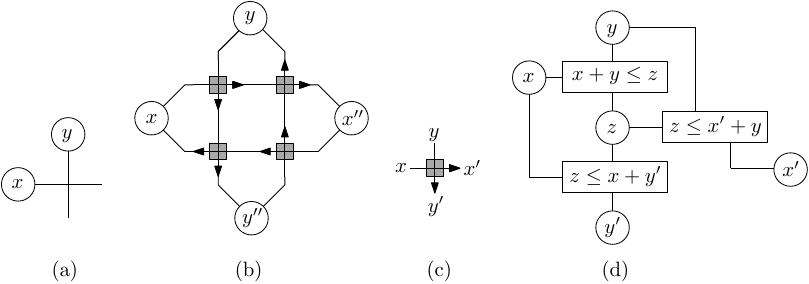}
	\caption{(a) A crossing. (b) A crossing gadget containing four half-crossing gadgets. (c) A half-crossing gadget pictorially. (d) The constraint-variable graph of the half-crossing gadget. Picture from \cite{LMM}.}   \label{fig:crossing2}
\end{figure}
\end{proof}

\subsection{\GraphInPolygonGame}
\label{sub:ClosedGraphGame}

Now, we are ready to prove the first theorem of this section by proving \GraphInPolygonGame is \QR-complete.
Compared to \cite{LMM}, the challenge lies in introducing alternating quantifiers, whereas the previous result only had an existentially-quantified formula.
Otherwise, most of the proof is identical to theirs.
For those parts, we include their write-up here with minor modifications.
Notably, the moves the \devil make represent the universally-quantified variables, whereas the quantifier alternations are modeled by the two players taking turns.
If the \human wants to win the game, they need to have a winning strategy for any move the \devil may make.
Otherwise, the \devil can pick the move again which \human cannot win.
This corresponds to the role of the $\forall$-quantifier in a formula;
for any value of the bound variables, the formula must stay true.

\begin{lemma}
    \GraphInPolygonGame is in \QR.
\end{lemma}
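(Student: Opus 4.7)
The plan is to invoke the machine model characterization established in \Cref{thm:MachineModel}. Under this characterization, it suffices to exhibit a polynomial-time real RAM algorithm $A$ such that membership in \GraphInPolygonGame can be expressed by an alternating sentence $\exists x_1 \, \forall y_1 \, \dots \, \exists x_k \, \forall y_k : A(x_1,y_1,\dots,x_k,y_k,w)=1$, with $n$ and $k$ polynomial in $|w|$. Each quantified variable will represent the coordinates in $\R^2$ of one vertex placement by either the human ($\exists$) or the devil ($\forall$), in the order prescribed by the input. If the number of moves has the wrong parity with respect to the required $\exists\forall$-pattern, I pad the sentence with harmless dummy variables at the front or back whose value is ignored by $A$.

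Next, I would describe $A$. On input $(x_1,y_1,\dots,x_k,y_k,w)$, the algorithm parses $w$ to recover the planar graph $G$, the polygonal region $R$, the fixed vertex positions on $\partial R$, and the linear order on the non-fixed vertices. It then simulates the game turn by turn: at step $i$, interpret the appropriate coordinate block as the placement of the $i$-th ordered vertex, and test whether this placement is \emph{legal}, meaning the vertex lies in $R$, no two already-placed vertices coincide, and every new straight-line edge to an already-placed neighbour avoids the interiors of previously drawn edges and previously placed vertices. These checks are standard computational geometry primitives (point-in-polygon, segment--segment intersection, segment--point incidence) that run in polynomial time on a real RAM using only the allowed arithmetic and comparisons. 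If the first illegal move encountered belongs to the human, $A$ returns $0$; if it belongs to the devil, or if all moves are legal, $A$ returns $1$.

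Finally, I would verify that this encoding exactly captures ``the human has a winning strategy.'' The key observation is that the semantics of the alternating quantifiers automatically handle the two corner cases of the game. If at some point the human cannot legally move, then no choice of their existential variable satisfies the inner subformula, so the outer $\forall$-player has a witnessing devil move and the sentence is false, correctly reporting a loss. Conversely, if at some point the devil cannot legally move, then the devil's universal variable cannot defeat the remaining existential subformula (since $A$ returns $1$ as soon as the first illegal move is the devil's), correctly reporting a win. When all moves are legal, $A$ outputs $1$ and the human wins by definition of the game. The main subtlety, and the only thing worth checking carefully, is the parity/padding of the quantifier blocks and making sure the ``stop at the first illegal move'' bookkeeping is compatible with the fact that later (nonsense) coordinates still appear under quantifiers; both are easily handled since $A$ can simply ignore all data after the first illegal move. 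Once $A$ is seen to run in polynomial time and the sentence is in the required form, \Cref{thm:MachineModel} gives \GraphInPolygonGame $\in \QR$.
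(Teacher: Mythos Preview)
Your proposal is correct and follows essentially the same approach as the paper: invoke \Cref{thm:MachineModel}, encode each player's placement as a block of real variables under the appropriate quantifier, and let the verifier simulate the game turn by turn, declaring the winner according to the first illegal move (or the human if all moves are legal). You are somewhat more explicit than the paper about the geometric primitives, the quantifier-parity padding, and the ``ignore everything after the first illegal move'' bookkeeping, but these are elaborations of the same argument rather than a different route.
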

\begin{proof}
    Let $I$ be a \GraphInPolygonGame instance, which consist of a planar graph $G$ and polygonal domain $R$ with some vertices of $G$ identified with vertices of $R$ and a linear order on all the vertices of $G$ not on $R$.
    We use the machine model to show \QR-membership.
    We use the variables $x_i \in \R^2$ and $y_i \in \R^2$ to describe the vertices specified by the human and the devil respectively.
    The algorithm then checks for every newly added point that the graph drawn so far is planar.
    If one of the $x_i$ or $y_i$ does not adhere to this then the human, respectively the devil, is declared to lose.
    Otherwise, if all points are placed correctly the human wins.
    As the machine model can verify in polynomial time whether the drawing of $G$ is planar~\cite{LMM}, this concludes the proof that \GraphInPolygonGame is in \QR.
\end{proof}
\begin{lemma}
    \GraphInPolygonGame is \QR-hard.
\end{lemma}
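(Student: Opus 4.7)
The plan is to reduce from \planarfotrinvequal, which was shown to be \QR-hard in \Cref{sub:planarfotrinv}. Given a formula $\exists x_1 \forall x_2 \dots Q_n x_n : \Phi$ with planar variable-constraint incidence graph and constraints of the form $x+y=z$ and $x\cdot y=1$, I construct a \GraphInPolygonGame instance $(G,R,\sigma)$, where $\sigma$ is the linear order on non-fixed vertices, so that the \human wins the game if and only if the formula is true. The skeleton of the geometric construction will be reused almost verbatim from the \ER-hardness proof of \GraphInPolygon in~\cite{LMM}: each variable $x_i$ corresponds to a \emph{slider vertex} $v_i$ confined to a small rectangular slot of the polygonal region $R$, whose one-dimensional horizontal position encodes the value of $x_i$; addition and inversion gadgets are embedded according to the planar variable-constraint incidence graph, and their straight-line edges force the sliders to satisfy the desired algebraic identities. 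The widths of the slots are set to the ranges $[\tfrac{1}{2},4]$ or $[\tfrac{3}{4},1]$ as prescribed by \planarfotrinvequal.

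The main new ingredient is the choice of the linear order $\sigma$. I would place the sliders first, in the order dictated by the quantifier prefix: $v_1$ first (a \human move, corresponding to $\exists x_1$), then $v_2$ (a \devil move, corresponding to $\forall x_2$), and so on. Since the player associated to a $\forall$-quantifier is the \devil, this exactly matches the semantics: the \human wins against every \devil strategy precisely when the formula is true. After the sliders, I append the remaining (forced) vertices of the gadgets in an arbitrary order, padded by a few dummy vertices with unique forced positions so that the parities of the \human and \devil moves remain consistent with the quantifier blocks. The sliders are large enough and placed at the "front" of the game that once a slider vertex is placed, its position cannot be changed and every later edge involving it is forced into one of finitely many isotopy classes by planarity.

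The correctness argument has two directions. If $\Phi$ is true, the \human plays the winning strategy of assigning a correct witness value to each $\exists$-slider given the previous \devil moves, and then places the forced gadget vertices at their intended positions; the ER-hardness analysis of~\cite{LMM} guarantees that the resulting straight-line drawing is planar and that all remaining vertices actually admit a legal placement. Conversely, if $\Phi$ is false, then whatever the \human plays, the \devil can choose a value for some $\forall$-slider so that at least one constraint gadget cannot be completed, which will force a non-planar configuration by the time the corresponding gadget vertex must be drawn, so the \human loses.

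The main obstacle is robustness against "off-path" play: a malicious player could put a slider at an extreme of its slot, or put a gadget vertex in a topologically correct but geometrically suboptimal position, trying to derail later gadgets without immediately creating a crossing. The key lemma to establish, as in the \ER-hardness reductions, is that any deviation from the intended placement inside a slot or gadget is punishable, in the sense that the opponent can then force a crossing on one of the forced vertices that come later in $\sigma$. I expect this to follow from the rigidity of the \GraphInPolygon gadgets of~\cite{LMM}: the slots and gadget boundaries are topologically constraining enough that the only drawings compatible with planarity are exactly those corresponding to feasible variable assignments. A secondary, but purely bookkeeping, difficulty is making sure that the parities of "forced" moves inside each gadget do not accidentally hand a free move to the wrong player between two slider placements; this is resolved by interleaving dummy vertices with unique forced positions, so that the move associated to each quantifier $Q_i x_i$ is always made by the correct player.
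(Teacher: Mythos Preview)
Your overall plan is close to the paper's, but there is a genuine gap in the ``robustness against off-path play'' step.

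You assert that the non-slider gadget vertices are ``forced'' and that the ``rigidity of the \GraphInPolygon gadgets of~\cite{LMM}'' will make any deviation by the \devil punishable. This is not true for the gadgets actually used. The copy, splitter, and turn gadgets from~\cite{LMM} each introduce intermediate vertices (e.g.\ the $z_1,z_2$ in a copy gadget enforcing $x\le z_1\le y$ and $x\ge z_2\ge y$) whose positions are \emph{not} uniquely determined: they range over a nontrivial interval, and the human must pick them compatibly with the already-placed sliders. If the \devil is allowed to place any of these intermediate vertices, it can choose a value of $z_1$ or $z_2$ that makes the wire inconsistent and guarantees a later crossing, even when the formula is true. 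No amount of ``rigidity'' saves you here because there is none.

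The paper's fix is different from what you propose: it introduces an explicit \emph{padding gadget}---a small polygonal pocket containing a single isolated vertex that the \devil must place there and nowhere else---and interleaves one padding move for the \devil between every pair of consecutive gadget vertices. Thus, after the slider phase, \emph{every} gadget vertex is placed by the \human, while the \devil spends all remaining turns harmlessly inside padding pockets. Your ``dummy vertices with unique forced positions'' is essentially the right object, but you deploy it only to fix parities around the slider phase; you need to use it systematically to take the \devil out of the gadget phase entirely. A secondary point: the paper reduces from \planarfotrinvinequal rather than \planarfotrinvequal, precisely because the \cite{LMM} gadgets encode the four inequality constraints $x+y\le z$, $x+y\ge z$, $xy\le 1$, $xy\ge 1$ directly; starting from equalities would force you to duplicate gadgets and re-argue planarity.
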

\begin{proof}
    To prove that the problem is \QR-hard we give a reduction from \planarfotrinvinequal.
    Let $I = (\exists x_1 \forall x_2 \dots Q x_n) : \Phi (x_1, \dots, x_n)$ be an instance of \planarfotrinvinequal.
    We will build an instance $J$ of \GraphInPolygonGame such that $J$ admits an affirmative answer if and only if $I$  is a true sentence. 
    The idea of the reduction is to construct a gadget to
    represent variables, and to enforce the addition and inversion inequalities, i.e., $x + y \leq z ,  x + y \geq z, x\cdot y \leq 1,  x\cdot y \geq 1$. 
    We also need gadgets to copy and replicate
    variables ---``wires'' and ``splitters'' as conventionally used in reductions.
    Thereafter, we will
    describe how to combine those gadgets to obtain an instance $J$ of \GraphInPolygonGame.

    \paragraph{Encoding Variables.}
    We will encode the value of a variable in $[1/2,4]$ as the position of a vertex that is constrained to lie on a line segment of length $3.5$,   
    which we call a \emph{variable segment}.
    One end of a variable segment encodes the value $\frac{1}{2}$, the other end encodes the value $4$, and linear interpolation fills in the values between.
    Figure~\ref{fig:segment-end} shows one side of the construction that forces a vertex to lie on a variable segment. 
    The other side is similar.
    Note that, regardless of whether the \human or the \devil is making a move, this gadget ensures they must place the new point on the variable segment, as the \devil must always place vertices in legal positions, too.
    The game begins by the \human and the \devil repeatedly choosing the value for some variable.
    Note that the variables that the \devil must place are restricted to the interval $[\frac{3}{4}, 1]$.
    In principle, the exact bounds of the variable segment are arbitrary, so we alter the variable segment construction to restrict the point from being placed beyond $\frac{3}{4}$ and $1$ instead.
    
    By slight abuse of notation, we will identify
    a variable and the vertex representing it by the same name, if there is no ambiguity.
    For the description of the remaining gadgets, our figures will show variable segments (in green) without showing the polygonal gadgets that create them.

    \paragraph{Padding gadget.}
    After the values of all variables have been set, we use a series of gadgets to verify whether all constraints have been met.
    However, most of the constructions within these gadgets have to be performed by the \human, as otherwise the \devil could pick adversarial inputs that break the constructions.
    So, for simplicity, we want to assume the \human chooses the location of all the points in those gadgets.
    However, as the \human and \devil have to alternately place a point in the plane, we need the construction of a `\padding' gadget, a gadget whose purpose is to be disjoint from the rest of the problem, that allows the game to `skip' the turn of a player.
    
    The \padding gadget needs to withstand adversarial inputs, without them ever affecting other decisions.
    As such, the \padding gadget consists of a small pocket within the polygon, as shown in Figure~\ref{fig:trash}.
    When a player is asked to place a vertex in a \padding gadget, the only possible positions that yield a planar straight-line drawing can never be outside the gadget.
    As such, this gadget allows future gadgets to `skip' the turn of the \devil.
    
    \begin{figure}[htb]
	\centering   \includegraphics{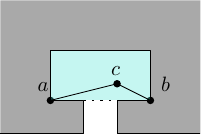}
	\caption{A \padding gadget. Note that $c$ must be within the blue region and can never lie outside of the gadget. We ensure no other point has to lie within this region, meaning $c$ can never threaten the linearity of $J$. }   \label{fig:trash}
    \end{figure}

    \begin{figure}[htb]
	\centering   \includegraphics{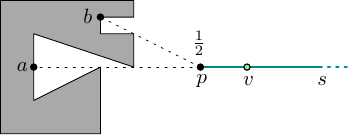}
	\caption{One half of the construction of a variable segment. The point $v$ must have an edge to $a$ and $b$. The edge to $a$ ensures it is on the line extension of the variable segment, whereas the edge to $b$ ensures that $\frac{1}{2}$ is the extremal point where $(v, b)$ stays within the polygon. On the other side of the variable segment, there will be a similar gadget to ensure $4$ is an extremal point, too. Picture inspired by \cite{LMM}.}   \label{fig:segment-end}
\end{figure}

    \paragraph{Copy gadget.}
    Given a variable segment for a variable $x$, we will need to transmit its value along a so-called ``wire''  to other locations in the plane. 
    We do this using a copy gadget in which we construct a variable segment for a new variable $y$ and enforce $x=y$. 
    We show how to construct a gadget that ensures $x \leq x'$ for a new variable $x'$, and then combine four such gadgets, enforcing
    This implies $x=y$.

    The gadget enforcing $x \leq x'$  is shown at the left of 
    Figure~\ref{fig:FullCopy}. 
    It consists of two parallel variable segments.
    In general, these two segments need not be horizontally aligned.
    In the graph we connect the corresponding vertices by an edge.
    The left and the right variables are encoded in opposite ways, i.e., $x$ increases as the vertex moves up and $x'$ increases as the corresponding vertex goes down.
    We place a hole of the polygonal region (shaded in the figure) with a vertex at the intersection point of the lines joining the top of one variable segment to the bottom of the other. The hole must be large enough that the edge from $x$ to $x'$ can only be drawn to one side of the hole.
    An argument about similar triangles, or the ``intercept theorem'', also known as Thales' theorem, implies $x \leq x'$. 

    We combine four of these gadgets to construct our copy gadget,  as illustrated on the right of Figure~\ref{fig:FullCopy}.
    Here, we ensure that $x=y$, by encoding the constraints $x \leq z_1 \leq y$ and $x \geq z_2 \geq y$.

    \begin{figure}[htb]
	\centering   
        \includegraphics[page=2]{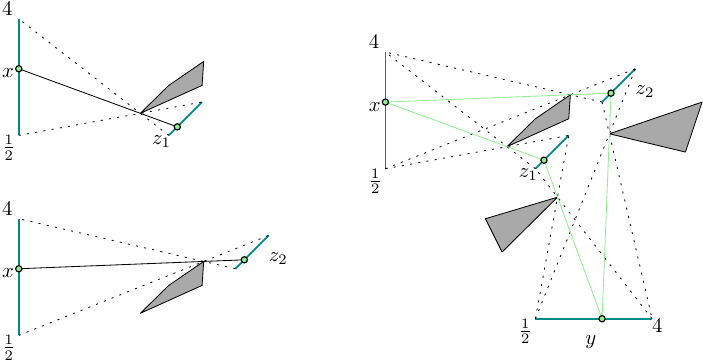}
	\caption{Left: a gadget ensuring $x \leq x'$. Right: a gadget ensuring $x = y$ by using the four subgadgets $x \leq z_1$, $z_1 \leq y$, $x \geq z_2$, and $z_2 \geq y$. Picture inspired by \cite{LMM}.}   \label{fig:FullCopy} 
    \end{figure}

    We note that in our construction, the \human has to set the values of $z_1$ and $z_2$ before the value of either $y$ is chosen (assuming $x$ is the value to be copied).
    Otherwise, the \devil could easily make $x = y$ impossible, by either setting $y$ to a different value, or setting $z_1$ or $z_2$ to a value where $x=y$ is impossible.
    For simplicity, we will simply let the \human set the values of all these variables.
    To give the \human the capacity to set these values after $x$ has been assigned, we assign the \devil points within \padding gadgets until the copy gadget is completed.

\paragraph{Splitter gadget.}
Since a single variable may appear in several constraints,
we may need to split a wire into two wires, each holding the correct value of the same variable.
Figure~\ref{fig:Duplicate} shows a gadget to split
the variable $x$ to variables $y_1$ and $y_2$.  
The gadget consists of two copy gadgets sharing the variable segment for $x$.
We can construct the two copy gadgets to avoid any intersections between them.
As with the copy gadget, we require the addition of some \padding gadgets, to ensure the \human can complete the splitter gadget.

    \begin{figure}[htb]
	\centering   \includegraphics{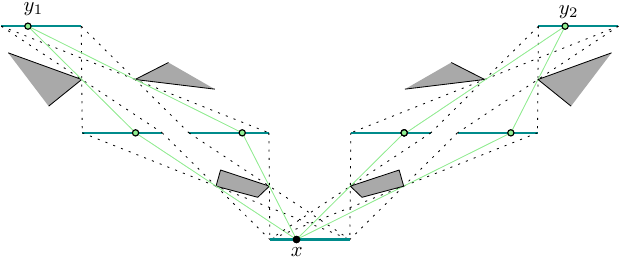}
	\caption{A splitter gadget consisting of two copy gadgets connecting to the same variable segment of $x$. Picture inspired by \cite{LMM}.}   \label{fig:Duplicate}
    \end{figure}

\paragraph{Turn gadget.}
We need to encode a variable both as a vertical and
as a horizontal variable segment. 
To transform one into the other we use a turn gadget, which you can think of as a copy gadget between variables on variable segments with perpendicular orientations.

The key idea is to construct
two diagonal variable segments for variables $z_1$ and $z_2$, and then transfer the value of the vertical variable segment to the horizontal variable segment using $z_1,z_2$. This is in fact very similar to the copy gadget, except that the intermediate variable segments are placed on a line of slope 1.
We do not know if it is possible to enforce the constraint $x\leq z$ directly. 
However, it is sufficient to enforce $x \leq f(z)$ for some  function $f$.
See the left side of \Cref{fig:turn}.
Interestingly, we don't even know the function $f$. However, we do know that $f$ is monotone and we can construct 
another gadget enforcing $y \geq f(z)$,
for the same function $f$,
by making another copy of the first gadget reflected through the line of the variable segment for $z$.

Combining four such gadgets, as on the right of \Cref{fig:turn}, 
 yields the following inequalities:
 $x \leq f_1(z_1), f_1(z_1) \leq y, y \leq f_2(z_2), f_2(z_2) \leq x$.
This implies $x = y$.

Again, as with the copy gadget, we add \padding gadgets so the \human can fix the value of $y$ after the value for $x$ is chosen.

    \begin{figure}[htb]
	\centering   
        \includegraphics[page=1]{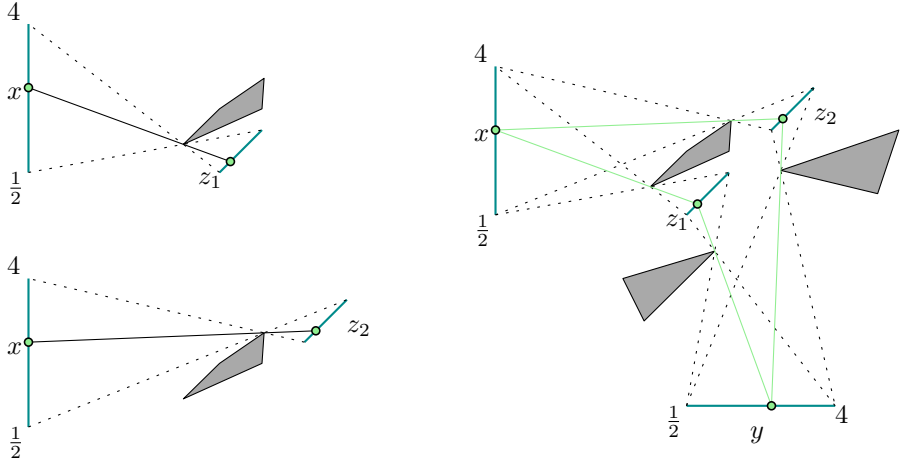}
	\caption{Top left: a gadget encoding $x \leq (f(z)$. Bottom left: a gadget encoding $x \geq f(z)$. Right: a turn gadget that ensures $x=y$ despite the two variables lying on variable segments of perpendicular orientations. Notice the similarities with the copy gadget in the construction.
    }   \label{fig:turn} 
    \end{figure}

    \paragraph{Addition {gadget}.}
    The gadget to enforce
    $x + y \geq z$ is depicted in
    \Cref{fig:Addition}. 
    Important for correctness is that the gaps between the dotted auxiliary lines have equal lengths.
    This proof is inspired by the following thought experiment.
    We assume $z$ to always to be the maximum 
    possible value.
    Furthermore, we assume that if we 
    fix the position of $y$, but move $x$ some distance $d$ to the
    left.
    What we would expect is that $z$ moves by the same distance
    to the left. 
    Actually, showing the last
    statement also proves the lemma, due to symmetry of $x$ and $y$. 
    We denote by $\ell$ the line that contains the variable segments
    of $x$ and $y$. We denote by $t$ half the distance that
    $z$ moves. Note that $t$ has a geometric interpretation
    as indicated in \Cref{fig:AdditionProof}.
    We need to show $d = 2t$. 
    The lengths $A,A',B,B'$ are defined as shown in \Cref{fig:AdditionProof}.
    Note that $B' = 2B$, because $\|a-b\| = \|b-c\| $.
    Similarly, follows $A'= 2A$.
    The lemma follows from
    \[d = B'-A'=2(B-A) = 2t.\]

    \begin{figure}[htb]
	\centering   \includegraphics{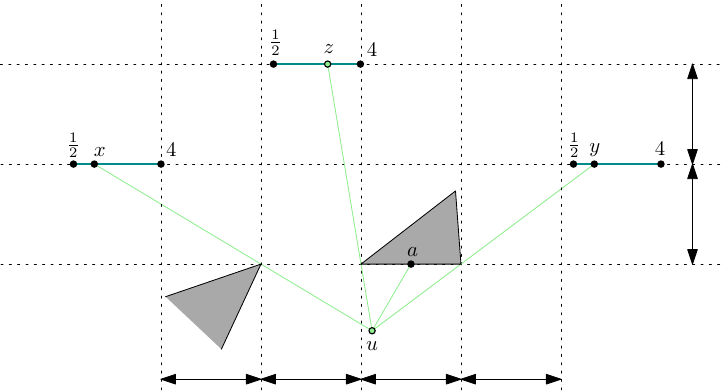}
	\caption{An addition gadget. Picture inspired by \cite{LMM}.}   \label{fig:Addition}
    \end{figure}
    
    \begin{figure}[htb]
    \centering
    \includegraphics[width=.8\textwidth]{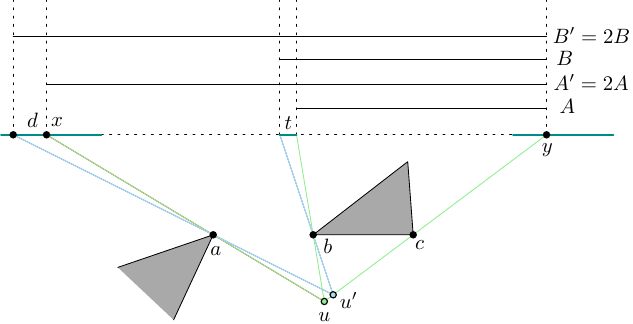}
    \caption{An illustration of the correctness 
    of the addition gadget.}
    \label{fig:AdditionProof}
    \end{figure}

\noindent
The gadget that enforces $x+y \leq z$
is just a mirror copy of the previous gadget.

\paragraph{Inversion {gadget}.}
The inversion gadgets to enforce
$x \cdot y \leq 1$ and $x \cdot y \geq 1$
are 
depicted in Figure~\ref{fig:inversion}.
We use a horizontal variable segment for $x$ and a vertical variable segment for $y$ and align them as shown in the figure, $1.5$ units apart both horizontally and vertically.  We make a triangular hole with its apex at point {$q$} as shown in the figure.
The graph has an edge between $x$ and $y$.

For correctness, observe that if $x$ and $y$ are positioned so that the line segment joining them goes through point {$q$}, then, because triangles $\Delta_1$ and $\Delta_2$ (as shown in the figure) are similar, we have $\frac{x}{1} = \frac{1}{y}$, i.e.~$x \cdot y=1$.
If the line segment goes above point {$q$} 
(as in the left hand side of Figure~\ref{fig:inversion})
then $x\cdot y \geq 1$, and if the line segment goes below then $x \cdot y \leq 1$.

    \begin{figure}[htb]
	\centering   \includegraphics{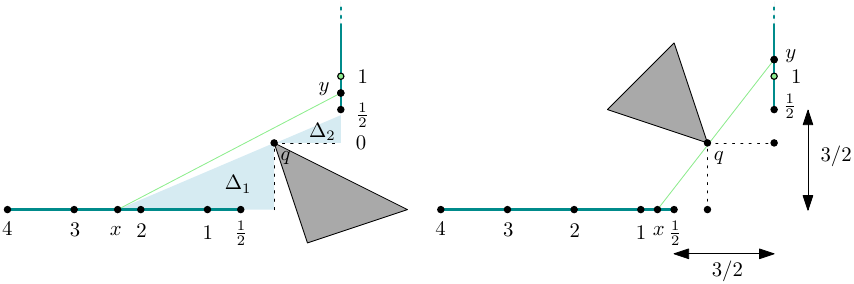}
	\caption{Left: an inversion gadget enforcing $x \cdot y \leq 1$. Right: an inversion gadget enforcing $x \cdot y \geq 1$. Picture inspired by \cite{LMM}.}   \label{fig:inversion}
    \end{figure}

\paragraph{Putting it all together.}
It remains to show how to obtain an instance of \GraphInPolygonGame  in polynomial time from an instance of \planarfotrinv.

All gadgets are based on axis-aligned variable segments.
As such, we want to generate an orthogonal drawing of the variable-constraint graph $G$ of $\Phi$.
In a rectilinear drawing, each vertex can have degree at most four.
To model this, we use splitter gadgets to create enough copies of each variable.
As the splitter gadget will occupy one of the four orthogonal directions of the variable, we generate a rectilinear drawing where each vertex has degree at most three.
Generating such a drawing can be done in polynomial time \cite{NishizekiRahman}.

The edges of $D$ act as wires and we replace each horizontal and vertical segment by a copy gadget, and   
replace every $90$ degree corner, by a turn gadget. 
Every splitter vertex and constraint vertex
will be replaced by the corresponding gadget, possibly using turn gadgets.
We add a big square to the outside, to ensure that everything is inside one polygon.
Finally, if two edges intersect, we add a crossing gadget.

It is easy to see that this can be done in polynomial time, 
as every gadget has a constant size description and can be
described with rational numbers, although, we did not do it explicitly.
In order to see that we can also use integers, note that we can scale everything with the least common multiple of all the denominators of all numbers appearing. This can also be done in polynomial time.
\end{proof}

\subsection{\PlanarExtensionGame}
\label{sub:OpenGraphGame}
In this section, we prove the following main theorem: 

\begin{theorem}\label{thm:gcgqr}
    \PlanarExtensionGame is \QR-complete.
\end{theorem}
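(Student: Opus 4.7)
My plan is to follow the overall structure of the hardness proof for \GraphInPolygonGame from the previous subsection and adapt every gadget to the setting where there is no polygonal domain, relying instead on pre-drawn subgraph edges. The essential difference is that in \PlanarExtensionGame the only obstructions are the edges and vertices of the partial drawing $G'$, not walls of a polygon, so every geometric constraint must be simulated either through pre-existing edges of $G'$ or through the strategic interaction between the two players.

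For \QR-membership I would reuse the machine-model argument from \GraphInPolygonGame: a real RAM algorithm that, given the sequence of moves $x_1,y_1,\ldots,x_k,y_k\in\R^2$, checks in polynomial time whether every new vertex produces a legal straight-line planar extension of the current drawing; by \Cref{thm:MachineModel} this places \PlanarExtensionGame in \QR.

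For hardness I would reduce from \planarfotrinvinequal and recycle the catalogue of gadgets (variable segment, copy, splitter, turn, addition, inversion, and padding) used in the \GraphInPolygonGame reduction. Each gadget that previously exploited the polygon boundary is rebuilt as a face of the pre-drawn subgraph $G'$, so the free vertex is forced into the intended region by edges to its pre-drawn neighbours. Padding gadgets can still be realized by a tiny triangular face of $G'$ admitting exactly one legal placement for an incoming vertex, so we can donate turns to the opponent whenever the \human needs several consecutive moves to build the auxiliary variables of copy, turn, or splitter gadgets.

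The hard part is replacing the role of the polygon in the variable-segment construction of \Cref{fig:segment-end}, which forced the free vertex $v$ onto the line through its anchor points $a,b$ by cutting off all points on one side with the wall of $R$. In \PlanarExtensionGame there is no wall, so a slightly off-line placement of $v$ is still locally planar. I would overcome this through an interactive collinearity-enforcement mechanism, which is the new technical idea highlighted in the introduction: after a player places $v$ near a segment $[a,b]$ of $G'$, the opponent is handed (via padding) a bundle of follow-up vertices that admits a planar completion if and only if $v$ lies exactly on the line through $a,b$. Concretely, for each of the two half-planes defined by the line through $a,b$ I would prescribe a forcing chain of subsequent vertices that the opponent is able to place in the corresponding half-plane; if $v$ strays into one half-plane by any $\varepsilon>0$, the opponent plays the chain on the other side, and the narrow wedge between $v$ and the line is eventually crossed by an edge that the careless player is obligated to draw. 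The key lemma to prove is that for every $\varepsilon>0$ deviation the opponent possesses a polynomially sized forcing chain, and that both players can mutually apply this mechanism so that collinearity is enforced on all variable segments simultaneously. Once this is established, the truth/winner correspondence and the polynomial-time bound go through exactly as in the \GraphInPolygonGame reduction, and because no two drawn edges ever share interior points in the resulting instance, the same construction yields the strengthened simple-drawing variant promised in the introduction.
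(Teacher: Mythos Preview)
Your membership argument and your high-level intuition—that collinearity must be enforced through player interaction rather than polygon walls—are both correct and match the paper. However, there are two genuine gaps.

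First, you reduce from \planarfotrinvinequal, but the paper reduces from \planarfotrinvequal. This matters because the paper's new addition gadget enforces the \emph{equality} $x+y=z$ via three rays that must meet in a single point; it does not naturally split into two one-sided inequality gadgets the way the polygon-based one did.

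Second, and more seriously, you plan to ``recycle the catalogue of gadgets'' from \GraphInPolygonGame and only rebuild the variable segment. This will not work: essentially every gadget in that catalogue relies on polygon walls. The copy gadget uses a triangular hole whose apex the connecting edge must pass on one side of; the turn gadget uses similar holes; the inversion gadget hinges on a triangular hole with apex $q$; the addition gadget uses three holes with apices $a,b,c$. None of these survive when the boundary becomes part of the graph, since the intended edges would then touch or cross pre-drawn vertices and edges. The paper therefore redesigns \emph{all} of these gadgets around a single new primitive you do not mention: the \emph{variable ray}, which extends your interactive collinearity idea from fixed segments to the line through a player-placed vertex and a pre-drawn pivot point $p$. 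With this primitive, inversion, turn, and addition are rebuilt from scratch using pivot points and ray intersections rather than polygon holes, and the copy gadget gets its own interactive design where the devil places blockers $v_1,v_2$ and the human must thread $v_3,v_4$ between them. Your vague ``forcing chain in the wedge'' sketch is in the right spirit for the variable segment alone, but the paper's concrete mechanism (the devil places a point in a convex pocket that becomes invisible from $v_x$ if $v_x$ is off-line, then the human must connect to both $v_x$ and a point behind the devil's blocker) is what actually generalizes to the variable ray and hence to the rest of the construction.
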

Remember, in the \PlanarExtensionGame, a graph $G$ is provided with an embedding of a subgraph $G'$ of $G$.
Then, the \human and the \devil take turns placing specific vertices of $G \backslash G'$ until one cannot place a vertex without violating planarity.
Whereas \GraphInPolygonGame could directly adapt many gadgets from the \ER-completeness proof of \GraphInPolygon, \PlanarExtensionGame requires novel gadget constructions, as the planarity of the graph prevents the usual collinearity constructions.

We first prove the easy direction.

\begin{lemma}
    \PlanarExtensionGame is in \QR.
\end{lemma}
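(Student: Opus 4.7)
The plan is to apply the machine model characterization of \QR given in \Cref{thm:MachineModel}. An instance $w$ of \PlanarExtensionGame consists of a graph $G$, a planar straight-line drawing of a subgraph $G'$, and a linear order $v_1, \ldots, v_m$ on the vertices of $G \setminus G'$. We interpret the real vectors $x_1, y_1, \ldots, x_k, y_k \in \R^2$ from \Cref{thm:MachineModel} as the coordinates successively chosen by the \human and the \devil when placing the vertices in the prescribed order; without loss of generality we may pad the sequence (e.g., by repeating the last choice) so that $k$ matches the length of the play, which is linear in $|w|$.

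Next I would describe a real RAM algorithm $A$ that, given $w$ together with the sequence of chosen positions, simulates the game step by step. At each step, after placing the current vertex at its chosen position, $A$ verifies the three planarity conditions from the definition: (i) the new vertex is distinct from all previously placed vertices, (ii) each newly drawn straight-line edge (to an already-placed neighbor in $G$) shares no interior point with any previously drawn edge, and (iii) no vertex lies in the interior of any edge, including the new vertex with respect to old edges and old vertices with respect to new edges. Each of these is decided using a constant number of orientation tests, collinearity tests, and segment intersection tests, all of which run in polynomial time on a real RAM.

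If a check fails during the \human's turn, $A$ outputs $0$; if it fails during the \devil's turn, $A$ outputs $1$; if the entire drawing is completed successfully, $A$ outputs $1$. Hence $w$ is a yes-instance of \PlanarExtensionGame if and only if
\[\exists x_1 \ \forall y_1 \ \ldots \ \exists x_k \ \forall y_k : A(x_1, y_1, \ldots, x_k, y_k, w) = 1,\]
which by \Cref{thm:MachineModel} places \PlanarExtensionGame in \QR. The only mildly non-obvious point is ensuring that the alternation pattern in \Cref{thm:MachineModel} matches the game; since the \human moves first and the game ends with the \human winning on a completed drawing, we end with a universal quantifier as in the theorem, and any unused quantifier blocks are handled by the padding argument.
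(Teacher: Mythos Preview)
Your proof is correct and takes essentially the same approach as the paper: both invoke the machine model characterization from \Cref{thm:MachineModel}, let the quantified real vectors encode the players' chosen coordinates, and have the algorithm check after each move whether the drawing remains planar, outputting the winner accordingly. Your version is in fact more detailed than the paper's (spelling out the specific geometric predicates and the padding for the quantifier pattern), but the underlying argument is the same.
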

\begin{proof}
    Let $I$ be a \PlanarExtensionGame instance, which consists of a planar graph $G$, an embedding of some subgraph $G' \subseteq G$, and an ordering on the vertices $V(G) \backslash V(G')$.
    We use the machine model to show \QR-membership.
    The machine model checks after every turn whether the graph is still planar.
    To do this, we use the variables $x_i \in \R^2$ and $y_i \in \R^2$ to describe the vertices specified by the human and the devil.
    After every point placed, the algorithm checks whether the graph is still planar.
    If either player places a vertex that violates the planarity of the graph, then that player loses, and the other player wins the game.
    Otherwise, if all points are placed correctly, the human wins.
    The machine model can verify in polynomial time whether a given drawing of $G$ is planar\cite{LMM}.
    As such, this concludes the proof that \GraphInPolygonGame is in \QR.
\end{proof}

To prove \cref{thm:gcgqr}, we must prove the \QR-hardness of \PlanarExtensionGame.
We reduce from \planarfotrinvequal, by designing gadgets that encode variables, addition, multiplication, et cetera and use them to encode the formulas in \planarfotrinvequal.
The challenge of using gadgets in \PlanarExtensionGame is that, unlike \GraphInPolygonGame, vertices and edges may not be drawn on the `boundary' of the polygon.
In \PlanarExtensionGame, the polygon is replaced by a graph, so any vertices or edges on the `boundary' would violate planarity.
Instead, we make use of the fact that the game has two players;
if one player does not make the move corresponding to some arithmetic operation, we ensure the other player has a winning strategy.
Then, the original player is forced to make the moves corresponding to the arithmetic operations.
While both players can take the role of setting variables according to arithmetic operations, for simplicity we assume the \human always takes this role.
Then, the \devil will always take the role of punishing the \human if they made a different move.

\begin{lemma}
    \PlanarExtensionGame is \QR-hard.
\end{lemma}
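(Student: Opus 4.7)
The plan is to reduce from \planarfotrinvequal. Given an instance $(\exists x_1 \forall x_2 \ldots Q_n x_n): \Phi(x_1,\ldots,x_n)$ whose variable-constraint incidence graph is planar, I will construct in polynomial time a planar graph $G$, a straight-line drawing of a subgraph $G'\subseteq G$, and a linear order on $V(G)\setminus V(G')$ such that the \human has a winning strategy precisely when $\Phi$ is true. As in Section 5.2, each variable $x_i$ will be represented by the position of a designated vertex on an abstract \emph{variable segment}, with quantifier order matching move order: the \human places the vertex for $x_1$, the \devil for $x_2$, and so on. After all variables are placed, a long tail of moves completes the arithmetic gadgets (addition, inversion, copy, turn, splitter), and I will use \padding subgraphs analogous to Section 5.2 to sink any of the \devil's forced turns that would otherwise interfere with the \human's gadget completion.

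The central obstacle, and the place where \PlanarExtensionGame genuinely differs from \GraphInPolygonGame, is that there is no enclosing polygonal region whose walls can pin a vertex to a line segment. In the previous reduction, collinearity and range constraints on variable vertices were enforced by cornering the vertex inside a narrow polygonal channel; here every "wall" must itself be a drawn edge of $G$, so a small perpendicular displacement of a variable vertex produces no immediate crossing. My plan is to design a \emph{collinearity enforcement gadget} that exploits the game structure: around each intended variable segment I place a small constellation of as-yet-unplaced opponent vertices, arranged so that, if the current player drops the variable vertex even slightly off the intended line, the opponent has a canonical "punishment" move that picks a side and then a short, constant-length forced sequence in which the original player is trapped into a vertex whose admissible region has shrunk to the empty set, losing the game. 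Conversely, if the variable is placed exactly on the intended line, the opponent's constellation vertices will have designated admissible positions within the future construction, so the punishment gadget is inert along honest play.

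The hard part will be calibrating this gadget so that the threat is unconditional: the punishment must succeed against any strategy the deviating player can pursue in the remaining game, and the inert configuration along honest play must not interfere with any other gadget. I expect to handle this by making each collinearity gadget \emph{local}, in the sense that its punishment sequence finishes before any other gadget is touched, and by using the initial drawing $G'$ to pre-commit a pair of edges on either side of the intended segment that act as one-sided obstructions once the opponent commits to a side. Once collinearity is secured, the addition, inversion, copy, turn, and splitter gadgets from Section 5.2 can be ported almost verbatim: wherever a polygon wall was used to confine a vertex, I substitute two drawn edges of $G'$ together with the new collinearity gadget. Between arithmetic moves by the \human, the \devil is fed \padding vertices whose only admissible positions are isolated interior points of small subgraphs of $G'$, letting the \human advance the arithmetic construction across multiple consecutive effective turns.

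Finally, I will assemble the whole instance along an orthogonal drawing of the (planar) variable-constraint incidence graph, exactly as in Section 5.2, using splitter gadgets to reduce vertex degree to three and turn gadgets at each corner. The resulting graph $G$, drawing $G'$, and move order have size polynomial in $|\Phi|$. Correctness follows in both directions: if $\Phi$ is true, the \human places variables at witnessing values, plays the collinearity constellations honestly, and completes each arithmetic gadget according to the satisfied constraint; if $\Phi$ is false, then for every \human strategy either some variable is pushed off its intended segment (and the \devil wins by the punishment argument) or some arithmetic gadget is unsatisfiable and its last vertex has no admissible placement. Combined with the \QR-hardness of \planarfotrinvequal, this establishes \QR-hardness of \PlanarExtensionGame.
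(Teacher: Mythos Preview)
Your high-level plan matches the paper: reduce from \planarfotrinvequal, use a game-based punishment mechanism to pin vertices onto intended lines, feed the \devil padding moves, and assemble the gadgets along an orthogonal layout of the variable--constraint graph. The key insight you correctly isolate --- that collinearity must be enforced by letting the opponent punish deviations, since there is no polygonal wall to lean on --- is exactly the new idea the paper develops.

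However, there is a genuine gap in your plan for the arithmetic gadgets. Your collinearity enforcement gadget, as you describe it, pins a vertex onto a line that is \emph{fixed in advance by the pre-drawn subgraph $G'$}. That suffices for the variable segments themselves. But for inversion, turn, and addition you need to force a vertex onto a line whose direction is \emph{determined by an already-placed variable vertex}: e.g.\ for $xy=1$ you need $v_y$ to land exactly on the line through $v_x$ and a fixed pivot $p$, and this line is not known when $G'$ is drawn. The Section~5.2 gadgets you propose to port almost verbatim rely on polygon holes whose apex is a fixed point; they enforce inequalities (that is why Section~5.2 reduces from \planarfotrinvinequal), and in any case replacing the hole by two pre-drawn edges does not recover an equality constraint on a line through a moving point. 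The paper handles this by introducing a separate \emph{variable ray} gadget --- a punishment construction whose subgadgets sit on both sides of the edge $(v_x,p)$ drawn during play --- and then builds genuinely new inversion, turn, and addition gadgets around pivot points, each of which forces the next vertex onto the intersection of such a variable ray with a fixed variable segment. The copy/split gadget is likewise redesigned so that the \devil places interleaving checker vertices rather than relying on a static hole. So ``port Section~5.2 with walls replaced by edges'' will not go through; you need the dynamic-line punishment construction, and the arithmetic gadgets have to be rebuilt around it. A minor omission: you should also append one final unplaceable \devil vertex (the paper uses a forced $K_5$) so that completing the construction is actually a \human win.
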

\begin{proof}
    We reduce from \planarfotrinvequal.
    Let $I = (\exists x_1 \forall x_2 \dots Q x_n) : \Phi (x_1, \dots, x_n)$ be an instance of \planarfotrinvinequal.
    
    \paragraph{Encoding Variables.}
    We create a gadget to represent the value assigned to a variable in \planarfotrinvequal, which must always be in the range $[1/2,4]$.
    (The variables decided by the devil are instead limited to the range $[3/4, 1]$, but is otherwise constructed the same.
    All auxiliary variable segments will receive a range of $[1/2, 4]$.)
    To do this, we want to have a player place a vertex and ensure this vertex must always be on a line segment.
    Suppose the vertex $v_x$ represents the variable $x$.
    Then, if $v_x$ is placed on one endpoint, it represents $x=1/2$, the other endpoint represents $x=4$, with the value of the positions in between being determined by linear interpolation.
    
    The gadget used by \GraphInPolygonGame would violate planarity, as newly drawn edges would overlap with pre-drawn ones.
    Instead, we present a new gadget.
    Without loss of generality, the \human will be tasked to place a vertex $v_x$ on a variable segment.
    We show that if the \human does not place the vertex on the variable segment, the \devil has a winning strategy.
    Then, the \human must always place the vertex on the variable segment, where it receives a value in the range $[1/2, 4]$.
    The construction of the gadget is shown in \cref{fig:fullgadget}.

    The gadget consists of four subgadgets, each of which ensures that the \human cannot place $v_x$ in some (open) half-plane.
    Two such gadgets ensure that the \human must place $v_x$ on the extension of an edge of the graph.
    The other two gadgets ensure that the \human places $v_x$ in the range $[1/2, 4]$.
    
    In \cref{fig:subgadget}, one subgadget is shown, along with an example of how it ensures the human cannot place $v_x$ in the half-plane away from the gadget.
    Assume the \human places the vertex $v_x$ in the half-plane away from the gadget defined by the dotted line.
    Then, there exists a region within the gadget that $v_x$ cannot see;
    we denoted the extent of $v_x$'s vision of the gadget by the dashed line.
    Now, we ask the \devil to place the point $v_d$ within the convex region of the gadget.
    Finally, we ask the \human to place a vertex inside $v_h$ that is adjacent to both $v_x$ and $v_6$ inside the gadget.
    (Note that in this figure, we have not added additional edges to ensure $v_d$ and $v_h$ cannot be outside the gadget. In the full version, $v_d$ can only be adjacent to $v_5$ and $v_6$ when it is inside this subgadget.)
    Clearly, if $v_h$ is adjacent to $v_x$, then the edge to $v_6$ would intersect both $(v_d, v_5)$ and $(v_d, v_6)$.
    This violates the planarity of the graph, and ensures the \human loses.
    In contrast, if $v_x$ is on the dotted line or further to the right, then the \human can place $v_h$ to the left of $v_d$ within the gadget and be able to draw edges to both $v_x$ and $v_6$ without violating planarity.
    (Indeed, if $v_x$ goes far enough right, it may not be able to view the gadget due to other edges blocking its sight. However, we disregard this issue, as we ensure that within our construction, $v_x$ can never be placed where this is an issue, without already violating different gadgets.)
    
    \begin{figure}
        \centering
        \includegraphics[page=1]{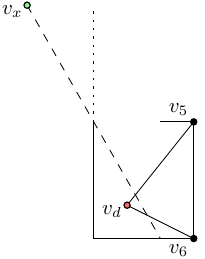}
        \caption{A subgadget of the variable segment gadget, ensuring that $v_x$ does not lie on the other side of the dotted line.
        Displayed is the \human placing $v_x$ on the other side of the dotted line and the \devil placing $v_d$ such that it is not visible from $v_x$.
        This makes it impossible for the \human to place another vertex within the gadget that is adjacent to both $v_6$ and $v_x$.}
        \label{fig:subgadget}
    \end{figure}

    Now, we show how to combine four of these gadgets to encode a variable.
    Here, the only coordinates the \human can place the vertex are exactly those on the variable segment.
    The players will place vertices in the following order:
    \begin{enumerate}
        \itemsep-0.3em 
        \item \human: the vertex $v_x$, which should go on the variable segment.
        \item \devil: a vertex $d_1$ with edges $(v_1, d_1)$ and $(v_2, d_1)$.
        \item \human: a vertex $h_1$ with edges $(v_2, h_1)$ and $(v_x, h_1)$.
        \item \devil: a vertex $d_2$ with edges $(v_3, d_2)$ and $(v_4, d_2)$.
        \item \human: a vertex $h_2$ with edges $(v_4, h_2)$ and $(v_x, h_2)$.
        \item \devil: a vertex $d_3$ with edges $(v_5, d_3)$ and $(v_6, d_2)$.
        \item \human: a vertex $h_3$ with edges $(v_6, h_3)$ and $(v_x, h_2)$.
        \item \devil: a vertex $d_4$ with edges $(v_5, d_4)$ and $(v_7, d_4)$.
        \item \human: a vertex $h_4$ with edges $(v_7, h_4)$ and $(v_x, h_4)$.
    \end{enumerate}

    Now, if the \human places $v_x$ to the bottom or top of the variable segment, step (3) or (9) will be impossible.
    Similarly, if \human places $v_x$ on the left or right of the variable segment, step (5) or (7) will be impossible.
    In all cases, this ensures the \human loses.
    Furthermore, if $v_x$ is placed on the variable segment, then all four subgadgets will be satisfied.

    We point out that the variable segment gadget will only add edges to $v_x$ in one direction;
    we do not need the gadget to be on both sides of the variable segment, unlike the analogous gadget from \GraphInPolygonGame.

    Whenever we model a variable $x$ through a variable segment, we denote the vertex that encodes $x$ by $v_x$.
    We use the notation $v_x : x = c$ to denote the variable $v_x$ if it were to encode $x = c$.

    \begin{figure}
        \centering
        \includegraphics[page=4]{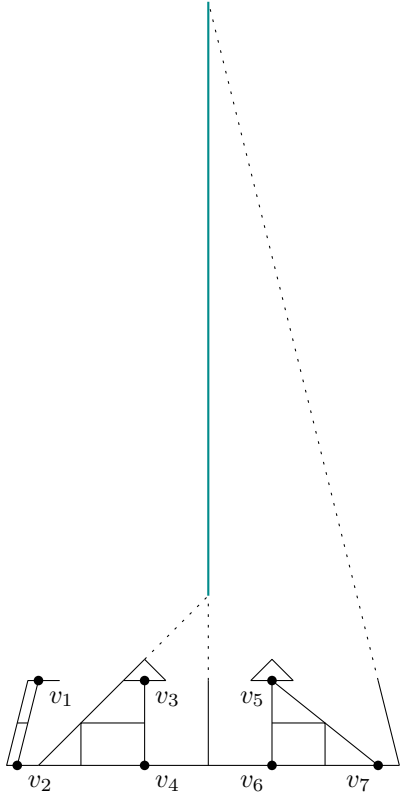}
        \caption{The full construction of a variable segment gadget, with the variable segment itself shown in turquoise. It consists of four subgadgets, which are the four pockets at the bottom of the image. The leftmost one ensures the variable cannot be placed below the variable segment. The two middle subgadgets ensure the variable must be placed on the ray extending from the edge in between them. Finally, the rightmost one ensures the variable cannot be placed above the variable segment. }
        \label{fig:fullgadget}
    \end{figure}

    \paragraph{Variable Ray}
    \begin{figure}[htb]
	\centering   
        \includegraphics[page=5]{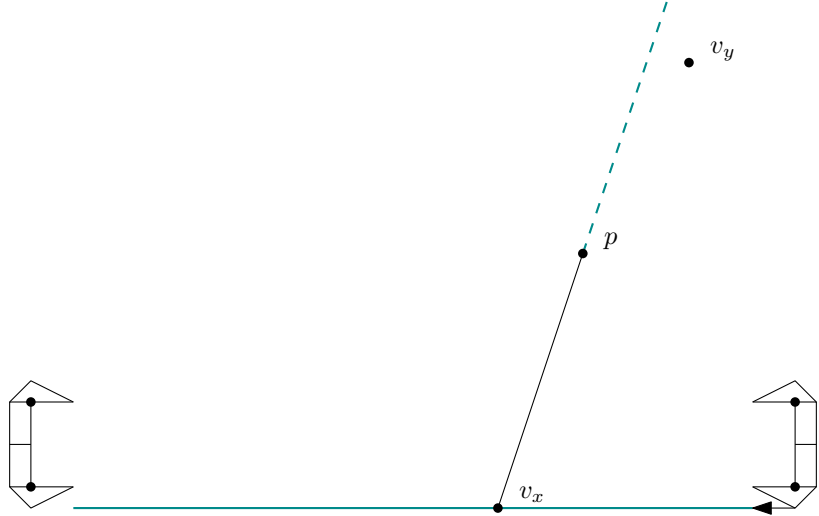}
	   \caption{A variable ray gadget. $v_y$ is placed on the right of the ray extension of $(v_x, p)$, which means it cannot see on the left of $(v_x, p)$. The gadget on the left ensures that, using the methods from \cref{fig:subgadget}, this loses the game for the player who placed $v_y$.}   \label{fig:VariableRay} 
    \end{figure}
    
    The variable segments from the previous section require the gadgets to be fixed at the start of the game.
    However, for the inversion, turn, and addition gadgets, we would like to ensure that a vertex is placed on the extension of an edge drawn by the players.
    Specifically, it should depend on the values chosen for the variables by the players.
    
    Note that we already have a gadget to ensure that a vertex must be placed on the linear extension of a line segment.
    So, we simply wish to extend this to also work for edges that are not fixed by the original graph.
    Our construction is shown in \cref{fig:VariableRay}
    
    To do this, we let the \human place a vertex $v_x$ on a variable segment, but also connect it to a pivot point $p$.
    Then, we can use a construction similar to the variable segment to create a variable ray.
    Above the variable segments, near both of its endpoints, there is a variable segment subgadget.
    This ensures that the vertex $v_y$ has to be placed on the extension of the line segment $(v_x, p)$.
    Otherwise, there will be some area on one side of $(v_x, p)$ that $v_y$ cannot see.
    Using the construction from the variable segment gadget, this would guarantee a loss and ensures that $v_y$ must be placed on the extension of $(v_x, p)$.
    
    However, this one will only ensure that a vertex must be placed on the ray; it will not ensure that the vertex is placed within some part of the ray.
    It would be difficult to ensure that the vertex is placed on a specific part of the ray, as depending on the angle of the ray, this range could change.
    We will use the variable ray in the inversion, turn, and addition gadgets, where it will be a useful tool to project a point onto a different variable segment.
    
    \paragraph{Inversion Gadget}
    Our first application of the variable ray is in the inversion gadget, as depicted in \cref{fig:inversiongadget}.
    The goal of the inversion gadget is that, given two variable segments encoding variables $x$ and $y$, the inversion gadgets ensures that $y = \frac{1}{x}$.
    As such, $x \in [0.5, 2]$. Otherwise, $x > 2$ and $y = \frac{1}{x}  < 0.5$, which is not possible.

    The inversion gadget consists of one horizontal, one vertical variable segment, and a pivot point $p$ lying on the intersection of the line segments $(v_x : x=\frac{1}{2}, v_y : y=2)$ and $(v_x : x=2, v_y : y=\frac{1}{2})$.
    Without loss of generality, we assume a vertex $v_x$ is already present on the horizontal variable segment along with an edge $(v_x, p)$
    We task the \human with inverting $x$ by placing $v_y : y = \frac{1}{x}$.
    Again, we show that if $y$ is set to any other value, the \devil has a winning strategy.
    Notice that $(x, p)$ forms a variable ray gadget and that $y=\frac{1}{x}$ lies on the extension of it.
    As such, $v_y : y = \frac{1}{x}$ is the only position where the vertex $v_y$ will lie both on the variable segment as well as the variable ray.
    If it is not on either the segment or the ray, the \devil has a winning strategy.
    
    \begin{figure}
        \centering
        \includegraphics[page=1]{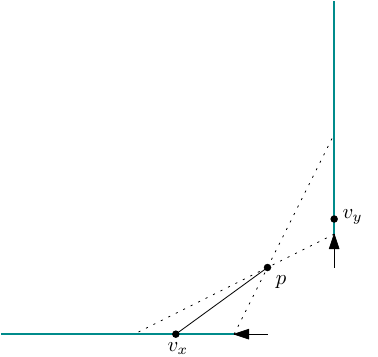}
        \caption{An inversion gadget, where $v_y$ must be placed such that $y = \frac{1}{x}$.}
        \label{fig:inversiongadget}
    \end{figure}

    \paragraph{Turn Gadget}
    The next application of the variable ray gadget is the turn gadget, as depicted in \cref{fig:turngadget}.
    As some gadgets use both horizontal and vertical rays, it is essential that we can copy a value from a horizontal ray onto a vertical ray and vice versa.
    The gadget consists of a horizontal variable segment used to encode the variable $x$, a diagonal variable segment to encode $y$, and a vertical variable segment $z$, along with two pivot points $p_1$ and $p_2$.
    The pivot point $p_1$ is placed at the intersection of the lines $(v_x:x=\frac{1}{2}, v_y:y=4)$ and $(v_x:x=4, v_y:y=\frac{1}{2})$, while $p_2$ is constructed similarly, being placed at the intersection of the lines $(v_y:y=\frac{1}{2}, v_z:z=4)$ and $(v_y:y=4, v_z:z=\frac{1}{2})$.
    
    Without loss of generality, assume the vertex $v_x$ is already present on the horizontal segment along with an edge $(v_x, p_1)$.
    We task the \human to place $y$ and $z$ such that $z = x$.
    Using the variable ray segment, this forces the \human to project $x$ onto the variable segment of $v_y$ such that $v_y : y = f(x)$, where we know $y \in [0.5, 4]$ by construction and $f$ is a bijection.
    Then, using the same construction in reverse, we project $y$ onto $Z$ through $p_2$ in the same manner, by drawing the edge $(y, p_2)$.
    This yields a construction where the \human must place $z$ at $v_z:z = f^{-1}(y) = f^{-1}(f(x)) = x$, as otherwise the \devil has a winning strategy.
    (Note that we do not need to know the exact function $f$, merely that we both have it and its inverse.)

    Note that when necessary, two consecutive turn gadgets may also function as a gadget to copy between two horizontal (or vertical) gadgets whose number scales goes in opposite directions. 
    \begin{figure}
        \centering
        \includegraphics[page=2]{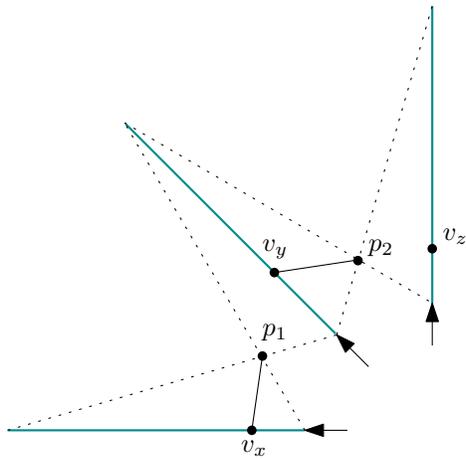}
        \caption{A turn gadget. As $v_y : y = f(x)$ and $v_z : f^{-1}(y)$, $v_z : z = x$.}
        \label{fig:turngadget}
    \end{figure}

    \paragraph{Addition Gadget}
    The addition gadget is the final gadget that applies the variable ray.
    Its goal is to ensure that $x+y=z$, where each of these three variables is placed on a horizontal variable segment.
    For this gadget, we use three variable segments;
    one on the left encoding $x$, the right encoding $z$, and the middle encoding $z$.
    It also includes three pivot points $p_1$, $p_2$, $p_3$, as shown in \Cref{fig:additiongadget}.

    Without loss of generality, we assume $v_x$ and $v_y$ are already placed, along with edges $(x, p_1)$, $(y, p_2)$.
    We task the \human to place $v_z$ at $z=x+y$ and ensure the \devil has a winning strategy if the \human places the vertex elsewhere.
    In addition, the \human will also have to draw $(z, p_3)$.
    Now, the gadget contains three variable ray segments.
    Next, the \human has to place a vertex that is on all three variable rays, originating from $(x, p_1), (y, p_2)$, and $(z, p_3)$ respectively.
    We claim that unless $z = x+y$, these three variable rays will not intersect in the same point, meaning the \devil has a winning strategy.

    Note that the geometry of this gadget is the same as in \GraphInPolygonGame.
    As such, we use the same proof strategy.
    For reference, we again use \Cref{fig:AdditionProof}.
    However, note that instead of using vertices of polygonal holes, $a$, $b$, and $c$ will be the pivot points denoted by $p_1$, $p_2$, and $p_3$ in \Cref{fig:additiongadget}.
    We denote by $\ell$ the line that contains the variable segments
    of $x$ and $y$. We denote by $t$ half the distance that
    $z$ moves. 
    Note that the geometric interpretation of $t$, as indicated in Figure~\ref{fig:AdditionProof}, ensures it moves half the distance of $z$, as it is half the distance from $b$.
    We need to show $d = 2t$.
    The lengths $A,A',B,B'$ are defined as shown in \Cref{fig:AdditionProof}.
    Note that $B' = 2B$, because $\|a-b\| = \|b-c\| $.
    Similarly, follows $A'= 2A$.
    The lemma follows from
    \[d = B'-A'=2(B-A) = 2t.\]
    
        \begin{figure}
        \centering
        \includegraphics[page=3]{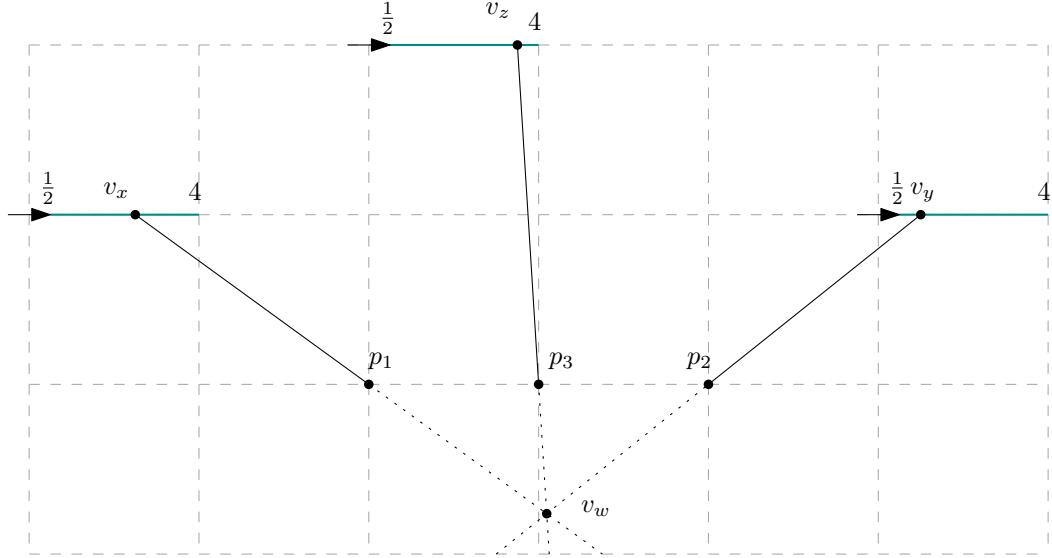}
        \caption{An addition gadgets. The three rays will only collide in a single point if $z = x + y$.}
        \label{fig:additiongadget}
    \end{figure}

    \paragraph{Copy and Split Gadgets}
    Next, we want to create a copy gadget; a gadget that copies the value of one variable onto another, i.e. $v_y : y = x$.
    We would like to emphasize that it is possible to use the variable ray for the copy gadget in a similar manner to the turn gadget.
    In fact, we can repeat turn gadgets to give rise to a copy gadget.
    However, we want to use multiple copy gadgets on the same variable segment within the split gadget.
    This is not possible with a simple variable ray construction, so we present a different, albeit slightly more complicated version of the copy gadget that resembles the copy gadget from \GraphInPolygonGame.
    
    \begin{figure}
        \centering
        \includegraphics[page=2]{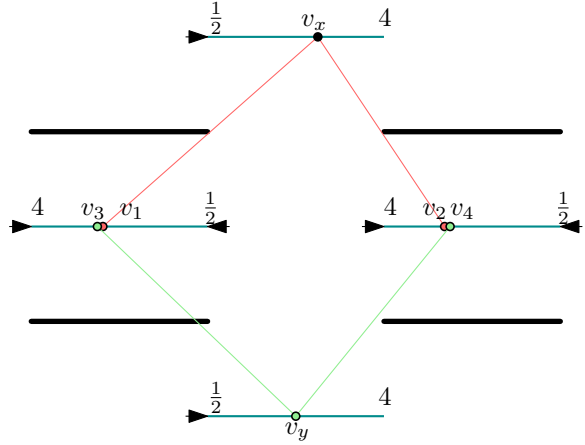}
        \caption{A copy gadget. Unless $v_y : y = x$, the \devil can place $v_1$ and $v_2$, such that either $(v_y, v_3)$ or $(v_y, v_4)$ must violate planarity.}
        \label{fig:placeholder2}
    \end{figure}
    
    This copy gadget resembles the copy gadget from \GraphInPolygonGame in overall geometry.
    However, it makes use of the alternating turns to avoid the collinearities in \GraphInPolygonGame, which would violate planarity in \PlanarExtensionGame.
    The construction of the gadget is displayed in \Cref{fig:placeholder2}.
    We aim to ensure that $v_y: y = x$ through this gadget.
    
    For ease of notation, for vertices $v_1, v_2, v_3, v_4$, we abuse the notation to let them denote both the vertex and the value encoded by the vertex.
    
    \begin{enumerate}
        \item At the start, the black segments are predrawn edges and the blue line segments represent variable segments. 
        Notably, the middle variable segments have a variable segment gadget on both sides.
        The left sides connect to $v_2$ and $v_3$ respectively, while the right sides connect to $v_1$ and $v_4$.
        \item We assume vertex $v_x$ is already drawn by some player and the \human must copy this value such that $v_y : x = y$.
        \item The \human draws $v_y$.
        \item The \devil draws $v_1$ and $v_2$. Due to the construction, these are limited by the constraints $v_1 < x$ and $v_2 > x$.
        \item The \human draws $v_3$ and $v_4$, where $v_1 < v_3 < y$ and $y < v_4 < v_2$.
        
    \end{enumerate}
    We claim that the \devil can force the \human to lose if and only if the \human does not give $v_2$ the same value as $v_1$.
    
    First, suppose the \human sets $v_y : y = x$.
    Then, we see that the constraints ensure $x = y > v_3 > v_1$.
    The \devil only picks a value $v_3$ such that $v_3 < v_1$.
    However, regardless of the value of $v_3$, the \human can pick a number $x > v_3 > v_1$.
    The case for $v_2$ and $v_4$ is symmetric.
    
    Suppose the \human sets $v_y: y \neq x$.
    Without loss of generality, assume $v_y:y < x$ (as in \cref{fig:placeholder2}, the other case is symmetric).
    Then, the \human always loses upon placing $v_3$.
    The \devil can place the vertex $v_1$ such that $y < v_1 < x$
    However, this forces the \human to place the vertex $v_3$ with $v_3 < y < v_1 < v_3$, which is a contradiction.

    \paragraph{Putting it all together}
    Now we have recreated all the gadgets, we show that we can indeed use these to model \planaretrinv using \PlanarExtensionGame within polynomial time.
    We achieve this in a comparable fashion as \GraphInPolygonGame.

    All gadgets are based on axis-aligned variable segments.
    As such, we want to generate an orthogonal drawing of the variable-constraint graph $G$ of $\Phi$.
    In a rectilinear drawing, each vertex can have degree at most four.
    As each variable can be used an arbitrary amount of times in \planarfotrinv, after placing a variable, we duplicate its vertex until each vertex has degree at most three.
    Finally, we generate a rectilinear drawing where each vertex has degree at most three.
    This can be done in polynomial time \cite{NishizekiRahman}.

    The edges of $D$ act as wires between the gadgets and we replace each horizontal and vertical segment by a copy gadget, and   
    replace every $90$ degree corner by a turn gadget. 
    Every splitter vertex and constraint vertex
    will be replaced by the corresponding gadget, possibly using turn gadgets.
    Finally, if two edges intersect, we add a crossing gadget.
    Finally, we also add edges between the gadgets, such that they will not interfere with each other.

    It is easy to see that this can be done in polynomial time, 
    as every gadget has a constant size description and can be
    described with rational numbers, although, we did not do it explicitly.
    In order to see that we can also use integers, note that we can scale everything with the least common multiple of all the denominators of all numbers appearing. This can also be done in polynomial time.

    Finally, to ensure the \human wins when the two players complete the full construction, we add a single vertex at the end of the game for the \devil to place.
    This vertex will always violate planarity.
    For simplicity, suppose a planar $K_4$ graph is already drawn and the \devil is forced to draw a vertex adjacent to all of the vertices in the $K_4$ graph, thus having to draw a $K_5$.
    Clearly, this is impossible and thus forces the \devil to lose if the \human manages to complete the full construction.
\end{proof}

\newpage
\section{Order Type Game}
\label{sec:OrderTypeGame}

This section is dedicated to showing the following theorem.

\OrderTypeGameTHM*

The proof of Theorem~\ref{thm:OrderTypeGame} is similar to the classical proofs of the Mn\"ev universality theorem for pseudoline stretchability~\cite{mnev1988universality}, and in particular the version presented by Shor~\cite{shor1991stretchability} where he proved (without naming it) $\exists \mathbb{R}$-hardness of order type realizability. While subsequent work of Richter-Gebert~\cite{richter1995mnev} has streamlined this approach into an arguably cleaner and simpler, more geometric reduction, our focus here is to explain why these techniques readily adapt to the general case of \fotr formulas compared to just existential formulas. Shor's proof, while cumbersome, is conceptually simple and thus lends itself well to this objective. We recommend the presentation of Matou\v{s}ek~\cite{matousek2014intersection} for a nice introduction to these seminal proof techniques.

The proof is obtained by reduction from an intermediate problem that we first introduce.

\begin{definition}
      In the problem \BOfotr, we are given a quantified formula $\exists x_1\forall x_2 \dots Q_n x_n : \Phi(x_1,\dots,x_n)$, where $\Phi$ consists of a conjunction between a set of equations of the form $x=1,\quad x+y=z,\quad x\cdot y=z$ or $x<y$
    for $x,y,z \in \{x_1, \ldots, x_n\}$.
    Each quantifier bounds exactly one variable and the quantifiers keep alternating between $\exists$ and $\forall$.
    The goal is to decide whether the system of equations has a solution when each variable is restricted to the range $(-1,1)$.
\end{definition}

It is immediate from the definition that \BOfotr belongs to \QR. The most important difference between \BOfotr and \fotrinv is that in the former problem, the ranges restricting the variables are open. As explained in the introduction, this makes it much easier to prove completeness for \BOfotr: intuitively this is the case because $(-1,1)$ and $\R$ are homeomorphic. It directly follows from the techniques in~\cite{SS25} that \BOfotr is \QR-complete: Corollary 1.4 in that paper establishes a stronger result that shows that completeness holds for each value $k$ of the number of alternation of quantifiers and does not even require equalities.

 The first step in the proof of Theorem~\ref{thm:OrderTypeGame} is to put a \BOfotr formula in a specific normal form that is well tailored for our problem. The second step is to use classical geometric constructions known as von Staudt's constructions to encode this normal form into an instance of \OrderTypeGame.

We first prove the easy direction of Theorem~\ref{thm:OrderTypeGame}.

\begin{lemma}\label{lem:OTGQR}
\OrderTypeGame is in $Q\mathbb{R}$.
\end{lemma}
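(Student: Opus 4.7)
The plan is to apply the machine model characterization of $Q\mathbb{R}$ from Theorem~\ref{thm:MachineModel}. The game has a fixed length of at most $n$ turns, with the two players placing points $p_1,\ldots,p_n$ in the prescribed linear order. Identifying each $p_i$ with its two real coordinates, the moves of the \human are represented by existentially quantified variables $x_i \in \mathbb{R}^2$ and those of the \devil by universally quantified variables $y_i \in \mathbb{R}^2$. The quantifier block pattern of the game therefore matches the $\exists x_1 \forall y_1 \cdots \exists x_k \forall y_k$ template required by Theorem~\ref{thm:MachineModel}; by inserting at most one ``dummy'' move at the beginning or at the end we can always align the parity and decide who moves first.

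First, I would build a polynomial-time real RAM algorithm $A$ that, given the input instance $(O,\chi,\le)$ together with a full sequence $(p_1,\ldots,p_n)$ of claimed placements, decides whether the \human wins under that play. The algorithm scans the sequence in order and, for each prefix $(p_1,\ldots,p_k)$, verifies the order type constraint by checking for every triple $i<j<\ell\le k$ that the sign of the $2\times 2$ determinant
\[
\det\begin{pmatrix} p_j-p_i \\ p_\ell-p_i \end{pmatrix}
\]
equals $\chi(i,j,\ell)$. This amounts to $O(n^3)$ constant-time sign computations on a real RAM, which is polynomial. The algorithm then identifies the smallest index $k^*$ at which the check first fails; if no such $k^*$ exists the \human wins (the whole order type is realized), and otherwise $A$ outputs $1$ precisely when $k^*$ is a turn of the \devil, i.e., the \devil was the first to make an illegal move.

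Second, I would argue that the truth of
\[
\exists x_1 \forall y_1 \cdots \exists x_k \forall y_k : A(x_1,y_1,\ldots,x_k,y_k,w)=1
\]
coincides with the existence of a winning strategy for the \human. The key observation is that once a player makes an illegal placement, the game outcome is fixed (that player loses) and subsequent choices are irrelevant. So padding a short play into a length-$n$ sequence by arbitrary moves afterwards does not change who wins, which lets us encode a variable-length game inside a fixed-length quantifier prefix. Combining this with the algorithm $A$ from the previous paragraph and invoking Theorem~\ref{thm:MachineModel} yields \OrderTypeGame $\in Q\mathbb{R}$.

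There is essentially no hard step here; the only subtlety is the conversion from a variable-length game (which may terminate early) into the fixed-length alternating-quantifier form of Theorem~\ref{thm:MachineModel}, and this is handled cleanly by the ``first illegal move decides the loser'' convention built into $A$.
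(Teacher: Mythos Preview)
Your proposal is correct and follows essentially the same approach as the paper: invoke the machine model characterization (Theorem~\ref{thm:MachineModel}), encode each placed point by two real coordinates, and have the verifying algorithm check after every move whether the partial realization matches the prescribed order type, declaring the first offender the loser. Your write-up is more explicit than the paper's (spelling out the determinant sign test, the $O(n^3)$ running time, and the handling of early termination via the ``first illegal move'' convention), but the underlying argument is the same.
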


\begin{proof}
    We use the machine model to show \QR-membership.
    We use the variables $x_i \in \R^2$ and $y_i \in \R^2$ to describe the points specified by the human and the devil respectively.
    The algorithm then checks for every newly added point that it forms the order type specified by the input.
    If one of the $x_i$ or $y_i$ does not adhere to this then the human, respectively the devil, is declared to lose.
    Otherwise, if all points are placed correctly the human wins.
\end{proof}

\subsection{Totally ordered FOTR}

In order to prove the reverse direction, we introduce the following variant of \fotr, which we phrase as a computational problem.

\begin{definition}[\TOfotr]
In the problem \TOfotr, we are given:
\begin{itemize}
    \item a quantified formula \[\exists a_0, a_1, b_1 \ldots a_n, b_n \exists x_1 \in (a_1,b_1) \forall x_2 \in (a_2,b_2) \ldots Q_n x_n \in (a_n,b_n): \Phi(a_0,a_1, b_1,\ldots a_n, b_n, x_1, \ldots, x_n),\] where $Q_n=\exists$ or $\forall$ and $\Phi$ is a conjunction of equations of the form $x+y=z$ or $x \cdot y = z$,
\item and a linear order $\prec$ on the set of variables $Z=\{a_i,b_i,x_i\}_{i \in [n]}$ so that for all $i$, $a_i \prec x_i \prec b_i$, and if $x_i$ is quantified universally, there are no other variables $z$ such that $a_i \prec z \prec b_i$.
\end{itemize}

 The goal is to decide whether the formula has a solution in which $a_0=1$, all the variables have value greater than one, and the order of the variables given by their real values matches the order $\prec$.
\end{definition}

The main differences between a \BOfotr instance and a \TOfotr instance is that in the latter we are (1) enforcing all the variables except $a_0$ to have value greater than $1$ and $(2)$ we are enforcing a linear order on the variables and a \emph{scaffolding} of each of the $x_i$ variables in intervals $(a_i,b_i)$. Note that although this scaffolding is confirmed by the linear order, it is not redundant since it is enforced when the variables are chosen under universal quantifications, while the total order is only checked at the end. This is the same difference as between $\forall y, y < 2$, which is false, and $\forall y< 2, y< 2$, which is true.

We can turn every \BOfotr formula into a \TOfotr instance, which has the following computational pendant:

\begin{lemma}
The problem \TOfotr is $Q\mathbb{R}$-hard.
\end{lemma}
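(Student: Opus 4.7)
We reduce from \BOfotr, which was just observed to be \QR-hard. Given an instance $\exists x_1 \forall x_2 \ldots Q_n x_n : \Phi(x_1, \ldots, x_n)$ of \BOfotr, with each $x_i \in (-1, 1)$ and $\Phi$ a conjunction of constraints of the forms $x = 1$, $x + y = z$, $x \cdot y = z$, and $x < y$, we build in polynomial time an equivalent \TOfotr instance.

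The first step is to introduce, for each \BOfotr variable $x_i$, a \TOfotr variable $y_i = x_i + c_i$ under an affine shift $c_i > 1$. We choose these shifts so that the intervals $(c_i - 1, c_i + 1)$ of the universally quantified variables are pairwise disjoint and lie far above the ranges of all the auxiliary variables introduced below. Each universal $y_i$ receives scaffolding $a_{y_i} = c_i - 1$ and $b_{y_i} = c_i + 1$; each existential $y_i$ is scaffolded similarly by existentially chosen bounds close to its value. Using $a_0 = 1$ and repeated additions, we existentially define variables $C_k$ representing each integer constant $k$ used in the construction. Each constraint of $\Phi$ is then translated: $x_i = 1$ becomes $y_i \cdot a_0 = C_{c_i+1}$; $x_i + x_j = x_k$ becomes $y_i + y_j = y_k + C_{c_i + c_j - c_k}$ (or the analogous rearrangement with a positive constant on the other side), implemented by an auxiliary variable and two $+$-constraints; $x_i \cdot x_j = x_k$ is expanded as $(y_i - c_i)(y_j - c_j) = y_k - c_k$ and decomposed into a constant-length chain of $+$- and $\cdot$-constraints with auxiliary variables; and each $x_i < x_j$ is rewritten as $y_i + c_j < y_j + c_i$, which is enforced by auxiliaries $u_1, u_2$ with $y_i + C_{c_j} = u_1$ and $y_j + C_{c_i} = u_2$ together with $u_1 \prec u_2$ in the total order.

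Finally, we fix the total order $\prec$ by listing all variables (original, scaffolding, and auxiliary) in the order of their expected numeric values. The main obstacle is precisely this bookkeeping: one must uniformly bound the values of all the auxiliary variables introduced by the encoding---using the fact that the original $x_i$ lie in the bounded interval $(-1, 1)$---and pick the universal shifts $c_i$ large enough, with wide enough gaps between them, to dominate these bounds, so that no auxiliary or existential variable can ever take a value inside a universal scaffolding interval $(c_i - 1, c_i + 1)$. Products of universal variables force the shifts to grow very fast (roughly doubly exponentially), but they remain polynomially representable using the constant-generation gadgets built from $a_0$. Once this is done, the affine shift is invertible, every arithmetic translation faithfully encodes its original, and the total order precisely captures the strict inequality constraints, so the constructed \TOfotr instance is true if and only if the starting \BOfotr instance is.
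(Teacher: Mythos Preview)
Your reduction starts from the right place (\BOfotr) and has the right overall shape (shift the variables to well-separated scales so that an a priori linear order can be imposed), but there is a genuine gap in the second step.

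The condition you explicitly address is only that ``no auxiliary or existential variable can ever take a value inside a universal scaffolding interval $(c_i-1,c_i+1)$.'' That is one clause of the \TOfotr definition, but it is not the hard one. The real constraint is that the \emph{entire} linear order $\prec$ on \emph{all} variables---original, scaffolding, and every auxiliary you introduce when decomposing the shifted constraints---must be fixed in advance and must be correct for \emph{every} admissible choice of the universally quantified variables. Your plan says only to ``list all variables in the order of their expected numeric values,'' but expected values are not enough: you need the actual ranges to be pairwise disjoint. Consider an auxiliary $w=y_i\cdot y_j$ arising from a multiplication constraint with $y_i$ universal. Its value lies in $((c_i-1)(c_j-1),(c_i+1)(c_j+1))$, an interval of width roughly $2(c_i+c_j)$, not width~$2$. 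Two such auxiliaries from different constraints, say $y_iy_j$ and $y_ky_\ell$, can easily have overlapping ranges once the shifts are large, and then their relative order depends on the devil's choices. Nothing in your plan prevents this; spacing the $c_i$ alone (even doubly exponentially) does not automatically separate all the \emph{products} of the $c_i$ from one another by more than the widths of the corresponding intervals.

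The paper resolves this with Shor's trick, which is one level more refined than what you sketch. After a single uniform shift putting all original variables in $(2,4)$, one introduces a single large parameter $d$ and re-encodes $x_i$ as $y_i=d^i+x_i$, so the original variables sit at distinct powers of $d$. Crucially, each elementary equation is then rewritten using a \emph{fresh, strictly larger} power $d^\alpha$ as a new baseline, so that all the helper variables for that one equation live in the gap $(d^\alpha,d^{\alpha+1})$, disjoint from the helpers of every other equation. Within a single gadget the relative order of the handful of helpers is worked out by hand (Shor gives the explicit list). This per-constraint scale allocation is exactly what your plan is missing: you separate the original $y_i$ from each other, but you do not allocate separate scales to the constraint auxiliaries, and that is where the total order can break.
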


Our proof is similar to that of Shor~\cite{shor1991stretchability}.

\begin{proof}[Proof]
  We reduce from \BOfotr. The first step of the reduction is to enforce that all variables be greater than $1$. In order to do this, we pick a universal constant $c$ and replace each variable $x$ with a variable $x'$ with the intended meaning that $x=x'-c$. If $c$ is large enough, this makes $x'$ larger than $1$. Since in \BOfotr, all the variables are in $(-1,1)$, we can pick $c=3$ which will restrict all the variables $x'$ to lie in $(2,4)$. Then, in all the equations of $\Phi$ we replace $x$ by $x'-c$ and then decompose the equation further into elementary blocks of the type $x+y=z$ and $x \cdot y = z$.  It is easy but somewhat cumbersome to see that the decomposition into elementary equations can also be done while ensuring that all the intermediate variables are larger than $1$ (see the equations in~\cite[pp.550-551]{shor1991stretchability}). Furthermore, since each variable in \BOfotr is restricted to lie in $(-1,1)$, one can verify in these equations that all the intermediate variables can also be assumed to be upper bounded by some universal constant.

  The second reduction step is more delicate and aims at enforcing a linear order on all the variables. The key technical point behind many proofs of the Mn\"ev universality theorem is to enforce this without actually solving the system of equations. The starting idea of Shor is to use an auxiliary variable $d$ much larger than all the variables $x_i$ (which is easily done due to the universal upper bound in the previous paragraph), and to encode each variable $x_i$ as a new variable $y_i=d^i+x_i$. A different power of $d$ is used for each variable, ensuring the fixed linear order we require since by construction \[d < d+x_1 < d^2< d^2+x_ 2 < ... < d^n <d^n+x_n < d^{n+1}.\]

  There remains to transform the elementary equations using these new variables $y_i$, and to decompose them into elementary equations where the total ordering of the variables is still known. The key here is to use, for each elementary equation (addition, multiplication or strict inequality), another set of unused powers of $d$ to encode that equation. For example, to encode the inequality $x_i<x_j$, we take some $\alpha$ larger than all previously used powers and write $z_i=(d^\alpha-d^i)+y_i$, $z_j=(d^\alpha-d^j)+y_j$ and $z_i<z_j$ (actually inequalities are not explicitly written in the formula $\Phi$ but are specified by the linear order instead). The total ordering of all the intermediate variables here is known, since
  \[ d^i < y_i< d^j<y_j < d^\alpha-d^j < d^\alpha - d^i<  d^\alpha < z_i <z_j<d^{\alpha+1}.\]

With a use of different powers of $d$ and a careful decomposition into elementary equations, one can decompose likewise each equation in the variables $y_i$ into elementary equations where the total ordering of all the variables is known. Here again, we refer to Shor~\cite[pp.550-553]{shor1991stretchability} for the complete list of elementary equations and the resulting total order. The restriction that variables have value greater than $1$ is key at this stage, since it allows us for example to immediately order $x_i \cdot x_j$ or $x_i +x_j$ as larger than $x_i$.
  
There remains to explain how all the variables should be quantified. It suffices to define all the needed powers of $d$ using existential quantifiers, and then introduce each $y_i$ variable in the same order and using the same quantifier as for the corresponding $x_i$ variable. Each of these $y_i$ variables comes with a scaffolding $y_i \in (d^i+2,d^{i}+4)$, and by construction there is no other variable within this scaffolding. Then, our reduction involves a large set of intermediate helper variables coming from the first and the second reduction step, and those should also be quantified. All of these helper variables are introduced at the end and can be controlled by existential quantifiers since their values are uniquely defined from variables that were previously set. The scaffolding $(a_i,b_i)$ of these intermediate variables is chosen to be always satisfiable, with $a_i=c$ and $b_i$ large enough, for example $b_i=d^{\alpha+1}$ where $\alpha$ is the largest power of $d$ previously used in the reduction.

In the end the satisfiability of the \TOfotr formula we obtain is equivalent to the satisfiability of the initial \BOfotr formula, which concludes the proof.
\end{proof}

\subsection{The projective line and von Staudt constructions}

We now explain how to encode a \TOfotr instance into an \OrderTypeGame instance. We first introduce the classical geometric constructions underpinning that reduction: cross-ratios on the projective line and von Staudt constructions. We refer to Richter-Gebert~\cite{richter2011perspectives} for an introduction to projective geometry, but we will use very little of it: in order to understand the proof it suffices to know that it is an extension of the Euclidean geometry of $\mathbb{R}^2$ with a line at infinity, that projective transformations preserve colinearity, that any two distinct lines intersect exactly once, that using projective transformations one can send any line to infinity and that two lines crossing at infinity are parallel.

The setup for the reduction starts with a projective line $\ell$, which is a line in $\mathbb{R}^2$ containing three distinguished points $\mathbf{0}$, $\mathbf{1}$ and $\II$. A point $\mathbf{x}$ in the line $\ell$ is interpreted as a real number $x$ via its \emph{cross-ratio}: if we denote by $d(\mathbf{a},\mathbf{b})$ the signed distance between two points $\mathbf{a}$ and $\mathbf{b}$ on $\ell$, it is defined as

\[(\mathbf{x},\mathbf{1}; \mathbf{0},\II) := \frac{d(\mathbf{x},\mathbf{0}) \cdot d(\mathbf{1},\II)}{d(\mathbf{x},\II) \cdot d(\mathbf{1},\mathbf{0})},\]
and we write just $x$ as a shorthand for this cross-ratio. A main property is that this quantity is invariant under projective transformations. Also note that when $\II$ is sent to infinity using such transformations and $d(\mathbf{0},\mathbf{1})=1$, the cross-ratio $x$ matches the distances $d(\mathbf{x},\mathbf{0})$. Therefore we can think of $\ell$ and the cross-ratios as a projective-invariant version of a standard real axis.

The \emph{von Staudt constructions} are geometric gadgets allowing to encode addition and multiplication of real numbers using incidences of lines in $\mathbb{R}^2$. Both constructions work with an arbitrary second line $\ell_\infty$ in $\mathbb{R}^2$ crossing $\ell$ exactly at the point $\infty$. For $\mathbf{a} \neq \mathbf{b}$, we denote by $(\mathbf{a}\mathbf{b})$ the unique projective line going through $\mathbf{a}$ and $\mathbf{b}$.

\vspace{1em}

\begin{figure}[h]
    \centering
     \def\svgwidth{\textwidth}
     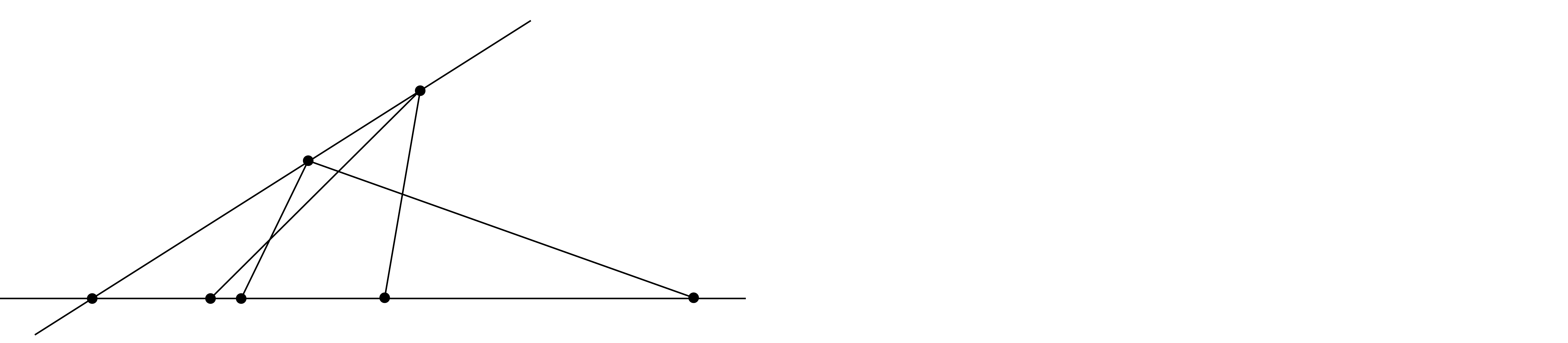
    \caption{Encoding addition geometrically. 
     }
    \label{fig:addition}
\end{figure}

\textbf{Addition.} We refer to Figure~\ref{fig:addition} left for an illustration of this construction. In order to define the addition of two points $\mathbf{x}$ and $\mathbf{y}$ on $\ell$, we use two helper points $\mathbf{a}$ and $\mathbf{b}$ anywhere on $\ell_\infty$. We denote the intersection of $(\mathbf{x}\mathbf{a})$ and $(\mathbf{0}\mathbf{b})$ by $\mathbf{c}$ and the intersection of $(\II\mathbf{c})$ and $(\mathbf{y}\mathbf{b})$ as $\mathbf{d}$. Then the point $\mathbf{x}+\mathbf{y}$ is the point on $\ell$ at the intersection with $(\mathbf{a}\mathbf{d})$.

The geometric intuition behind this construction, pictured on Figure~\ref{fig:addition}, right, is that when one sends $\ell_\infty$ (and in particular $\II$) to infinity using a projective transformation, the line $\mathbf{(cd)}$ becomes parallel to $\ell$, the lines $(\mathbf{0c})$ and $(\mathbf{yd})$ become parallel and $(\mathbf{cx})$ and $(\mathbf{d (x+y)})$ also become parallel. Then basic Euclidean geometry shows that $d(\mathbf{0},\mathbf{x+y})=d(\mathbf{0},\mathbf{x})+d(\mathbf{0},\mathbf{y})$, and thus $(x+y)=x+y$, as desired.

\vspace{1em}

\begin{figure}[h]
    \centering
    \def\svgwidth{\textwidth}
    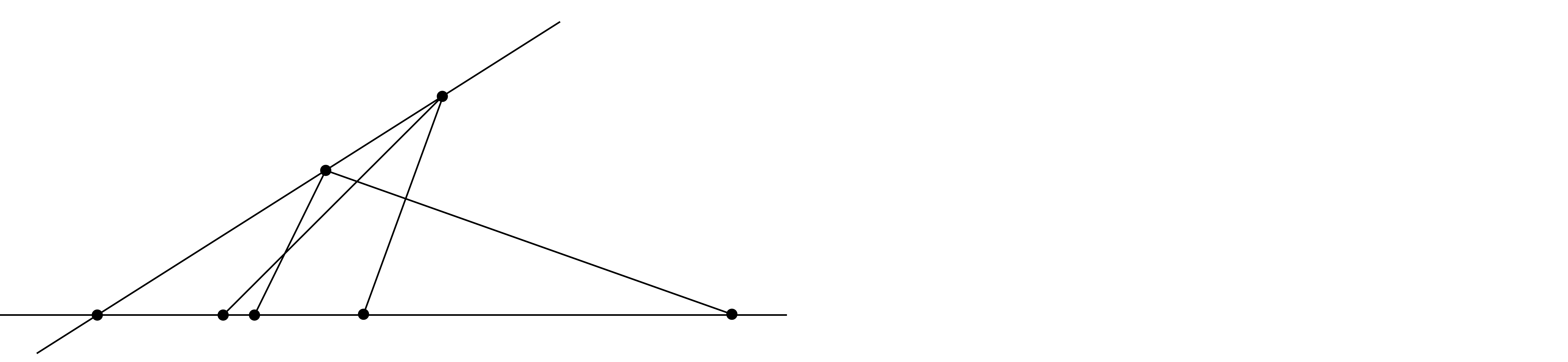
    \caption{Encoding multiplication geometrically. }
    \label{fig:multiplication}
\end{figure}

\textbf{Multiplication.} The gadget for multiplication is only slightly more involved and is pictured on Figure~\ref{fig:multiplication} left. The points $\mathbf{a}$ and $\mathbf{b}$ are still located anywhere on $\ell_\infty$, and $\mathbf{c}$ is on the intersection of $(\mathbf{x}\mathbf{a})$ and $(\mathbf{1}\mathbf{b})$, while $\mathbf{d}$ is at the intersection of $(\mathbf{0}\mathbf{c})$ and $(\mathbf{y}\mathbf{b})$. Then here again the point $\mathbf{x} \cdot \mathbf{y}$ is the point on $\ell$ at the intersection with $(\mathbf{a}\mathbf{d})$.

As pictured on Figure~\ref{fig:multiplication}, right, sending the line $\ell_\infty$ at infinity allows for a clean geometric picture where one can verify that $d(\mathbf{0},\mathbf{x\cdot y})=d(\mathbf{0},\mathbf{x})\cdot d(\mathbf{0},\mathbf{y})$ and thus $(x \cdot y)= x \cdot y$ as desired.

\subsection{The reduction}

We are now ready to prove Theorem~\ref{thm:OrderTypeGame}.

\begin{proof}[Proof of Theorem~\ref{thm:OrderTypeGame}]
  The reduction from \TOfotr into \OrderTypeGame first encodes the formula $\Phi$ as an order type using the von Staudt constructions. We fix one projective line $\ell$ with three distinguished points $\mathbf{0}$, $\mathbf{1}$ and $\II$. The constant $a_0=1$ corresponds to the $\mathbf{1}$ point. We take an arbitrary point $\mathbf{p}$ not on $\ell$, which defines with $\II$ the line $\ell_\infty$. Then, for each elementary equation $x+y=z$ and $x \cdot y=z$, we apply the corresponding von Staudt gadget. We always use the same line $\ell_\infty$ but the helper points $\mathbf{a}$ and $\mathbf{b}$ are different for each elementary equation. There is an ambiguity here, as different choices of $\mathbf{a}$ and $\mathbf{b}$, and thus of $\mathbf{c}$ and $\mathbf{d}$, could lead to different arrangements of lines in $\mathbb{R}^2$. Therefore this contruction does not uniquely define an order type. This ambiguity is resolved by taking, for each new von Staudt gadget being built, a pair of points $\mathbf{a}$ and $\mathbf{b}$ that are infinitely farther out on the $\ell_\infty$ line (but the two of them still close together) than all the previously used points. This choice ensures that in each gadget, the relative placement of the four new points compared to the already placed points is fully specified: intuitively these four new points are all clustered together very far away, with $\mathbf{a}$ and $\mathbf{b}$ on the $\ell_\infty$ line and $\mathbf{c}$ and $\mathbf{d}$ a tiny bit below it.

  This combinatorics of the line arrangement behind this construction can be directly computed from the \TOfotr instance and abstracted in the form of an order type. Such an order type needs to specify, for any three points, whether they are aligned, and if they are not it specifies their orientation. Since we are reducing from \TOfotr, the linear order $\prec$ of all the variables $a_i, b_i$ and $x_i$ is fixed by assumption. This order is enforced on the line $\ell$ via the chirotope values $\chi(p,\cdot,\cdot)$. Let us insist here that the boundaries of the scaffolding intervals $(a_i,b_i)$ are themselves variables, and thus are also points which are placed on the line $\ell$. Therefore our construction enforces that for all $i$, $a_i < x_i < b_i$. This linear order and our choice of sending $\mathbf{a}$ and $\mathbf{b}$ infinitely farther than all previous points, while being geometrically fuzzy, leads to a unique and well-defined order type (this was first observed by Mn\"ev~\cite[Paragraph~4.3]{mnev1988universality}, see also Shor~\cite[p. 549]{shor1991stretchability} and Richter-Gebert~\cite[Section~2.3]{richter1999universality}).

  Summarizing what we have done so far, we have defined an abstract order type that is realizable if and only if there exist $a_i,b_i$ and $x_i$ such that the formula $\Phi(a_0,a_1,b_1, \ldots a_n, b_n, x_1, \ldots, x_n)$ is true, $a_0=1$, all the other variables are greater than one and the order given by their real values matches the input linear order. We now explain how the alternation of quantifiers can be modeled as a game on a (slightly different) order type. In order to do this, we have some room for manoeuvre that we have not yet used: first we can add some dummy points in our order type which do not affect its realizability, and second we get to choose the order in which the points arrive, and thus which players is asked to place them.

  To explain the first point, let us define a \emph{dummy extension} of an order type $O$ as an order type $O'$ with one additional element $n$ with a unique possible realization. For instance, $\mathbf{0}$ is the unique intersection point of $\ell$ and $\ell_\infty$, so if $O$ has a pair of points on each of these lines, one can force $n$ to be realized exactly at $\mathbf{0}$ in any realization. In terms of chirotope, this is enforced by setting $\chi(n,x,y)=\chi(0,x,y)$ for any elements $x$ and $y$, where $0$ is the element represented by $\mathbf{0}$. It directly follows from this definition that $O$ is realizable if and only if $O'$ is. We use such dummy extensions as a way to force the \devil to pass a turn: if the \devil is given the new point of a dummy extension of the current order type, it is forced to place that point, which will have no influence on the outcome of the game.

  We can now conclude the proof. We start with an instance of \TOfotr. Each player is successively given points to place as follows (when the \human is given consecutive turns, the \devil is given dummy extensions inbetween). Initially, the human is given the points $\mathbf{0}, \mathbf{1}$, $\II$ and $\mathbf{p}$ to place. Then the human is given all the points $\mathbf{a_i}$ and $\mathbf{b_i}$ to place. Then the real game begins: following the alternations of the \TOfotr instance, the permutation $\pi$ gives the existentially quantified variables to the \human and the universally quantified variables to the \devil. Note that the constraint that players have to respect the order type of the points placed until now exactly enforces the correct scaffolding of the variables: this uses the fact that the ranges in \TOfotr are open (this is why we did not reduce from \fotrinv), and that no other variable lies within the scaffolding of a universally quantified variable. Once all these variables are placed, the \human is given all the remaining points to place (those are the points involved in the von Staudt gadgets). Finally, if the game is not over by now, the \devil is given one last impossible point to place, for example a point $d$ so that $\chi(d,\cdot,\cdot)$ is identically zero.

  We claim that the \human has a winning strategy if and only if \TOfotr is satisfiable. For the forward direction, if the \human wins the game, it means that the \devil could not place a point. By construction, the \devil can always place the points coming from variables of \TOfotr, since they are only constrained as being on $\ell$ and inbetween their scaffolding points, which can always be satisfied. So this means that we have reached the last point $d$, and all the other points have been placed. In particular, this means that all the equations in $\Phi$ are satisfied. Since the other conditions of \TOfotr are satisfied by construction and the universally quantified variables have been placed adversarially by the \devil, this implies that \TOfotr is satisfiable. For the reverse direction, the winning strategy for the human is to place all the helper points as per the von Staudt constructions and the variables on $\ell$ as per the satisfying values for the \TOfotr instance, taking into account the adversarial values on $\ell$ chosen by the devil. This forces the \devil to lose since the last point they are tasked with placing is always impossible to place. This concludes the proof.
\end{proof}

\bibliographystyle{alphaurl}
\bibliography{literature}

@inproceedings{LMM,
author = {Lubiw, Anna and Miltzow, Tillmann and Mondal, Debajyoti},
title = {The Complexity of Drawing a Graph in a Polygonal Region},
year = {2018},
isbn = {978-3-030-04413-8},
publisher = {Springer-Verlag},
address = {Berlin, Heidelberg},
url = {https://doi.org/10.1007/978-3-030-04414-5_28},
doi = {10.1007/978-3-030-04414-5_28},
booktitle = {Graph Drawing and Network Visualization: 26th International Symposium, GD 2018, Barcelona, Spain, September 26-28, 2018, Proceedings},
pages = {387–401},
numpages = {15},
location = {Barcelona, Spain}
}

@Article{Gensane2005,
  author    = {Thierry Gensane and
               Philippe Ryckelynck},
  title     = {Improved Dense Packings of Congruent Squares in a Square},
  journal   = {Discret. Comput. Geom.},
  volume    = {34},
  number    = {1},
  pages     = {97--109},
  year      = {2005},
  doi       = {10.1007/s00454-004-1129-z},
  timestamp = {Thu, 12 Mar 2020 17:21:15 +0100},
  biburl    = {https://dblp.org/rec/journals/dcg/GensaneR05.bib},
  bibsource = {dblp computer science bibliography, https://dblp.org}
}

@book{arora2009computational,
  author    = {Sanjeev Arora and Boaz Barak},
  title     = {Computational Complexity: A Modern Approach},
  publisher = {Cambridge University Press},
  year      = {2009},
  isbn      = {978-0-521-42426-4},
  pages     = {xxiv + 579}
}

@article{cardinal2018intersection,
  title={Intersection graphs of rays and grounded segments},
  author={Cardinal, Jean and Felsner, Stefan and Miltzow, Tillmann and Tompkins, Casey and Vogtenhuber, Birgit},
  journal={Journal of Graph Algorithms and Applications},
  volume={22},
  number={2},
  pages={273--295},
  year={2018}
}

@article{miltzow2024classifying,
  title={On classifying continuous constraint satisfaction problems},
  author={Miltzow, Tillmann and Schmiermann, Reinier F},
  journal={TheoretiCS},
  volume={3},
  year={2024},
  publisher={Episciences. org}
}

@article{BKR1986,
    title = {The complexity of elementary algebra and geometry},
    journal = {Journal of Computer and System Sciences},
    volume = {32},
    number = {2},
    pages = {251-264},
    year = {1986},
    issn = {0022-0000},
    doi = {https://doi.org/10.1016/0022-0000(86)90029-2},
    url = {https://www.sciencedirect.com/science/article/pii/0022000086900292},
    author = {Michael Ben-Or and Dexter Kozen and John Reif},
    abstract = {The theory of real closed fields can be decided in exponential space or parallel exponential time. In fixed dimension, the theory can be decided in NC.}
}

@InProceedings{Collins1975,
    author="Collins, George E.",
    editor="Brakhage, H.",
    title="Quantifier elimination for real closed fields by cylindrical algebraic decompostion",
    booktitle="Automata Theory and Formal Languages",
    year="1975",
    publisher="Springer Berlin Heidelberg",
    address="Berlin, Heidelberg",
    pages="134--183",
    isbn="978-3-540-37923-2"
}

@book{Tarski1951,
    URL = {http://www.jstor.org/stable/jj.8501420},
    author = {Alfred Tarski},
    edition = {DGO - Digital original, 1},
    publisher = {University of California Press},
    title = {A Decision Method for Elementary Algebra and Geometry},
    urldate = {2025-08-21},
    year = {1951}
}

@article{BermanEXPH,
    title = {The complexity of logical theories},
    journal = {Theoretical Computer Science},
    volume = {11},
    number = {1},
    pages = {71-77},
    year = {1980},
    issn = {0304-3975},
    doi = {https://doi.org/10.1016/0304-3975(80)90037-7},
    url = {https://www.sciencedirect.com/science/article/pii/0304397580900377},
    author = {Leonard Berman},
}

@book{
    BasuPollackRoy2006, 
    place={Heidelberg},
    title={Algorithms in real algebraic geometry}, 
    publisher={Springer Berlin}, 
    author={Basu, Saugata and Pollack, Richard and Roy, Marie-Françoise}, 
    year={2006}
}

@InProceedings{EXPTIMEhardness,
    author="Fischer, Michael J.
    and Rabin, Michael O.",
    editor="Caviness, Bob F.
    and Johnson, Jeremy R.",
    title="Super-Exponential Complexity of Presburger Arithmetic",
    booktitle="Quantifier Elimination and Cylindrical Algebraic Decomposition",
    year="1998",
    publisher="Springer Vienna",
    address="Vienna",
    pages="122--135",
    isbn="978-3-7091-9459-1"
}

@inproceedings{DKM,
author = {Dobbins, Michael Gene and Kleist, Linda and Miltzow, Tillmann and Rz\k{a}\.{z}ewski, Pawe\l{}},
title = {$\forall\exists\mathbb{R}$-Completeness and Area-Universality},
year = {2018},
isbn = {978-3-030-00255-8},
publisher = {Springer-Verlag},
address = {Berlin, Heidelberg},
url = {https://doi.org/10.1007/978-3-030-00256-5_14},
doi = {10.1007/978-3-030-00256-5_14},
booktitle = {Graph-Theoretic Concepts in Computer Science: 44th International Workshop, WG 2018, Cottbus, Germany, June 27–29, 2018, Proceedings},
pages = {164–175},
numpages = {12},
location = {Cottbus, Germany}
}

@book{
NishizekiRahman, place={Singapore}, title={Planar graph drawing}, publisher={World Scientific}, author={Nishizeki, T. and Rahman, Md Saidur}, year={2004}
}

@inproceedings{mnev1988universality,
  title={The universality theorems on the classification problem of configuration varieties and convex polytopes varieties},
  author={Mnev, Nikolai},
  booktitle={Topology and Geometry-Rohlin Seminar, 1988},
  pages={527--544},
  year={1988},
  organization={Springer}
}

@article{shor1991stretchability,
  title={Stretchability of pseudolines is {NP}-hard},
  author={Shor, Peter W},
  journal={DIMACS series in discrete mathematics and theoretical computer science},
  volume={4},
  pages={531--554},
  year={1991}
}

@article{matousek2014intersection,
  title={Intersection graphs of segments and $\exists \mathbb{R}$},
  author={Matousek, Jiri},
  journal={arXiv preprint arXiv:1406.2636},
  year={2014}
}

@article{richter1995mnev,
  title={Mn{\"e}v’s universality theorem revisited},
  author={Richter-Gebert, J{\"u}rgen},
  journal={S{\'e}m. Lothar. Combin},
  volume={34},
  pages={B34h},
  year={1995}
}

@book{richter2011perspectives,
  title={Perspectives on projective geometry: A guided tour through real and complex geometry},
  author={Richter-Gebert, J{\"u}rgen},
  year={2011},
  publisher={Springer}
}

@article{richter1999universality,
  title={The Universality theorems for oriented matroids and polytopes},
  author={Richter-Gebert, Jurgen},
  journal={Contemporary Mathematics},
  volume={223},
  pages={269--292},
  year={1999},
  publisher={Providence, RI: American Mathematical Society}
}

@article {HMWW24,
    AUTHOR = {Hoffmann, Michael  and Miltzow,  Tillmann  and Weber, Simon  and Wulf,  Lasse },
     TITLE = {{R}ecognition of {U}nit {S}egment and {P}olyline {G}raphs is $\exists \mathbb{R}$-Complete},
   JOURNAL = {Arxiv},
      YEAR = {2024},
    NUMBER = {2401.02172},
     PAGES = {1--18},
       URL_ = {https://arxiv.org/abs/2401.02172},
    volume    = {abs/2401.02172},
    doi       = {10.48550/arXiv.2401.02172},
    eprinttype = {arXiv},
    eprint_    = {2401.02172},
}

@incollection {McDM10,
    AUTHOR = {McDiarmid, Colin and M\"{u}ller, Tobias},
     TITLE = {The number of bits needed to represent a unit disk graph},
 BOOKTITLE = {Graph-theoretic concepts in computer science},
    SERIES = {Lecture Notes in Comput. Sci.},
    VOLUME = {6410},
     PAGES = {315--323},
 PUBLISHER = {Springer, Berlin},
      YEAR = {2010},
   MRCLASS = {68R10 (05C62)},
  MRNUMBER = {2765281},
       DOI = {10.1007/978-3-642-16926-7\_29},
       URL_ = {https://doi.org/10.1007/978-3-642-16926-7_29},
}

@article{C24,
      title={The Complexity of Intersection Graphs of Lines in Space and Circle Orders}, 
      author={Cardinal, Jean},
      year={2024},
      journal      = {CoRR},
      volume       = {abs/2406.17504},
      eprint={2406.17504},
      archivePrefix={arXiv},
      primaryClass={cs.CG},
      url_={https://arxiv.org/abs/2406.17504}, 
}

@article {AAM22,
    AUTHOR = {Abrahamsen, Mikkel and Adamaszek, Anna and Miltzow, Tillmann},
     TITLE = {The art gallery problem is {$\exists\mathbb{R}$}-complete},
   JOURNAL = {J. ACM},
  FJOURNAL = {Journal of the ACM},
    VOLUME = {69},
      YEAR = {2022},
    NUMBER = {1},
     PAGES = {Art. 4, 70},
      ISSN = {0004-5411},
   MRCLASS = {68U05 (68Q17)},
  MRNUMBER = {4402363},
       DOI = {10.1145/3486220},
       URL_ = {https://doi.org/10.1145/3486220},
}

@article{BHJMW22,
  author    = {Bertschinger, Daniel  and
               Hertrich, Christoph  and
              Jungeblut,  Paul  and
               Miltzow, Tillmann  and
               Weber, Simon },
  title     = {Training Fully Connected Neural Networks is {$\exists\mathbb{R}$}-Complete},
  journal   = {CoRR},
  volume    = {abs/2204.01368},
  year      = {2022},
  url_       = {https://doi.org/10.48550/arXiv.2204.01368},
  doi       = {10.48550/arXiv.2204.01368},
  eprinttype = {arXiv},
  eprint_    = {2204.01368},
  timestamp = {Thu, 23 Jun 2022 20:00:38 +0200},
  biburl    = {https://dblp.org/rec/journals/corr/abs-2204-01368.bib},
  bibsource = {dblp computer science bibliography, https://dblp.org},
}

@article{AhnCCGO04,
  author    = {Hee{-}Kap Ahn and Siu-Wing Cheng and Otfried Cheong and Mordecai J. Golin and René van Oostrum},
  title     = {Competitive facility location: the Voronoi game},
  journal   = {Theoretical Computer Science},
  volume    = {310},
  number    = {1-3},
  pages     = {457-467},
  year      = {2004},
  url       = {https://dblp.org/rec/journals/tcs/AhnCCGO04},
  doi       = {10.1016/j.tcs.2003.09.004},
}

@article{CheongEH07,
  author    = {Otfried Cheong and Alon Efrat and Sariel Har‐Peled},
  title     = {Finding a guard that sees most and a shop that sells most},
  journal   = {Discrete \& Computational Geometry},
  volume    = {37},
  number    = {4},
  pages     = {545--563},
  year      = {2007},
  doi       = {10.1007/s00454-007-1328-5},
  url       = {https://dblp.org/rec/journals/dcg/CheongEH07}
}

@article{FeketeM05,
  author    = {S{\'a}ndor P. Fekete and Henk Meijer},
  title     = {The one-round Voronoi game replayed},
  journal   = {Computational Geometry},
  volume    = {30},
  number    = {2},
  pages     = {81--94},
  year      = {2005},
  doi       = {10.1016/j.comgeo.2004.06.005},
  url       = {https://dblp.org/rec/journals/comgeo/FeketeM05}
}

@article {BC06,
    AUTHOR = {B\"{u}rgisser, Peter and Cucker, Felipe},
     TITLE = {Counting complexity classes for numeric computations. {II}.
              {A}lgebraic and semialgebraic sets},
   JOURNAL = {J. Complexity},
  FJOURNAL = {Journal of Complexity},
    VOLUME = {22},
      YEAR = {2006},
    NUMBER = {2},
     PAGES = {147--191},
      ISSN = {0885-064X,1090-2708},
   MRCLASS = {68Q17 (14P10 14Q20 52B55 57R99 68Q15)},
  MRNUMBER = {2200367},
MRREVIEWER = {Klaus\ Meer},
       DOI = {10.1016/j.jco.2005.11.001},
       URL_ = {https://doi.org/10.1016/j.jco.2005.11.001},
}

@article{OracleSeparation,
  author       = {Thekla Hamm and
                  Lucas Meijer and
                  Tillmann Miltzow and
                  Subhasree Patro},
  title        = {Oracle Separations for {RPH}},
  journal      = {CoRR},
  volume       = {abs/2502.09279},
  year         = {2025},
  url          = {https://doi.org/10.48550/arXiv.2502.09279},
  doi          = {10.48550/ARXIV.2502.09279},
  eprinttype    = {arXiv},
  eprint       = {2502.09279},
  timestamp    = {Wed, 12 Mar 2025 22:48:23 +0100},
  biburl       = {https://dblp.org/rec/journals/corr/abs-2502-09279.bib},
  bibsource    = {dblp computer science bibliography, https://dblp.org}
}

@article{PennyGraphs,
  author       = {Anna Lubiw and
                  Marcus Schaefer},
  title        = {Recognizing Penny and Marble Graphs is Hard for Existential Theory
                  of the Reals},
  journal      = {CoRR},
  volume       = {abs/2508.10136},
  year         = {2025},
  url          = {https://doi.org/10.48550/arXiv.2508.10136},
  doi          = {10.48550/ARXIV.2508.10136},
  eprinttype    = {arXiv},
  eprint       = {2508.10136},
  timestamp    = {Sat, 13 Sep 2025 14:46:42 +0200},
  biburl       = {https://dblp.org/rec/journals/corr/abs-2508-10136.bib},
  bibsource    = {dblp computer science bibliography, https://dblp.org}
}

@inproceedings {HKvnBV20,
    AUTHOR = {Hannula, Miika and Kontinen, Juha and Van den Bussche, Jan and
              Virtema, Jonni},
     TITLE = {Descriptive complexity of real computation and probabilistic
              independence logic},
 BOOKTITLE = {Proceedings of the 35th {A}nnual {ACM}/{IEEE} {S}ymposium on
              {L}ogic in {C}omputer {S}cience ({LICS} 2020)},
     PAGES = {14},
 PUBLISHER = {ACM, New York},
      YEAR = {2020},
   MRCLASS = {68Q19 (03B70)},
  MRNUMBER = {4171529},
       DOI = {10.1145/3373718.3394773},
       URL_ = {https://doi.org/10.1145/3373718.3394773},
}

@article {BSS89,
    AUTHOR = {Blum, Lenore and Shub, Mike and Smale, Steve},
     TITLE = {On a theory of computation and complexity over the real
              numbers: {NP}-completeness, recursive functions and universal
              machines},
   JOURNAL = {Bull. Amer. Math. Soc. (N.S.)},
  FJOURNAL = {American Mathematical Society. Bulletin. New Series},
    VOLUME = {21},
      YEAR = {1989},
    NUMBER = {1},
     PAGES = {1--46},
      ISSN = {0273-0979},
   MRCLASS = {68Q10 (03D15 03D75 03D80 03F60 68Q15 68Q25)},
  MRNUMBER = {974426},
MRREVIEWER = {John Michael Robson},
       DOI = {10.1090/S0273-0979-1989-15750-9},
       URL_ = {https://doi.org/10.1090/S0273-0979-1989-15750-9},
}

@incollection {EvdHM20,
    AUTHOR = {Erickson, Jeff and van der Hoog, Ivor and Miltzow, Tillmann},
     TITLE = {Smoothing the gap between {NP} and {ER}},
 BOOKTITLE = {2020 {IEEE} 61st {A}nnual {S}ymposium on {F}oundations of
              {C}omputer {S}cience},
     PAGES = {1022--1033},
 PUBLISHER = {IEEE Computer Soc., Los Alamitos, CA},
      YEAR = {2020},
   MRCLASS = {68Q60 (68Q17)},
  MRNUMBER = {4232107},
 DOI = {10.1137/20M1385287},
}

@article{KirnMeijerMiltzowBodlaender2025,
  author    = {Kirn, Jeremy C. and Meijer, Lucas and Miltzow, Tillmann and Bodlaender, Hans L.},
  title     = {On Equivalent Characterizations of {NP} in Abstract Models of Computation},
  journal   = {CoRR},
  volume    = {abs/2510.05894},
  year      = {2025},
  url       = {https://arxiv.org/abs/2510.05894},
  note      = {Preprint arXiv:2510.05894 [cs.LO]}
}

@mastersthesis{Westerdijk2024,
  author       = {Westerdijk, Lammert},
  title        = {A Framework for Studying the Complexity of General $\varepsilon$‑Robust Problems},
  school       = {Utrecht University},
  year         = {2024},
  url          = {https://studenttheses.uu.nl/handle/20.500.12932/48250},
  note         = {Master’s thesis}
}

@article{PackingER,
  author       = {Mikkel Abrahamsen and
                  Tillmann Miltzow and
                  Nadja Seiferth},
  title        = {Framework for {\(\exists\)} {\(\mathbb{R}\)}-Completeness of Two-Dimensional
                  Packing Problems},
  journal      = {TheoretiCS},
  volume       = {3},
  year         = {2024},
  url          = {https://doi.org/10.46298/theoretics.24.11},
  doi          = {10.46298/THEORETICS.24.11},
  timestamp    = {Thu, 01 May 2025 20:36:06 +0200},
  biburl       = {https://dblp.org/rec/journals/theoretics/AbrahamsenMS24.bib},
  bibsource    = {dblp computer science bibliography, https://dblp.org}
}

@inproceedings{C88,
 author = {Canny, John},
 title = {Some algebraic and geometric computations in PSPACE},
 booktitle = {STOC '88: Proceedings of the twentieth annual ACM symposium on Theory of computing},
 year = {1988},
 isbn = {0-89791-264-0},
 pages = {460--469},
 location = {Chicago, Illinois, United States},
 url_ =  {http://doi.acm.org/10.1145/62212.62257},
 doi = {10.1145/62212.62257},
 publisher = {ACM},
 address = {New York, NY, USA},
 }

@book {C88b,
    AUTHOR = {Canny, John},
     TITLE = {The complexity of robot motion planning},
    SERIES = {ACM Doctoral Dissertation Awards},
    VOLUME = {1987},
 PUBLISHER = {MIT Press, Cambridge, MA},
      YEAR = {1988},
     PAGES = {xviii+197},
      ISBN = {0-262-03136-1},
   MRCLASS = {68U30 (68Q25 90B40 93C99)},
  MRNUMBER = {952555},
MRREVIEWER = {W.\ A.\ Gruver},
}

@article {BC09,
    AUTHOR = {B\"{u}rgisser, Peter and Cucker, Felipe},
     TITLE = {Exotic quantifiers, complexity classes, and complete problems},
   JOURNAL = {Found. Comput. Math.},
  FJOURNAL = {Foundations of Computational Mathematics. The Journal of the
              Society for the Foundations of Computational Mathematics},
    VOLUME = {9},
      YEAR = {2009},
    NUMBER = {2},
     PAGES = {135--170},
      ISSN = {1615-3375},
   MRCLASS = {68Q15 (68Q17)},
  MRNUMBER = {2496557},
       DOI = {10.1007/s10208-007-9006-9},
       URL_ = {https://doi.org/10.1007/s10208-007-9006-9},
}

@article{ERCompendium,
  author       = {Marcus Schaefer and
                  Jean Cardinal and
                  Tillmann Miltzow},
  title        = {The Existential Theory of the Reals as a Complexity Class: {A} Compendium},
  journal      = {CoRR},
  volume       = {abs/2407.18006},
  year         = {2024},
  url          = {https://doi.org/10.48550/arXiv.2407.18006},
  doi          = {10.48550/ARXIV.2407.18006},
  eprinttype    = {arXiv},
  eprint       = {2407.18006},
  timestamp    = {Fri, 06 Sep 2024 16:14:02 +0200},
  biburl       = {https://dblp.org/rec/journals/corr/abs-2407-18006.bib},
  bibsource    = {dblp computer science bibliography, https://dblp.org}
}

@incollection {JKM22,
    AUTHOR = {Jungeblut, Paul and Kleist, Linda and Miltzow, Tillmann},
     TITLE = {The complexity of the {H}ausdorff distance},
 BOOKTITLE = {38th {I}nternational {S}ymposium on {C}omputational
              {G}eometry},
    SERIES = {LIPIcs. Leibniz Int. Proc. Inform.},
    VOLUME = {224},
     PAGES = {Art. No. 48, 17},
 PUBLISHER = {Schloss Dagstuhl. Leibniz-Zent. Inform., Wadern},
      YEAR = {2022},
   MRCLASS = {68U05 (68Q17)},
  MRNUMBER = {4470927},
       DOI = {10.4230/lipics.socg.2022.48},
       URL_ = {https://doi.org/10.4230/lipics.socg.2022.48},
}

@inproceedings{R88,
  author       = {Renegar, James},
  title        = {A Faster {PSPACE} Algorithm for Deciding the Existential Theory of
                  the Reals},
  booktitle    = {29th Annual Symposium on Foundations of Computer Science, White Plains,
                  New York, USA, 24-26 October 1988},
  pages        = {291--295},
  publisher    = {{IEEE} Computer Society},
  year         = {1988},
  url          = {https://doi.org/10.1109/SFCS.1988.21945},
  doi          = {10.1109/SFCS.1988.21945},
  timestamp    = {Thu, 23 Mar 2023 23:57:55 +0100},
  biburl       = {https://dblp.org/rec/conf/focs/Renegar88.bib},
  bibsource    = {dblp computer science bibliography, https://dblp.org}
}

@article {C93,
    AUTHOR = {Canny, John},
     TITLE = {Improved algorithms for sign determination and existential
              quantifier elimination},
   JOURNAL = {Comput. J.},
  FJOURNAL = {The Computer Journal},
    VOLUME = {36},
      YEAR = {1993},
    NUMBER = {5},
     PAGES = {409--418},
      ISSN = {0010-4620,1460-2067},
   MRCLASS = {03B35 (03B25 03C10 68Q40)},
  MRNUMBER = {1234112},
MRREVIEWER = {Thanases\ Pheidas},
       DOI = {10.1093/comjnl/36.5.409},
       URL = {https://doi.org/10.1093/comjnl/36.5.409},
}

@article{SS23,
    AUTHOR = {Schaefer, Marcus and {\v{S}}tefankovi\v{c}, Daniel},
    title={Beyond the Existential Theory of the Reals},
    JOURNAL = {Theory Comput. Syst.},
    doi = {10.1007/s00224-023-10151-x},
    url = {https://doi.org/10.1007/s00224-023-10151-x},
    ISSN = {1433-0490},
    year={2023},
}

@article{SS25,
    AUTHOR = {Schaefer, Marcus and {\v{S}}tefankovi\v{c}, Daniel},
      title={Beyond the Existential Theory of the Reals}, 
    journal   = {CoRR},
    volume    = {abs/2210.00571},
    url       = {http://arxiv.org/abs/2210.00571},
      year={2025}, 
      eprint={2210.00571},
      archivePrefix={arXiv},
      primaryClass={cs.CC},
}

@book {T51,
    AUTHOR = {Tarski, Alfred},
     TITLE = {A decision method for elementary algebra and geometry},
      NOTE = {2nd ed},
 PUBLISHER = {University of California Press, Berkeley-Los Angeles, Calif.},
      YEAR = {1951},
     PAGES = {iii+63},
   MRCLASS = {02.0X},
  MRNUMBER = {44472},
}

@article {SS17,
    AUTHOR = {Schaefer, Marcus and {\v{S}} tefankovi{\v{c}}, Daniel},
     TITLE = {Fixed points, {N}ash equilibria, and the existential theory of
              the reals},
   JOURNAL = {Theory Comput. Syst.},
  FJOURNAL = {Theory of Computing Systems},
    VOLUME = {60},
      YEAR = {2017},
    NUMBER = {2},
     PAGES = {172--193},
      ISSN = {1432-4350,1433-0490},
   MRCLASS = {68Q15 (03D78 91A44)},
  MRNUMBER = {3600379},
       DOI = {10.1007/s00224-015-9662-0},
       URL = {https://doi.org/10.1007/s00224-015-9662-0},
}

@article {P06,
    AUTHOR = {Patrignani, Maurizio},
     TITLE = {On extending a partial straight-line drawing},
   JOURNAL = {Internat. J. Found. Comput. Sci.},
  FJOURNAL = {International Journal of Foundations of Computer Science},
    VOLUME = {17},
      YEAR = {2006},
    NUMBER = {5},
     PAGES = {1061--1069},
      ISSN = {0129-0541,1793-6373},
   MRCLASS = {68R10 (05C10 68Q25)},
  MRNUMBER = {2270951},
       DOI = {10.1142/S0129054106004261},
       URL = {https://doi.org/10.1142/S0129054106004261},
}

@article{RENEGAR1992,
    title = {On the computational complexity and geometry of the first-order theory of the reals. Part III: Quantifier elimination},
    journal = {Journal of Symbolic Computation},
    volume = {13},
    number = {3},
    pages = {329-352},
    year = {1992},
    issn = {0747-7171},
    doi = {https://doi.org/10.1016/S0747-7171(10)80005-7},
    url = {https://www.sciencedirect.com/science/article/pii/S0747717110800057},
    author = {James Renegar},
}

@article {KM94,
    AUTHOR = {Kratochv\'il, Jan and Matou\v sek, Ji\v r\'i},
     TITLE = {Intersection graphs of segments},
   JOURNAL = {J. Combin. Theory Ser. B},
  FJOURNAL = {Journal of Combinatorial Theory. Series B},
    VOLUME = {62},
      YEAR = {1994},
    NUMBER = {2},
     PAGES = {289--315},
      ISSN = {0095-8956,1096-0902},
   MRCLASS = {05C75},
  MRNUMBER = {1305055},
MRREVIEWER = {Hubert\ de Fraysseix},
       DOI = {10.1006/jctb.1994.1071},
       URL = {https://doi.org/10.1006/jctb.1994.1071},
}
\end{document}